\documentclass[11pt,a4paper]{article}
\usepackage[utf8]{inputenc}
\usepackage{styling}
 
\hypersetup{
  pdftitle={Balanced Allocations: Caching and Packing, Twinning and Thinning},
  pdfauthor={Dimitrios Los, Thomas Sauerwald, John Sylvester}
  }

\title{Balanced Allocations:~\\ Caching and Packing, Twinning and Thinning\footnote{Full version of a paper appearing in SODA 2022}}

\author[1]{Dimitrios Los}  
\author[1]{Thomas Sauerwald\thanks{T.S. was supported by the ERC Starting Grant 679660 (DYNAMIC MARCH).}}
\author[2]{John Sylvester\thanks{J.S. was supported by EPSRC project EP/T004878/1: Multilayer Algorithmics to Leverage Graph Structure }}
\affil[1]{Department of Computer Science \& Technology, University of Cambridge, UK\\ \texttt{firstname.lastname@cl.cam.ac.uk}} 
\affil[2]{School of Computing Science, University of Glasgow, UK\\ \texttt{john.sylvester@glasgow.ac.uk}}

\newcommand{\WOne}{{\hyperlink{w1}{\ensuremath{\mathcal{W}_1}}}\xspace}
\newcommand{\WTwo}{{\hyperlink{w2}{\ensuremath{\mathcal{W}_2}}}\xspace}
\newcommand{\WThree}{{\hyperlink{w3}{\ensuremath{\mathcal{W}_3}}}\xspace}

\newcommand{\POne}{{\hyperlink{p1}{\ensuremath{\mathcal{P}_1}}}\xspace}
\newcommand{\PTwo}{{\hyperlink{p2}{\ensuremath{\mathcal{P}_2}}}\xspace}
\newcommand{\PThree}{{\hyperlink{p3}{\ensuremath{\mathcal{P}_3}}}\xspace}
\newcommand{\PFour}{{\hyperlink{p4}{\ensuremath{\mathcal{P}_4}}}\xspace}

\begin{document}
 
\maketitle

\begin{abstract}
We consider the sequential allocation of $m$ balls (jobs) into $n$ bins (servers) by allowing each ball to choose from some bins sampled uniformly at random. The goal is to maintain a small gap between the maximum load and the average load.

In this paper, we present a general framework that allows us to analyze  various allocation processes that slightly prefer allocating into underloaded, as opposed to overloaded bins.
Our analysis covers several natural instances of processes, including: 
\begin{itemize}
 \item The \textsc{Caching process} (a.k.a. memory protocol) as studied by Mitzenmacher, Prabhakar and Shah (2002): 
At each round we only take one bin sample, but we also have access to a cache in which the most recently used bin is stored. We place the ball into the least loaded of the two.
 \item The \textsc{Packing process}: At each round we only take one bin sample. If the load is below some threshold (e.g., the average load), then we place as many balls until the threshold is reached; otherwise, we place only one ball.
 \item The \textsc{Twinning process}: At each round, we only take one bin sample. If the load is below some threshold, then we place two balls; otherwise, we place only one ball.
 \item The \textsc{Thinning process} as recently studied by Feldheim and Gurel-Gurevich (2021): At each round, we first take one bin sample. If its load is below some threshold, we place one ball; otherwise, we place one ball into a {\em second} bin sample.
\end{itemize}

As we demonstrate, our general framework implies for all these processes a gap of $\Oh(\log n)$ between the maximum load and average load, even when an arbitrary number of balls $m \geq n$ are allocated (heavily loaded case). Our analysis is inspired by a previous work of Peres, Talwar and Wieder (2010) for the $(1+\beta)$-process, however here we rely on the interplay between different potential functions to prove stabilization.

	\medskip
	
	\noindent \textbf{\textit{Keywords---}}  Balls in bins, balanced allocations, potential functions, heavily loaded, gap bounds, maximum load, thinning, two-choices, weighted balls. \\
	\textbf{\textit{AMS MSC 2010---}} 68W20, 68W27, 68W40, 60C05
	
\end{abstract}

\clearpage

\clearpage
\tableofcontents
~
\clearpage

\section{Introduction}

We consider the sequential allocation of $m$ balls (jobs or data items) to $n$ bins (servers or memory cells), by allowing each ball to choose from a set of randomly chosen bins. The goal is to allocate balls efficiently, while also keeping the load distribution balanced. The balls-into-bins framework has found numerous applications in 
hashing, load balancing, routing, but has also shown to be nicely related to more theoretical topics such as randomized rounding or pseudorandomness (we refer to the surveys~\cite{W17} and~\cite{MR01} for more details).

A classical algorithm is the \DChoice process introduced by Azar et al.~\cite{ABKU99} and Karp et al.~\cite{KLM96}, where for each ball to be allocated, we sample $d \geq 1$ bins uniformly and then place the ball in the least loaded of the $d$ sampled bins. It is well-known that for the \OneChoice process ($d=1$), the gap between the maximum and average load is $\Theta\bigr( \sqrt{ m/n \cdot \log n } \bigr)$. In particular, this gap grows significantly as $m/n \rightarrow \infty$, which is called the {\em heavily loaded case}. For $d=2$,~\cite{ABKU99} proved that the gap is only $\log_2 \log n+\Oh(1)$ for $m=n$. This result was generalized by Berenbrink et al.~\cite{BCSV06} who proved that the same guarantee also holds for $m \geq n$, in other words, even as $m/n \rightarrow \infty$, the difference between the maximum and average load remains a slowly growing function in $n$ that is independent of $m$.
This dramatic improvement of \TwoChoice over \OneChoice has been widely known as the ``power of two choices''.

It is natural to investigate allocation processes that are less powerful than \TwoChoice in their ability to sample two bins, sample uniformly or distinguish between the load of the sampled bins. Such processes make fewer assumptions than \TwoChoice and can thus be regarded as more sampling-efficient and robust.

One example of such an allocation process is the $(1+\beta)$-process introduced by Peres et al.~\cite{PTW15}, where each ball is allocated using \OneChoice with probability $1-\beta$ and otherwise is allocated using \TwoChoice. The authors proved that for any $\beta \in (0,1]$, the gap is only $\Oh(\frac{\log n}{\beta})$. Hence, only a ``small'' fraction of \TwoChoice rounds are enough to inherit the property of \TwoChoice that the gap is independent of $m$. A similar class of adaptive sampling schemes (where, depending on the loads of the samples so far, the ball may or may not sample another bin) was analyzed by Czumaj and Stemann~\cite{CS01}, but their results hold only for $m=n$.

Another important example is {\em $2$-\Thinning} (which we later simply refer to as $\Thinning$), which was studied in~\cite{IK04,FG18}. In this process, each ball first samples a bin uniformly and 
if its load is below some threshold, the ball is placed into that sample. Otherwise, the ball is placed into a {\em second} bin sample, without inspecting its load. In~\cite{FG18}, the authors proved that for $m=n$, there is a fixed threshold which achieves a gap of $\Oh(\sqrt{ \frac{\log n}{\log \log n}})$. This is a significant improvement over \OneChoice, but also the total number of samples is $(1+o(1)) \cdot m$, which is an improvement over \TwoChoice. Similar threshold processes have been studied in queuing~\cite{ELZ86}, \cite[Section 5]{M96} and discrepancy theory~\cite{DFG19}.
For values of $m$ sufficiently larger than $n$, \cite{FGL21} and \cite{LS21} prove some lower and upper bounds for a more general class of \emph{adaptive} thinning protocols (here, adaptive means that the choice of the threshold \emph{may} depend on the load configuration). Related to this line of research, \cite{LS21} also analyze a so-called \Quantile-process, which is a version of \Thinning where the ball is placed into a second sample only if the first bin has a load which is at least the median load.

Finally, we mention the \Caching process (also known as  memory-protocol) analyzed by Mitzenmacher et al.~\cite{MPS02}, which is essentially a version of the \TwoChoice process with cache. At each round, we take a uniform sample but we also have access to a cache. Then the ball is placed in the least loaded of the sampled and the bin in the cache, and after that the cache is updated if needed. It was shown in~\cite{MPS02} that for $m=n$, the process achieves a better gap than \TwoChoice.

From a more technical perspective, apart from analyzing a large class of natural allocation processes, an important question is to understand how sensitive the gap is to changes in the allocation distribution. To this end,~\cite{PTW15} formulated general conditions on the distribution vector, which, when satisfied in all rounds, imply a small gap bound. These were then later applied not only to the $(1+\beta$)-process, but also to analyze a ``graphical'' allocation model where a pair of bins is sampled by picking an edge of a graph uniformly at random. Other works which study perturbations on the allocation distribution are: allocations on hypergraphs~\cite{G08}, balls-into-bins with correlated choices~\cite{W07}; or balls-into-bins with hash functions~\cite{CRSW11}.

\textbf{Our Results.} We introduce a general framework that allows us to deduce a small gap independent of $m$ for many of the processes above, but also opens the avenue for the study of novel allocation processes such as \Twinning or \Packing. To ensure that an allocation process produces a balanced load distribution, we could either bias the allocation towards underloaded bins by skewing the probability by which a bin is chosen for an allocation (probability bias), or alternatively, we could add more balls to a bin if it is underloaded (weight bias).

Note that a small bias in the probability distribution can be achieved in various ways, for example, by taking a {\em second} bin sample: $(i)$ in each round (\TwoChoice), or $(ii)$ in each round with some probability $\beta$ ($(1+\beta)$-process), or $(iii)$ in each round dependent on the load of the first sample (\Thinning). Regarding the weight bias, as pointed out in~\cite{MPS02}, allocating consecutive balls to the same bin is very natural in applications like sticky routing~\cite{GKK88} or load balancing, where tasks (=balls) may often arrive in bursts. We note that the filling bias can be also seen as a ``popularity-bias''; once a lightly loaded bin has been found, it will be a preferred location for the next balls until it becomes overloaded.

The first type of allocation processes matching the above definition are called \textbf{Filling Processes}, where it suffices to sample one uniform bin for each allocation, but then we are able to ``fill'' the bin with balls until it becomes overloaded: 
\begin{enumerate}\itemsep-4pt
    \item[] \textbf{Condition \POne}: At each round, the probability of allocating to any fixed bin is majorized by the uniform distribution (i.e., \OneChoice).
   \item[] \textbf{Condition \WOne}: At each round, if an underloaded bin is chosen, allocate just enough balls to it so that the bin becomes overloaded. Otherwise, place a single ball.
\end{enumerate}
As our first main result, we prove that if \POne \emph{and} a (more technical, but relaxed) version of \WOne both hold, then a gap bound of $\Oh(\log n)$ follows. While it is easy to see that \Packing meets the two conditions, some care is needed to apply the framework to \Caching (a.k.a. memory-protocol) due to its use of the cache, which creates correlations between the allocation of any two consecutive balls. Hence \Packing and \Caching are at least as sampling-efficient as \OneChoice while achieving a gap independent of $m$, which matches the gap bound of $(1+\beta)$-process which requires strictly more samples per ball.

The second type of allocation processes matching the above description are called \textbf{Non-Filling Processes}. In these processes, we can only allocate a constant number of balls to a bin, regardless of its load. Neglecting some technical details (see \cref{sec:framework_thinning}), the two conditions are as follows: 
\begin{enumerate}\itemsep-4pt
    \item[] \textbf{Condition \PThree}: At each round, the probability of allocating to any fixed underloaded bin is larger than $\frac{1}{n}$, while the probability for any fixed overloaded bin is smaller than $\frac{1}{n}$.
    \item[] \textbf{Condition \WThree}: At each round, if an underloaded bin is chosen for allocation, place more (but a constant number) of balls than if an overloaded bin is chosen.
\end{enumerate}
As our second main result, we prove a gap bound of $\Oh(\log n)$ whenever a process satisfies \emph{at least one} of these two conditions. It turns out that \PThree is satisfied by \MeanThinning (i.e., the $2$-Thinning process with threshold being the average load), while \WThree is satisfied by \Twinning; thus for both of these processes a gap bound of $\Oh(\log n)$ follows immediately. The result for \Twinning gives an example of a process with gap $\Oh(\log n)$, which $(i)$ samples one bin in each round and $(ii)$ allocates at most a constant number of balls in any round. The result for \MeanThinning extends via an inductive\arxive{I don't think we should call this majorization here; otherwise it is confusing with the next sentence} argument to any \Thinning process with a threshold between $0$ and $\Oh(\log n)$ above the mean load.
Finally, using the idea of majorizing allocation probabilities~\cite{PTW15}, our framework also applies to $(1+\beta)$-processes for constant $\beta \in (0,1]$.

To the best of our knowledge, most of the related work on balls-into-bins focuses on a small number (usually one or two) allocation processes, and analyzes their gap (with~\cite{PTW15,V99} being exceptions). Here we develop a framework that captures a variety of existing and new processes. This generalization comes at a price, e.g., the $\Oh(\log n)$ bound for \Caching is probably not tight and a more specialized analysis may be needed for an improved bound. Nonetheless, for many processes our upper bounds are tight up to constant factors (or $\Oh(\log \log n)$), see \cref{tab:overview}.

Our analysis, especially of \Twinning, bears some resemblance to the weighted balls-into-bins setting. There, the weights of the balls are usually drawn from some distribution {\em before} the bin is sampled (see~\cite{TW07,PTW15, BFHM08} for some results). The major difference is that in our model weights depend on the chosen bin, whereas in previous works this is not the case, but instead there is a time-invariant probability bias towards lightly loaded bins (e.g., as in $(1+\beta)$-process and \TwoChoice).

\begin{table}[h]	\resizebox{\textwidth}{!}{
\renewcommand{\arraystretch}{1.5}
		\centering
\begin{tabular}{|c|c|c|c|c|}
\hline 
\multirow{2}{*}{Process} & \multicolumn{2}{c|}{Lightly Loaded Case $m=\Oh(n)$} & \multicolumn{2}{c|}{Heavily Loaded Case $m = \omega(n)$} \\ \cline{2-5}
& Lower Bound & Upper Bound & Lower Bound & Upper Bound
\\ \hline 
$(1+\beta)$, fixed $\beta \in (0,1)$  & \multicolumn{2}{c|}{\phantom{[14]}$\qquad \quad\;\,\frac{\log n}{\log \log n}\qquad\quad\;\,$\cite{PTW15} } & \cellcolor{Gray} $\log n$ & \cellcolor{Gray}  $ \log n$ \\ 	\hhline{|-|-|-|-|-|}
\Caching & \multicolumn{2}{c|}{\phantom{[12]}$\qquad \quad\log \log n\quad \qquad $ \cite{MPS02}} & -- & \cellcolor{Greenish} $\log n$ \\ \hhline{|-|-|-|-|-|}
\Packing & \multicolumn{2}{c|}{\cellcolor{Greenish}$ \frac{\log n}{\log \log n}$} &\cellcolor{Greenish} $\frac{\log n}{\log \log n}$ & \cellcolor{Greenish}$ \log n$ \\
\hhline{|-|-|-|-|-|}
\Twinning & \multicolumn{2}{c|}{ \cellcolor{Greenish}$ \frac{\log n}{\log \log n}$} & \multicolumn{2}{c|}{ \cellcolor{Greenish} $ \log n$} \\ \hhline{|-|-|-|-|-|}
\MeanThinning & \multicolumn{2}{c|}{\cellcolor{Greenish} $ \frac{\log n}{\log \log n}$} & \multicolumn{2}{c|}{\cellcolor{Greenish} $ \log n$} \\ \hhline{|-|-|-|-|-|} 
$\Thinning\left(\Theta(\sqrt{ \frac{\log n}{\log \log n}})\right)$ 
& \multicolumn{2}{c|}{\phantom{[7]}$\qquad\quad\sqrt{ \frac{\log n}{\log \log n}}\qquad\quad $ \cite{FL20}}   &   \phantom{\;\; [11]}$\frac{\log n}{\log \log n}$  \;\; \cite{LS21} & \cellcolor{Greenish} $\log n$ 
\\ \hhline{|-|-|-|-|-|}
\textsc{Adaptive-}\Thinning & \multicolumn{2}{c|}{ \phantom{[7]}$ \qquad\quad\sqrt{\frac{\log n}{\log \log n}}\qquad\quad$ \cite{FL20}} & \phantom{\;\; [11]} $\frac{\log n}{\log \log n}$ \;\; \cite{LS21} & $ \frac{\log n}{\log \log n}$ \cite{FGL21} \\ 
\hline
\end{tabular}}
 
\caption{Overview of the Gap achieved (with probability at least $1-n^{-1}$), by different allocation processes considered in this work. All stated bounds hold asymptotically; upper bounds hold for all values of $m$, while lower bounds may only hold for certain values of $m$. Cells shaded in \colorbox{Greenish}{Green} are new results, cells shaded \colorbox{Gray}{Gray} are known results we re-prove. The upper bounds for \Packing and \Twinning when $m=\Oh(n)$ follow immediately from \OneChoice.
}\label{tab:overview}

\end{table}

\textbf{Proof Overview.}
To analyze filling processes, we consider an exponential potential function $\Phi^{t}$ with parameter $\alpha > 0$ (a variant of the one used in~\cite{PTW15, S77}). This potential function considers only bins that are overloaded by at least two balls; thus it is blind to the load configuration across underloaded bins. Our first lemma (\cref{lem:filling}) upper bounds $\Ex{ \Phi^{t+1} \, \mid \, \Phi^{t}}$, and essentially proves that this expectation is maximized if the process uses a uniform distribution for picking a bin. We then use this upper bound to deduce that: $(i)$ in many rounds, the potential drops by a factor of at least $1-\Theta(\frac{\alpha}{n})$, and $(ii)$ in other rounds, the potential increases by a factor of at most $1+\Theta(\frac{\alpha^2}{n})$ (see \cref{lem:filling_good_quantile}). Taking $\alpha$ sufficiently small, we conclude that $\Ex{\Phi^{t}}=\Oh(n)$, and by using Markov's inequality the desired gap bound is implied.

The analysis of non-filling processes is considerably more involved. The main technical challenge is that for large $m$, these processes may undergo long phases in which either most of the bins are overloaded or most of the bins are underloaded, and consequently most balls are allocated using \OneChoice (see \cref{fig:experiments} for some experiments). Note that this is not the case for other processes such as the $(1+\beta)$-process (for constant $\beta >0$) let alone \TwoChoice.

We overcome this challenge by establishing a sufficient ``self-stabilisation'' of the quantile of the mean load. This is done by first proving that the quadratic potential drops as long as the absolute potential is $\Omega(n)$ (\cref{lem:quadratic_absolute_relation_for_w_plus_w_minus}). Hence, the absolute potential has to be $\Oh(n)$ at some point. Then we prove that once this happens, the quantile of the mean load will be stable, i.e., bounded away from $0$ and $1$, for sufficiently many rounds (\cref{lem:good_quantile}). The final ingredient is to prove that in rounds were the mean quantile is stable, the exponential potential drops by a sufficiently large factor in expectation, while in other rounds it increases by a smaller factor. This step is similar to the analysis of filling processes, but more complicated; one reason being that the exponential potential function here includes both overloaded and underloaded bins.

\textbf{Organization.} 
In \cref{sec:processes} we describe the aforementioned allocation processes  more formally (in addition, an illustration of the four processes mentioned in the abstract can be found in \cref{fig:processes} on page~\pageref{fig:processes}). In \cref{sec:notation} we define some basic mathematical notation needed for our analysis. In \cref{sec:framework} we present our general framework for filling and non-filling processes. In this part, we also verify that the specific processes mentioned in the abstract fall into (or can be reduced to) the framework. In \cref{sec:analysis_filling}, we prove an $\Oh(\log n)$ bound on the gap for filling processes. In \cref{sec:analysis_thinning} we sketch the proof of the same gap bound for non-filling processes. 
In \cref{sec:non_filling_analysis_tools} we introduce a crucial concept for the analysis of non-filling processes, which are tools that imply that the mean quantile stabilizes. In \cref{sec:non_filling_potential_functions}, we derive several inequalities involving different kinds of potential functions. In \cref{sec:non_filling_analysis}, we complete the non-filling analysis.
The derivation of the lower bound is given in \cref{sec:lower}. We also present experiments (\cref{fig:experiments} on page~\pageref{fig:experiments})
illustrating the interplay between the different potential functions used in the analysis. In \cref{sec:experiemental_results}, we empirically compare the gaps of the different processes.

\section{Balanced Allocation Processes}\label{sec:processes}

We consider sequential allocation processes that allocate $m$ balls into $n$ bins. By $x^{t}$ we denote the load vector at round $t$ (the state after $t$ rounds have been completed). Further, $W^{t}:=|x^{t}|_1$ denotes the total number of balls allocated.
We begin with the classical \DChoice process~\cite{ABKU99,KLM96}.  

\begin{framed}
\vspace{-.45em} \noindent
\underline{\DChoice Process:}\\
\textsf{Iteration:} For each $t \geq 0$, sample $d$ uniform bins $i_1,\ldots, i_d$ independently. Place a ball in a bin $i_{min}$ satisfying $x_{i_{\min}}^t = \min_{1\leq j\leq d} x_{i_{j}}^t $, breaking ties arbitrarily.\vspace{-.5em}
\end{framed}
\noindent Mixing \OneChoice with \TwoChoice rounds at a rate $\beta$, one obtains the $(1+\beta)$ process~\cite{PTW15}:
\begin{framed}
\vspace{-.45em} \noindent
\underline{$(1+\beta)$ Process:}\\
\textsf{Parameter:} A probability $\beta \in (0,1]$.\\
\textsf{Iteration:} For each $t \geq 0$, with probability $\beta$ allocate one ball via the \TwoChoice process, otherwise allocate one ball via the \OneChoice process. \vspace{-.5em}
\end{framed}

The memory protocol was introduced by Mitzenmacher et al.~\cite{MPS02} and works under the assumption that the address $b$ of a single bin can be stored or `cached'. The process is essentially a \TwoChoice process where the second sample is replaced by the bin in the cache. 

\begin{framed}
\vspace{-.45em} \noindent
\underline{\Caching Process (a.k.a. Memory Protocol):}\\
\textsf{Iteration:} For each $t \geq 0$, sample a uniform bin $i$, and update its load (or of cached bin $b$):
\begin{equation*}
    \begin{cases}
       x_{i}^{t+1} = x_{i}^{t} + 1  & \mbox{if $x_{i}^{t} <  x_{b}^{t} $} \qquad \mbox{(also update cache $b=i$)}, \\
        x_{i}^{t+1} = x_{i}^{t} + 1  & \mbox{if $x_{i}^{t} =  x_{b}^{t} $}, \\
      x_{b}^{t+1} = x_{b}^{t} + 1 & \mbox{if $x_{i}^{t} >  x_{b}^{t}$}.
   \end{cases}
 \end{equation*}\vspace{-1.5em}
\end{framed}

\begin{figure}[h!]
\begin{center}
\includegraphics[scale=0.45]{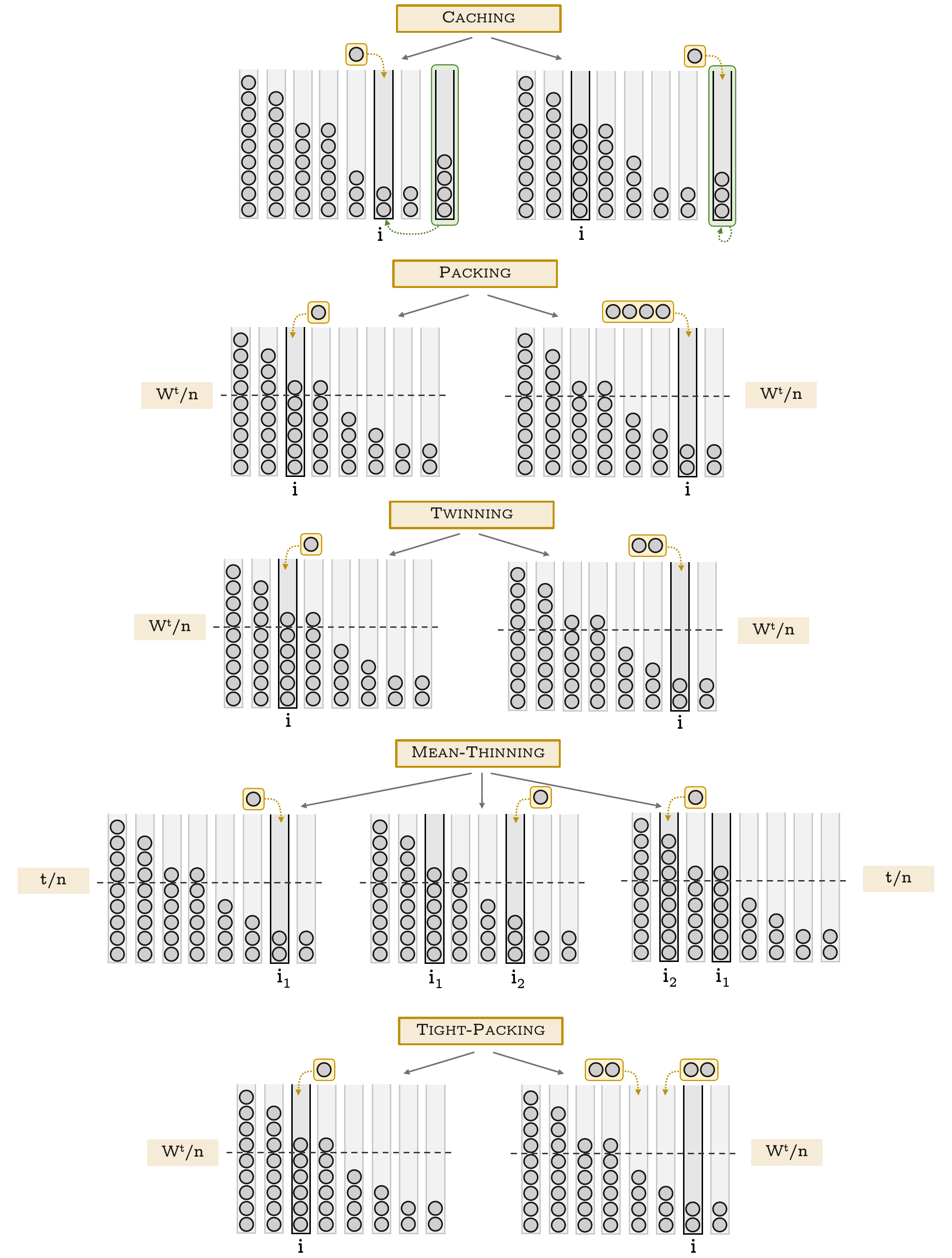}
\end{center}
\caption{Illustration of the different possibilities of allocating one (or more) balls in each of the four processes: \Caching, \Packing, \Twinning, \MeanThinning, \OverPacking.}\label{fig:processes}
\end{figure}

As mentioned earlier, \cite{FGL21,LS21} studied a more powerful version of \Thinning, where the threshold function may be adapted to the current load vector, i.e., $f=f(n,x^{t})$. However, in this work we focus on non-adaptive threshold functions $f(n)$.

We further introduce two new processes: $(a)$ the \Packing process, which places ``greedily'' as many balls as possible into an underloaded bin (i.e., a bin with load below average) and $(b)$ the \Twinning process, which is similar to \Packing, but in a sense it is less extreme as it only  places two balls into an underloaded bin. This process is possibly the most similar to \OneChoice among our processes, however, we still prove that it has a $\Oh(\log n)$ gap. %
\begin{multicols}{2}
\begin{framed}
\vspace{-.45em} \noindent
\underline{\Packing~Process:}\\
\textsf{Iteration:} For each $t \geq 0$, sample a uniform bin $i$, and update its load:
    \begin{equation*} x_{i}^{t+1} =
    \begin{cases}
       \lceil \frac{W^{t}}{n}  \rceil + 1  & \mbox{if $x_{i}^{t} <  \frac{W^{t}}{n} $}, \\
      x_{i}^{t} + 1 & \mbox{if $x_{i}^{t} \geq  \frac{W^{t}}{n} $}.
   \end{cases}
 \end{equation*}\vspace{-1.5em}
\end{framed} 
\begin{framed}
\vspace{-.45em} \noindent
\underline{\Twinning Process:}\\
\textsf{Iteration:} For each $t \geq 0$, sample a uniform bin $i$, and update its load:
    \begin{equation*} x_{i}^{t+1} =
    \begin{cases}
       x_{i}^{t} + 2 & \mbox{if $x_{i}^{t} <  \frac{W^{t}}{n}$}, \\
      x_{i}^{t} + 1 & \mbox{if $x_{i}^{t} \geq  \frac{W^{t}}{n}$}.
   \end{cases}
 \end{equation*}\vspace{-1.5em}
\end{framed}

\end{multicols}

The following process has been studied by several authors~\cite{IK04,FG18,FL20}, where the threshold is usually set in advance. A special case is $\Thinning(0)$ which we also call \MeanThinning. 
 
\begin{framed}
\vspace{-.45em} \noindent
\underline{$\Thinning(f(n))$ Process:}\\
\textsf{Parameter:} A threshold function $f(n) \geq 0$\\
\textsf{Iteration:} For $t \geq 0$, sample two uniform bins $i_1$ and $i_2$ independently, and update:  
    \begin{equation*}
    \begin{cases}
     x_{i_1}^{t+1} = x_{i_1}^{t} + 1 & \mbox{if $x_{i_1}^{t} <  \frac{t}{n}+f(n)$}, \\
      x_{i_2}^{t+1} = x_{i_2}^{t} + 1 & \mbox{if $x_{i_1}^{t} \geq  \frac{t}{n}+f(n)$}.
   \end{cases}
 \end{equation*}\vspace{-1.5em}
\end{framed}

The above processes arise naturally when one is trying to achieve a small gap from few bin queries or random samples. In contrast, the following process is rather contrived, as
whenever an underloaded bin is chosen, balls can be placed into (possibly different) underloaded bins that have the highest load. However, it is interesting that even this process achieves a small gap, as we will prove later.
\begin{framed}
	\vspace{-.45em} \noindent
	\underline{\OverPacking~Process:}\\
	\textsf{Iteration:} For each $t \geq 0$, sample a uniform bin $i$, and update:
	\begin{equation*}
 \left\{ \!\!\!	\begin{tabular}{ l l   }
	\text{Place $\lceil - x_i^t + \frac{W^t}{n}\rceil + 1$ balls one by one into the bins with highest loads} &  \multirow{2}{*}{\mbox{if $x_{i}^{t} <  \frac{W^{t}}{n} $,}}   \\ 
\text{such that $x_j^{t+1} < \frac{W^t}{n}$, except for one bin $j$ with $\frac{W^t}{n}\leq  x_{j}^{t+1} < \frac{W^t}{n} + 1$,} &    \vspace{.15cm } \\  
	$x_{i}^{t+1} = x_{i}^{t} + 1$ &  \mbox{if $x_{i}^{t} \geq  \frac{W^{t}}{n} $}.    
	\end{tabular}\right. 
	\end{equation*}\vspace{-1.5em}
\end{framed}

An equivalent description of \OverPacking with decreasingly sorted bin loads $x_1^{t} \geq x_2^{t} \geq \cdots \geq x_{n}^t$ goes as follows. After picking the random bin $i$, we update the load of the maximally loaded underloaded bin $j \in [n]$ at round $t$ to $x_j^{t+1} = x_j^{t} + \lceil - x_j^{t} + \frac{W^t}{n} \rceil$, so its new load satisfies $x_j^{t+1} \in [ \frac{W^t}{n},\frac{W^t}{n}+1)$. 
Then the remaining $\lceil - x_i^t + \frac{W^t}{n} \rceil + 1 - \lceil - x_j^{t} + \frac{W^t}{n} \rceil \geq 0$ balls (if there are any), are allocated to bins $j+1,j+2,\ldots,j+\ell$ for some integer $\ell \geq 0$, such that all bins $i \in [j+1,j+\ell-1]$ have the same load value in $[\frac{W^t}{n}-1,\frac{W^t}{n})$.

For illustrations of \OverPacking and other processes defined in this section see \cref{fig:processes}.

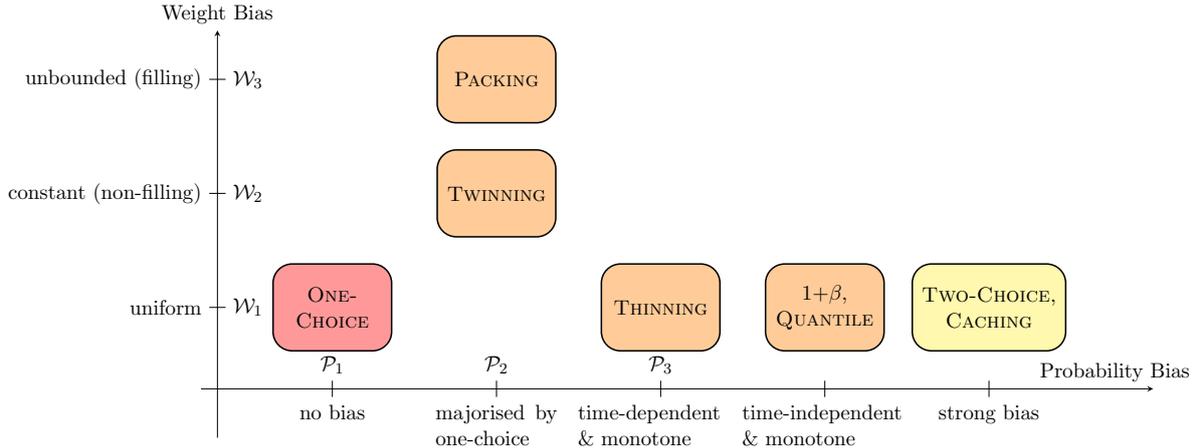
\begin{figure}
\scalebox{0.72}{
\begin{tikzpicture}[scale=3,knoten/.style={rectangle,rounded corners=10pt,opacity=1,minimum size=1.6cm,draw=black,thick,fill=orange!40}]

\draw[-stealth] (-0.8,0) to node[pos=0.96,above]{Probability Bias} (5,0);
\draw[-stealth] (-0.7,-0.3) to node[pos=1.0,above]{Weight Bias}(-0.7,2.2);

\draw (-0.75,0.5) to node[pos=0,left]{uniform} node[pos=1,right]{\WOne} (-0.65,0.5);

\draw (-0.75,1.2) to node[pos=0,left]{constant (non-filling)} node[pos=1,right]{\WTwo} (-0.65,1.2);

\draw (-0.75,1.9) to node[pos=0,left]{unbounded (filling)} node[pos=1,right]{\WThree} (-0.65,1.9);

\draw (0,0.05) to node[pos=1,below]{no bias} node[pos=0,above]{\POne} (0,-0.05);

\draw (1,0.05) to node[pos=1,below]
{\begin{minipage}{0.14\textwidth}
majorised by one-choice
\end{minipage}} 
node[pos=0,above]{\PTwo} (1,-0.05);

\draw (2,0.05) to node[pos=1,below]{
\begin{minipage}{0.19\textwidth}
time-dependent \\ \& monotone
\end{minipage}
} node[pos=0,above]{\PThree} (2,-0.05);

\draw (3,0.05) to node[pos=1,below]{
\begin{minipage}{0.19\textwidth}
 time-independent  \\ \& monotone 
\end{minipage}
} node[pos=0,above]{} (3,-0.05);

\draw (4,0.05) to node[pos=1,below]{strong bias} node[pos=0,above]{} (4,-0.05);

\node[knoten,fill=red!40] (0) at (0,0.5) [label=north east:{}]{
\begin{minipage}{0.12\textwidth}
\begin{center}
\OneChoice
\end{center}
\end{minipage}
};

\node[knoten] (5) at (2,0.5) [label=north east:{}]{
\begin{minipage}{0.12\textwidth}
\begin{center}
\Thinning
\end{center}
\end{minipage}
};
\node[knoten] (6) at (3,0.5) [label=north east:{}]{
\begin{minipage}{0.12\textwidth}
\begin{center}
1+$\beta$, \\
\Quantile{}
\end{center}
\end{minipage}
};
\node[knoten,fill=yellow!40] (7) at (4,0.5) [label=north east:{}]{
\begin{minipage}{0.16\textwidth}
\begin{center}
\TwoChoice,
\Caching
\end{center}
\end{minipage}
};
\node[knoten] (7) at (1,1.2) [label=north east:{}]{
\begin{minipage}{0.12\textwidth}
\begin{center}
\Twinning
\end{center}
\end{minipage}
};
\node[knoten] (8) at (1,1.9) [label=north east:{}]{
\begin{minipage}{0.12\textwidth}
\begin{center}
\Packing
\end{center}
\end{minipage}
};

\end{tikzpicture}
}

\caption{A schematic overview of different allocation processes. For a formal definition of \Quantile, we refer the reader to \cite{LS21}.}

\end{figure}

\section{Notation}\label{sec:notation}

As mentioned before, $x^t$ is the load vector of the $n$ bins at round $t=0,1,2,\ldots$ (the state after $t$ rounds have been completed). So in particular, $x^0:=(0,\ldots,0)$. In many parts of the analysis, we will assume an arbitrary labeling of the $n$ bins so that their loads at round $t$ are ordered decreasingly, i.e., 
\[
 x_1^t \geq x_2^t \geq \cdots \geq x_n^t.
\]
Unlike many of the standard balls-into-bins processes, some of our processes may allocate more than one ball in a single round. To this end we define $W^t := \sum_{i\in[n]} x_i^t$ as the total number of balls allocated that are allocated by round $t$ (for \Thinning and \Caching we allocate one ball per round, so $W^t=t$; for \Packing and \Twinning, $W^t \geq t$). We will also use $w^{t}:=W^{t+1}-W^{t}$ to denote the number of balls allocated in round $t$. 

We define the gap as $\Gap(t):=\max_{i \in [n]} x_i^{t} - \frac{W^t}{n}$, which is the difference between the maximum load and average load\footnote{It is common in the literature to focus on this difference, rather than the difference between maximum and minimum load; however, our results for non-filling extend to this stronger notion.}
at round $t$. When referring to the gap of a specific process $P$, we write $\Gap_{P}(t)$ but may simply write $\Gap(t)$ if the process under consideration is clear from the context.
Finally, we define the normalized load of a bin $i$ as:
\[
 y_i^{t} :=  x_i^t - \frac{W^t}{n}.
\]
Further, let $B_{+}^{t}:=\left\{ i \in [n] \colon y_i^{t} \geq 0 \right\}$ be the set of overloaded bins, and $B_{-}^{t} := [1,n] \setminus B_{+}^{t}$ be the set of underloaded bins. Let $\delta^t := |B_+^t|/n \in [\frac{1}{n},1]$ which is the quantile corresponding to the average load. Following \cite{PTW15}, let $p_{1}^t,p_{2}^t,\ldots,p_{n}^t$ be the distribution vector of an allocation process, where for every $i \in [n]$, $p_i^{t}$ is the probability that the process
picks bin the $i$-th most heavily loaded bin round $t$ for the allocation. We denote by $i \in_{p^t} [n]$ a sample of $[n]$ according to this vector $p^t$. Note that in most processes after taking the sample $i$, all balls will be allocated to $i$; however, when we define filling processes in the next section, this requirement will be relaxed.  A special case of a distribution vector is the uniform distribution vector of \OneChoice, which is $p_i^t = p_i = \frac{1}{n}$ for all $i$ and $t$. For two distribution vectors $p$ and $q$ (or analogously, for two sorted load vectors), we say that $p$ majorizes $q$ if for all $1 \leq k \leq n$,
$
 \sum_{i=1}^{k} p_i \geq \sum_{i=1}^k q_i.
$
Let $P_+^t := \sum_{i \in B_+^t} p_i^t$ and $P_-^t := \sum_{i \in B_-^t} p_i^t$.

In the following, $\mathfrak{F}^{t}$ is the filtration corresponding to the first $t$ allocations of the process (so in particular, $\mathfrak{F}^{t}$ reveals $x^{t}$). For random variables $Y,Z$ we say that $Y$ is stochastically smaller than $Z$ (or equivalently, $Y$ is stochastically dominated by $Z$)  if $ \Pro{ Y \geq x } \leq \Pro{ Z \geq x }$ for all real $x$. 

Throughout the paper, we often make use of statements and inequalities which hold only for sufficiently large $n$. For simplicity, we do not state this explicitly.

\NewConstant{quad_delta_drop}{c}
\NewConstant{quad_const_add}{c}
\NewConstant{good_quantile_mult}{c}
\NewConstant{bad_quantile_mult}{c}
\NewConstant{v_mult_factor}{c}
\NewConstant{poly_n_gap}{c}

\NewConstantWithName{stab_time}{\ensuremath{c_s}}
\NewConstantWithName{rec_time}{\ensuremath{c_r}}
\NewConstantWithName{small_delta}{C}

\NewConstantWithName{lambda_bound}{c}

\section{General Framework and Upper Bounds} \label{sec:framework}

We present two general frameworks that cover allocation processes with a certain bias towards underloaded bins. The first framework is called ``filling''-processes, as such processes will be allowed to fill underloaded bins to their maximum capacity (which is defined to be the average load plus one). The second framework covers so-called ``non-filling''-processes, where the number of balls that can be allocated in one single round is bounded by a constant.

\subsection{Framework for Filling-Processes (Caching and Packing)}%

We make the following assumptions about the processes.
For each round $t \geq 0$,
we sample bin $i=i^t$ and then place a certain number of balls to $i$ (or other bins).

More formally:
\begin{enumerate}\itemsep0pt
 \item[] \textbf{Condition \hypertarget{p1}{$\mathcal{P}_1$}}: For each round $t=0,1,\ldots$, pick an arbitrary labeling of the $n$ bins such that $y_1^{t} \geq y_2^{t} \geq \cdots \geq y_n^{t}$. Then let $i = i^t \in_{p^t} [n]$, where the distribution vector $p^t$ is majorized by the uniform distribution (\OneChoice). 
 \item[] \textbf{Condition \hypertarget{w1}{$\mathcal{W}_1$}}: For each round $t =0,1,\ldots$, with $i$ being the bin chosen above:
 \begin{itemize}
  \item If $y_i^{t} < 0$ then allocate exactly $\lceil - y_i^{t} \rceil + 1 \geq 2$ balls such that there can be at most two bins $k_1, k_2 \in [n]$ where:
  \begin{enumerate}
      \item $k_1 \in [n]$ receives $\lceil -y_{k_1}^t \rceil + 1$ balls,
      \item $k_2 \in [n]$ receives $\lceil -y_{k_2}^t \rceil$ balls,
      \item all other bins $j \in [n]$ receive at most $\lceil -y_{j}^t \rceil - 1$ balls.
  \end{enumerate}
   \item If $y_i^{t} \geq 0$ then allocate a single ball to bin $i$.
 \end{itemize} 
\end{enumerate}

\begin{figure}[ht]
    \centering
    \includegraphics[scale=0.63]{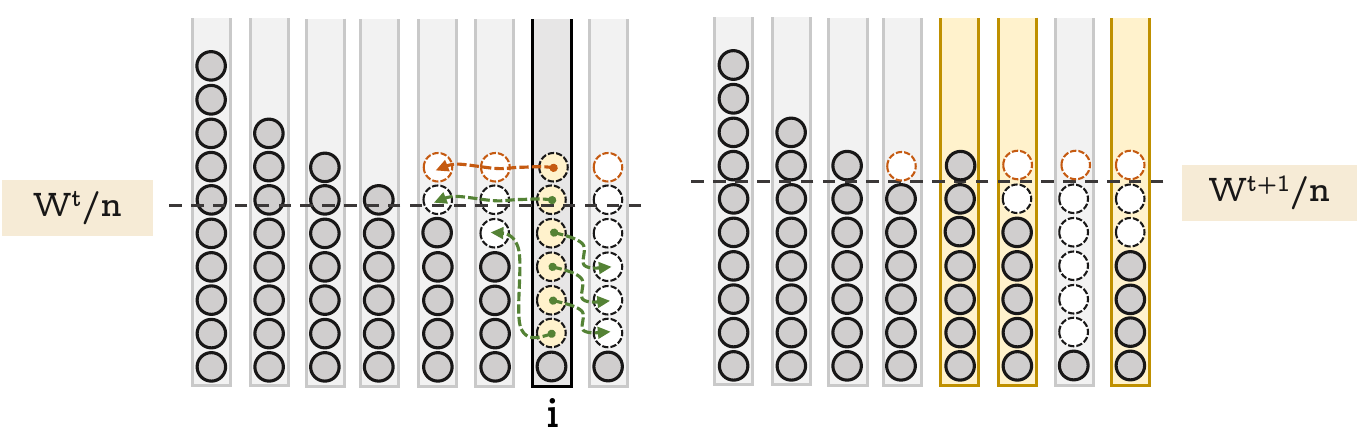}
    \caption{Illustration of an allocation for a filling process. After the underloaded bin $i$ is picked, $\lceil -y_i^{t} \rceil + 1 = 6$ balls are allocated; $5$ of these balls are placed in the underloaded regions of other bins and one ball is placed in the overloaded region (shown in orange).}
    \label{fig:general_filling_process}
\end{figure}

See \cref{fig:general_filling_process} for an example of a process satisfying \WOne. A natural example of a process satisfying \POne and \WOne is \Packing, which only allocates to the (initially) chosen bin $i$.
\begin{lem}
The \Packing process satisfies conditions \POne and \WOne.
\end{lem}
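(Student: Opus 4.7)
The plan is to verify conditions \POne and \WOne separately by direct inspection of the \Packing update rule, with no probabilistic machinery required. Since \Packing samples one bin uniformly at random independently of the current load configuration, the induced distribution vector $p^{t}$ on the sorted bins is identically $(1/n,\ldots,1/n)$, which is trivially majorized by (and in fact equal to) the \OneChoice distribution. This disposes of \POne.

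For \WOne I will split on the sign of $y_{i}^{t}$ for the sampled bin $i$. If $y_{i}^{t}\ge 0$, the \Packing rule sets $x_{i}^{t+1}=x_{i}^{t}+1$, so exactly one ball is placed into bin $i$ and the second bullet of \WOne holds immediately. If $y_{i}^{t}<0$, then by definition $x_{i}^{t+1}=\lceil W^{t}/n\rceil+1$, so the number of balls added to bin $i$ equals
\[
\lceil W^{t}/n\rceil + 1 - x_{i}^{t} \;=\; \lceil W^{t}/n - x_{i}^{t}\rceil + 1 \;=\; \lceil -y_{i}^{t}\rceil + 1,
\]
where I use the identity $\lceil a-k\rceil=\lceil a\rceil - k$ valid for integer $k$, together with $x_{i}^{t}\in\mathbb{Z}$. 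Since $y_{i}^{t}<0$ forces $\lceil -y_{i}^{t}\rceil\ge 1$, this is indeed at least $2$ balls, matching the total count demanded by the first bullet of \WOne.

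It remains to exhibit the exceptional bins $k_{1},k_{2}$. I would take $k_{1}:=i$ and leave $k_{2}$ undefined (equivalently, degenerate to a single exceptional bin). Then $k_{1}$ receives exactly $\lceil -y_{k_{1}}^{t}\rceil+1$ balls as required, and because \Packing deposits no balls into any bin other than $i$, the constraint on all remaining bins $j\neq i$ is satisfied vacuously: each such bin receives $0$ balls, which is consistent with the ``at most $\lceil -y_{j}^{t}\rceil-1$'' clause under the natural convention that this clause only imposes a non-trivial bound on bins where $\lceil -y_{j}^{t}\rceil\ge 1$, i.e., on strictly underloaded bins. Combining the two cases yields \WOne.

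I do not anticipate any genuine obstacle: the lemma is a definitional check rather than an analytical statement. The only step requiring a moment's care is the integer-shift ceiling identity used above, and the observation that \Packing's single-target allocation legitimately fits \WOne's ``at most two exceptional bins'' template by using only one such bin.
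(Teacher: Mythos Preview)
Your proposal is correct and follows essentially the same approach as the paper: verify \POne from uniformity of the bin sample, and verify \WOne by a case split on the sign of $y_i^t$, computing that an underloaded bin receives exactly $\lceil -y_i^t\rceil+1$ balls while no other bin is touched. The paper's proof is terser (it does not spell out the ceiling identity or the $k_1,k_2$ assignment), but the argument is identical.
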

\begin{proof}
Note that the \Packing process picks a uniform bin $i$ at each round $t$, thus it satisfies \POne. Furthermore, if $y_i^{t} < 0$, it allocates exactly $\lceil -y_i^{t} \rceil + 1$ balls to bin $i$; otherwise, it allocates one ball to $i$, and thus \WOne is also satisfied.
\end{proof}

\begin{lem}
The \OverPacking process satisfies conditions \POne and \WOne.
\end{lem}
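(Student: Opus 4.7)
Both conditions follow by direct inspection of the process, using the equivalent sorted-load description given just before the lemma statement (which lets one identify which bin receives how many balls in each round).

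\POne is immediate: at each round \OverPacking samples the bin $i$ uniformly from $[n]$, so its allocation distribution coincides with (and hence is trivially majorized by) the \OneChoice distribution.

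For \WOne, I split on the sign of $y_i^t$. If $y_i^t \geq 0$, the process places a single ball into $i$ by definition, matching the second bullet of \WOne. Assume now $y_i^t < 0$, so exactly $\lceil -y_i^t \rceil + 1 \geq 2$ balls are allocated. Let $j$ denote the maximally loaded underloaded bin whose updated load lies in $[W^t/n,\, W^t/n + 1)$; by construction $j$ receives $\lceil W^t/n - x_j^t \rceil = \lceil -y_j^t \rceil$ balls. I take $k_2 := j$ in order to satisfy \WOne(b), and take no bin to play the role of $k_1$, so \WOne(a) holds vacuously. For each remaining recipient $k \in \{j+1, \ldots, j+\ell\}$, the equivalent description guarantees $x_k^{t+1} < W^t/n$, so the number of balls it receives satisfies $x_k^{t+1} - x_k^t < -y_k^t$; since this count is an integer, it is at most $\lceil -y_k^t \rceil - 1$, establishing \WOne(c). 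Bins outside $\{j, j+1, \ldots, j+\ell\}$ receive zero balls and trivially satisfy (c) in the relevant underloaded regime. Finally, summing yields the required total $\lceil -y_i^t \rceil + 1$ by the definition of the process.

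\textbf{Main obstacle.} There is no real technical obstacle; the only subtle step is using integrality of ball counts to pass from the strict inequality $x_k^{t+1} - x_k^t < -y_k^t$ to $x_k^{t+1} - x_k^t \leq \lceil -y_k^t \rceil - 1$, which one verifies in both cases (whether $-y_k^t$ is or is not an integer). The rest is bookkeeping against the sorted-load characterization.
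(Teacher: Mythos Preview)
Your proposal is correct and follows essentially the same approach as the paper's proof: both identify the maximally loaded underloaded bin $j$ as the single exceptional bin $k_2$ (receiving exactly $\lceil -y_j^t\rceil$ balls), take no bin in the role of $k_1$, and argue that every other recipient satisfies (c). Your write-up is in fact somewhat more explicit than the paper's, which simply asserts ``(ii) one bin receives $\lceil -y_j^t\rceil$ balls; (iii) every other bin $j$ receives a number of balls in $[0,\lceil -y_j^t\rceil-1]$'' without spelling out the integrality step you highlight.
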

\begin{proof}
Note that the \OverPacking process picks a uniform bin $i$ at each round $t$, thus it satisfies \POne.
Then the allocation satisfies the following properties.

First, in case of $y_i^{t} < 0$, then:
$(i)$ we allocate
exactly $\lceil -y_i^{t} \rceil + 1$ balls, $(ii)$ one bin receives $\lceil - y_j^{t} \rceil$ balls, 
and $(iii)$ every other bin $j$ receives a number of balls between $[0, \lceil -y_j^{t} \rceil - 1]$.

Secondly, in case of $y_i^{t} \geq 0$ we place one ball to $i$.
\end{proof}

We emphasize that in condition \POne, the distribution $p^{t}$ may not only be time-dependent, but may also depend on the load distribution at round $t$, that is, the filtration $\mathfrak{F}^{t}$. Also, the allocation used in \WOne may depend on the filtration $\mathfrak{F}^{t}$. Thus the framework also applies in the presence of an adaptive adversary, which directs all the $\lceil -y_i^t \rceil + 1$ balls to be allocated in round $t$ towards the ``most loaded'' underloaded bins. At the other end of the spectrum, there are natural processes which have a propensity to place these balls into ``less loaded'' underloaded bins. One specific, more complex example is the \Caching process, where due to the update of the cache after each single ball, the allocation is more skewed towards ``less loaded'' underloaded bins. However, as we shall discuss shortly, \Caching only satisfies \POne and \WOne after a suitable compression of rounds.  %

Next we present our result for the gap of such filling processes:

\def\fillingkey{Consider any allocation process which  satisfies the conditions \POne and \WOne at each round. Then, there is a constant $C>0$ such that for any $m \geq 1$, 
 \[
 \Pro{ \Gap(m) \leq C \log n} \geq 1-n^{-2}.
 \]}

\begin{thm}\label{thm:filling_key}
    \fillingkey
\end{thm}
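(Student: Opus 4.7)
My plan is to follow the exponential potential method of Peres, Talwar, and Wieder \cite{PTW15}, using the variant sketched in the proof overview. Concretely, I would define
\[
\Phi^t := \sum_{i \,:\, y_i^t \geq 2} e^{\alpha (y_i^t - 1)}
\]
for a sufficiently small constant $\alpha > 0$. The two crucial features of this ``truncated'' PTW potential are that $(i)$ it is blind to the whole underloaded region, and $(ii)$ a bin that a filling step pushes from $y_j^t < 0$ up to $y_j^{t+1} \in \{0,1\}$ (the boundary bins $k_1,k_2$ of \WOne) either does not contribute to $\Phi^{t+1}$ at all, or contributes only a small constant. These properties let me ignore most of the internal mechanics of \WOne.

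The heart of the argument is a one-step drift inequality of the form
\[
\Ex{\Phi^{t+1} \mid \mathfrak{F}^t} \leq \Phi^t \cdot \left( 1 - \frac{c_1 \alpha}{n} + \frac{c_2 \alpha^2}{n} \right) + c_3,
\]
which I would prove by conditioning on whether the sampled bin $i$ is overloaded or underloaded. In the overloaded case exactly one ball is allocated, so this is the familiar PTW single-allocation calculation: $i$'s contribution scales by $e^{\alpha(1-1/n)}$, every other overloaded bin's contribution scales by $e^{-\alpha/n}$, and Taylor-expanding produces the $c_2\alpha^2/n$ term; condition \POne (majorisation of $p^t$ by uniform) caps the probability of this case at $\delta^t$, and it also guarantees that if one replaces $p^t$ by uniform the expectation is only made larger, so it suffices to analyse the \OneChoice case. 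In the underloaded case, \WOne prescribes that $w^t = \lceil -y_i^t \rceil + 1$ balls are placed such that at most one bin $k_1$ ends up with $y_{k_1}^{t+1} \in [0,1)$, at most one bin $k_2$ lands at exactly $y_{k_2}^{t+1} = 0$, and every other bin touched stays strictly below zero. Hence the additive contribution to $\Phi^{t+1}$ is $\Oh(1)$, while every pre-existing overloaded bin has its $y$-value decreased by the average shift $w^t/n$, multiplying its contribution by $e^{-\alpha w^t/n} \leq 1$.

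Combining the two cases yields the displayed inequality. Iterating, one obtains $\sup_t \Ex{\Phi^t} \leq C \cdot n$ for a constant $C = C(\alpha)$ (the superlinear drift at $\Phi^t \geq C' n$ drives the potential back down, while the additive $c_3$ contributes only a constant in expectation). Markov's inequality then gives $\Pro{\Phi^m \geq n^3} \leq n^{-2}$, and on the complementary event every bin with $y_i^m \geq 2$ satisfies $y_i^m - 1 \leq 3\alpha^{-1}\log n$, which implies $\Gap(m) \leq (3/\alpha)\log n + \Oh(1)$ with probability $1-n^{-2}$, as required.

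The main technical obstacle will be the underloaded case of the drift inequality. Because $w^t$ may be as large as $\Theta(\Gap(t))$, the average shift $w^t/n$ is non-trivial and simultaneously couples the contributions of every overloaded bin. One must show that the aggregate multiplicative shrinkage $e^{-\alpha w^t/n}$ of all current overloaded bins outweighs the $\Oh(1)$ additive kick from the boundary-crossing bin $k_1$ whenever $\Phi^t$ is a sufficiently large multiple of $n$. This is precisely where the threshold ``$y_i^t\geq 2$'' in the definition of $\Phi^t$ earns its keep: it forbids $k_1$, $k_2$ and all other partially-filled bins from contributing to the potential at the next step, so the only way $\Phi^{t+1}$ can grow by a multiplicative factor close to $1$ is through bins that were already overloaded by at least two — and those are precisely the bins penalised by the majorisation condition \POne. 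Care is also needed to verify the adversarial worst case in \WOne (where the allocation inside the underloaded region is chosen to hurt $\Phi$ most), but this is exactly where the ``blindness'' of the potential to bins with $y_j < 2$ makes the argument robust.
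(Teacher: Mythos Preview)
Your one-step drift inequality
\[
\Ex{\Phi^{t+1} \mid \mathfrak{F}^t} \leq \Phi^t \cdot \Bigl( 1 - \frac{c_1 \alpha}{n} + \frac{c_2 \alpha^2}{n} \Bigr) + c_3
\]
is too strong and does not hold in general. Consider the normalised load configuration
\[
y^t = (\sqrt{n},\, \underbrace{0,\ldots,0}_{n-\sqrt{n}-1},\, \underbrace{-1,\ldots,-1}_{\sqrt{n}}).
\]
Here $\Phi^t = e^{\alpha\sqrt n}$ is super-polynomial, yet the fraction of underloaded bins is only $1/\sqrt n$, so with probability $1-1/\sqrt n$ the sampled bin is overloaded and the round behaves exactly like \OneChoice. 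In that case the maximally loaded bin's contribution scales by roughly $1+\alpha^2/n$ (the familiar PTW second-order term), and the underloaded case occurs too rarely to cancel it. A direct computation (this is \cref{clm:counterexample} in the paper) gives $\Ex{\Phi^{t+1}\mid\mathfrak{F}^t}\geq \Phi^t(1+0.1\,\alpha^2/n)$, which is incompatible with any multiplicative factor strictly below $1$ once $\alpha$ is small. The place your argument breaks is the unexplained $-c_1\alpha/n$ term: neither the overloaded case (pure \OneChoice, factor $1+O(\alpha^2/n)$) nor the underloaded case (factor $\leq 1$, but occurring with uncontrolled probability) supplies it.

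What the paper does instead is prove \emph{two} drift bounds: an unconditional weak bound $\Ex{\Phi^{t+1}\mid\mathfrak{F}^t}\leq(1+c_1\alpha^2/n)\Phi^t+e^{3\alpha}$, and a strong bound $(1-c_2\alpha/n)\Phi^t+e^{3\alpha}$ valid only on the event $\mathcal{G}^t=\{|B_-^t|\geq n/20\}\cup\{\Delta^t\geq n/10\}$. A separate combinatorial lemma (\cref{cor:enough_good_quantiles}) shows that $\mathcal{G}^t$ holds \emph{deterministically} for at least $n/40$ of every $n$ consecutive rounds---this is where the structure of \WOne is actually used, since filling an underloaded bin either eats into $\Delta^t$ or creates many future underloaded bins. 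One then packages these two bounds into a super-martingale over windows of length $n$, and only after aggregating a full window does the net multiplicative factor drop below $1$. The missing idea in your plan is precisely this window-level argument; a per-round contraction is simply not available.
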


This result is shown in \cref{sec:analysis_filling} by analyzing an exponential potential function $\Phi^m$ over the overloaded bins and eventually establishing that for any $m \geq 1$,
$
 \Ex{ \Phi^m } = \Oh(n).
$
Then a simple application of Markov's inequality yields the desired result for the gap.

As mentioned above \Caching does not satisfy conditions \POne and \WOne directly. The issue is that \Caching, as defined in \cref{sec:notation}, allocates only {\em one} ball at each round whereas, due to condition \WOne, a filling process may place several balls into underloaded bins in a single round. 
To overcome this issue, we define a so-called ``\textbf{unfolding}'' of a filling process. The filling process proceeds in rounds $0,1,\ldots$, whereas the unfolded version is a coupled process, which is encoded as a sequence of \emph{atomic allocations} (each allocation places exactly one ball into a bin).
Using this unfolding along with the gap bound for a filling process satisfying \WOne and \POne, we can then bound the number of atomic allocations with a large gap.
  
   First, we define the process of unfolding formally:\label{folding} Let $x^{t}$ be the load vector of a filling process $P$, and $\widehat{x}^{t}$ be the load vector of the unfolding of $P$ called $U=U(P)$. Initialize $x^{1}:=\widehat{x}^{1}$ as the all-zero vector, corresponding to the empty load configuration. For every round $t \geq 0$, where $P$ has allocated $W^{t-1}$ balls before, we create atomic allocations $S(t):=\{W^{t-1}+1,\ldots,W^{t}\}$, where $W^{t} = W^{t-1}+\min\left\{\lceil -y_i^t \rceil + 1, \; 1\right\}$ for process $U$, such that for each $s \in S(t)$ a single ball is allocated, and  
we have $x^{t+1}:=\widehat{x}^{W^{t+1}}$, i.e., the load distribution of $P$ after round $t+1$ corresponds to load distribution of $U$ after atomic allocation $W^{t+1} \geq t+1$. Note that the unfolding of a process is not (completely) unique, as the allocations in $[W^{t-1}+1,W^{t}]$ of $U(P)$ can be permuted arbitrarily.

\def\memorycompressed{
There is a process $P$ satisfying conditions \POne and \WOne, such that for a suitable unfolding $U=U(P)$, the process $U$ is an instance of \Caching.}

\begin{lem} \label{lem:memory_compressed}
\memorycompressed
\end{lem}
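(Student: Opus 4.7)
The plan is to construct $P$ as a compressed variant of \Caching that tracks, alongside the load vector $x^{t}$, an auxiliary cache bin $b^{t}$ matching the \Caching cache. At round $t$ of $P$, I would draw a uniform sample $s \in [n]$, perform one atomic \Caching step using $(x^t, b^t, s)$, and call the bin that receives the first ball $i$ (so $i = s$ if $x_s^t \leq x_{b^t}^t$ and $i = b^t$ otherwise, with the cache updated iff $x_s^t < x_{b^t}^t$). If $y_i^t \geq 0$ the round ends after this single atomic step; otherwise the round continues with a further $\lceil -y_i^t \rceil$ atomic \Caching steps, each using a freshly drawn independent uniform sample and the prevailing cache, and then ends. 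The sample of $P$ used in \POne and \WOne is taken to be $i$, the first ball's destination. By construction, $U(P)$ is a sequence of atomic \Caching steps performed with independent uniform samples, so the joint law of (load vector, cache) after each atomic step agrees with that of \Caching; thus $U(P)$ is an instance of \Caching.

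To verify \POne, I would compute the distribution of $i$ directly. Labelling bins in decreasing-load order and writing $a = |\{k : x_k^t > x_{b^t}^t\}|$, so that $b^{t}$ sits at rank $a+1$, the bin $i$ equals the cache with probability $(a+1)/n$ (namely $1/n$ from the case $s = b^t$ and $a/n$ from the case $x_s^t > x_{b^t}^t$), equals each bin of rank $> a+1$ with probability $1/n$, and equals each bin of rank $\leq a$ with probability $0$. A short telescoping prefix-sum check then confirms $\sum_{k=1}^{K} p_k^{t} \leq K/n$ for every $K \in [n]$, which is exactly the majorisation by the uniform distribution required by \POne.

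For \WOne, the overloaded branch ($y_i^t \geq 0$) is immediate: exactly one ball is placed, and by definition it lands in $i$. In the underloaded branch I need to decompose the $\lceil -y_i^t \rceil + 1$ balls placed across the atomic steps of the round into the $(k_1, k_2, \textrm{others})$ structure of \WOne. The plan is to track the sequence of cache bins $c_0, c_1, \ldots, c_m$ visited during the round---each transition being a case-A atomic step to a strictly more underloaded bin than the current cache---and to identify $k_1$ with the terminal cache bin $c_m$ (the unique bin whose share of the round's budget permits it to overflow past the mean) and, when it exists, $k_2$ with a cache bin whose load reaches $\lceil W^t/n \rceil$ exactly before being replaced.

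The main obstacle is the structural accounting for the ball counts across the intermediate cache bins $c_0, \ldots, c_{m-1}$ and the ``passive'' case-B sample bins. My plan is a potential-style telescoping on the cache-deficit quantity $D^s := \max\bigl(0, \lceil W^t/n \rceil - x_{b^s}^s\bigr)$: each case-C step either depletes $D^s$ by exactly one or produces the unique overflow event (when the current cache is already at the mean), each case-B step leaves $D^s$ unchanged, and each case-A transition redirects $D^s$ to a strictly larger value reflecting the new, more underloaded cache. Combining this accounting with the fixed round budget $\lceil -y_i^t \rceil + 1$ and the starting deficit $\lceil -y_{c_0}^t \rceil$ pins the unique overflow on the terminal $c_m$ (giving the $k_1$ count), identifies at most one cache bin that is replaced exactly at the mean (giving the $k_2$ count), and bounds every other bin's ball count by $\lceil -y_\cdot^t \rceil - 1$, completing the verification of \WOne. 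With \POne and \WOne established and $U(P)$ coupled to \Caching, the lemma follows.
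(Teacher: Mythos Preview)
Your construction of $P$ and the verification of \POne are correct and align with the paper's approach (the paper simply asserts the majorisation without the explicit prefix-sum computation you give). The divergence is in the \WOne argument.

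Your plan for \WOne---tracking the cache sequence $c_0, \ldots, c_m$, identifying $k_1$ with the terminal cache and $k_2$ with a cache bin replaced exactly at the mean, and bookkeeping via the deficit potential $D^s$---has a genuine gap. The issue is the ``passive'' case-B sample bins: a bin $j$ that ties with the cache (so $x_j = x_b$, ball goes to $j$, cache unchanged) can receive its $\lceil -y_j^t \rceil$-th ball without ever entering the cache. Concretely, with mean $100.5$ and $x_i^t = 99$ (so the round budget is $3$): step $1$ places a ball in $i$ (cache $=i$, load $100$); step $2$ samples $j$ with $x_j^t = 100$, a tie, so $j$ gets the ball (load $101$) and the cache stays $i$; step $3$ places a ball in $i$ (load $101$). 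Now both $i$ and $j$ receive exactly $\lceil -y_\cdot^t \rceil$ balls, yet $j$ never entered the cache, your $k_2$ is undefined, and $j$ is wrongly classified as ``other''. Your telescoping on $D^s$ does not see $j$ at all, since case-B leaves $D^s$ unchanged. (There is also a minor imprecision: a case-A transition need not make $D^s$ \emph{strictly} larger---if the new sample's load is exactly one below the cache's, the deficit is unchanged.)

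The paper's argument avoids this bookkeeping entirely. It proves only that the cache's load before atomic step $k$ is at most $x_i^t + (k-1)$ (a one-line induction: each step raises the cache load by at most $1$). Hence the ball at step $k$ lands in a bin of current load at most $x_i^t + (k-1)$. Since for every underloaded $j$ one has $x_j^t + \lceil -y_j^t \rceil = \lceil W^t/n \rceil = x_i^t + \lceil -y_i^t \rceil$, any bin's $\lceil -y_j^t \rceil$-th ball can arrive only at step $\geq \lceil -y_i^t \rceil$ and its $(\lceil -y_j^t \rceil{+}1)$-th only at the final step $\lceil -y_i^t \rceil + 1$. Pigeonhole on these last two steps yields the $k_1, k_2$ structure directly, with no need to distinguish cache bins from case-B sample bins.
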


Now, applying \cref{thm:filling_key} to the unfolded version of a filling process, and exploiting that in a balanced load configuration not too many atomic allocations can be created through unfolding, we obtain the following result:

\def\sloweddown{
Fix any constant $c> 0$. Then for any filling process $P$ satisfying \WOne and \POne, there is a constant $C=C(c)>0$ such that for any number of atomic allocations $m \geq 1$, with probability at least $1-n^{-2}$, any unfolding $U=U(P)$ satisfies
 \[
 \bigl| \left\{ t \in [m] \colon \Gap_{U}(t) \leq C \cdot \log n \right\} \bigr| \geq n^{-c} \cdot m \log m.
 \]}

\begin{lem}\label{lem:slowed-down} 
\sloweddown
\end{lem}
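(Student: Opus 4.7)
The plan is to couple the gap of each atomic allocation in $U$ with the gap of $P$ at the enclosing round, and then apply Markov's inequality twice to the exponential potential $\Phi^t$ from the proof of \cref{thm:filling_key} to bound the count of bad atomic allocations.

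First I would prove a deterministic coupling: for every atomic allocation $s \in (W^t, W^{t+1}]$ (i.e., inside round $t+1$ of $P$), one has $\Gap_U(s) \leq \Gap_P(t) + 1$. In a filling round, condition \WOne forces every ball to land in an underloaded bin whose new load stays below the pre-round average, so the maximum load is unchanged throughout the round and $\Gap_U(s) = \max_i x_i^t - s/n$ is monotonically non-increasing in $s$, yielding $\Gap_U(s) \leq \Gap_P(t)$. In a non-filling round there is a single atomic allocation and the maximum can grow by at most one, so $\Gap_U(s) = \Gap_P(t+1) \leq \Gap_P(t) + 1$. Both cases are invariant under the particular choice of unfolding, since the maximum load within a round is determined by $x^t$ and is monotone in the number of balls placed.

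Next, the proof of \cref{thm:filling_key} establishes $\Ex{\Phi^t} = \Oh(n)$ at every round $t$, where $\Phi^t$ is the exponential potential on overloaded bins with parameter $\alpha > 0$. Since each round of $P$ produces at least one atomic allocation, the number of rounds satisfies $T \leq m$, so $\Ex{\sum_{t \leq T} \Phi^t} \leq \Ex{\sum_{t=1}^m \Phi^t} = \Oh(mn)$, and Markov's inequality gives $\sum_{t \leq T} \Phi^t \leq m n^{3}$ with probability at least $1-n^{-2}$. Fix $c > 0$ and set $C := (c+3)/\alpha$; any bad round $t \in \mathcal{B} := \{t \leq T : \Gap_P(t) > C \log n\}$ has $\Phi^t \geq e^{\alpha C \log n} = n^{c+3}$, so $|\mathcal{B}| \leq mn^{3}/n^{c+3} = m/n^c$. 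Moreover, each bad round contributes at most $\Gap_P(t)+2$ atomic allocations, and since $\Gap_P(t) \leq (\log \Phi^t)/\alpha$ whenever $\Gap_P(t) \geq 2$, Jensen's inequality gives
\[
\sum_{t \in \mathcal{B}} \Gap_P(t) \leq \frac{|\mathcal{B}|}{\alpha}\log\!\left(\frac{mn^3}{|\mathcal{B}|}\right) \leq \frac{|\mathcal{B}|(c+3)}{\alpha}\log n = \Oh\!\left(\frac{m \log n}{n^c}\right).
\]
Hence the total number of bad atomic allocations is $\Oh(m \log n / n^c)$, and on the same high-probability event the number of good atomic allocations is at least $m - \Oh(m\log n/n^c) \geq n^{-c}\,m\log m$, as long as $\log m \leq (1-o(1))n^c$; for larger $m$, the target $n^{-c} m \log m$ already exceeds $m$ and the claim becomes vacuous.

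The main subtlety lies in the combined use of the two Markov applications: the first, on $\sum_t \Phi^t$, loses a factor of $n^2$, and the second trades a larger potential threshold against a smaller number of rounds above it, producing the $n^{-c}$ factor while enlarging the gap constant to $C(c) = (c+3)/\alpha$. The remaining technical work is then routine bookkeeping: verifying the monotonicity in the coupling step for every valid unfolding, and checking that the function $x \mapsto x \log(mn^3/x)$ is still increasing at $x = m/n^c$ so Jensen's bound indeed gives $\Oh(m\log n/n^c)$.
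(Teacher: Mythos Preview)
Your overall strategy matches the paper's proof closely: sum the potentials $\Phi^t$ over all rounds, apply Markov to bound $\sum_t \Phi^t$, isolate the ``bad'' rounds via a threshold on $\Phi^t$, and use the log-sum/Jensen inequality to control the total contribution of bad rounds. However, there is a genuine error in one step.

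You claim that ``each bad round contributes at most $\Gap_P(t)+2$ atomic allocations''. This is false. In a filling round where the chosen bin $i$ is underloaded, the number of atomic allocations is $w^t = \lceil -y_i^t\rceil + 1$, which is governed by the \emph{underload} of the chosen bin, not by $\Gap_P(t)=\max_j y_j^t$. These two quantities are unrelated up to a factor of $n$: for instance, take one bin with normalized load $-(n-1)$ and $n-1$ bins with load $+1$; then $\Gap_P(t)=1$ but choosing the underloaded bin gives $w^t=n$. The one-sided potential $\Phi^t$ only sees bins with $y_i^t\ge 2$, so a bound on $\Phi^t$ (equivalently on $\Gap_P(t)$) says nothing directly about $-\min_j y_j^t$.

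The paper repairs exactly this point. It uses the inequality $w^t \le \Delta^t + 1 \le c_\alpha\, n \log \Phi^t$ (derived from $\Delta^t = 2\sum_{y_i^t>0} y_i^t \le \tfrac{2n}{\alpha}\log\Phi^t + 2n$), which carries an extra factor of $n$. To absorb this factor, the paper raises the threshold defining bad rounds to $\Phi^t \ge c_6\, n^{6+c}$ rather than your $n^{c+3}$, so that $|\mathcal{B}|\le m\,n^{-3-c}$ instead of $m\,n^{-c}$. With this adjustment the log-sum bound gives $\sum_{t\in\mathcal{B}} w^t \le c_\alpha n \cdot |\mathcal{B}|\log(\Phi/|\mathcal{B}|) \le m\,n^{-c}\log m$, recovering the claim. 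Your argument, as written, is off by exactly this factor of $n$.

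A smaller inaccuracy: in the coupling step you assert that in a filling round ``every ball [lands] in an underloaded bin whose new load stays below the pre-round average''. Condition \WOne allows bin $k_1$ to receive $\lceil -y_{k_1}^t\rceil + 1$ balls, so its new load can exceed the pre-round average by up to $2$. This does not break the coupling---one still gets $\Gap_U(s)\le \max\{\Gap_P(t),2\}$, which is what the paper uses---but the justification you gave is not quite correct.
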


Hence for any $m$, all but a polynomially small fraction of the first $m$ atomic allocations in any unfolding of a filling process (e.g.\ \Caching) have a logarithmic gap. This behavior matches the one of the original filling process; the only limitation is that we cannot prove a small gap that holds for an arbitrarily large {\em fixed} atomic allocation $m$. However, if $m$ is polynomial in $n$, that is, $m \leq  n^{c}$ for some constant $c>0$, then \cref{lem:slowed-down} implies directly that with high probability, the gap at atomic allocation $m$ (and at all atomic allocations before) is logarithmic. 

Returning to the special case of \Caching, we will now perform a more tailored analysis to obtain the gap bound analogous to \cref{thm:filling_key} so that it holds for an arbitrarily large number of atomic allocations of the unfolding:

\def\cachingfixedtimestep{
For the \Caching process (which allocates exactly one ball per round) there is a constant $C > 0$ such that for any number of rounds $m \geq 1$,
\[
\Pro{\Gap(m) 
\leq C \cdot \log n} \geq 1 - n^{-2}.
\]}

\begin{lem}\label{lem:cachingfixedtimestep}
\cachingfixedtimestep
\end{lem}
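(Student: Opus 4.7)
The plan is to lift the round-level gap bound of \cref{thm:filling_key} to the atomic (per-ball) level. By \cref{lem:memory_compressed}, we can realise \Caching as an unfolding $U = U(P)$ of a filling process $P$ satisfying \POne and \WOne. Writing $\widehat{x}^s$ for the load vector of $U$ at atomic allocation $s$, I introduce the atomic-time exponential potential
\[
\widehat{\Phi}^s := \sum_{i \,:\, \widehat{y}_i^s \geq 2} e^{\alpha \widehat{y}_i^s}
\]
with the same small constant $\alpha > 0$ used in the proof of \cref{thm:filling_key}. The target is to show $\Ex{\widehat{\Phi}^m} = \Oh(n)$ for every fixed atomic time $m$; Markov's inequality then yields $\Pro{\widehat{\Phi}^m \geq n^3} \leq n^{-2}$, which implies $\Gap(m) \leq (3/\alpha)\log n = \Oh(\log n)$ as required.

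The first step is a pointwise comparison $\widehat{\Phi}^s \leq \Phi^{T(s)}$, where $\Phi^t$ is the round-level potential of $P$ used in the proof of \cref{thm:filling_key}, and $T(s) := \max\{t : W^t \leq s\} \leq s$ is the (random) round of $P$ containing atomic time $s$. The key observation is that within any round $t$ of $P$ whose sampled bin is underloaded, \WOne forces every bin $j$ that receives balls in that round to stay at normalised load strictly below $2$ throughout (its intermediate normalised load is at most $(\lceil -y_j^t \rceil - (-y_j^t)) + 1 < 2$), so these bins never contribute to $\widehat{\Phi}^s$; meanwhile, bins that were overloaded at the start of round $t$ receive no balls and their normalised loads only decrease as the intra-round average rises. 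In rounds where the sampled bin is already overloaded we have $W^{t+1} = W^t + 1$, so the comparison holds trivially with equality at the two atomic endpoints.

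It then remains to bound $\Ex{\Phi^{T(m)}} = \Oh(n)$. The proof of \cref{thm:filling_key} actually delivers a one-step drift inequality of the form $\Ex{\Phi^{t+1} \mid \mathfrak{F}^t} \leq \lambda \cdot \Phi^t + K$, with constants $\lambda \in (0,1)$ and $K = \Oh(n)$, valid at every round. I would feed this drift into an optional-stopping argument (or an analogous Foster--Lyapunov-style tail bound) to obtain $\Ex{\Phi^{T(m)}} = \Oh(n)$ uniformly in $m$, and hence $\Ex{\widehat{\Phi}^m} = \Oh(n)$. The main obstacle is precisely this stopping-time step: because $T(m)$ depends on the random round sizes $w^t$ and can be much smaller than $m$ during long ``filling phases'', a naive union bound over $t \in [0,m]$ loses a factor of $m$ and so fails in the super-polynomial regime $m = \omega(\mathrm{poly}(n))$ where \cref{lem:slowed-down} alone is already insufficient. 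One really has to exploit the contracting drift of $\Phi^t$ (equivalently, the stationary-like behaviour of the filling process) to obtain a bound that is uniform over all stopping times $T(m) \leq m$.
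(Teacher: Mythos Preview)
Your pointwise comparison $\widehat{\Phi}^s \leq \Phi^{T(s)}$ is correct and is a clean observation: within an ``underloaded'' round of $P$, every bin that receives balls under \WOne has normalised load bounded below $2$ at every intermediate atomic time, while bins with $y_j^t \geq 2$ receive nothing and only lose normalised load as the average creeps up.

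The proposal breaks at the drift step. The proof of \cref{thm:filling_key} does \emph{not} give a one-round inequality $\Ex{\Phi^{t+1}\mid\mathfrak{F}^t}\leq \lambda\,\Phi^t + K$ with $\lambda<1$ valid at every round. \cref{lem:filling_good_quantile} only yields a contraction factor $1-c_2\alpha/n$ on rounds where the event $\mathcal{G}^t$ holds; on the remaining rounds the factor is $1+c_1\alpha^2/n>1$, and \cref{clm:counterexample} in the appendix exhibits an explicit configuration where $\Phi$ strictly increases in expectation. What the proof of \cref{thm:filling_key} actually delivers is a contraction over \emph{batches} of $n$ rounds (via \cref{cor:enough_good_quantiles} and the super-martingale $\tilde{\Phi}$). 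Feeding a batch-level contraction into an optional-stopping argument for the random time $T(m)$ is exactly the difficulty you flag but do not resolve: $T(m)$ is not even a stopping time with respect to $(\mathfrak{F}^t)$ (knowing $T(m)\leq t$ requires $W^{t+1}$), and $\{T(m)=t\}$ is positively correlated with large $w^t$, hence with large $\Delta^t$, so a uniform bound on $\Ex{\Phi^{T(m)}}$ does not follow from $\sup_t\Ex{\Phi^t}=\Oh(n)$.

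The paper sidesteps the stopping-time issue altogether by a two-phase argument. First, a \emph{different} potential $\Psi^t$ with smoothing $\tilde{\alpha}=\Theta(1/n)$ is shown to satisfy a genuine two-step drift $\Ex{\Psi^{t+2}\mid\mathfrak{F}^t}\leq(1-\Theta(n^{-3}))\Psi^t+\Oh(1)$ for \Caching directly (via a coupling with a ``\WeakCaching'' process that forgets the cache every two steps); this yields the crude bound $\Gap(t_0)=\Oh(n\log n)$, hence $\Delta^{t_0}=\Oh(n^2\log n)$, at the deterministic round $t_0:=m-n^7$. Second, one switches to \FoldedCaching from $t_0$ onward: the bound on $\Delta^{t_0}$ together with \cref{clm:bound_wone} forces the $m$-th ball to fall in a window of folded rounds of size $\Oh(n^7)$, and a plain union bound over that polynomial window (using $\Ex{\Phi^{t_0+s}\mid\mathfrak{F}^{t_0}}=\Oh(n)$ from the batch drift once $s\geq n^3\log^2 n$) finishes. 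The crude $\Psi$-based bound is what replaces your missing optional-stopping step.
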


\subsection{Framework for Non-Filling-Processes (Thinning and Twinning)}
\label{sec:framework_thinning}
Recall that for any $i \in [n]$,  $p_{i}^t$ is the probability of picking the $i$-th most heavily loaded bin for an allocation at round $t$. %
As before, $p_i^t$ may be chosen {\em adaptively}, i.e., dependent on the load vector $x^{t}$ (equivalently, based on the filtration $\mathfrak{F}^{t}$).

Our allocation process for non-filling work as follows. 
First, we define $p_{+}^t:=\max_{i \in B_{+}^t} p_i^t$ and $p_{-}^t:=\min_{i \in B_{-}^t} p_i^t$, as the largest (smallest) probability for allocating to a fixed overloaded (underloaded) bin at round $t$, respectively. As in filling processes, we then first sample a bin $i=i^t \in [n]$ and then place a certain number of balls to $i$.
More formally,
\begin{itemize}\itemsep0pt
\item[] \textbf{Condition \hypertarget{p2}{$\mathcal{P}_2$}}:
For each round $t =0,1,\ldots$, we consider an arbitrary labeling such that $y_1^t \geq y_2^t \geq \cdots \geq y_n^t$. Then let $i = i^t \in_{p^t} \in [n]$, where the distribution vector $p^t$ must satisfy $p_{+}^t \leq \frac{1}{n} \leq p_{-}^t$. 

\item[] \textbf{Condition \hypertarget{w2}{$\mathcal{W}_2$}}: For each round $t=0,1,\ldots$, when $i$ is chosen for allocation,
\begin{itemize}
    \item If  $y_i^{t} < 0$, then place $w_{-}$ balls into bin $i$,
    \item If  $y_i^{t} \geq 0$, then place $w_{+}$ balls into bin $i$,
    \end{itemize}
   where $1 \leq w_{+} \leq w_{-}$ are constant integers independent of $t$ and $n$.
\end{itemize}

Both conditions are natural, but on their own they are not sufficient to establish a good bound on the gap, as the \OneChoice process satisfies both conditions with equality. Thus,
we will require processes to satisfy at least one of two stronger versions of \PTwo and \WTwo:

\NewConstant{p2k1}{k}
\NewConstant{p2k2}{k}
\begin{itemize}\itemsep0pt
    \item[]\textbf{Condition \hypertarget{p3}{$\mathcal{P}_3$}}:  This is as Condition \PTwo, but additionally, there are time-independent constants $\C{p2k1} \in (0,1],\C{p2k2} \in (0,1]$ such that for each round $t \geq 0$:
    \begin{align*}
    p_{+}^t &\leq  \frac{1-\C{p2k1}}{n}
    + \frac{\C{p2k1} \cdot |B_{+}^t|}{n^2} = \frac{1}{n} - \frac{\C{p2k1} \cdot (1-\delta^t)}{n}, \\
     p_{-}^t &\geq \frac{1}{n} + \frac{\C{p2k2} \cdot |B_+^t|}{n^2}
     = \frac{1}{n} + \frac{\C{p2k2} \cdot \delta^t}{n}.
    \end{align*}
    \item[] \textbf{Condition \hypertarget{w3}{$\mathcal{W}_3$}}: This is as Condition \WTwo, but additionally we have the strict inequality: $w_{+} < w_{-}$. Also,  we assume that for each $t \geq 0$, distribution vector $p_{i}^t$ is non-decreasing in $i$.
\end{itemize}
The reason we attach the non-decreasing property of $p^t$ to \WThree and not to \PThree is to make our main result slightly stronger.

The rationale behind condition \PThree is that we wish to slightly bias the distribution vector $p^t$ towards underloaded bins at each round $t$. However, it is natural to assume that this influence is limited by a process that samples, say, at most two bins uniformly and independently, and then allocates balls to the least loaded of the two. Concretely, if a process takes \emph{two} independent and uniform bin samples at each round, the probability of picking two overloaded bins equals
$
 \bigl( \frac{|B_+^t|}{n} \bigr)^2.
$
Hence by averaging, there must be a bin $i \in B_+^t$ such that
\[
 p_{+}^t \geq p_i^t \geq \left( \frac{|B_+^t|}{n} \right)^2 \cdot \frac{1}{|B_+^t|} = \frac{|B_+^t|}{n^2}.
\]
The relaxation of the first constraint in \PThree by taking a strict convex combination of $\frac{1}{n}$ and $\frac{|B_+^t|}{n^2}$
ensures some slack, for instance, it allows the framework to cover a ``noisy'' version of \MeanThinning, where at each round, we perform a \OneChoice allocation with some constant probability $<1$, and otherwise we perform an allocating following the \MeanThinning process (see \cref{lem:noisy_threshold} for details). 
Similarly, for any process which takes at most two uniform samples, by averaging, there must be a bin $j \in B_{-}^t$ such that
\begin{align*}
 p_{-}^t \leq p_{j}^t \leq \frac{1 - \frac{|B_+^t|^2}{n^2}}{|B_{-}^t|}
 = \frac{(n - |B_+^t|) \cdot (n + |B_+^t|)}{n^2|B_{-}^t|}
 = \frac{1}{n} +  \frac{|B_{+}^t|}{n^2}.
\end{align*}

Finally, we remark that \PThree resembles the framework of \cite[Equation~2]{PTW15}, where
$p^t_i=p_i$ is non-decreasing in $i$ and $p_{n/3} \leq \frac{1-4\epsilon}{n}$ and $p_{2n/3} \geq \frac{1+4\epsilon}{n}$ holds for some $0 < \epsilon < 1/4$ (not necessarily constant). In contrast to that, for constants $k_1,k_2>0$, the conditions in \PThree are relaxed as they only imply such a bias if $\delta^t$ is bounded away from $0$ and $1$, which may not hold in all rounds.

\begin{lem}
The \Twinning process satisfies \PTwo and \WThree. 
\end{lem}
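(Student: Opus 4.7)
The plan is to simply verify, directly from the definition of \Twinning in \cref{sec:processes}, that each of the conditions listed in \PTwo and \WThree holds round by round. Because \Twinning samples a single bin uniformly at random at every round, its distribution vector is the constant vector $p_i^t = 1/n$ for all $i \in [n]$ and all $t \geq 0$. Everything else follows from reading off this fact together with the load update rule.

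First I would check \PTwo. With $p_i^t = 1/n$, the quantities in the definition become $p_+^t = \max_{i \in B_+^t} p_i^t = 1/n$ and $p_-^t = \min_{i \in B_-^t} p_i^t = 1/n$ (using the convention that if $B_+^t$ or $B_-^t$ is empty the corresponding constraint is vacuous, as in the general framework). Hence $p_+^t \leq 1/n \leq p_-^t$ holds, in fact with equality, so \PTwo is satisfied.

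Next I would verify \WThree. The non-decreasing property of $p^t$ in $i$ is trivial since $p^t$ is constant. For the weight rule, the iteration of \Twinning states that the sampled bin $i$ receives $2$ balls when $x_i^t < W^t/n$ (i.e.\ $y_i^t < 0$) and $1$ ball when $x_i^t \geq W^t/n$ (i.e.\ $y_i^t \geq 0$). Thus \WTwo holds with the time- and $n$-independent integer constants $w_- := 2$ and $w_+ := 1$, and the strict inequality $w_+ < w_-$ required by \WThree is satisfied.

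There is no genuine obstacle here: the statement is a direct matching of the \Twinning update rule against the two frameworks. The only minor subtlety worth flagging is that \Twinning meets \PTwo only with equality, which is precisely why, on its own, the $p$-side of the framework is not sufficient (cf.\ the remark that \OneChoice also meets \PTwo with equality); it is \WThree, via the asymmetry $w_+ < w_-$, that provides the bias used by the non-filling analysis.
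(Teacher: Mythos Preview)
Your proof is correct and follows essentially the same approach as the paper: both verify directly from the definition of \Twinning that the uniform sampling gives $p_i^t = 1/n$ (hence \PTwo holds with equality and $p^t$ is trivially non-decreasing) and that the update rule gives $w_- = 2 > 1 = w_+$, which is exactly \WThree. The paper's version is simply a one-sentence compression of your argument.
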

\begin{proof}
The statement for the \Twinning process is obvious, since $w_- = 2 > 1 = w_+$ and the bin $i$ is picked uniformly at each round.
\end{proof}

\begin{lem}
The \MeanThinning process satisfies \PThree and \WTwo.
\end{lem}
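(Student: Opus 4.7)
The plan is a direct verification of the two conditions by writing down the per-bin allocation probabilities of \MeanThinning explicitly. The proof breaks naturally into three short steps, and I do not expect any serious obstacle: both conditions will follow (in fact, with equality) once the probabilities are in hand.

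First, I would dispose of \WTwo: in every round \MeanThinning places exactly one ball (either into $i_1$ or into $i_2$), so taking $w_+ = w_- = 1$ satisfies the required inequality $1 \le w_+ \le w_-$ trivially.

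Next I would compute $p_i^t$ for an arbitrary bin $i$ by conditioning on which of the two independent uniform samples $i_1,i_2$ the ball is ultimately placed in. Since \MeanThinning uses threshold $t/n$ and $W^t = t$, the condition $x_{i_1}^t < t/n$ in the definition of the process coincides exactly with ``$i_1 \in B_-^t$''. For an overloaded bin $i \in B_+^t$, the ball ends up at $i$ precisely when $i_2 = i$ and $i_1 \in B_+^t$ (this includes the case $i_1 = i_2 = i$), so by independence of $i_1$ and $i_2$,
\[
p_i^t \;=\; \Pr\!\bigl[\,i_2 = i\,\bigr]\cdot \Pr\!\bigl[\,i_1 \in B_+^t\,\bigr] \;=\; \tfrac{1}{n}\cdot \delta^t.
\]
For an underloaded bin $i \in B_-^t$, the ball lands at $i$ either because $i_1 = i$ (then it stays, as $i$ is underloaded), or because $i_1 \in B_+^t$ and $i_2 = i$; since an overloaded $i_1$ cannot equal an underloaded $i$, these two events are disjoint and give
\[
p_i^t \;=\; \tfrac{1}{n} + \tfrac{\delta^t}{n} \;=\; \tfrac{1+\delta^t}{n}.
\]

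Finally, I would read off the two \PThree quantities from these expressions: $p_+^t = \delta^t/n = \tfrac{1}{n} - \tfrac{1-\delta^t}{n}$ and $p_-^t = (1+\delta^t)/n = \tfrac{1}{n} + \tfrac{\delta^t}{n}$, which match the two inequalities of \PThree exactly with the constants $k_1 = k_2 = 1$, and which a fortiori imply the weaker \PTwo bound $p_+^t \le 1/n \le p_-^t$ since $\delta^t \in [1/n,1]$. As a sanity check, the allocation probabilities sum to one, $|B_+^t|\cdot\tfrac{\delta^t}{n} + |B_-^t|\cdot\tfrac{1+\delta^t}{n} = (\delta^t)^2 + (1-\delta^t)(1+\delta^t) = 1$; the only boundary case to note is $\delta^t = 1$, where $B_-^t = \emptyset$ and the $p_-^t$ inequality is vacuous, so the verification goes through without change.
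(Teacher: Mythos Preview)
Your proposal is correct and follows essentially the same route as the paper: compute the per-bin allocation probability by conditioning on whether the first sample is overloaded, obtaining $p_i^t=\delta^t/n$ on $B_+^t$ and $p_i^t=(1+\delta^t)/n$ on $B_-^t$, and then read off \PThree with $k_1=k_2=1$. Your version is somewhat more explicit (you spell out \WTwo, the disjointness of the two events for underloaded bins, the sanity check, and the boundary case $\delta^t=1$), but the core computation is identical to the paper's.
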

\begin{proof}
 For the \MeanThinning process, we use the following equivalent two-stage description: Each ball is allocated to the first sample, if the bin is underloaded; otherwise, the ball is allocated to the second sample. Hence the probability of allocating to any overloaded bin $i \in B_+^t$ is
$
p_i^t = \frac{\delta^t}{n} = \frac{1}{n} - \frac{1 \cdot (1 - \delta^t)}{n},
$
so we can choose $\C{p2k1} := 1$. For any underloaded bin $i \in B_-^t$,
$
p_i^t = \frac{1+\delta^t}{n} = \frac{1}{n} + \frac{1 \cdot \delta^t}{n},
$
so we can choose $\C{p2k2} := 1$, and \PThree holds.
\end{proof}

By an inductive argument involving adding ``extra balls'', we can reduce $\Thinning(f(n))$ with a non-negative threshold function $f(n)$, possibly depending on $n$ but independent of $t$, to \MeanThinning. Note that \Thinning with a suitable positive function $f(n)$ is appealing, as results from~\cite{FG18} have shown a sublogarithmic gap in the lightly loaded case where $m=\Oh(n)$ (see also \cref{tab:overview}).

\def\fextension{
\label{lem:f(n)extension}
	Let $f(n)$ be any non-negative function.
	Let $\Gap_{0}$ and $\Gap_{f(n)}$ be the gaps of \MeanThinning and $\Thinning(f(n))$. Then $\Gap_{f(n)}$ is stochastically smaller than $\Gap_{0}+f(n)$.}

\begin{lem}%
\fextension
\end{lem}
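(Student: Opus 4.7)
The plan is to exhibit a coupling of $\Thinning(f(n))$ with $\MeanThinning$ under which $\Gap_{f(n)}(t) \leq \Gap_{0}(t) + f(n)$ holds almost surely for every round $t$; since a sure inequality implies stochastic domination, this yields the lemma. I first reduce to integer $f(n)$: since all bin loads are integers, the strict condition $x_{i_1}^t < t/n + f(n)$ depends only on $\lceil f(n) \rceil$ (modulo a harmless rounding loss from $t/n$ not being integral), so up to an additive constant one may treat $f(n)$ as a non-negative integer.

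The coupling feeds both processes the same pair of independent uniform bin samples $(i_1^t, i_2^t)$ at every round $t$. Writing $x^t$ for the load vector of $\MeanThinning$ and $y^t$ for that of $\Thinning(f(n))$, both starting from $\mathbf{0}$, and setting $d_i^t := y_i^t - x_i^t$, the central claim is the coordinate-wise invariant $d_i^t \leq f(n)$ for every $i \in [n]$ and $t \geq 0$, which I prove by induction on $t$. The base case $d^0 = 0$ is immediate, and the inductive step is a case analysis on which of $i_1,i_2$ each process selects: when both processes pick the same bin, $d^t$ is unchanged; when $\MeanThinning$ picks $i_2$ but $\Thinning(f(n))$ picks $i_1$, we have $x_{i_1}^t \geq t/n$ and $y_{i_1}^t < t/n + f(n)$, giving $d_{i_1}^t < f(n)$ and hence, by integrality, $d_{i_1}^t \leq f(n)-1$, so the $+1$ update yields $d_{i_1}^{t+1} \leq f(n)$ while $d_{i_2}^{t+1} = d_{i_2}^t - 1$ trivially respects the invariant; finally, the symmetric case $\MeanThinning \mapsto i_1$, $\Thinning(f(n)) \mapsto i_2$ would require $x_{i_1}^t < t/n$ and $y_{i_1}^t \geq t/n + f(n)$, forcing $d_{i_1}^t > f(n)$, which contradicts the inductive hypothesis and is therefore vacuous.

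Once the invariant is established, since both processes place exactly one ball per round they share the same average load $t/n$, whence $\Gap_{f(n)}(t) = \max_i y_i^t - t/n \leq \max_i x_i^t - t/n + f(n) = \Gap_{0}(t) + f(n)$ almost surely, which is precisely the desired stochastic domination.

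The main obstacle is ruling out the final case above: it depends critically on the invariant already being in force at round $t$, which is the usual self-bootstrapping character of an inductive coupling and must be set up carefully. The ``extra balls'' intuition in the paper's remark matches the $+1$ shift of $d_{i_1}$ appearing in the third case, one unit at a time; applied iteratively along $\Thinning(0) \to \Thinning(1) \to \cdots \to \Thinning(f(n))$ it telescopes to the claimed $+f(n)$ excess, which is the more explicit inductive reformulation of the same argument.
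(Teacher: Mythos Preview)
Your proof is correct and takes essentially the same route as the paper: both feed the two processes the identical samples $(i_1,i_2)$ and maintain the coordinate-wise invariant $y_i^t - x_i^t \le f(n)$ via the same case analysis (your ``vacuous'' case is exactly the paper's Case~2, ruled out by the inductive hypothesis, and your Case~2 is the paper's Case~1 sub-case). The only cosmetic difference is that the paper packages the coupling through an auxiliary process---$\Thinning(f(n))$ started with $f(n)$ ``red'' balls per bin, whose blue balls then evolve as \MeanThinning---and records the domination $x_i^t(A)\le x_i^t(B)$ in a separate lemma, whereas you couple \MeanThinning and $\Thinning(f(n))$ directly; these are the same argument in different clothing.
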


Before proving the lemma, we need the following domination result:

\begin{lem}\label{lem:coupling}
Let process $P_{A}$ be a \Thinning process with an empty load distribution at round $0$, and using threshold $t/n+f(n)$ at any round $t \geq 0$, where $f(n)$ is non-negative. Further, let process $P_{B}$ be a \Thinning process with initial load distribution $x_1^0=x_2^0=\cdots=x_n^0=f(n)$, and using threshold $t/n+f(n)$ at any round $t \geq 0$. Then, there is a coupling so that at any round $t \geq 0$ and for any bin $i \in [n]$, $x_i^{t}(A) \leq x_i^{t}(B)$.
\end{lem}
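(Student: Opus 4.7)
The plan is to realise the coupling by forcing both processes to share the same ordered pair $(i_1, i_2)$ of uniform bin samples in every round, and then to verify $x_i^t(A)\le x_i^t(B)$ for every $i\in[n]$ by induction on $t$. The base case $t=0$ is immediate from $x_i^0(A) = 0 \le f(n) = x_i^0(B)$. For the inductive step, observe that the combination ``$A$'s threshold test fails while $B$'s succeeds'' is impossible: the inductive hypothesis $x_{i_1}^t(A) \le x_{i_1}^t(B)$ together with $x_{i_1}^t(A) \ge \frac{t}{n}+f(n)$ would force $x_{i_1}^t(B) \ge \frac{t}{n}+f(n)$ as well. So only three case combinations at the threshold test need to be considered.

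In the two easy cases (``both pass'' and ``both fail'') the same bin is incremented in both load vectors, so the invariant is trivially inherited. The remaining case is ``$A$ passes while $B$ fails'', in which $A$ adds one ball to bin $i_1$ whereas $B$ adds one ball to bin $i_2$. For bin $i_2$ the invariant can only strengthen (only $B$ is updated), so the entire argument reduces to establishing $x_{i_1}^{t}(A)+1 \le x_{i_1}^{t}(B)$; this is the main obstacle.

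My plan is to deduce this inequality from the threshold relations $x_{i_1}^t(A) < \frac{t}{n}+f(n) \le x_{i_1}^t(B)$ combined with integrality. Since $x_i^t(A)$ is always a non-negative integer and, when $f(n)$ is integer-valued, $x_i^t(B)-f(n)$ is also a non-negative integer, the difference $x_{i_1}^t(B)-x_{i_1}^t(A)$ is an integer; the strict inequality $x_{i_1}^t(B) > x_{i_1}^t(A)$ then upgrades automatically to $x_{i_1}^t(B) \ge x_{i_1}^t(A)+1$, exactly what is required. (If $f(n)$ is not integer-valued the same argument goes through after replacing $f(n)$ by $\lceil f(n) \rceil$ in the definition of $P_B$, and the extra additive $1$ is absorbed into the $+f(n)$ term of \cref{lem:f(n)extension}.) Combining the three cases closes the induction and gives the coupling for all rounds $t\ge 0$.
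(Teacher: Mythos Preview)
Your argument is essentially identical to the paper's: both use the shared-samples coupling and induct on $t$, splitting into the same cases and deducing $x_{i_1}^t(B)\ge x_{i_1}^t(A)+1$ from the strict threshold inequality in the one nontrivial case. The paper simply writes ``$x_{i_1}^t(B) > x_{i_1}^t(A)$, i.e., $x_{i_1}^t(B) \geq x_{i_1}^t(A)+1$'' without comment, whereas you correctly isolate the integrality of the difference as the reason this upgrade is valid; your parenthetical about non-integer $f(n)$ addresses a genuine edge case that the paper's proof tacitly assumes away.
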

\begin{proof}[Proof of \cref{lem:coupling}]
For any round $t \geq 0$, let $i_1=i_1^t$ and $i_2=i_2^t$ the two random bin samples, which are uniform and independent over $[n]$. We consider a coupling between $P_{A}$ and $P_{B}$, where these random bin samples are identical, and prove inductively that for any $t \geq 0$ and any $i \in [n]$,
\[
x_i^{t}(A) \leq x_i^{t}(B).
\]
The base case $t=0$ holds by definition. For the induction step, we make a case distinction:

\noindent\textbf{Case 1} [$x_{i_1}^t(A) < t/n+f(n)$]. In this instance $P_{A}$ allocates a ball to $i_1$. If $x_{i_1}^t(B) < t/n+f(n)$, then $P_{B}$ also allocates a ball to $i_2$; otherwise, we have $x_{i_1}^t(B) \geq t/n+f(n)$, and hence $x_{i_1}^t(B) > x_{i_1}^t(A)$, i.e., $x_{i_1}^t(B) \geq x_{i_1}^t(A)+1$. This implies
    \[
    x_{i_1}^{t+1}(B) = x_{i_1}^{t}(B) \geq  x_{i_1}^t(A) + 1 = x_{i_1}^{t+1}(A),
    \]
    and the inductive step follows from this and the induction hypothesis.
    \medskip

\noindent\textbf{Case 2} [$x_{i_1}^t(A) \geq t/n+f(n)$]. In this instance $P_{A}$ allocates a ball to $i_2$. By induction hypothesis, $x_{i_1}^t(A) \leq x_{i_1}^t(B)$, which implies $P_{B}$ also allocates a ball to $i_2$. Thus we have   \[
    x_{i_1}^{t+1}(A) = x_{i_1}^{t}(A)+1 \geq  x_{i_1}^t(B) + 1 = x_{i_1}^{t+1}(B),
    \] and the inductive step is complete. 
    
    Since in both cases all other bins remain unchanged the proof is complete. 
\end{proof}

We can now complete the proof of \cref{lem:f(n)extension}.

\begin{proof}[Proof of \cref{lem:f(n)extension}]
Instead of starting with an empty load distribution, we modify it by adding $f(n)$ extra balls to each bin. We refer to these as the \textit{red} balls. We now run the \Thinning process with threshold $(t/n+f(n))$ on these bins, assigning one new \textit{blue} ball each round. Due to the presence of the $f(n)$ red balls in each bin the process will consider any bin with more than $t/n$ blue balls as being overloaded.
Thus from the perspective of the blue balls this is a \MeanThinning process.

Using the coupling from \cref{lem:coupling}, adding $f(n) \cdot n$ red balls in total without changing the threshold, always results in a load vector that point-wise majorizes the unmodified \Thinning process with threshold $t/n+f(n)$. 
\end{proof}
Let us now define a noisy version of \MeanThinning, called $(1+\eta)$-\MeanThinning, where $\eta \in (0,1]$ is a parameter, analogous to $\beta$ in the $(1+\beta)$-process. At each round $t \geq 0$, with probability $\eta$, we allocate a ball using \MeanThinning; otherwise we allocate a ball using \OneChoice.

\begin{lem}\label{lem:noisy_threshold}
For any constant $\eta > 0$, the $(1+\eta)$-\MeanThinning process satisfies \PThree and \WTwo.
\end{lem}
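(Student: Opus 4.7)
The plan is to verify both conditions directly by explicit computation of the distribution vector of $(1+\eta)$-\MeanThinning, building on the computation already carried out for \MeanThinning in the previous lemma.

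First I would handle \WTwo, which is immediate: the $(1+\eta)$-\MeanThinning process allocates exactly one ball per round regardless of whether the \MeanThinning branch or the \OneChoice branch is taken, so we can pick $w_+ = w_- = 1$, which trivially satisfies $1 \leq w_+ \leq w_-$.

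Next I would verify \PThree by writing $p_i^t$ as a convex combination of the distribution vectors of the two sub-processes. From the proof of the \MeanThinning lemma, an overloaded bin $i \in B_+^t$ has probability $\delta^t/n$ under \MeanThinning and $1/n$ under \OneChoice, so
\[
p_i^t \;=\; \eta \cdot \frac{\delta^t}{n} + (1-\eta) \cdot \frac{1}{n} \;=\; \frac{1}{n} - \frac{\eta \cdot (1-\delta^t)}{n},
\]
and therefore $p_+^t \leq \frac{1}{n} - \frac{\eta(1-\delta^t)}{n}$, which matches the first inequality of \PThree with $\C{p2k1} := \eta \in (0,1]$. Similarly, an underloaded bin $i \in B_-^t$ has probability $(1+\delta^t)/n$ under \MeanThinning, giving
\[
p_i^t \;=\; \eta \cdot \frac{1+\delta^t}{n} + (1-\eta) \cdot \frac{1}{n} \;=\; \frac{1}{n} + \frac{\eta \cdot \delta^t}{n},
\]
so $p_-^t \geq \frac{1}{n} + \frac{\eta \delta^t}{n}$, matching the second inequality with $\C{p2k2} := \eta \in (0,1]$. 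The baseline bounds $p_+^t \leq \frac{1}{n} \leq p_-^t$ required by \PTwo follow from the two displays above since $\delta^t \in [\frac{1}{n},1]$ and $\eta > 0$.

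There is essentially no obstacle here: both constants $\C{p2k1}, \C{p2k2}$ are time-independent because $\eta$ is a constant, and they are strictly positive and at most $1$. The only minor point to be careful about is to note that the labeling of the bins (so that $y_1^t \geq \cdots \geq y_n^t$) does not affect the argument, since the probabilities above depend only on the sign of $y_i^t$, i.e., on whether $i \in B_+^t$ or $i \in B_-^t$.
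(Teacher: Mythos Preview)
Your proposal is correct and follows essentially the same approach as the paper: compute $p_i^t$ as the convex combination of the \OneChoice and \MeanThinning probabilities, and read off $\C{p2k1}=\C{p2k2}=\eta$. Your write-up is in fact slightly more complete, since you explicitly verify \WTwo (with $w_+=w_-=1$) and note the baseline \PTwo bounds, both of which the paper leaves implicit.
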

\begin{proof}
Let $p$ be the distribution vector of $(1+\eta)$-\MeanThinning with parameter $\eta \in (0,1)$; that is, with probability $\eta$ it performs \MeanThinning, otherwise \OneChoice. 
Then for any $i \in B_{+}^t$,
\begin{align*}
 p_{i}^t &= (1 - \eta) \cdot \frac{1}{n} + \eta \cdot \frac{\delta^t}{n}
 = \frac{1}{n} - \frac{\eta \cdot (1-\delta^t)}{n}. 
\end{align*}
Similarly, for any $i \in B_{-}^t$,
\begin{align*}
 p_{i}^t &= (1 - \eta) \cdot \frac{1}{n} + \eta \cdot \frac{1+\delta^t}{n}
 = \frac{1}{n} + \frac{\eta \cdot \delta^t}{n}.
\end{align*}
Thus for $k_1=k_2=\eta \in (0, 1)$, the $(1+\eta)$-\MeanThinning process satisfies \PThree and \WTwo.
\end{proof}

While the $(1+\beta)$-process does not satisfy \PThree at each round, the process can be majorized by $(1+ \eta)$ with $\eta = \beta$.%
\begin{lem}
\label{lem:beta_domination}
For any constant $\beta \in (0,1]$, let $\Gap_{\beta}$ be the gap of the $(1+\beta)$-process. Then $(1+\eta)$ process with $\eta = \beta$ majorizes $(1+\beta)$ at each round. Hence, $\Gap_{\beta}$ is stochastically smaller than $\Gap_{\eta}$, the gap of the $(1+\eta)$ process.
\end{lem}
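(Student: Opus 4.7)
The plan is to prove the lemma in two steps: first, verify that at each round the distribution vector of $(1+\eta)$-\MeanThinning majorizes that of $(1+\beta)$ when $\eta = \beta$; second, lift this single-round majorization to stochastic dominance of the gap via a standard coupling.

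For the first step, I would write out both distributions with bins sorted decreasingly by load. Since \TwoChoice picks the $i$-th most loaded bin with probability $(2i-1)/n^2$ (this is the event that the maximum of two uniform indices equals $i$), the $(1+\beta)$-distribution is
\[
 q_i = \tfrac{1-\beta}{n} + \beta \cdot \tfrac{2i-1}{n^2},
\]
while for $(1+\eta)$-\MeanThinning, \cref{lem:noisy_threshold} gives $p_i = \tfrac{1}{n} - \tfrac{\eta(1-\delta^t)}{n}$ for $i \in B_+^t$ and $p_i = \tfrac{1}{n} + \tfrac{\eta\delta^t}{n}$ for $i \in B_-^t$. Setting $\eta = \beta$ and splitting on whether $k \leq \delta^t n$ or $k > \delta^t n$, a short computation gives
\[
 \sum_{i=1}^{k} (p_i - q_i) = \begin{cases} \tfrac{k\beta}{n}\bigl(\delta^t - \tfrac{k}{n}\bigr) & \text{if } k \leq \delta^t n, \\ \beta\bigl(\tfrac{k}{n}-1\bigr)\bigl(\delta^t - \tfrac{k}{n}\bigr) & \text{if } k > \delta^t n. \end{cases}
\]
In the first case both factors are non-negative; in the second, both are non-positive. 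Either way the partial-sum difference is non-negative, so $p$ majorizes $q$ at every round.

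For the second step, I would apply the majorization-propagation coupling of Peres, Talwar and Wieder~\cite{PTW15}. The distributional majorization is equivalent to stochastic dominance of the sampled bin index of $(1+\beta)$ over that of $(1+\eta)$-\MeanThinning (in the respective sorted orders); thus we may couple so that at each round the bin picked by $(1+\eta)$-\MeanThinning has a sorted index at most the one picked by $(1+\beta)$. A routine induction---whose single-step fact is that incrementing a lower-indexed entry of one sorted vector and a higher-indexed entry of another sorted vector which is majorized by the first preserves that majorization after re-sorting---yields that the sorted load vector of $(1+\eta)$-\MeanThinning majorizes that of $(1+\beta)$ at every round. Taking the first partial sum gives max-load dominance, and since both processes allocate one ball per round the averages coincide, so $\Gap_\beta$ is stochastically dominated by $\Gap_\eta$ as claimed.

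The main obstacle is really just the bookkeeping in the first step: the $(1+\eta)$-\MeanThinning distribution depends on $\delta^t$ (the current overloaded fraction) while the $(1+\beta)$ distribution does not, so after round one the two processes generally disagree on their $\delta^t$ values and overloaded sets. This is harmless because the single-round majorization we prove is valid for \emph{every} value of $\delta^t$, so the coupling applies round by round regardless of how the two processes diverge.
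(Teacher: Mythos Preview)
Your proof is correct and follows essentially the same route as the paper: verify single-round majorization of the distribution vectors, then invoke the Peres--Talwar--Wieder coupling (stated in the paper as \cref{lem:majorisation}) to lift this to majorization of the sorted load vectors and hence stochastic dominance of the gap. The only cosmetic difference is in the first step: the paper checks the partial sums only at the breakpoint $k=\delta^t n$ (where they coincide) and then uses that $p^t$ is piecewise constant while $q$ is non-decreasing to deduce majorization for all other $k$, whereas you compute the partial-sum difference explicitly in both regimes and check its sign.
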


Before proving this lemma, we state one auxiliary result, which is implicit in~\cite{PTW15}. Theorem 3.1 proves the required majorization for time-independent distribution vectors, but as mentioned in the proof of Theorem 3.2~\cite{PTW15}, the same result generalizes to time-dependent distribution vectors.
\begin{lem}[{see~\cite[Section 3]{PTW15}}]\label{lem:majorisation}
Consider two allocation processes $Q$ and $P$ with $w_{+}=w_{-}=1$. The allocation process $Q$ uses at each round a fixed allocation distribution $q$. The allocation process $P$ uses a time-dependent allocation distribution $p^{t}$, which may depend on $\mathfrak{F}^{t}$ but majorizes $q$ at each round $t \geq 0$. Let $y^{t}(Q)$ and $y^{t}(P)$ be the two load vectors, sorted decreasingly. Then there is a coupling such that for all rounds $t \geq 0$, $y^{t}(Q)$ is majorized by $y^{t}(P)$.
\end{lem}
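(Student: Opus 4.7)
The plan is to reduce the statement to a pointwise (at each round) comparison of distribution vectors and then invoke \cref{lem:majorisation}. Concretely, fix an arbitrary round $t$ and index the bins in decreasing order of load. I would first establish that $p^{(1+\eta)\text{-}\MeanThinning,t}$ majorizes $p^{(1+\beta)}$ as vectors indexed by rank (with $\eta=\beta$). Given this, \cref{lem:majorisation} produces a coupling under which the decreasingly sorted normalized load vector $y^t$ of the $(1+\beta)$-process is majorized by that of $(1+\eta)$-\MeanThinning at every round. Since both processes allocate exactly one ball per round, $W^t$ agrees across the coupling, and hence $\Gap$ equals $y_1^t$; the majorization of sorted vectors yields in particular $y_1^t(1{+}\beta) \leq y_1^t(1{+}\eta)$, which is precisely the claimed stochastic domination of $\Gap_{\beta}$ by $\Gap_{\eta}$.

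To prove the distribution-vector majorization, I would write both processes as convex combinations of \OneChoice with another distribution:
\[
p_i^{(1+\beta)} = (1-\beta)\cdot\tfrac{1}{n} + \beta\cdot\tfrac{2i-1}{n^2}, \qquad p_i^{(1+\eta)\text{-}\MeanThinning,t} = (1-\eta)\cdot\tfrac{1}{n} + \eta\cdot p_i^{\MeanThinning,t}.
\]
Since the \OneChoice parts agree when $\eta=\beta$, it suffices to show that $p^{\MeanThinning,t}$ majorizes $p^{\TwoChoice}$ for every round. Writing $a := |B_+^t| = n\delta^t$, one has $\sum_{i=1}^{k} p_i^{\TwoChoice} = k^2/n^2$, while $p_i^{\MeanThinning,t}$ is the two-level step with value $\delta^t/n$ for $i \leq a$ and $(1+\delta^t)/n$ for $i > a$. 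The required partial-sum inequality splits into two elementary checks: for $k \leq a$ it is $k\delta^t/n \geq k^2/n^2$, equivalent to $k \leq n\delta^t$; for $k > a$, clearing denominators, it reduces to $(n-k)(k-a) \geq 0$, which holds since $a \leq k \leq n$.

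The only point requiring a bit of care is matching the hypotheses of \cref{lem:majorisation}: the fixed-distribution process $Q$ there must have a rank-indexed distribution that is independent of $\mathfrak{F}^t$. The $(1+\beta)$-process fits this role, since its probability of picking the $i$-th most loaded bin is the round-independent function $(1-\beta)/n + \beta(2i-1)/n^2$ of the rank alone; in contrast, $(1+\eta)$-\MeanThinning's rank-indexed distribution depends on $\delta^t$ and so plays the time-dependent role of $P$. With this identification and the majorization just established, \cref{lem:majorisation} applies directly and delivers both conclusions. Apart from the elementary algebra sketched above, no further ingredients are needed; the main ``obstacle'' is merely a bookkeeping one in lining up the roles of the two processes with the framework of \cref{lem:majorisation}.
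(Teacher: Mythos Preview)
Your proposal does not prove the stated lemma. \cref{lem:majorisation} is the general coupling result (quoted from \cite{PTW15}): if $p^{t}$ majorizes $q$ at every round, then there is a coupling under which $y^{t}(Q)$ is majorized by $y^{t}(P)$ for all $t$. Your write-up instead proves \cref{lem:beta_domination}---that $\Gap_{\beta}$ is stochastically smaller than $\Gap_{\eta}$---and does so by \emph{invoking} \cref{lem:majorisation}. As a proof of \cref{lem:majorisation} itself this is circular.

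The paper does not supply its own proof of \cref{lem:majorisation}; it is cited from \cite[Section~3]{PTW15}. A self-contained argument proceeds by induction on $t$: assuming $y^{t}(Q)$ is majorized by $y^{t}(P)$, one couples the round-$t$ choices so that the rank selected by $P$ is never larger than the rank selected by $Q$ (this is precisely where ``$p^{t}$ majorizes $q$'' is used, via the standard coupling of two distributions on $[n]$ ordered by prefix sums), and then verifies that incrementing the load at a weakly higher rank in $P$ than in $Q$ preserves majorization of the re-sorted load vectors. None of this coupling construction appears in your proposal. (What you wrote \emph{is} a correct proof of \cref{lem:beta_domination}, and your decomposition---cancel the common \OneChoice part and compare \MeanThinning to \TwoChoice by checking all prefix sums---is essentially the same approach as the paper's proof of that lemma, which checks only the single breakpoint $k=|B_{+}^{t}|$ and finds equality there.)
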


\begin{proof}[Proof of \cref{lem:beta_domination}]
We will show that for any round $t \geq 0$ and for any load configuration, the $(1+\eta)$ for $\eta = \beta$ distribution vector majorizes the distribution vector of $(1+\beta)$-process. So, by \cref{lem:majorisation}, the claim will follow. 

Recall that the $(1+\beta)$ distribution vector $q^t$ is given by,
\[
q_i^t=q_i=(1-\beta) \cdot \frac{1}{n} + \frac{\beta (2i-1)}{n^2}.
\]
The distribution vector $p^t$ for the $(1+\eta)$ process is non-decreasing and uniform over $B_{-}^t$ and $B_{+}^t$, so majorization follows immediately once we prove that
\[
 \sum_{i=1}^{|B_{+}^t|} q_i \leq \sum_{i=1}^{|B_{+}^t|} p_i^t.
\]
For the distribution vector $q$ we have,
\begin{align*}
 \sum_{i=1}^{\delta^t \cdot n} q_i = 
 \sum_{i=1}^{\delta^t \cdot n} (1-\beta) \cdot \frac{1}{n} + \sum_{i=1}^{\delta^t \cdot n} \frac{\beta (2i-1)}{n^2}
 = (1-\beta) \cdot \delta^t + \frac{\beta (\delta^t \cdot n)^2}{n^2}
 = \delta^t - \beta \cdot ( \delta^t - (\delta^t)^2 ).
\end{align*}
Similarly, for the distribution vector $p^t$ we have,
\[
 \sum_{i=1}^{\delta^t \cdot n} p_{i}^t
 = \sum_{i=1}^{\delta^t \cdot n} \frac{1}{n}
 - \sum_{i=1}^{\delta^t \cdot n} \frac{ \beta(1-\delta^t)}{n} = \delta^t - \beta \cdot (\delta^t-(\delta^t)^2).\qedhere
\]
\end{proof}

Since $\beta=1$ yields \TwoChoice, our framework also applies to the \TwoChoice process.

\medskip

The following basic, yet crucial result follows from the preconditions in \cref{thm:main_technical}:
\begin{lem}
\label{lem:additive_drift}
Consider any process satisfying the conditions \PThree and \WTwo, or, \PTwo and \WThree. Then for the constant $\C{quad_delta_drop} := \C{quad_delta_drop}(\C{p2k1}, \C{p2k2}) > 0$, it holds that for any $t \geq 0$,
\[
p_-^t \cdot w_- - p_+^t \cdot w_+ \geq \frac{\C{quad_delta_drop}}{n}.
\]
\end{lem}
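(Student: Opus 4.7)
The plan is to observe that the statement is a straightforward calculation once we split on which pair of conditions holds. The key insight is that both pairs of conditions each independently produce a positive gap of order $1/n$ between $p_-^t \cdot w_-$ and $p_+^t \cdot w_+$, but for structurally different reasons: in the first case the probability bias supplies the gap, and in the second case the weight gap supplies it directly.

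\textbf{Case 1: \PThree and \WTwo.} I would plug the two inequalities from \PThree into the expression and use $w_- \geq w_+ \geq 1$ together with $w_- - w_+ \geq 0$ to drop a non-negative contribution. Specifically,
\[
p_-^t \cdot w_- - p_+^t \cdot w_+ \;\geq\; \Bigl(\tfrac{1}{n} + \tfrac{\C{p2k2}\,\delta^t}{n}\Bigr) w_- - \Bigl(\tfrac{1}{n} - \tfrac{\C{p2k1}(1-\delta^t)}{n}\Bigr) w_+ \;=\; \tfrac{w_--w_+}{n} + \tfrac{\C{p2k2}\,\delta^t\, w_- + \C{p2k1}(1-\delta^t)\,w_+}{n}.
\]
Dropping the first (non-negative) term and using $w_-, w_+ \geq 1$ bounds this from below by $\bigl(\C{p2k2}\,\delta^t + \C{p2k1}(1-\delta^t)\bigr)/n \geq \min(\C{p2k1},\C{p2k2})/n$, since a convex combination of two positive constants is at least the smaller one.

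\textbf{Case 2: \PTwo and \WThree.} Here I would use the crude bounds $p_-^t \geq 1/n$ and $p_+^t \leq 1/n$ from \PTwo directly, obtaining $p_-^t \cdot w_- - p_+^t \cdot w_+ \geq (w_- - w_+)/n$. The integer-valued strict inequality $w_+ < w_-$ from \WThree then yields $w_- - w_+ \geq 1$, so the expression is at least $1/n$.

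Combining, I would set $\C{quad_delta_drop} := \min\bigl(1,\C{p2k1},\C{p2k2}\bigr) > 0$ (in Case~2 the constants $\C{p2k1},\C{p2k2}$ are irrelevant and one may simply take $\C{quad_delta_drop}=1$). There is no substantial obstacle; the only point to be a little careful about is that in \WTwo the weights $w_+,w_-$ are specified as integers, which is what turns the strict inequality in \WThree into the usable additive gap $w_- - w_+ \geq 1$. The rest is elementary algebra.
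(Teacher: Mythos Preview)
Your proposal is correct and follows essentially the same approach as the paper: a two-case split on which pair of conditions holds, with the same resulting constants $\min(\C{p2k1},\C{p2k2})$ and $w_--w_+$ in the respective cases. The only cosmetic difference is that in the \PThree/\WTwo case the paper first crudely bounds $p_-^t w_- \geq p_-^t w_+$ before subtracting, whereas you expand fully and then drop the non-negative $(w_--w_+)/n$ term; both routes land on the same convex-combination bound.
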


\begin{proof}
First assume \PTwo and \WThree holds. In this case, \PTwo implies $p_-^t \geq \frac{1}{n} \geq p_+^t$, and thus
\[
 p_-^t \cdot w_- - p_+^t \cdot w_+ \geq \frac{1}{n} \cdot w_- - \frac{1}{n} \cdot w_+ \geq \frac{\C{quad_delta_drop}}{n},
\]
for $\C{quad_delta_drop} := w_- - w_+ > 0$, since the weights are constants satisfying $w_- > w_+$.

Next assume \PThree and \WTwo holds. In this case, \WTwo implies $w_{-} \geq w_{+}\geq 1$. Using \PTwo,
\begin{align*}
p_-^t \cdot w_- - p_+^t \cdot w_+ 
 & \geq p_-^t \cdot w_+ - p_+^t \cdot w_+ \geq (p_-^t - p_+^t) \cdot 1 \\ 
 & \geq \Big( \frac{1}{n} + \frac{\C{p2k2} \cdot \delta^t}{n} - \frac{1}{n} + \frac{\C{p2k1} \cdot (1 - \delta^t)}{n} \Big) 
 \geq \frac{\C{quad_delta_drop}}{n},
\end{align*}
for $\C{quad_delta_drop} := \min \{ \C{p2k1}, \C{p2k2} \}$.
\end{proof}

It is important to highlight that the quantity $p_-^t \cdot w_- - p_+^t \cdot w_+$ does not involve the \emph{number} of underloaded/underloaded bins.
Hence even if we know that $p_-^t \cdot w_- - p_+^t \cdot w_+ > 0$ holds, then this does not necessarily imply that the expected weight allocated to the set of underloaded bins is larger than the expected weight allocated to the set of overloaded bins. This conclusion would only be true if the quantile $\delta^t$ is sufficiently close to $1/2$.

One large portion of the analysis is devoted to derive a {\em weaker} quantile condition, that is, we prove that for sufficiently many rounds, the quantile $\delta^t$ is in the range $(\epsilon,1-\epsilon)$ for some (small) constant $\epsilon$. For this analysis, the inequality in \cref{lem:additive_drift} will be useful when we establish a connection between the absolute and quadratic potential function (\cref{lem:quadratic_absolute_relation_for_w_plus_w_minus}).

\def\maintechnical{
For any $\PThree \cap \WTwo$-process or $\PThree \cap \WThree$-process, there exists a constant $\kappa > 0$ such that for any $m \geq 1$,
\[
 \Pro{ \max_{i \in [n]} \left| x_i^m - \frac{W^t}{n} \right| \leq \kappa \log n } \geq 1-n^{-3};
\]
so in particular, $\Pro{ \Gap(m) \leq \kappa \log n } \geq 1-n^{-3}$. }

\begin{thm}\label{thm:main_technical} 
\maintechnical
\end{thm}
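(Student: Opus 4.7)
My plan is to adapt the broad strategy sketched in the Proof Overview: control three potentials in concert -- a quadratic potential $\Upsilon^t := \sum_{i}(y_i^t)^2$, an absolute potential $\Delta^t := \sum_{i}|y_i^t|$, and a two-sided exponential potential $\Phi^t := \sum_i \bigl( e^{\alpha y_i^t} + e^{-\alpha y_i^t} \bigr)$ for a sufficiently small constant $\alpha = \alpha(\Cr{p2k1},\Cr{p2k2},w_+,w_-) > 0$. The goal is to show $\Ex{\Phi^m} = \poly(n)$, so that $\max_i |y_i^m| \leq \kappa \log n$ follows by Markov's inequality and a union bound over the two summands of $\Phi$.

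The first stage is to establish self-stabilisation of the mean quantile $\delta^t$. The idea, enabled by \cref{lem:additive_drift}, is that the quadratic potential $\Upsilon^t$ admits an additive-plus-multiplicative drift inequality roughly of the form
\begin{equation*}
\Ex{\Upsilon^{t+1} \mid \mathfrak{F}^t} \leq \Upsilon^t - \frac{\Cr{quad_delta_drop}}{n}\cdot \Delta^t + \Oh(1),
\end{equation*}
because the per-round expected ``signed imbalance'' transferred to the underloaded side is at least $(p_-^t w_- - p_+^t w_+) \cdot |y_i^t|$ summed appropriately, and the step-variance is $\Oh(1)$. Consequently, whenever $\Delta^t \gg n$, $\Upsilon^t$ strictly decreases in expectation, and standard supermartingale arguments (together with $\Upsilon^t \geq \Delta^t$ up to constants for concentrated vectors) force $\Delta^t = \Oh(n)$ at some round $t_0 \leq m$ with high probability. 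This is the content I would attribute to a lemma in the style of \cref{lem:quadratic_absolute_relation_for_w_plus_w_minus}. Once $\Delta^{t_0} = \Oh(n)$, an averaging argument -- if too many bins lay on one side of the mean, the total imbalance on that side would exceed $\Oh(n)$ -- yields that $\delta^{t_0} \in (\epsilon, 1-\epsilon)$ for some constant $\epsilon$; a short ``recovery'' argument bounding how fast $\delta$ can drift then shows $\delta^t$ stays in a slightly smaller interval for the next $\Theta(n)$ rounds (this is what \cref{lem:good_quantile} should deliver).

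The second stage is the exponential-potential argument. For a round $t$ with ``good'' quantile $\delta^t \in (\epsilon,1-\epsilon)$, I compute $\Ex{\Phi^{t+1}-\Phi^t \mid \mathfrak{F}^t}$ by expanding $e^{\pm \alpha(y_i^{t+1}-y_i^t)} = 1 \pm \alpha\cdot(\text{change}) + \Oh(\alpha^2)$ with the $\Oh(\alpha^2)$ term absorbing the second-order terms (valid since at most $w_-$ balls are placed, which is bounded). The first-order piece contributes
\begin{equation*}
-\alpha \cdot \bigl(p_-^t w_- - p_+^t w_+\bigr) \cdot \bigl(\text{sum of }e^{\alpha|y_i^t|}\text{ over suitable bins}\bigr) + \text{mean-shift correction},
\end{equation*}
and the quantile lower bound together with \cref{lem:additive_drift} converts this into a multiplicative drop $\Ex{\Phi^{t+1} \mid \mathfrak{F}^t} \leq \bigl(1 - \tfrac{c\alpha}{n}\bigr)\Phi^t + \Oh(1)$. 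In ``bad'' rounds (where $\delta^t$ is close to $0$ or $1$), I only claim $\Ex{\Phi^{t+1} \mid \mathfrak{F}^t} \leq \bigl(1 + \tfrac{C\alpha^2}{n}\bigr)\Phi^t + \Oh(1)$, which is analogous to the filling-process bookkeeping. Because good rounds outnumber bad rounds by a constant factor within each stable window (by the first stage), telescoping yields $\Ex{\Phi^m} = \Oh(n)$ on a high-probability event, which is the desired bound.

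The main obstacle will be the interaction between the two stages: the exponential-potential computation is only useful on rounds where $\delta^t$ is stable, but the stability window only lasts for $\Theta(n)$ rounds before a new ``restart'' is required. I would handle this by partitioning time into phases of length $\Theta(n)$, using \cref{lem:good_quantile}-style arguments to guarantee that each phase contains a constant fraction of good rounds with high probability, and then combining the per-phase drops with a union bound over at most $\poly(n)$ phases (the gap bound only needs to hold at the specific round $m$, and the supermartingale estimate on $\Phi^t$ is robust enough to tolerate polynomially many restarts). A subtle point to handle carefully is that $\Phi^t$ is two-sided, so the mean-shift $\frac{W^{t+1}-W^t}{n}$ contributes opposite-sign perturbations to the overloaded and underloaded summands; showing that these nearly cancel -- up to the small quantile-driven bias -- is exactly where the strict inequality $w_+ < w_-$ (in the \WThree case) or the $\PThree$ bias (in the other case) does the essential work through \cref{lem:additive_drift}.
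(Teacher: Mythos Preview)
Your proposal has the right architecture --- the interplay of $\Upsilon$, $\Delta$, and a two-sided exponential potential, with the quadratic--absolute drift of \cref{lem:quadratic_absolute_relation_for_w_plus_w_minus} and the quantile stabilisation of \cref{lem:good_quantile} driving the exponential drop --- and this is indeed the paper's route. But two steps in your outline fail as written. First, the implication ``$\Delta^{t_0} = \Oh(n) \Rightarrow \delta^{t_0} \in (\epsilon,1-\epsilon)$'' is false: take $n-1$ bins with $y_i = +1$ and one bin with $y_n = -(n-1)$; then $\Delta = 2(n-1)$ yet $\delta = 1-1/n$. What \cref{lem:good_quantile} actually gives is weaker and subtler: conditional on $\Delta^{t_0} \leq Cn$, a constant \emph{fraction} of the next $\Theta(n)$ rounds have $\delta^t \in (\epsilon,1-\epsilon)$ --- not round $t_0$ itself, and not any contiguous block. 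So your claim that ``$\delta^t$ stays in a slightly smaller interval for the next $\Theta(n)$ rounds'' is also too strong, and this matters for how good and bad rounds are combined.

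Second, and more seriously, your partition into $\Theta(n)$-length phases with a ``union bound over at most $\poly(n)$ phases'' tacitly assumes $m$ is polynomial in $n$, whereas the theorem must hold for arbitrary $m$. You also need $\Delta^t = \Oh(n)$ at the \emph{start} of each phase to invoke \cref{lem:good_quantile}, but the $\Upsilon$--$\Delta$ drift only controls $\Delta$ in a time-averaged sense and, to get started, needs a bound on $\Upsilon^{t_0}$ that you never establish. The paper resolves both issues with an ingredient you omit: a \emph{weak} exponential potential $V^t$ with smoothing parameter $\tilde{\alpha} = \Theta(1/n)$, for which $\Ex{V^t} = \poly(n)$ holds unconditionally at every round (the tiny $\tilde{\alpha}$ dispenses with any quantile assumption). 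This yields $\Gap(t_0) = \Oh(n\log n)$, hence $\Upsilon^{t_0} = \Oh(n^3\log^2 n)$, at $t_0 := m - \Theta(n^3\log^4 n)$, which is the entry point for a \emph{recovery phase} of length $\Theta(n^3\log^3 n)$ --- not $\Theta(n)$ --- that brings the strong exponential potential down to $\Oh(n)$; this is followed by a stabilisation phase holding it there until round $m$. Moreover, the constant-fraction-of-good-rounds claim during recovery comes only from Markov's inequality on $\sum_t \Delta^t$ (giving constant probability, not high probability), so the paper must retry $\Theta(\log n)$ times to amplify. None of this structure is present in your outline.
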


Thanks to the reductions in \cref{lem:f(n)extension} and \cref{lem:beta_domination}, we also deduce:
\begin{cor}
For the $(1+\beta)$-process for any constant $\beta \in (0,1]$ and for any \Thinning$(f(n))$-process where $f(n) \in [0,\Oh(\log n)]$ there exists a constant $\kappa > 0$ such that for any $m \geq 1$, \[\Pro{ \Gap(m) \leq \kappa \log n } \geq 1-n^{-3}.\]
\end{cor}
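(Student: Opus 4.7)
The plan is to derive the corollary by direct application of \cref{thm:main_technical} combined with the two reduction lemmas \cref{lem:f(n)extension} and \cref{lem:beta_domination}, handling the $(1+\beta)$-process and the $\Thinning(f(n))$-process separately.

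For the $(1+\beta)$-process, I would first invoke \cref{lem:beta_domination} to conclude that $\Gap_\beta$ is stochastically dominated by $\Gap_\eta$, the gap of the $(1+\eta)$-\MeanThinning process with $\eta = \beta$. Since $\beta \in (0,1]$ is a fixed constant, $\eta$ is a positive constant, so \cref{lem:noisy_threshold} applies to show that $(1+\eta)$-\MeanThinning satisfies \PThree and \WTwo. Hence \cref{thm:main_technical} yields a constant $\kappa_1 > 0$ such that $\Pro{\Gap_\eta(m) \leq \kappa_1 \log n} \geq 1 - n^{-3}$, and the stochastic domination transports this bound to $\Gap_\beta$.

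For the $\Thinning(f(n))$-process with $f(n) \in [0, \Oh(\log n)]$, I would apply \cref{lem:f(n)extension} to conclude that $\Gap_{f(n)}$ is stochastically dominated by $\Gap_0 + f(n)$, where $\Gap_0$ is the gap of \MeanThinning. Since \MeanThinning satisfies \PThree and \WTwo (as verified earlier in \cref{sec:framework_thinning}), \cref{thm:main_technical} gives a constant $\kappa_2 > 0$ with $\Pro{\Gap_0(m) \leq \kappa_2 \log n} \geq 1 - n^{-3}$. Combining with the additive bound $f(n) \leq C \log n$ for some constant $C > 0$, we obtain
\[
\Pro{\Gap_{f(n)}(m) \leq (\kappa_2 + C) \log n} \geq 1 - n^{-3}.
\]
Setting $\kappa := \max\{\kappa_1, \kappa_2 + C\}$ concludes the argument in both cases.

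There is no hard step here; the corollary is essentially bookkeeping that composes three previously established results. The only subtlety worth stating cleanly is that in the $(1+\beta)$ case, we require $\beta$ to be a \emph{constant} so that $\eta$ produced by \cref{lem:beta_domination} is also a constant, and thus the constants $k_1, k_2$ arising from applying \cref{lem:noisy_threshold} (and hence the constant in \cref{thm:main_technical}) are independent of $n$. Likewise, in the $\Thinning(f(n))$ case it is essential that $f(n) = \Oh(\log n)$ so that the additive deficit in \cref{lem:f(n)extension} does not dominate the bound from \cref{thm:main_technical}.
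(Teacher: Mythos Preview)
Your proposal is correct and follows exactly the approach the paper indicates: the paper simply writes ``Thanks to the reductions in \cref{lem:f(n)extension} and \cref{lem:beta_domination}, we also deduce'' before stating the corollary, and your argument spells out precisely that bookkeeping, invoking \cref{lem:noisy_threshold} to verify the $(1+\eta)$-\MeanThinning process falls under \cref{thm:main_technical}.
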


\section{Analysis of Filling Processes}\label{sec:analysis_filling}

In this section, we present our analysis for filling processes. We begin in \cref{sec:filling_potential} by defining the exponential potential function and analyze its behavior (depending on some other constraints). Then in \cref{sec:complete_filling_proof} we use a super-martingale argument to show that the exponential potential function decreases, which eventually yields the desired gap bound in \cref{thm:filling_key}.
In \cref{sec:unfolding_general} we present our results on (general) unfoldings of filling processes. We also prove that \Caching can be seen as the unfolded version of a filling process satisfying \WOne and \POne, and thus deduce $\Oh(\log n)$ gap bound as long as the number of rounds is polynomial in $n$.
Finally, in \cref{sec:ballbyball_caching} we strengthen this by showing an $\Oh(\log n)$ upper bound on the gap for the \Caching process which holds for any $m \geq 1$.

\subsection{Potential Function Analysis}\label{sec:filling_potential}
We consider a version of the exponential potential function $\Phi^t $ which only takes bins into account whose load is at least two above the average load. This is given by
\[ 
\Phi^t := \sum_{i: y_{i}^t\geq 2} \exp\left( \alpha \cdot  y_i^t \right) =  \sum_{i=1}^n \exp\left( \alpha \cdot  y_i^t \right)\cdot  \mathbf{1}_{\{ y_{i}^{t}\geq 2\}} ,
\]
where we recall that $y_i^t = x^t - \frac{W^t}{n}$ is the normalized load of  bin $i$ at round t and $\alpha > 0$ is a sufficiently small constant to be fixed later. Let $\Phi^t_i= \exp\left( \alpha \cdot  y_i^t \right)\cdot  \mathbf{1}_{\{ y_{i}^{t}\geq 2\}}$ and thus $\Phi^t=\sum_{i=1}^n\Phi^t_i$. 
We will also use the absolute value potential: 
    \[
    \Delta^{t}:=\sum_{i=1}^{n} \left| y_i^t \right|.
   \]

The next lemma provides a useful upper bound on the expected potential. It establishes that to bound $\ex{\Phi^{t + 1} \mid \mathfrak{F}^t}$ from above, we may assume the distribution vector $p^{t}$ is uniform.

\begin{lem}
\label{lem:filling}
Consider any allocation process satisfying \POne and \WOne. Then for any round $t \geq 0$, 
\[
 \ex{\Phi^{t + 1} \mid \mathfrak{F}^t} \leq \frac{1}{n}  \sum_{i=1}^n \Phi_i^{t} \cdot\left(  \sum_{j \colon y_j^{t} < 1}  
e^{\frac{-\alpha (\lceil -y_j^t \rceil + 1)}{n}}  +  e^{-\frac{\alpha}{n}} \cdot (|B_{\geq 1}^{t}|-1) +  e^{\alpha-\frac{\alpha}{n}} \right) + e^{3\alpha},
\]
where $B_{\geq 1}^t$ denotes the set of bins with load at least $1$. 
\end{lem}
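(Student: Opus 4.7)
The plan is to decompose $\ex{\Phi^{t+1}\mid\mathfrak{F}^t} = \sum_{j=1}^n p_j^t\cdot\ex{\Phi^{t+1}\mid\mathfrak{F}^t, i^t=j}$ using \POne, case-analyze each term according to the region containing $y_j^t$, and then apply Abel summation against the majorization guarantee of \POne twice.

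Writing $w^t(j)$ for the number of balls placed in round $t$ given that bin $j$ is sampled, I split into three cases. In \emph{Case A} ($y_j^t<1$), \WOne places $w^t(j) = \lceil -y_j^t\rceil+1$ balls. None of these balls lands in any bin $i$ with $y_i^t\geq 2$: items (1)--(3) of \WOne cap the number placed in such a bin by $\lceil -y_i^t\rceil-1 \leq -3$. Hence $\Phi_i^{t+1}\leq e^{-\alpha w^t(j)/n}\Phi_i^t$ for every hot $i$. Moreover, the identity $y_k^t + \lceil -y_k^t\rceil\in[0,1)$ combined with items (1)--(3) gives $y_{k_1}^{t+1}\in[1-w^t(j)/n,\,2-w^t(j)/n)$, $y_{k_2}^{t+1}\in[-w^t(j)/n,\,1-w^t(j)/n)$, and every other receiver ends with $y^{t+1}<0$; so \emph{no} non-hot bin crosses into the hot region, and $\Phi^{t+1}\leq e^{-\alpha w^t(j)/n}\Phi^t$. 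In \emph{Case B} ($y_j^t\geq 2$) one ball is added to $j$, giving $\Phi^{t+1} = e^{-\alpha/n}\Phi^t + (e^{\alpha-\alpha/n}-e^{-\alpha/n})\Phi_j^t$. In \emph{Case C} ($1\leq y_j^t<2$), $j$ may become newly hot but satisfies $\Phi^{t+1}_j < e^{3\alpha}$ (since $y_j^{t+1}<3$), while every other bin is bounded by $e^{-\alpha/n}\Phi_i^t$, so $\Phi^{t+1}\leq e^{-\alpha/n}\Phi^t + e^{3\alpha}$.

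Setting $a_j := e^{-\alpha(\lceil -y_j^t\rceil+1)/n}$ for $y_j^t<1$ and $a_j := e^{-\alpha/n}$ for $y_j^t\geq 1$, and using $\Phi_j^t = 0$ outside Case B, taking expectation over $j$ yields
\[
\ex{\Phi^{t+1}\mid\mathfrak{F}^t} \;\leq\; \Phi^t\sum_{j=1}^n p_j^t a_j \;+\; (e^{\alpha-\alpha/n}-e^{-\alpha/n})\sum_{j=1}^n p_j^t\Phi_j^t \;+\; e^{3\alpha},
\]
where the additive $e^{3\alpha}$ absorbs the total new-hot contribution from Case C since $\sum_j p_j^t\leq 1$.

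It remains to apply \POne twice. Since bins are sorted with $y_1^t\geq\cdots\geq y_n^t$, both $(a_j)$ and $(\Phi^t_j)$ are non-increasing in $j$. Writing $c_j := p_j^t - 1/n$, \POne gives $\sum_{j=1}^n c_j = 0$ and $\sum_{j\leq k} c_j \leq 0$ for every $k$, so Abel summation $\sum_j c_j b_j = \sum_{k=1}^{n-1}\bigl(\sum_{j\leq k}c_j\bigr)(b_k - b_{k+1})$ is non-positive against any non-increasing $(b_j)$, producing
\[
\sum_{j=1}^n p_j^t a_j \leq \tfrac{1}{n}\sum_{j=1}^n a_j \quad\text{and}\quad \sum_{j=1}^n p_j^t\Phi_j^t \leq \tfrac{1}{n}\Phi^t.
\]
Substituting and folding the $-\tfrac{1}{n}e^{-\alpha/n}\Phi^t$ contribution from the spike into the base sum (turning $|B_{\geq 1}^t|\cdot e^{-\alpha/n}$ into $(|B_{\geq 1}^t|-1)e^{-\alpha/n}$) produces exactly the claimed inequality. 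The main obstacle is the Case~A verification that \WOne never turns a non-hot bin into a hot one; once this structural fact is in hand, the rest is a double application of Abel's identity.
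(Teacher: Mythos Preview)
Your proof is correct and follows essentially the same approach as the paper: condition on the sampled bin, case-analyze by the region containing its normalized load, verify via \WOne that no bin crosses the threshold $y\geq 2$ except possibly the sampled one, and then replace $p^t$ by the uniform distribution via majorization. The only cosmetic difference is that you separate the ``base'' term $a_j\Phi^t$ from the ``spike'' term $(e^{\alpha-\alpha/n}-e^{-\alpha/n})\Phi_j^t$ at the outset and apply Abel summation twice to the two non-increasing sequences $(a_j)$ and $(\Phi_j^t)$, whereas the paper packages them into one non-increasing function $h(i)=g(i)\Phi^t+f(i)$ and invokes its majorization lemma once---both routes produce the same final fold $|B_{\geq 1}^t|\,e^{-\alpha/n}\mapsto(|B_{\geq 1}^t|-1)e^{-\alpha/n}+e^{\alpha-\alpha/n}$.
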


\begin{proof}Recall that the filtration $\mathfrak{F}^t$ reveals the load vector $x^t$. Throughout this proof, we consider the labeling chosen by the process in round $t$ such that $x_1^{t} \geq x_2^{t} \geq \cdots \geq x_n^{t}$ and $p^t$ being majorized by \OneChoice (according to \POne). We emphasize that for this labeling, $x_i^{t+1}$ may not be non-increasing in $i \in [n]$.

To begin, using $\Phi_i^{t+1}=e^{\alpha y_{i}^{t+1}}\mathbf{1}_{\{ y_{i}^{t+1}\geq 2\}}$, we can split $\Phi^{t+1}$ over the $n$ bins as follows,
\[
	\Ex{ \Phi^{t+1} \, \mid \, \mathfrak{F}^t}   = \sum_{j=1}^n \Ex{
	\Phi_{j}^{t+1}
 \, \mid \, \mathfrak{F}^t }.
\]
Now consider the effect of picking bin $i$ for the allocation in round $t$ to the potential $\Phi^{t+1}$. Note that bin $i$ is chosen with probability equal to $p_i^t$. Observe that if bin $i$ satisfies $y_i^t<1$, then it receives at most $\lceil -y_i^t\rceil +1$ balls in round $t$, thus $y_{i}^{t+1}\geq 2$ if and only if $y_i^{t}\geq 1$. 
Using condition \WOne, we distinguish between the following three cases based on how allocating to $i$ changes $\Phi_{j}^{t}$ for $j \neq i$ and for $j=i$:
\medskip 

\noindent\textbf{Case 1.A} [$y_i^{t} < 1$, $j\neq i$, $y_j^t < 1$].
We will allocate $\lceil - y_i^{t} \rceil +1$ many balls to bins $k$ with $y_k^t <1$ (not necessarily to $i$) subject to \WOne. This increases the average load by $(\lceil - y_i^{t} \rceil + 1)/n$. Since $y_j^t < 1$, $\Phi_j^t =0 $. Further, by condition \WOne we can increase the load of $y_j^t$ by at most $\lceil - y_j^t \rceil + 1$, hence, $\Phi_j^{t+1}=0$. Therefore,
$
\Phi_{j}^{t+1} =  \Phi_j^{t} \cdot e^{\frac{-\alpha (\lceil -y_i^t \rceil + 1)}{n}}$ for $j \neq i$. 

\medskip

\noindent\textbf{Case 1.B} [$y_i^{t} < 1$, $j\neq i$, $y_j^t \geq 1$].
As in Case 1a, we will allocate $\lceil - y_i^{t} \rceil +1$ many balls to bins $k$ with $y_k^t <1$ (not necessarily to $i$) subject to \WOne, which increases the average load by $(\lceil - y_i^{t} \rceil + 1)/n$.
Additionally, bin $j$ will receive no balls (by condition \WOne), thus $\Phi_{j}^{t+1} =  \Phi_j^{t} \cdot e^{\frac{-\alpha (\lceil -y_i^t \rceil + 1)}{n}}$ for $j \neq i$. 

\medskip 

\noindent\textbf{Case 2} [$y_i^{t} \geq 1$, $j\neq i$]. We allocate one ball to $i$, which increases the average load by $1/n$, and thus $\Phi_{j}^{t+1} =  \Phi_j^{t} \cdot e^{-\frac{\alpha}{n}}$ for $j \neq i$, which again also holds for bins $j$ that do not contribute. 

\medskip 

\noindent\textbf{Case 3} [$j=i$].
Finally, we consider the effect on $\Phi_{i}^{t+1}$. Again if $y_i^{t} < 1$, then $\Phi_{i}^t=0$ and $\Phi_{i}^{t+1}=0$. Otherwise, we have $y_i^{t} \geq 1$ and we allocate one ball to $i$, and thus
\[
\Phi_{i}^{t+1} =  e^{ \alpha \cdot (y_{i}^{t}+
1) - \frac{\alpha}{n}}\mathbf{1}_{\{ y_{i}^{t+1}\geq 2\}} = e^{ \alpha \cdot y_i^{t}}\mathbf{1}_{\{ y_{i}^{t+1}\geq 2\}} \cdot e^{ \alpha - \frac{\alpha}{n}} \leq 
(e^{ \alpha \cdot y_i^{t}}\mathbf{1}_{\{ y_{i}^{t}\geq 2\}} +e^{2\alpha} ) \cdot e^{ \alpha - \frac{\alpha}{n}},
\]
where the $+e^{2\alpha}$ is added to account for the case where $1\leq y_i^{t} <2 $ and so $\Phi_{i}^t =0$ but $0<\Phi_{i}^{t+1} \leq e^{3\alpha -\alpha/n}$. Thus in this case, $\Phi_{i}^{t+1}\leq \Phi_{i}^{t}e^{ \alpha - \frac{\alpha}{n}} +  e^{3 \alpha - \frac{\alpha}{n}}\leq \Phi_{i}^{t}e^{ \alpha - \frac{\alpha}{n}} +  e^{3 \alpha }.$
\medskip 

By aggregating the three cases above, and observing that $\sum_{i=1}^n p_i^t\cdot e^{3 \alpha }=e^{3 \alpha }$ , we see that 
\begin{align}
 \sum_{j=1}^n \ex{\Phi_{j}^{t + 1} \mid \mathfrak{F}^t}    &=
 \sum_{i=1}^n p_i^t \sum_{j=1}^n \ex{\Phi_{j}^{t + 1} \mid \mathfrak{F}^t, \;\text{Bin i is selected at round t}}   \notag  \\  &\leq \sum_{i=1}^n  p_i^t \cdot \mathbf{1}_{\{y_i^t < 1\}}\sum_{j=1}^n \Phi_j^{t}  \cdot e^{\frac{-\alpha (\lceil -y_i^t \rceil + 1)}{n}}  + \sum_{i =1}^n  p_i^t \cdot \mathbf{1}_{\{y_i^t \geq 1\}}\sum_{j \neq i} \Phi_j^{t} \cdot e^{-\frac{\alpha}{n}} \notag \\ &\qquad + \sum_{i=1}^n p_i^t
 \cdot  \mathbf{1}_{\{y_i^t \geq 1\}}\cdot  \Phi_i^{t} \cdot e^{\alpha-\frac{\alpha}{n}}+ e^{3 \alpha } \label{eq:firstbd} 
\end{align}

 We will now rewrite \eqref{eq:firstbd} in order to establish that it is maximized if $p$ is the uniform distribution.
 Adding $\sum_{i=1}^n p_i^{t} \cdot \mathbf{1}_{\{y_i^t \geq 1\}}\cdot \Phi_i^{t} \cdot e^{-\frac{\alpha}{n}}$ to the middle sum (corresponding to Case 2) in \eqref{eq:firstbd} and subtracting it from the last sum (corresponding to Case 3) transforms \eqref{eq:firstbd} into
\begin{align*}
& \!\!\! \!\sum_{i=1}^n  p_i^t  \mathbf{1}_{\{y_i^t < 1\}}\sum_{j=1}^n \Phi_j^{t}   e^{\frac{-\alpha (\lceil -y_i^t \rceil + 1)}{n}}  + \sum_{i =1}^n  p_i^t  \mathbf{1}_{\{y_i^t \geq 1\}}\sum_{j=1}^n \Phi_j^{t}  e^{-\frac{\alpha}{n}}  + \sum_{i=1}^n p_i^t
   \Phi_i^{t}  \mathbf{1}_{\{y_i^t \geq 1\}}e^{ -\frac{\alpha}{n}}\left(e^{\alpha}-1\right) + e^{3 \alpha}\notag  \\
&\!\!\! \! = \sum_{i=1}^n  p_i^t \left( \! \! \Bigg(\underbrace{ \mathbf{1}_{\{y_i^t < 1\}}   e^{\frac{-\alpha (\lceil -y_i^t \rceil + 1)}{n}} + \mathbf{1}_{\{y_i^t \geq 1\}}\cdot e^{-\frac{\alpha}{n}} }_{g(i)} \Bigg) \sum_{j=1}^n  \Phi_j^{t} +   \underbrace{ \Phi_i^{t} \cdot \mathbf{1}_{\{y_i^t \geq 1\}}e^{ -\frac{\alpha}{n}}\left(e^{\alpha}-1\right) }_{f(i)} \right) + e^{3 \alpha } 
\end{align*}

Recall that $y_1^{t} \geq y_2^{t} \geq \cdots \geq y_n^{t}$, which implies $\Phi_1^{t} \geq \Phi_2^{t} \geq \cdots \geq \Phi_n^{t}\geq 0$. Thus $f(i)$ and $g(i)$ are non-negative and non-increasing in $i$ and $\sum_{j=1}^{n} \Phi_j^{t}\geq 0 $. Consequently, the function $h(i) = f(i)+ g(i)\cdot \sum_{j=1}^n \Phi_j^{t}$ is non-negative and non-increasing in $i$. Note that by condition \POne, for any $k \in [n]$ it holds that $\sum_{i=1}^{k} p_i^t \leq \frac{k}{n}$. Thus we can apply \cref{lem:quasilem} which implies $ \sum_{i=1}^n p_i^t \cdot h(i) \leq \sum_{i=1}^n \frac{1}{n} \cdot h(i)$. Applying this to the above, rearranging, and splitting $f(i)$ gives 
\begin{equation}\begin{aligned}\label{eq:nearlythere!}
\ex{\Phi^{t + 1} \mid \mathfrak{F}^t}&\leq \frac{1}{n} \sum_{i=1}^n   \mathbf{1}_{\{y_i^t < 1\}}   e^{\frac{-\alpha (\lceil -y_i^t \rceil + 1)}{n}} \sum_{j=1}^n  \Phi_j^{t}  +  \frac{1}{n} \sum_{i=1}^n  \mathbf{1}_{\{y_i^t \geq 1\}}\cdot e^{-\frac{\alpha}{n}}  \sum_{j=1}^n  \Phi_j^{t} \\&\qquad -\frac{1}{n}\sum_{i=1}^n  \mathbf{1}_{\{y_i^t \geq 1\}} \Phi_i^{t} \cdot e^{ -\frac{\alpha}{n}} +    \frac{1}{n}\sum_{i=1}^n   \mathbf{1}_{\{y_i^t \geq 1\}}\Phi_i^{t} \cdot e^{\alpha -\frac{\alpha}{n}}   + e^{3 \alpha }
\end{aligned}\end{equation} Now observe combining the second and third terms above gives
 \begin{align}\frac{1}{n} \sum_{i=1}^n  \mathbf{1}_{\{y_i^t \geq 1\}}\cdot e^{-\frac{\alpha}{n}}  \sum_{j=1}^n  \Phi_j^{t} -\frac{1}{n}\sum_{i=1}^n  \mathbf{1}_{\{y_i^t \geq 1\}} \Phi_i^{t} \cdot e^{ -\frac{\alpha}{n}} &= \frac{1}{n}\cdot e^{-\frac{\alpha}{n}}\cdot \sum_{i=1}^n  \sum_{j=1}^n  \mathbf{1}_{\{y_i^t \geq 1\}}\cdot  \Phi_j^{t} \cdot \mathbf{1}_{\{j\neq i\}} \notag \\
 &=   \frac{1}{n}\cdot e^{-\frac{\alpha}{n}}\cdot  \sum_{i=1}^n \Phi_i^{t} \sum_{j=1}^n  \mathbf{1}_{\{y_j^t \geq 1\}}\cdot \mathbf{1}_{\{j\neq i\}} \notag \\
 &=  \frac{1}{n}\cdot e^{-\frac{\alpha}{n}}\cdot  \sum_{i=1}^n \Phi_i^{t} \cdot (|B_{\geq 1}^{t}|-1), \label{eq:reshuffle} \end{align} where the last line follows since $ \Phi_i^{t}  = e^{\alpha y_i^t}\mathbf{1}_{\{y_{i}^t \geq 2\}}$. 
 
 Now, substituting \eqref{eq:reshuffle} into \eqref{eq:nearlythere!}, exchanging the first double summation and using the bound $\mathbf{1}_{\{y_i^t\geq 1 \}}\leq 1 $ on the last sum, and finally grouping terms gives the following  
\begin{equation*} 
\ex{\Phi^{t + 1} \mid \mathfrak{F}^t} \leq \frac{1}{n}  \sum_{i=1}^n  \Phi_i^{t} \left( \sum_{j=1}^n\mathbf{1}_{\{y_j^t < 1\}}  e^{\frac{-\alpha (\lceil -y_j^t \rceil + 1)}{n}} +  (|B_{\geq 1}^{t}|-1)e^{ -\frac{\alpha}{n}} +     e^{\alpha -\frac{\alpha}{n}} \right)   + e^{3 \alpha },\end{equation*}  
as claimed.  
\end{proof}

Let $\mathcal{G}^t$ be the event that at round $t\geq 0$ either there are at least $n/20$ underloaded bins or there is a an absolute value potential of at least $n/10$. In symbols this is given by
\begin{equation}\label{eq:eventGt}
    \mathcal{G}^t := \left\{B_{-}^{t} \geq n/20 \right\}\cup \left\{ \Delta^t \geq n/10\right\}. 
\end{equation}
The next lemma provides two estimates on the expected exponential potential $\Phi^{t+1}$ in terms of $\Phi^{t}$. The first estimate holds for any round and it establishes that the process does not perform worse than \OneChoice, meaning that the potential increases by a factor of at most $(1+\Oh(\alpha^2/n))$. The second estimate is stronger for rounds where we have a lot of underloaded bins or a large value of the absolute value potential. This stronger estimate states that the potential decreases by a factor of $(1-\Omega(\alpha/n))$ in those rounds. Note that as we show in~\cref{clm:counterexample}, the potential may increase in expectation for certain load configurations, so it seems hard to prove a decrease without additional constraints.

\begin{lem}
\label{lem:filling_good_quantile}
Consider any allocation process satisfying \POne and \WOne. There exists a constant $c_1>0$ such that for any $0 < \alpha < 1$ and any $t \geq 0$,
\[
\Ex{ \Phi^{t+1} \, \mid \,\mathfrak{F}^t} \leq  \left( 1 
+ \frac{c_1 \alpha^2}{n} \right) \cdot \Phi^{t} 
+ e^{3 \alpha}.
\]
Further, there exists a constant $c_2 > 0$ such that for any $0< \alpha<1/100$ and any $t \geq 0$,
\[
 \Ex{ \Phi^{t+1} \, \mid \, \mathfrak{F}^t, \mathcal{G}^t } \leq \left(1 - \frac{c_2 \alpha}{n} \right) \cdot \Phi^{t}  + e^{3\alpha}.
\]
\end{lem}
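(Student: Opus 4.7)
My plan is to Taylor-expand the expression $A(t) := \sum_{j:\, y_j^t < 1} e^{-\alpha(\lceil -y_j^t \rceil + 1)/n} + e^{-\alpha/n}(|B_{\geq 1}^t|-1) + e^{\alpha - \alpha/n}$ appearing in the upper bound of \cref{lem:filling}, which gives $\Ex{\Phi^{t+1}\mid\mathfrak{F}^t} \leq (\Phi^t/n) \cdot A(t) + e^{3\alpha}$. Both inequalities then reduce to showing $A(t) \leq n + c_1\alpha^2$ in general, and $A(t) \leq n - c_2\alpha$ under $\mathcal{G}^t$.

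Using $e^{-x} \leq 1 - x + x^2/2$ (for $x \geq 0$) on every term of the first sum and on $e^{-\alpha/n}$, and $e^x \leq 1 + x + Cx^2$ (for $x \in [0,1]$) on $e^{\alpha - \alpha/n}$, I introduce $d_j := \lceil -y_j^t\rceil$ (so $d_j \geq 1$ on $B_-^t$ and $d_j = 0$ on $\{0 \leq y_j^t < 1\}$), $S := \sum_{j \in B_-^t} d_j$, and $T := \sum_{j \in B_-^t}(d_j+1)^2$. Collecting, the constants telescope to exactly $n$, the linear-in-$\alpha$ terms telescope to $-\alpha S/n$ (a direct consequence of $\sum_j y_j^t = 0$), and the quadratic-in-$\alpha$ terms total $\alpha^2 T/(2n^2) + O(\alpha^2)$. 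Hence
\[
A(t) \;\leq\; n - \frac{\alpha S}{n} + \frac{\alpha^2 T}{2n^2} + O(\alpha^2).
\]

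For the first bound, $S \geq 0$ makes the linear term nonpositive, so it remains to control $\alpha^2 T/(2n^2)$. I partition $B_-^t$ by the size of $d_j$: on $\{j: d_j > n/\alpha\}$ I replace Taylor by the direct bound $e^{-\alpha(d_j+1)/n} \leq e^{-1}$, gaining savings at least $1-e^{-1}$ per bin (which dominates any extra quadratic contribution from these bins); on the complement the inequality $(d_j+1)^2 \leq (n/\alpha)(d_j+1)$ bounds the quadratic by $\alpha(n_- + S)/(2n)$, which is absorbed into the linear piece, yielding $A(t) \leq n + O(\alpha^2)$.

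For the second bound, $\mathcal{G}^t$ forces $S \geq n/20$: either $|B_-^t| \geq n/20$ (so $S \geq |B_-^t|$ since each $d_j \geq 1$), or $\Delta^t \geq n/10$ (so $S \geq \sum_{B_-^t}|y_j^t| = \Delta^t/2 \geq n/20$). Hence the linear term is at most $-\alpha/20$. To ensure the quadratic does not neutralize this, I refine the partition to $B^1 := \{d_j = 1\}$, $B^m := \{2 \leq d_j \leq n/\alpha\}$, $B^l := \{d_j > n/\alpha\}$. Pure Taylor on $B^1$ gives only $O(\alpha^2/n)$ quadratic contribution (since $(d_j+1)^2 = 4$); on $B^m$ the bound $(d_j+1)^2 \leq (n/\alpha)(d_j+1)$ produces quadratic at most $\alpha(|B^m| + S_m)/(2n)$; and on $B^l$ the direct bound supplies savings $\geq 1-e^{-1}$ per bin. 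Using $|B^m| \leq S_m/2$ (from $d_j \geq 2$ on $B^m$), a short rearrangement shows the net coefficient of $\alpha$ is at most $-S/(4n) \leq -1/80$. Taking $\alpha \leq 1/100$ ensures the residual $O(\alpha^2)$ is at most $\alpha/200$, so $A(t) \leq n - c_2 \alpha$ for some $c_2 > 0$; dividing by $n$ gives the claimed bound.

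The principal obstacle is the control of $\alpha^2 T/(2n^2)$, since na\"ively $T$ can be arbitrarily large when a single bin is extremely underloaded; the three-way partition of $B_-^t$ --- applying pure Taylor for $d_j = 1$, the linearization $(d_j+1)^2 \leq (n/\alpha)(d_j+1)$ for moderate $d_j$, and the crude bound $e^{-x} \leq e^{-1}$ for large $d_j$ --- is the decisive device that makes each contribution tractable and preserves a net linear decrease.
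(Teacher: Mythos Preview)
Your approach is correct in substance but takes a genuinely different route from the paper, and there is one small mis-stated intermediate claim.

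\textbf{Comparison.} For the first statement, the paper bypasses the partition entirely: since $\lceil -y_j^t\rceil + 1 \geq 1$ whenever $y_j^t < 1$, each exponential term is at most $e^{-\alpha/n}$, giving $A(t) \leq e^{-\alpha/n}(n-1+e^{\alpha})$; one Taylor step then yields $A(t) \leq n(1 + c_1\alpha^2/n)$. Your two-way split and the linearization $(d_j+1)^2 \leq (n/\alpha)(d_j+1)$ are not needed here. For the second statement, the paper treats the two halves of $\mathcal{G}^t$ separately: when $|B_-^t| \geq n/20$ it uses the sharper pointwise bound $e^{-\alpha(d_j+1)/n} \leq e^{-2\alpha/n}$ on $B_-^t$; when $\Delta^t \geq n/10$ it applies Schur-convexity to $\sum_{j \in B_-^t} e^{\alpha y_j^t/n}$, pushing all the underload into a single bin to produce a term $e^{-\alpha/20}$. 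Your unified expansion $A(t) \leq n - \alpha S/n + \alpha^2 T/(2n^2) + O(\alpha^2)$, together with the three-way partition $B^1 \cup B^m \cup B^l$ and the observation $|B^m| \leq S_m/2$, avoids Schur-convexity and handles both halves of $\mathcal{G}^t$ uniformly via the single inequality $S \geq n/20$. This is a legitimate trade-off: your argument is more elementary but longer; the paper's is shorter but invokes an extra tool.

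\textbf{The mis-statement.} Your claim that ``the net coefficient of $\alpha$ is at most $-S/(4n)$'' is only correct when $B^l = \varnothing$. When $n_l := |B^l| \geq 1$, replacing Taylor by the crude bound on $B^l$ removes the linear contribution $-\alpha(d_j+1)/n$ from those bins, and since $S_l > n_l \cdot n/\alpha$ the genuine coefficient of $\alpha$ may well be positive. The fix is to split cases: if $n_l = 0$ your rearrangement gives coefficient $\leq -S/(4n) \leq -1/80$ as stated; if $n_l \geq 1$ the constant-order savings $n_l(1-e^{-1}) \geq 1-e^{-1}$ already dominate the remaining $O(\alpha)$ terms (which you have bounded by $\alpha n_l/n \leq \alpha$), so $A(t) \leq n - (1-e^{-1}) + O(\alpha) \leq n - c_2\alpha$ directly. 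With this case split the argument goes through.
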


\begin{proof}Recall that, as before, we fix the labeling chosen by the process in round $t$ such that $x_1^{t} \geq x_2^{t} \geq \cdots \geq x_n^{t}$, thus $x_i^{t+1}$ may not be non-increasing in $i \in [n]$ and $p^t$ being majorized by \OneChoice (according to \POne). 

Let $A_i$ be the bracketed term in the expression for $\ex{\Phi^{t + 1} \mid \mathfrak{F}^t} $ in \cref{lem:filling_good_quantile}, given by 
\begin{equation}\label{eq:Ai}A_i=   \sum_{j \colon   y_j^{t} < 1}
e^{\frac{-\alpha (\lceil -y_j^t \rceil + 1)}{n}}  +  (|B_{\geq 1}^{t}| -1)\cdot e^{-\frac{\alpha}{n}} +   e^{\alpha-\frac{\alpha}{n}}.\end{equation}
Observe that $-\alpha (\lceil -y_j^t \rceil + 1) \leq -\alpha$ whenever $y_j^{t} < 1$ and thus 
\[A_i \leq    \sum_{j \colon   y_j^{t} < 1}
e^{-\frac{\alpha}{n}}  +  (|B_{\geq 1}^{t}| -1)\cdot e^{-\frac{\alpha}{n}} +   e^{\alpha-\frac{\alpha}{n}} = e^{-\frac{\alpha}{n}} \cdot \left[ n-1 + e^{\alpha} \right].
\]Applying the Taylor estimate $e^{z} \leq 1+z+z^2$, which holds for any $z \leq 1$, twice gives 
\[A_i\leq  \left(1 - \frac{\alpha}{n} +  \frac{\alpha^2}{n^2}  \right) \left( n + \alpha + \alpha^2 \right) = n \cdot \left(1 - \frac{\alpha}{n} +  \frac{\alpha^2}{n^2} \right) \left( 1 + \frac{\alpha}{n} +  \frac{\alpha^2}{n} \right) \leq n \cdot \left(1 + \frac{c_1\alpha^2}{n}  \right),
\]
for some constant $c_1 >0$. The first statement in the lemma now follows as \cref{lem:filling} gives 
  \[
\ex{\Phi^{t + 1} \mid \mathfrak{F}^t  } \leq \frac{1}{n}  \sum_{i=1}^n \Phi_i^{t} \cdot A_i + e^{3\alpha} \leq \frac{1}{n} \cdot n\left( 1 + \frac{c_1 \alpha^2}{n}  \right)\cdot  \sum_{i=1}^n \Phi_i^{t} + e^{3\alpha} \leq\left( 1 +\frac{c_1 \alpha^2}{n} \right)\Phi^{t} + e^{3\alpha}. \] 
We shall now show the second statement of the lemma. By splitting sums in \eqref{eq:Ai} we have 
 \begin{align} 
   A_i&=  \sum_{j \in B_{-}^t}
 e^{\frac{-\alpha (\lceil -y_j^t \rceil + 1)}{n}} + \sum_{j \colon 0\leq y_j^{t} < 1}
 e^{\frac{-\alpha (\lceil -y_j^t \rceil + 1)}{n}}  +  (|B_{\geq 1}^{t}| -1)\cdot e^{-\frac{\alpha}{n}} +   e^{\alpha-\frac{\alpha}{n}}\notag \\
 &=  \sum_{j \in B_{-}^t}
 e^{\frac{-\alpha (\lceil -y_j^t \rceil + 1)}{n}} +(|B_{+}^{t}|-1)\cdot e^{-\frac{\alpha}{n}}+   e^{\alpha-\frac{\alpha}{n}}\label{eq:A_ibdd}\\
 &\leq |B_{-}^t|\cdot e^{-\frac{2\alpha}{n}} +  (|B_{+}^{t}|-1)\cdot e^{-\frac{\alpha}{n}} +    e^{\alpha-\frac{\alpha}{n} }\label{eq:A_ibdd2} \end{align}
 We shall now first assume that $|B_-^t| \geq n/20$. Recall the bound $e^x\leq 1 + x + 0.6 \cdot x^2 $ which holds for any $x\leq 1/2$. If $\alpha<1/2$, we can apply this bound to \eqref{eq:A_ibdd2}, giving
 \begin{align} 
 A_i&\leq e^{-\frac{\alpha}{n}}\cdot \left(|B_{-}^t|\cdot \left(1 -\frac{\alpha}{n} + \frac{6\alpha^2}{10n^2}\right) +  (|B_{+}^{t}|-1)+    1 +\alpha  + \frac{6\alpha^2}{10} \right) \notag  \\
&\leq  \left(1 -\frac{\alpha}{n} + \frac{6\alpha^2}{10n^2}\right)\cdot n\left( 1 -\frac{\alpha}{20n} + \frac{6\alpha^2}{200n^2} +   \frac{\alpha}{n}  + \frac{6\alpha^2}{10n} \right)  \notag\\
&= n\cdot\left(1 -\frac{\alpha(1-12\alpha)}{20n} + \Oh\!\left(\frac{\alpha^2}{n^2} \right)  \right). \label{eq:1stcondredux} \end{align} %
We now assume that $|\Delta^{t}| \geq n/10$. 
Observe that by Schur-convexity (see~\cref{clm:schur}) and the assumption on $|\Delta^{t}|$ we have 
\begin{align*}
\sum_{j \in B_{-}^t}
e^{\frac{-\alpha (\lceil -y_j^t \rceil + 1)}{n}} &\leq e^{-\frac{\alpha}{n}} \sum_{j \in B_{-}^t}e^{\frac{\alpha  y_j^t }{n}} \\ &\leq  e^{-\frac{\alpha}{n}} \cdot \left( (|B_{-}^{t}|-1) \cdot e^{-\frac{\alpha}{n}\cdot 0 } + 1 \cdot e^{\frac{\alpha}{n} \cdot \sum_{j \in B_{-}^t} y_i^{t} } \right) \\ &\leq (|B_{-}^{t}|-1) \cdot e^{-\frac{\alpha}{n}} +  e^{-\alpha/20}, \end{align*}
where we used the fact that $\sum_{j \in B_{-}^t} y_i^{t}= -\frac{1}{2} \Delta^t$.
Applying this and the bound $e^x\leq 1 + x + 0.6 \cdot x^2 $, for $x\leq 1/2$, to \eqref{eq:A_ibdd} gives  
 \begin{align}
A_i & \leq  (|B_{-}^{t}|-1) \cdot e^{-\frac{\alpha}{n}} +  e^{-\alpha/20}   +   (|B_{+}^{t}| -1)\cdot e^{-\frac{\alpha}{n}}+   e^{\alpha}\notag \\
& = (n-2) \cdot e^{-\frac{\alpha}{n}} +  e^{-\alpha/20}  +   e^{\alpha}\notag \\
&\leq (n-2)\cdot \left(1 -\frac{\alpha}{n} + \frac{6\alpha^2}{10n^2}\right) + \left(1 -\frac{\alpha}{20} +  \frac{6\alpha^2}{4000} \right)+ \left(1 + \alpha  +  \frac{6\alpha^2}{10}\right)\notag \\
&= n\left( 1 - \frac{\alpha(200-2406\alpha)}{4000n}  + \Oh\!\left(\frac{\alpha^2 }{n^2}\right)\right). \label{eq:2ndcondredux}
 \end{align}
 Thus we see by \eqref{eq:1stcondredux} and \eqref{eq:2ndcondredux} that if $\mathcal{G}^t$ holds and we take $\alpha<1/100$ and $n$ sufficiently large, then there exists some constant $c_2>0$ such that $A_i \leq n(1-c_2\alpha/n)$. Thus  \cref{lem:filling} gives 
  \[
 \ex{\Phi^{t + 1} \mid \mathfrak{F}^t, \mathcal{G}^t } \leq \frac{1}{n}  \sum_{i=1}^n \Phi_i^{t} \cdot A_i + e^{3\alpha} \leq \frac{1}{n} \cdot n\left(1-\frac{c_2\alpha}{n}\right)\cdot  \sum_{i=1}^n \Phi_i^{t} + e^{3\alpha} \leq\left(1-\frac{c_2\alpha}{n}\right)\Phi^{t} + e^{3\alpha}, \] as claimed. \end{proof}

The next lemma shows that the event $\mathcal{G}^t$ given by \eqref{eq:eventGt} holds for sufficiently many rounds.

\begin{lem}\label{cor:enough_good_quantiles}
Consider any allocation process satisfying \POne and \WOne. For every integer $t_0 \geq 1$, there are at least $n/40$ rounds $t \in [t_0,t_0+n]$ with $(i)$ $\Delta^{t} \geq 1/10 \cdot n$ or $(ii)$ $|B_{-}^{t}| \geq n/20$.
\end{lem}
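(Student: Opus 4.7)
The plan is a proof by contradiction: suppose strictly fewer than $n/40$ of the $n+1$ rounds $t \in [t_0, t_0+n]$ satisfy $\mathcal{G}^t$, so that more than $39n/40$ of them are \emph{bad}, meaning both $\Delta^t < n/10$ and $|B_-^t| < n/20$ hold simultaneously.

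My argument combines two observations. First, at any bad round $t$ the positive excess is $\sum_{i \in B_+^t} y_i^t = \Delta^t/2 < n/20$, so a Markov-style bound gives that for every threshold $\theta > 0$, at most $n/(20\theta)$ bins satisfy $y_i^t \geq \theta$. Combined with $|B_-^t| < n/20$, this forces at least $19n/20 - n/(20\theta)$ bins into the range $y_i^t \in [0,\theta)$; i.e., the bulk of the bins lie just above the mean. Second, by condition \WOne at least one ball is placed every round, so $W^{t+K}/n - W^t/n \geq K/n$. Consequently, any bin $i$ with $y_i^t < K/n$ that is unpicked during $[t, t+K)$ (and thus receives no allocated ball) satisfies $y_i^{t+K} < 0$, and therefore joins $B_-^{t+K}$.

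Setting $\theta = K/n$ at a bad round $t^*$, and noting that at most $K$ bins are picked over the next $K$ rounds, the number of bins with $y_i^{t^*} \in [0, K/n)$ that remain unpicked is at least
\[
\tfrac{19n}{20} - \tfrac{n^2}{20K} - K.
\]
For $K = n/10$ this equals $7n/20 > n/20$, forcing $\mathcal{G}^{t^*+K}$ to hold and contradicting the assumption that round $t^*+K$ is bad. Hence every bad round must be followed by a good round within $O(n)$ steps, which bounds the length of any maximal run of consecutive bad rounds. Iterating this local guarantee across the window, and exploiting the slack afforded by the constants $1/10$, $1/20$, and $1/40$ in the statement, yields the claimed lower bound of $n/40$ good rounds in $[t_0, t_0+n]$.

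\textbf{Main obstacle.} The delicate step is converting the local ``one-shot'' contradiction (``after any bad round a good round must follow within $\Theta(n)$ steps'') into the asserted window-wide count of $n/40 = \Theta(n)$ good rounds, since a naive tally of maximal bad runs only yields $\Theta(1)$ good rounds per window of length $n+1$. A secondary subtlety is that by \WOne an underloaded-pick round can distribute balls to several underloaded bins simultaneously, so more than $K$ bins may receive balls over $K$ rounds; however, in bad rounds the constraint $|B_-^t| < n/20$ bounds the number of underloaded targets, and so the main counting argument above still goes through after a careful accounting of the two pick types.
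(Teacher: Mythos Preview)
Your proposal contains the right ingredients but leaves both self-flagged obstacles unresolved. For the ``secondary subtlety'', the proposed fix via $|B_-^t|<n/20$ does not work: it would need every round of $[t^*,t^*+K)$ to be bad (which you have not established), and even then it only bounds the number of underloaded recipients per round, not cumulatively. The correct fix uses the structure of \WOne directly: when an underloaded bin is picked, only the distinguished bins $k_1,k_2$ can cross from underloaded to overloaded; every ``other'' recipient gets at most $\lceil -y_j^t\rceil -1$ balls and therefore \emph{remains} underloaded. Hence in each round at most two bins are ``touched'' (picked while overloaded, or equal to $k_1$ or $k_2$), and any untouched bin $j$ with $y_j^{t^*}\in[0,K/n)$ is underloaded at $t^*+K$: while overloaded it receives nothing and drifts below zero, and once underloaded it cannot resurface without being $k_1$ or $k_2$. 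Replacing the subtracted $K$ by $2K$ in your count still gives $9n/20-2K=n/4>n/20$ for $K=n/10$, so your one-shot implication ``$t^*$ bad $\Rightarrow$ $t^*+K$ good'' survives.

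For the ``main obstacle'', the sentence ``iterating this local guarantee \ldots\ yields the claimed lower bound'' is not a proof, and you correctly note that a naive tally of maximal bad runs gives only $\Theta(1)$ good rounds. The missing observation is that $t^*\mapsto t^*+K$ is an \emph{injection} from bad rounds in $[t_0,t_0+n-K]$ into good rounds in $[t_0,t_0+n]$; writing $G$ for the set of good rounds in $[t_0,t_0+n]$ this gives $(n-K+1)-|G|\le |G|$, hence $|G|\ge (9n/10+1)/2\gg n/40$. The paper sidesteps both issues by a direct dichotomy: either $\Delta^t\ge n/10$ for all $t\in[t_0,t_0+n/40]$, or at the first $s$ with $\Delta^s\le n/10$ one runs exactly the ``at most two touched per round'' argument over $[s,s+9n/40]$ (with $\mathcal{B}=\{i:|y_i^s|<1/5\}$, $|\mathcal{B}|\ge n/2$) to obtain $|B_-^t|\ge n/2-2\cdot 9n/40=n/20$ for every $t$ in the length-$n/40$ interval $[s+n/5,\,s+9n/40]$.
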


\begin{proof} 
 We claim that if $\Delta^{s} \leq 1/10 \cdot n$ for some round $s$, then for each round $t \in [s+n/5,s+9n/40]$ we have $|B_{-}^t| \geq n/20$ (deterministically). The lemma follows from that, since then we either have $\Delta^{t} \geq 1/10 \cdot n$ for all $t \in [t_0,t_0+n/40]$, or, thanks to the claim there is a $s \in [t_0,t_0+n/40]$ such that for all $t \in [s+n/5,s+9n/40]$ (this interval has length $n/40$) we have $|B_{-}^t| \geq n/20$. 

To establish the claim, assume we are at any round $s$ where $\Delta^{s} \leq 1/10 \cdot n$. Then at most $n/2$ bins $i$ satisfy $|y_i^{s}| \geq 1/5$, and in turn at least $n/2$ bins satisfy $|y_i^{s}| < 1/5$; let us call this latter set of bins $\mathcal{B}:=\left\{ i \in [n]: |y_i^s| < 1/5 \right\}.$
In the rounds $[s,s+9n/40]$, we can choose at most $9n/40$ bins in $\mathcal{B}$ that are overloaded (at the time when chosen), and then we place exactly one ball into them. Furthermore, in each round $t \in  [s,s+9n/40]$ we can turn at most two bins in $\mathcal{B}$ which is underloaded at round $t$ and make it overloaded. Hence it follows that at least $n/2-2\cdot 9n/40 = n/20$ of the bins in $\mathcal{B}$ are not chosen in the interval $[s,s+9n/40]$. Consequently, these bins must be all underloaded in the interval $[s+n/5,s+9n/40]$.
\end{proof}

\subsection{Completing the Proof of Theorem~\ref{thm:filling_key}}\label{sec:complete_filling_proof}

We now introduce a new potential function $\tilde{\Phi}^t$ which is the product of $\Phi^t$ with two additional terms (and an additive centering term). These multiplying terms have been chosen based on the one step increments in the two statements in~\cref{lem:filling_good_quantile} and \cref{cor:enough_good_quantiles}. The purpose of this is that using Lemmas \ref{lem:filling_good_quantile} and \ref{cor:enough_good_quantiles} we can show that $\tilde{\Phi}^t$ is a super-martingale. We then use the super-martingale  property to bound the exponential potential at an arbitrary step.

Here, we re-use the definition of the event $\mathcal{G}^t$ from \eqref{eq:eventGt}.
Now fix an arbitrary round $t_0 \geq 0$. Then, for any $s > t_0$, let $G_{t_0}^s$ be the number of rounds $r \in [t_0,s)$ satisfying $\mathcal{G}^r$, and let $B_{t_0}^s:=(s-t_0)-G_{t_0}^s$. Further, let the constants $c_1>0$ and $c_2>0$ be as in \cref{lem:filling_good_quantile}, let $c_3 := 2 e^{3 \alpha} \exp( c_2 \alpha)>0$, and then 
define a sequence by $\tilde{\Phi}^{t_0}:=\Phi^{t_0}$, and for any $s > t_0 $,
	\begin{equation}\label{eq:tildephi}
	\tilde{\Phi}^{s}:= \Phi^{s} \cdot \exp\left( +\frac{c_2 \alpha}{n} \cdot G_{t_0}^{s-1} \right) \cdot \exp\left( -\frac{c_1 \alpha^2}{n} \cdot B_{t_0}^{s-1} \right) - c_3  \cdot (s-t_0).
	\end{equation}
The next lemma proves that the sequence $\tilde{\Phi}^{s}$, $s \geq t_0$, forms a super-martingale:

\begin{lem}
\label{lem:filling_supermartingale}
	 Let $0<\alpha<1/100$ be an arbitrary but fixed constant, and $t_0 \geq 0$ be an arbitrary integer. Then, for any $s \in [t_0,t_0+n]$ we have	\[
	\ex{ \tilde{\Phi}^{s+1}  \mid \mathfrak{F}^s} \leq \tilde{\Phi}^{s}.
	\]
\end{lem}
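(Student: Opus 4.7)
The plan is to condition on whether the event $\mathcal{G}^s$ occurs at round $s$. The two multiplicative correction factors in the definition \eqref{eq:tildephi} of $\tilde{\Phi}^s$ were calibrated precisely to absorb the two one-step estimates of \cref{lem:filling_good_quantile}, and the linear centering term $-c_3(s-t_0)$ will handle the additive residual $e^{3\alpha}$ that appears on the right-hand side of both estimates.

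In the case $\mathcal{G}^s$ holds, I would use that $G_{t_0}^{s}=G_{t_0}^{s-1}+1$ and $B_{t_0}^{s}=B_{t_0}^{s-1}$, so the factor attached to $\Phi^{s+1}$ differs from the one attached to $\Phi^{s}$ by exactly $\exp(c_2\alpha/n)$. Applying the second bound in \cref{lem:filling_good_quantile}, the leading $\Phi^s$ term then gets multiplied by $(1-c_2\alpha/n)\cdot\exp(c_2\alpha/n)\leq 1$, via the elementary inequality $(1-x)e^x\leq 1$ for $x\geq 0$, so after rearrangement it contributes at most $\tilde{\Phi}^s + c_3(s-t_0)$. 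In the complementary case, $G_{t_0}^{s}=G_{t_0}^{s-1}$ while $B_{t_0}^{s}=B_{t_0}^{s-1}+1$, and the first estimate in \cref{lem:filling_good_quantile} combined with the dual inequality $(1+x)e^{-x}\leq 1$ absorbs the factor $1+c_1\alpha^2/n$ into the newly introduced $\exp(-c_1\alpha^2/n)$.

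It then remains to control the additive residual $e^{3\alpha}$ after it has been multiplied by the two correction factors. Here the restriction $s\in[t_0,t_0+n]$ enters: it guarantees $G_{t_0}^{s}\leq n$, so $\exp(c_2\alpha \cdot G_{t_0}^{s}/n)\leq \exp(c_2\alpha)$, while the damping factor $\exp(-c_1\alpha^2 B_{t_0}^s/n)$ is bounded by $1$. Hence the residual is at most $e^{3\alpha}\exp(c_2\alpha)=c_3/2$. Comparing with the increment $-c_3$ that comes from replacing $(s-t_0)$ by $(s+1-t_0)$ in the centering, we obtain a strict negative slack of at least $c_3/2$ in both cases, which yields $\ex{\tilde{\Phi}^{s+1}\mid\mathfrak{F}^s}\leq\tilde{\Phi}^s$ as required.

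The main obstacle, beyond careful bookkeeping of whether the counters are indexed at $s-1$ or $s$, is recognising that the constants $c_2\alpha/n$ and $c_1\alpha^2/n$ inside the exponential corrections must be matched to the constants appearing in \cref{lem:filling_good_quantile} for the telescoping to go through. The condition $0<\alpha<1/100$ inherited from \cref{lem:filling_good_quantile}, together with the window constraint $s-t_0\leq n$, is exactly what is needed to make both the multiplicative cancellations and the bound on the residual valid.
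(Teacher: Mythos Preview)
Your proposal is correct and follows essentially the same route as the paper: a case split on $\mathcal{G}^s$, applying the two estimates of \cref{lem:filling_good_quantile}, absorbing the multiplicative factors via $(1\pm x)e^{\mp x}\leq 1$, and bounding the additive residual using $G_{t_0}^{s}\leq n$ together with the definition $c_3=2e^{3\alpha}\exp(c_2\alpha)$. The paper organizes the computation slightly differently by first reducing to a single ``sufficient'' inequality \eqref{eq:sufficient} and then verifying it in each case, but the underlying estimates and constants are identical.
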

\begin{proof}
	First, recalling the definition $\tilde{\Phi}^{s}$ from \eqref{eq:tildephi}, we rewrite  $\ex{\tilde{\Phi}^{s+1} \mid \mathfrak{F}^s}$ to give 
	\begin{align*}
\lefteqn{ \ex{\tilde{\Phi}^{s+1} \mid \mathfrak{F}^s} } \\
		&=  \ex{ \Phi^{s+1}  \mid \mathfrak{F}^s}  \cdot \exp\left( \frac{c_2 \alpha}{n} \cdot G_{t_0}^{s} \right) \cdot \exp\left( - \frac{c_1 \alpha^2}{n} \cdot B_{t_0}^{s} \right)  - c_3 \cdot (s+1-t_0) \\ 
		& = \ex{ \Phi^{s+1} \mid \mathfrak{F}^s} \cdot \exp\left(\frac{\alpha}{n} \cdot ( c_2 \cdot \mathbf{1}_{\mathcal{G}^s} -c_1 \alpha\cdot(1 - \mathbf{1}_{\mathcal{G}^s}) ) \right)  \cdot \exp\left( \frac{c_2 \alpha}{n} \cdot G_{t_0}^{s - 1} \right) \cdot \exp\left( - \frac{c_1 \alpha^2}{n} \cdot B_{t_0}^{s - 1} \right) \\
		&\qquad \mbox{} - c_3 - c_3 \cdot (s-t_0).
	\end{align*}

	We claim that it suffices to prove 
	\begin{align}\label{eq:sufficient}
		\ex{ \Phi^{s+1} \mid \mathfrak{F}^s} \cdot \exp\left(\frac{\alpha}{n} \cdot ( c_2 \cdot \mathbf{1}_{\mathcal{G}^s} -c_1 \alpha\cdot  (1 - \mathbf{1}_{\mathcal{G}^s}) \right) &\leq \Phi^{s} + c_3 \cdot \exp\left(- c_2 \alpha  \right).
	\end{align}

	Indeed, observe that $G_{t_0}^{s-1} \leq s-t_0 \leq n$, and so assuming \eqref{eq:sufficient} holds we have  \NOTE{T:}{This part is new}\NOTE{J}{I like the red bit, however I didn't like how we state the sufficent eq, show it is sufficent, then argue a simplier stronger eq does the job. I now start with the simpler eq and show that is sufficient.}
	\red{	\begin{align*}
		\lefteqn{  \ex{\tilde{\Phi}^{s+1} \mid \mathfrak{F}^s} } \\ &\leq  \left( \Phi^{s} + c_3 \cdot \exp\left(- c_2 \alpha  \right) \right) 
		\cdot \exp\left( \frac{c_2 \alpha}{n} \cdot G_{t_0}^{s - 1} \right) \cdot \exp\left( - \frac{c_1 \alpha^2}{n} \cdot B_{t_0}^{s - 1} \right) -c_3 - c_3 \cdot (s-t_0) \\
		&= \Phi^s \cdot \exp\left( \frac{c_2 \alpha}{n} \cdot G_{t_0}^{s - 1} \right) \cdot \exp\left( - \frac{c_1 \alpha^2}{n} \cdot B_{t_0}^{s - 1} \right) + c_3 \cdot \exp\left( - \frac{c_1 \alpha^2}{n} \cdot B_{t_0}^{s - 1} \right) -c_3 - c_3 \cdot (s-t_0) \\
		&\leq \tilde{\Phi}^s.
		\end{align*}	}
	To show \eqref{eq:sufficient}, we consider two cases based on whether $\mathcal{G}^s$ holds. 
	
	\smallskip 
	
\noindent\textbf{Case 1} [$\mathcal{G}^s$ holds]. By \cref{lem:filling_good_quantile} (first statement) we have
	\[
	\ex{\Phi^{s+1} \mid \mathfrak{F}^s, \mathcal{G}^s} \leq \Phi^s \cdot \left(1 - \frac{c_2 \alpha}{n} \right) + e^{3\alpha} \leq \Phi^s \cdot \exp\left(- \frac{c_2 \alpha}{n} \right) +  e^{3\alpha} .
	\]
	Hence, if $\mathcal{G}^s$ holds then the left hand side of \eqref{eq:sufficient} is equal to  
	\begin{align*}
		\ex{\Phi^{s+1} \mid \mathfrak{F}^s, \mathcal{G}^s } \cdot \exp\left( \frac{\alpha}{n} \cdot c_2 \right) &\leq \left( \Phi^s \cdot \exp\left(- \frac{c_2 \alpha}{n} \right) + e^{3 \alpha} \right) \cdot \exp\left( \frac{\alpha}{n} \cdot c_2 \right) \\
		&\leq \Phi^s + 2 e^{3 \alpha} \\
		&= \Phi^s + c_3 \cdot \exp\left( -c_2 \alpha \right), 
	\end{align*}
	where the last line holds by definition of $c_3 = 2 e^{3 \alpha} \exp( c_2 \alpha)$.
	
	\smallskip 
	
\noindent	\textbf{Case 2} [$\mathcal{G}^s$ does not hold]. \cref{lem:filling_good_quantile} (second statement) gives the  unconditional bound 
	\[
	\ex{\Phi^{s+1} \mid \mathfrak{F}^s, \neg \mathcal{G}^s} \leq \Phi^s \cdot \left(1 + \frac{c_1 \alpha^2}{n} \right) + e^{3\alpha} \leq \Phi^s \cdot \exp\left(\frac{c_1 \alpha^2}{n} \right) + e^{3\alpha}.
	\]
	 Thus, if $\mathcal{G}^s$ does not hold the left hand side of \eqref{eq:sufficient} is at most
	\[
	\ex{ \Phi^{s+1} \mid \mathfrak{F}^s, \neg \mathcal{G}^s } \cdot \exp\left( \frac{\alpha}{n} \cdot (-c_1 \alpha) \right) \leq \Phi^s + e^{3 \alpha} \leq \Phi^s + c_3 \cdot \exp\left( -c_2 \alpha \right),
	\] 
	which establishes \eqref{eq:sufficient} and the proof is complete.
\end{proof}

\comJ{\begin{lem}\label{lem:filling_supermartingale}
Fix any round $t_0$. For any $s \geq t_0$, let $G_{t_0}^s$ be the number of rounds $r \in [t_0,s)$ with $(i)$ $B_{-}^t \geq n/20$ or $(ii)$ $\Delta^t \geq n/10$, and let $B_{t_0}^s:=(s-t_0)-G_{t_0}^s$. Then for any $1 \leq s \leq n$ and for $c_3 := 2 e^{3 \alpha} \exp( c_2 \alpha)$, the sequence
\[
 \tilde{\Phi}^{s}:= \Phi^{s} \cdot \exp\left( +\frac{c_2 \alpha}{n} \cdot G_{t_0}^{s-1} \right) \cdot \exp\left( -\frac{c_1 \alpha^2}{n} \cdot B_{t_0}^{s-1} \right) - c_3  \cdot (s-t_0),
\]
where $s \in [t_0,t_0+n]$, forms a super-martingale.
\end{lem}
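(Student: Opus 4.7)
The plan is to verify the supermartingale inequality by case analysis on whether the event $\mathcal{G}^s$ holds, exploiting two structural features: $\mathcal{G}^s$ is $\mathfrak{F}^s$-measurable (so the increments $G_{t_0}^s - G_{t_0}^{s-1} = \mathbf{1}_{\mathcal{G}^s}$ and $B_{t_0}^s - B_{t_0}^{s-1} = 1 - \mathbf{1}_{\mathcal{G}^s}$ are known at time $s$), and the multiplicative factors in $\tilde{\Phi}^s$ are designed to exactly offset the per-step growth/contraction of $\Phi^s$ supplied by \cref{lem:filling_good_quantile}.

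First, I would expand $\Ex{\tilde{\Phi}^{s+1} \mid \mathfrak{F}^s}$ and pull out everything that is $\mathfrak{F}^s$-measurable. Since $G_{t_0}^s$ and $B_{t_0}^s$ are determined by $\mathfrak{F}^s$, the only random quantity is $\Ex{\Phi^{s+1} \mid \mathfrak{F}^s}$. Dividing through by the common historical factor $\exp((c_2 \alpha/n) G_{t_0}^{s-1} - (c_1 \alpha^2/n) B_{t_0}^{s-1})$ and rearranging, the claim reduces to
\[
\Ex{\Phi^{s+1} \mid \mathfrak{F}^s} \cdot \exp\!\left( \tfrac{\alpha}{n}\left(c_2 \mathbf{1}_{\mathcal{G}^s} - c_1 \alpha(1 - \mathbf{1}_{\mathcal{G}^s})\right)\right) \leq \Phi^s + c_3 \exp(-c_2 \alpha).
\]
Here I would use $G_{t_0}^{s-1} \leq s - 1 - t_0 \leq n$ (this is where the window restriction $s \in [t_0, t_0 + n]$ is used) to bound the reciprocal history factor multiplying $c_3$ from below by $\exp(-c_2 \alpha)$, which is the source of the slightly generous centering.

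Now I would split into two cases. If $\mathcal{G}^s$ holds, the second bound of \cref{lem:filling_good_quantile} and the inequality $1 - x \leq e^{-x}$ give $\Ex{\Phi^{s+1} \mid \mathfrak{F}^s} \leq \Phi^s \exp(-c_2\alpha/n) + e^{3\alpha}$; multiplying by $\exp(c_2\alpha/n)$ collapses the leading term to $\Phi^s$ exactly, leaving $e^{3\alpha}\exp(c_2 \alpha/n) \leq 2 e^{3\alpha}$ for $n$ large relative to $\alpha$. If $\mathcal{G}^s$ fails, the first bound of \cref{lem:filling_good_quantile} together with $1 + x \leq e^x$ yields $\Ex{\Phi^{s+1} \mid \mathfrak{F}^s} \leq \Phi^s \exp(c_1 \alpha^2/n) + e^{3\alpha}$, and multiplying by $\exp(-c_1 \alpha^2/n)$ again reduces the leading term to $\Phi^s$, leaving at most $e^{3\alpha}$. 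In both cases the residual is bounded by $2 e^{3\alpha} = c_3 \exp(-c_2 \alpha)$, by the choice of $c_3$.

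The main obstacle is bookkeeping the factor $c_3$ correctly: the per-step additive slack $e^{3\alpha}$ from \cref{lem:filling_good_quantile} gets amplified by the accumulated history factor $\exp((c_2\alpha/n) G_{t_0}^{s-1}) \leq \exp(c_2 \alpha)$ before it must be absorbed by the single increment $-c_3$ in the centering term. This is exactly what forces the definition $c_3 := 2 e^{3\alpha} \exp(c_2 \alpha)$ and also explains why the window length in the lemma is bounded by $n$; if we allowed $s - t_0$ to grow unboundedly, the accumulated factor would blow up and a constant centering step would no longer suffice, requiring a different martingale construction for long time horizons.
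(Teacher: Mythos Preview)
Your proposal is correct and follows essentially the same approach as the paper: reduce to the sufficient inequality $\Ex{\Phi^{s+1}\mid\mathfrak{F}^s}\cdot\exp\bigl(\tfrac{\alpha}{n}(c_2\mathbf{1}_{\mathcal{G}^s}-c_1\alpha(1-\mathbf{1}_{\mathcal{G}^s}))\bigr)\leq\Phi^s+c_3\exp(-c_2\alpha)$ using $G_{t_0}^{s-1}\leq n$, then verify it by the two-case analysis via \cref{lem:filling_good_quantile}. Your added commentary on why the window length $n$ is necessary for the constant centering to absorb the amplified additive slack is accurate and matches the structure of the argument.
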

\begin{proof}
For any round $s\geq1$, let $\mathcal{G}^s$ be the (``good'') event that the precondition of the second statement of~\cref{lem:filling_good_quantile} holds, i.e., we have $B_{-}^{s} \geq n/20$ or $\Delta^s \geq n/10$.

Our goal is to show that $\ex{\tilde{\Phi}^{s+1} \mid \mathfrak{F}^s} \leq \tilde{\Phi}^{s}$. First, we rewrite $\ex{\tilde{\Phi}^{s+1} \mid \mathfrak{F}^s}$ as 
\begin{align*}
\lefteqn{ \ex{\tilde{\Phi}^{s+1} \mid \mathfrak{F}^s}  } \\
 & = \ex{ \Phi^{s+1}  \mid \mathfrak{F}^s}  \cdot \exp\left( \frac{c_3 \alpha}{n} \cdot G_{t_0}^{s} \right) \cdot \exp\left( - \frac{c_1 \alpha^2}{n} \cdot B_{t_0}^{s} \right)  - c_6 \cdot (s+1-t_0) \\ 
 & = \ex{ \Phi^{s+1} \mid \mathfrak{F}^s} \cdot \exp\left(\frac{\alpha}{n} \cdot ( c_2 \cdot \mathbf{1}_{\mathcal{G}^s} + (-c_1 \alpha) (1 - \mathbf{1}_{\mathcal{G}^s}) \right)  \cdot \exp\left( \frac{c_2 \alpha}{n} \cdot G_{t_0}^{s - 1} \right) \cdot \exp\left( - \frac{c_1 \alpha^2}{n} \cdot B_{t_0}^{s - 1} \right) \\
 &\qquad \mbox{} - c_3 - c_3 \cdot (s-t_0).
\end{align*}
We see that it suffices to prove that 
\begin{align*}
\ex{ \Phi^{s+1} \mid \mathfrak{F}^s} \cdot \exp\left(\frac{\alpha}{n} \cdot ( c_2 \cdot \mathbf{1}_{\mathcal{G}^s} + (-c_1 \alpha) (1 - \mathbf{1}_{\mathcal{G}^s}) \right) &\leq \Phi^{s} + c_3 \cdot \exp\left(- \frac{c_2 \alpha}{n} G_{t_0}^{s-1} \right),
\end{align*}
which in turn is implied by
\begin{align}
\ex{ \Phi^{s+1} \mid \mathfrak{F}^s} \cdot \exp\left(\frac{\alpha}{n} \cdot ( c_2 \cdot \mathbf{1}_{\mathcal{G}^s} + (-c_1 \alpha) (1 - \mathbf{1}_{\mathcal{G}^s}) \right) &\leq \Phi^{s} + c_3 \cdot \exp\left(- c_2 \alpha  \right),
\label{eq:sufficient}
\end{align}
as $G_{t_0}^{s-1} \leq s-t_0 \leq n$.
To show \eqref{eq:sufficient}, we consider two cases based on whether $\mathcal{G}^s$ holds. 

\textbf{Case 1} [$\mathcal{G}^s$ holds]. By \cref{lem:filling_good_quantile},
\[
\ex{\Phi^{s+1} \mid \mathfrak{F}^s, \mathcal{G}^s} \leq \Phi^s \cdot \left(1 - \frac{c_2 \alpha}{n} \right) + e^{3\alpha} \leq \Phi^s \cdot \exp\left(- \frac{c_2 \alpha}{n} \right) +  e^{3\alpha} .
\]
Hence, \NOTE{T}{John can you check this part after Hence? It is a bit informal I think, we use the conditioning inside the expectation to manipulate the indicator which is outside.}
\begin{align*}
\lefteqn{ \ex{\Phi^{s+1} \mid \mathfrak{F}^s, \mathcal{G}^s} \,\cdot\, \exp\left(\frac{\alpha}{n} \cdot ( c_2 \cdot \mathbf{1}_{\mathcal{G}^s} + (-c_1 \alpha) (1 - \mathbf{1}_{\mathcal{G}^s}) \right) } \\ &=
\ex{\Phi^{s+1} \mid \mathfrak{F}^s, \mathcal{G}^s } \cdot \exp\left( \frac{\alpha}{n} \cdot c_3 \right) \\
&\leq \left( \Phi^s \cdot \exp\left(- \frac{c_3 \alpha}{n} \right) + e^{3 \alpha} \right) \cdot \exp\left( \frac{\alpha}{n} \cdot c_3 \right) \\
&\leq \Phi^s + 2 e^{3 \alpha} \\
&= \Phi^s + c_3 \cdot \exp\left( -c_2 \alpha \right), 
\end{align*}
where the last line holds by definition of $c_3 = 2 e^{3 \alpha} \exp( c_2 \alpha)$.

\textbf{Case 2} [$\mathcal{G}^s$ does not hold]. Then \cref{lem:filling} implies,
\[
\ex{\Phi^{s+1} \mid \mathfrak{F}^s} \leq \Phi^s \cdot \left(1 + \frac{c_1 \alpha^2}{n} \right) + e^{3\alpha} \leq \Phi^s \cdot \exp\left(\frac{c_1 \alpha^2}{n} \right) + e^{3\alpha}.
\]
Hence,
\[
\ex{ \Phi^{s+1} \mid \mathfrak{F}^s } \cdot \exp\left( \frac{\alpha}{n} \cdot (-c_1 \alpha) \right) \leq \Phi^s + e^{3 \alpha} \leq \Phi^s + c_3 \cdot \exp\left( -c_2 \alpha \right),
\] 
which establishes \eqref{eq:sufficient} and the proof is complete.
\end{proof}}

Combining \cref{cor:enough_good_quantiles}, which shows that a constant fraction of rounds satisfy $\mathcal{G}^t$, with \cref{lem:filling_supermartingale} establishes a multiplicative drop of $\Ex{ \Phi^{s} }$ (unless it is already linear). Thus $\Ex{ \Phi^{m}}=\Oh(n)$, which implies $\Gap(m)=\Oh(\log n)$. This is formalized in the proof below. %

{\renewcommand{\thethm}{\ref{thm:filling_key}}
	\begin{thm}[restated]
\fillingkey 
	\end{thm}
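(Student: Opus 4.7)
The plan is to leverage the supermartingale $\tilde{\Phi}^s$ from \cref{lem:filling_supermartingale}, combined with the deterministic guarantee from \cref{cor:enough_good_quantiles}, to establish a geometric contraction of $\Ex{\Phi^s}$ over windows of length $n$. This will give $\Ex{\Phi^m} = \Oh(n)$ for every $m \geq 1$, and then Markov's inequality will convert this into the desired high-probability bound on the gap.

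First, I would fix a constant $0 < \alpha < 1/100$ small enough that $c_2 \alpha / 40 > c_1 \alpha^2$, where $c_1,c_2$ are the constants from \cref{lem:filling_good_quantile}. Consider an arbitrary starting round $t_0 \geq 0$. By \cref{lem:filling_supermartingale}, $\Ex{\tilde{\Phi}^{t_0+n} \,\mid\, \mathfrak{F}^{t_0}} \leq \tilde{\Phi}^{t_0} = \Phi^{t_0}$. \cref{cor:enough_good_quantiles} yields the \emph{deterministic} bound $G_{t_0}^{t_0+n-1} \geq n/40$, and thus $B_{t_0}^{t_0+n-1} \leq 39n/40$. Substituting into the definition \eqref{eq:tildephi} of $\tilde{\Phi}^{t_0+n}$, the product of exponential correction factors is at least
\[
\exp\!\left(\tfrac{c_2 \alpha}{n} \cdot \tfrac{n}{40}\right) \cdot \exp\!\left(-\tfrac{c_1 \alpha^2}{n} \cdot \tfrac{39 n}{40}\right) \geq \exp\!\left(\tfrac{c_2 \alpha}{40} - c_1 \alpha^2\right) =: 1 + \eta,
\]
for some constant $\eta = \eta(\alpha) > 0$ by the choice of $\alpha$. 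Rearranging yields the one-window contraction
\[
\Ex{\Phi^{t_0+n} \,\mid\, \mathfrak{F}^{t_0}} \leq \frac{\Phi^{t_0} + c_3 \cdot n}{1+\eta}.
\]

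Taking expectations and iterating along $t_0 = 0, n, 2n, \ldots$, and using $\Phi^0 = 0$, a standard fixed-point argument gives $\Ex{\Phi^{kn}} \leq \frac{c_3 n}{\eta}$ for every $k \geq 0$. For an arbitrary $m = kn + r$ with $0 \leq r < n$, the unconditional bound from \cref{lem:filling_good_quantile} inflates the expected potential by at most $(1 + c_1 \alpha^2/n)^r \leq e^{c_1 \alpha^2}$ multiplicatively and by at most $r \cdot e^{3 \alpha} \leq n \cdot e^{3 \alpha}$ additively, so $\Ex{\Phi^m} \leq C_1 \cdot n$ for some constant $C_1 > 0$ and every $m \geq 1$.

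Finally, whenever $\Gap(m) \geq 2$ the maximum-load bin alone contributes $\exp(\alpha \cdot \Gap(m))$ to $\Phi^m$, so by Markov's inequality
\[
\Pro{\Gap(m) \geq (3/\alpha)\log n} \leq \Pro{\Phi^m \geq n^3} \leq \frac{\Ex{\Phi^m}}{n^3} \leq \frac{C_1}{n^2},
\]
and the trivial case $\Gap(m) < 2$ is absorbed by taking $n$ large. Setting $C := 3/\alpha$ completes the argument. The main obstacle is the calibration of $\alpha$: it must be small enough that the good multiplicative factor in rounds satisfying $\mathcal{G}^t$ dominates the bad factor accumulated in other rounds, yielding $\eta > 0$; the supporting lemmas have already been designed so that this calibration is possible, and the remainder of the argument is a routine stability-plus-Markov calculation.
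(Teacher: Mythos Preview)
Your proposal is correct and follows essentially the same route as the paper: both use the supermartingale from \cref{lem:filling_supermartingale} together with the deterministic bound $G_{t_0}^{t_0+n-1}\geq n/40$ from \cref{cor:enough_good_quantiles} to obtain a one-window contraction of $\Ex{\Phi}$, iterate over windows of length $n$, handle the residual $r<n$ rounds via the unconditional bound in \cref{lem:filling_good_quantile}, and finish with Markov's inequality at threshold $n^3$. The only cosmetic differences are that the paper writes the contraction factor as $\exp(-c_4\alpha)$ rather than $(1+\eta)^{-1}$, and your additive term for the residual rounds should carry an extra harmless factor $e^{c_1\alpha^2}$ from the geometric accumulation.
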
}
	\addtocounter{thm}{-1}
	
\begin{proof}%
For any integer $t_0 \geq 1$, first consider rounds $[t_0,t_0+n]$. We will now fix the constant $\alpha := \min \{ 1/101, 1/20 \cdot c_2/c_1 \}$ in the exponential potential function $\Phi^t$ (thus this is also fixed in $\tilde{\Phi}^t)$. By \cref{lem:filling_supermartingale}, $\tilde{\Phi}$ forms a super-martingale over $[t_0,t_0+n]$, and thus
\[
 \Ex { \tilde{\Phi}^{t_0+n} \; \Big| \; \mathfrak{F}^{t_0} } \leq \tilde{\Phi}^{t_0} = \Phi^{t_0},
 \]
which implies
 \[
 \Ex{ \Phi^{t_0+n} \cdot \exp\left( +\frac{c_2 \alpha}{n} \cdot G_{t_0}^{t_0+n-1} \right) \cdot \exp\left( -\frac{c_1 \alpha^2}{n} \cdot B_{t_0}^{t_0+n-1} \right) - c_3 \cdot n
 \; \,\bigg| \,\; \mathfrak{F}^{t_0} } \leq \Phi^{t_0},
\]
Rearranging this, and using that by~\cref{cor:enough_good_quantiles}, $G_{t_0}^{t_0+n-1} \geq n/40$ holds deterministically, we obtain for any $t_0 \geq 1$
\begin{align*}
 \Ex{ \Phi^{t_0+n} \, \mid \, \mathfrak{F}^{t_0} } &\leq \left( \Phi^{t_0} + c_3 \cdot n \right )
 \cdot \exp\left(  -c_2 \alpha \cdot \frac{1}{40} + c_1 \alpha^2 \cdot \frac{39}{40}\right), \\
 \intertext{and now using $\alpha = \min \{ 1/101, (1/40) \cdot c_2/c_1 \}$ and defining $c_4 := c_2/40^2$  yields} 
\Ex{ \Phi^{t_0+n} \, \mid \, \mathfrak{F}^{t_0} } &\leq \left( \Phi^{t_0} + c_3 \cdot n \right ) \cdot \exp\left( -c_4 \alpha \right) \\
 &= \Phi^{t_0} \cdot \exp\left( -c_4 \alpha \right)
 + c_3 \exp(-c_4 \alpha) \cdot n.
\end{align*}
It now follows by the second statement in \cref{lem:geometric_arithmetic} with $a:=\exp(-c_4\alpha) < 1$ and $b:= c_3 \exp(-c_4 \alpha) \cdot n$ that for any integer $k \geq 1$,
\begin{align} \label{eq:decrease_over_k_batches}
 \Ex{ \Phi^{n \cdot k}} &\leq 
 \Phi^{0} \cdot \exp\left( - c_4 \alpha \cdot k \right) + \frac{c_3 \exp(-c_4 \alpha) \cdot n}{1 - \exp(-c_4 \alpha)} \leq
 c_5 \cdot n,
 \end{align}
 for some constant $c_5 > 0$ as $\Phi^{0} \leq n$ holds deterministically.
 
 Hence for any number of rounds $t = k \cdot n + r$, where $k \geq 0$ and $1 \leq r < n$, we use~\cref{lem:filling_good_quantile} (first statement) iteratively to conclude that
 \begin{align}
 \Ex{ \Phi^{n \cdot k + r}  } & = \Ex{\Ex{ \Phi^{n \cdot k + r} \, \mid \, \mathfrak{F}^{n \cdot k } }}\notag \\
 & \leq \Ex{ \Phi^{n \cdot k}} \cdot \left(1 + \frac{c_1 \alpha^2}{n} \right)^{r} 
 + n \cdot \left(1 + \frac{c_1 \alpha^2}{n} \right)^{n} \cdot e^{3\alpha} \notag \\
 &\leq c_5 \cdot n \cdot \exp( c_1 \alpha^2) + n \cdot \exp( c_1 \alpha^2) \cdot e^{3 \alpha} \leq c_6 \cdot n, \label{eq:potential_small}
 \end{align}
 for some constant $c_6 > 0$.
  Hence for any $m \geq 1$, by Markov's inequality, 
 \[
  \Pro{ \Phi^{m} \leq c_6 \cdot n^{3} } \geq 1-n^{-2}.
 \]
Since $\Phi^{m} \leq c_6 \cdot n^{3}$ implies $\Gap(m)=\Oh(\log n)$, the proof is complete.
\end{proof}

\subsection{Unfolding General Filling Processes}\label{sec:unfolding_general}

Recall the definition of the unfolding of a process from Page \pageref{folding}. First, we show that our notion of ``unfolding'' can be applied to capture the \Caching process:

{\renewcommand{\thelem}{\ref{lem:memory_compressed}}
	\begin{lem}[restated]
\memorycompressed 
	\end{lem}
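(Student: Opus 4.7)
The plan is to couple the filling process $P$ with Caching so that they share a common sequence of uniform samples $\xi_1,\xi_2,\ldots$ and to define $P$ by grouping the atomic allocations of Caching into $P$-rounds. Let $b^s$ denote the cache at atomic allocation $s$ of Caching, and let $d_s \in \{\xi_s,b^s\}$ be the destination of the ball placed at atomic allocation $s$ (that is, the one of $\xi_s$ and $b^s$ with the smaller load, breaking ties according to the Caching update rule). Starting round $t$ of $P$ at atomic allocation $s_t$, I set $i^t := d_{s_t}$. If $y_{i^t}^t \geq 0$, round $t$ consists of the single atomic allocation $s_t$ and one ball is placed at $i^t$, so $s_{t+1} = s_t + 1$. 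If $y_{i^t}^t < 0$, round $t$ consists of the $\lceil -y_{i^t}^t \rceil + 1$ consecutive atomic allocations $s_t, s_t+1, \ldots, s_t + \lceil -y_{i^t}^t \rceil$, so $s_{t+1} = s_t + \lceil -y_{i^t}^t \rceil + 1$. By construction every atomic allocation of every round of $P$ is an ordinary Caching step, so the unfolding $U(P)$ coincides with Caching.

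The easier of the two verifications is \POne. Since $i^t$ is the lighter of $\xi_{s_t}$ and $b^{s_t}$, the event that $i^t$ lies among the $k$ heaviest bins at round $t$ forces both $\xi_{s_t}$ and $b^{s_t}$ to lie there. Because $\xi_{s_t}$ is uniform on $[n]$ and independent of $b^{s_t}$ (which is determined by the prior samples), marginalising over $\xi_{s_t}$ yields $\Pro{\, i^t \text{ among the } k \text{ heaviest bins} \,} \leq k/n$ for every $k \in [n]$, which is exactly the requirement that the distribution vector $p^t$ is majorised by the uniform distribution.

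The main obstacle will be verifying \WOne in the multi-ball case $y_{i^t}^t < 0$; the single-ball case is immediate, since the ball at atomic allocation $s_t$ lands at $i^t = d_{s_t}$ by definition. In the multi-ball case I have to show that the $\lceil -y_{i^t}^t \rceil + 1$ balls that Caching deposits during round $t$ are distributed so that at most one bin receives $\lceil -y_{k_1}^t \rceil + 1$ balls, at most one receives $\lceil -y_{k_2}^t \rceil$ balls, and every remaining bin $j$ receives at most $\lceil -y_{j}^t \rceil - 1$ balls. The key intuition is that Caching always deposits the next ball into the lighter of the current sample and the current cache, so whichever bin is the cache at a given moment absorbs the bulk of the balls in the round, and any cache update inside the round (a case-(c) event of the Caching rule) simply transfers the ``filling duty'' to an even lighter bin. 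A careful case-by-case tracking of the cache transitions and of the three cases of the Caching update rule across the at most $\lceil -y_{i^t}^t \rceil + 1 = \Theta(|y_{i^t}^t|)$ atomic allocations in the round should allow me to bound the number of balls every bin receives and thereby certify \WOne.
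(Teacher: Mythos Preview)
Your proposal is correct and follows essentially the same route as the paper. Both you and the paper build $P$ by grouping consecutive \Caching allocations into rounds according to whether the destination bin is underloaded, verify \POne by observing that the destination is the lighter of a uniform sample and a cached bin (hence its rank distribution is majorised by one-choice), and verify \WOne by tracking how the cache evolves inside a multi-ball round.

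The one place where the paper is more concrete than your plan is the \WOne verification: rather than leaving it as a ``careful case-by-case tracking'', the paper isolates the single inequality that does all the work, namely that after $\ell$ atomic allocations of the round the cached bin has (absolute) load at most $x_{i^t}^{s_t} + \ell$, equivalently normalised load at most $y_{i^t}^t + \ell$. This follows because each of the three \Caching update cases raises the cache's load by at most one. From it one reads off that the destination at the $\ell$-th allocation has load strictly below $W^{s_t}/n - 1$ for $\ell \le L-2$, below $W^{s_t}/n$ for $\ell = L-1$, and below $W^{s_t}/n + 1$ for $\ell = L$. Hence at most the last two allocations can push a bin's load to or above $W^{s_t}/n$, and only the very last one can push it to or above $W^{s_t}/n + 1$; translating back to ball counts, at most two bins receive $\ge \lceil -y_j^t\rceil$ balls and at most one of them receives $\lceil -y_j^t\rceil + 1$, which is exactly \WOne. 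Your intuition (``the cache absorbs the bulk, and swaps only transfer duty to a lighter bin'') is precisely what makes this inequality hold, so your plan would succeed --- you just haven't yet written down the inequality.
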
}
	\addtocounter{lem}{-1}

\begin{proof}
As in the definition of unfolding, we denote the load vector of $P$ after round $t$ by $x^t$, and the load vector of a suitable unfolding $U(P)$ after the $s$-th atomic allocation by $\widehat{x}^s$. We also denote the corresponding normalized load vectors by $y^t$ and $\widehat{y}^s$ respectively. 

We will construct by induction, a coupling between a suitable filling process $P$, satisfying \WOne and \POne at each round, and an unfolding $U(P)$ which follows the distribution of \Caching. That is for every round $t \geq 0$ of $P$, there exists a (unique) atomic allocation $s=s(t) \geq t$ in $U(P)$, such that $x^t=\widehat{x}^{s(t)}$, and $U(P)$ is an instance of \Caching. 

Assume that for a suitable unfolding of the process $P$, the load configuration of $P$ after round $t$ equals the load configuration of $U(P)$ after atomic allocation $s=s(t)$, i.e., $x^t = \widehat{x}^{s(t)}$. In case the cache is empty (which happens only at the first round $s=0$), \Caching will sample a uniform bin $i$ (which satisfies \POne). If the cache is not empty, \Caching will take as bin $i$ the least loaded of the bin in the cache and a uniformly chosen bin. This produces a distribution vector that is majorized by one-choice (thus satisfies \POne, again). Thus we may couple the two instances such that process $P$ samples the same bin $i$ in round $t$ and atomic allocation $s(t)$, respectively. We continue with a case distinction concerning the load of bin $i$ at round $t$:

\textbf{Case 1} [The bin $i$ is overloaded, i.e., $y_i^{t} = \widehat{y}_i^{s(t)} \geq 0$]. Then \Caching and $P$ both place one ball into bin $i$, satisfying \WOne. Further, $P$ proceeds to the next round and $U(P)$ proceeds to the next atomic allocation, which means that the coupling is extended.
\smallskip

\textbf{Case 2} [The bin $i$ is underloaded, i.e., $y_i^{t} = \widehat{y}_i^{s(t)} < 0$]. Then we can deduce by definition of \Caching that it will place the next $\lceil -y_i^{s(t)} \rceil +1$ balls in some way that it deterministically satisfies the following conditions:
$(i)$ the first $\lceil -y_i^{s(t)}  \rceil $ balls are placed into bins which have a normalized load $<0$ at the atomic allocation $s(t)$, $(ii)$ one ball is placed into a bin with normalized load $<1$ at the atomic allocation $s(t)$. This follows since bin $i$ gets stored in the cache and at each atomic allocation $ j= W^{t-1} + 1, \dots W^t $ the process has access to a cached bin with normalized load at most $y_i^{s(t)} +j-1$. This satisfies \WOne so we can continue the coupling. 
\smallskip

We have thus constructed a process $P$, such that some unfolding $U=U(P)$ of $P$ is an instance \Caching. \end{proof}

We now prove the general gap bound for unfolded processes. 

{\renewcommand{\thelem}{\ref{lem:slowed-down}}
\begin{lem}[restated]
\sloweddown
	\end{lem}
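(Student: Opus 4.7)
The plan is to apply Theorem~\ref{thm:filling_key} to the filling process $P$ and exploit the correspondence between rounds of $P$ and atomic allocations of the unfolding $U$.

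I first strengthen Theorem~\ref{thm:filling_key}: the proof actually establishes $\Ex{\Phi^{m}} = \Oh(n)$, and applying Markov's inequality at threshold $n^{k+1}$ instead of $n^{3}$ yields $\Pro{\Gap_P(t) \leq C_k \log n} \geq 1 - n^{-k}$ for any constant $k$ (with $C_k = \Oh(k)$). Union-bounding this over the first $T^\ast \leq m$ rounds of $P$, where $T^\ast$ is the (random) number of rounds of $P$ required to generate at least $m$ atomic allocations in $U$, one obtains $\Gap_P(t) \leq C_1 \log n$ for every $t \in [T^\ast]$ with probability at least $1 - n^{-2}$, provided $k$ is chosen large enough relative to the polynomial dependence of $m$ on $n$.

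Next, I relate $\Gap_U(s)$ to $\Gap_P(t)$ at intra-round atomic allocations. Since during round $t$ of $P$ loads in $U$ only grow, for $W^{t-1} < s \leq W^{t}$ we have $\widehat{x}^{s} \leq \widehat{x}^{W^{t}} = x^{t}$ coordinate-wise, while the average load satisfies $s/n \geq W^{t-1}/n$. Hence
\[
\Gap_U(s) \leq \max_{i} x_i^{t} - W^{t-1}/n = \Gap_P(t) + w^{t}/n,
\]
so every atomic allocation in a round $t$ with $\Gap_P(t) \leq C_1 \log n$ and $w^{t} \leq n \log n$ is ``good'' in the sense that $\Gap_U(s) \leq (C_1+1) \log n$.

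The final step is to bound the number of ``bad'' rounds with $w^{t} > n \log n$. On the good-gap event, $\Delta^{t} \leq 2n \cdot \Gap_P(t) = \Oh(n \log n)$, and \POne (distribution majorised by uniform) gives
\[
\Pro{w^{t} > n \log n \,\big|\, \mathfrak{F}^{t-1}} \leq \frac{|\{i : -y_i^{t} \geq n \log n - 1\}|}{n} \leq \frac{\Delta^{t}}{n^{2} \log n} = \Oh(1/n).
\]
Summing over $T^\ast \leq m$ rounds gives $\Oh(m/n)$ bad rounds in expectation; each contributes at most $w^{t} \leq W^{t}/n + 2 = \Oh(m/n)$ bad atomic allocations, for a total of $\Oh(m^{2}/n^{2})$ in expectation. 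A Bernstein-type concentration (with truncation at $\Oh(n \log n)$) upgrades this into a high-probability bound, leaving at least $m - \Oh(m^{2}/n^{2}) \geq n^{-c} m \log m$ good atomic allocations for $m$ polynomially bounded. The main obstacle is the concentration step, as $w^{t}$ may have large tails reaching up to $\Oh(m/n)$; the truncated Bernstein argument is delicate in the regime of very large $m$, and the union bound over rounds also requires $k$ to be chosen carefully in terms of the polynomial degree of $m$.
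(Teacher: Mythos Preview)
Your approach has two genuine gaps that prevent it from proving the lemma as stated.

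\textbf{The union bound forces $C$ to depend on $m$.} The lemma must hold for \emph{every} $m \geq 1$, with a constant $C = C(c)$ that depends only on $c$. Your step of amplifying the per-round failure probability to $n^{-k}$ and union-bounding over $T^\ast \leq m$ rounds forces $k$ (and hence $C_k = \Oh(k)$) to grow with the polynomial degree of $m$ in $n$; for super-polynomial $m$ it fails outright. The paper avoids any union bound over rounds: it applies a \emph{single} Markov inequality to the aggregated potential $\Phi := \sum_{t=1}^{m} \Phi^t$, whose expectation is $\Oh(mn)$ uniformly in $m$. Conditioning on $\{\Phi \leq m c_6 n^3\}$ (probability $\geq 1 - n^{-2}$) then bounds the number of ``bad'' rounds $\mathcal{B} := \{t : \Phi^t \geq c_6 n^{6+c}\}$ by $m n^{-3-c}$, regardless of how large $m$ is.

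\textbf{Your intra-round analysis is too lossy.} The bound $\Gap_U(s) \leq \Gap_P(t) + w^t/n$ forces you to control $w^t$, and your subsequent accounting gives at best $\Oh(m \log n)$ (or, with your looser $w^t \leq W^t/n + 2$ bound, $\Oh(m^2/n^2)$) bad atomic allocations---neither of which is $\leq n^{-c} m \log m$ except in narrow ranges of $m$. The paper instead uses condition \WOne directly: every ball placed during round $t$ lands in a bin whose normalised load is at most $1$ before the allocation, so for \emph{every} atomic allocation $s$ in a round $t \notin \mathcal{B}$ one gets $\Gap_U(s) \leq \max\{\Gap_P(t), 2\}$ with no dependence on $w^t$. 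Thus all atomic allocations in good rounds are automatically good. For the bad rounds, the paper bounds $w^t \leq c_\alpha n \log \Phi^t$ (via $\Delta^t \leq \frac{2n}{\alpha}\log \Phi^t + 2n$) and then applies the log-sum inequality to control $\sum_{t \in \mathcal{B}} \log \Phi^t \leq |\mathcal{B}| \log(\Phi/|\mathcal{B}|)$, which is what produces the $n^{-c} m \log m$ bound on bad atomic allocations.

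In short: replace the per-round union bound by Markov on $\sum_t \Phi^t$, use \WOne (not $w^t/n$) to transfer the gap from $P$ to $U$ within good rounds, and use the log-sum inequality to handle the contribution of bad rounds.
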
}
	\addtocounter{lem}{-1}

\begin{proof}%

We will re-use the constants $\alpha \in (0,1)$ and $c_6 > 0$ defined in the proof 
of \cref{thm:filling_key}.
We now define
\[
\mathcal{B}:= \left| \left\{ t \in [1,m] \colon \Phi^t \geq c_6 \cdot n^{6+c}  \right\}  \right|,
\] 
which is the number of ``bad'' rounds of the filling process $P$. We continue with a case distinction for each round $t$ whether $t \in \mathcal{B}$ holds.

\begin{itemize}
\item \textbf{Case 1} [$t \not\in \mathcal{B}$].
By definition, for a round $t \not\in B$ we have $\Phi^{t} < c_6 n^{6+c}$.
Further, $\Phi^t < c_6 n^{6+c}$ implies $\Gap_{P}(t)< \frac{10}{\alpha} \cdot \log n$, for sufficiently large $n$. Now let $s(t),s(t)+1,\ldots,s(t)+w^t-1$, $w^t:=\lceil -y_i^t \rceil+1$ be the atomic allocations in $U(P)$ corresponding to round $t$ in $P$. Since all allocations of $U(P)$ are to bins with normalized load at most $1$ before the allocation, we conclude $\Gap_{U(P)}(s) \leq \max\{\Gap_{P}(t),2\} < \frac{10}{\alpha} \cdot \log n$ for all $s \in [s(t),s(t)+w^t-1]$.

\item \textbf{Case 2} [$t \in \mathcal{B}$]. %
We will use that for any $0 < \alpha < 1$ and $t \geq 0$ we have 
	\[
	\Delta^t \leq   2n \cdot \left( \frac{1}{\alpha}\log \Phi^t + 1\right).
	\]
To see this, observe that $\sum_{i\in B_+^t}y_i^t = -
\sum_{i\in B_-^t}y_i^t$ and thus \begin{equation}\label{eq:delbdd}\Delta^t\leq 2\sum_{i\in B_+^t}y_i^t\leq 2\sum_{i\in B_+^t}y_i^t\mathbf{1}_{y_i^t\geq 2} + 2n.\end{equation} Now, note that since $ \Phi^t =  \sum_{i\in[n] : y_{i}^{t}\geq 2} \exp\left( \alpha \cdot  y_i^t \right)$, if $ \Phi^t  \leq \lambda$ then  $y_i^t \leq \frac{1}{\alpha} \cdot \log \lambda$ for all $i\in[n]$ with $y_{i}^t\geq 2$. Thus by \eqref{eq:delbdd} we have $\Delta^t\leq 2n\cdot (1/\alpha)\log\Phi^t + 2n$ as claimed.

So for a round $t \in \mathcal{B}$, we have \NOTE{T}{I think the first inequality is correct, but maybe $\leq 2 \Delta^t + 2$ would be clearer?}\NOTE{J}{Why? It is not clear to me where the $2$ comes from in the weaker bound?}
 \begin{align}
 w^{t} \leq \Delta^{t} + 1 \leq \frac{2n}{\alpha}\log \Phi^t + 2n + 1 \leq c_\alpha n\log \Phi^t, \label{eq:weight_bound}
 \end{align}
 which holds deterministically for some constant $c_\alpha$. Again, every such round $t \in \mathcal{B}$ of $P$ corresponds to the atomic allocations $s(t),s(t)+1,\ldots,s(t)+w^{t}-1$ in $U(P)$, and we will (pessimistically) assume that the gap in all those rounds is large, i.e., at least $\frac{10}{\alpha} \cdot \log n$.
\end{itemize}
 
By the above case distinction and \eqref{eq:weight_bound}, we can upper bound the number of rounds $s$ in $[1,m]$ of $U(P)$ where $\Gap_{U(P)}(s) < \frac{10}{\alpha} \cdot \log n$ does not hold as follows:
 \begin{align}
  \left|\left\{ s \in [1,m] \colon \Gap_{U(P)}(s) \geq \frac{10}{\alpha} \cdot \log n \right\}\right| \leq \sum_{t=1}^m \mathbf{1}_{t \in \mathcal{B}} \cdot w^{t} \leq  c_{\alpha} n  \cdot \sum_{t=1}^m \mathbf{1}_{t \in \mathcal{B}} \cdot \log \Phi^t  . \label{eq:bad_times}
 \end{align}
Next define the sum of the  exponential potential function over rounds $1$ to $m$ as
\[
 \Phi := \sum_{t=1}^{m} \Phi^t.
 \]
 
 Then from the proof of \cref{thm:filling_key}, equation \eqref{eq:potential_small}, there is a constant $c_6 > 0$ such that $\Ex{ \Phi^t } \leq c_6 \cdot n$, and hence 
 \[
  \Ex{ \Phi } = \sum_{t=1}^m \Ex{ \Phi^t } \leq m \cdot c_6 \cdot n.
 \]
 By Markov's inequality,
 \[
  \Pro{  \Phi \leq m \cdot c_6 \cdot n^{3} } \geq 1 - n^{-2}.\label{eq:markov_inequality}
 \]
 Note that conditional on the above event occurring, the following bound holds deterministically:
 \[
  | \mathcal{B} | \leq \frac{ m c_6 n^{3} }{ c_6 n^{6+c} } \leq m \cdot n^{-3-c}.
 \]

 If $\mathcal{B}$ is empty we are done by \eqref{eq:bad_times}. So we can assume that $|\mathcal{B}|\geq 1 $ and apply the Log-sum inequality \cref{lem:logsum}, with $a_t=1$ and $b_t=\Phi^t$, to the sum over $\mathcal{B}$ in \eqref{eq:bad_times}, which gives us 
 \begin{equation}\label{eq:logsumbd}\sum_{t=1}^m \mathbf{1}_{t \in \mathcal{B}} \cdot  \log \Phi^t =   \sum_{t \in \mathcal{B}} \log \Phi^t \leq  \left( \sum_{t \in \mathcal{B}} 1 \right) \log\left(  \frac{\sum_{t \in \mathcal{B}}\Phi^t}{\sum_{t \in \mathcal{B}} 1}\right) \leq    |\mathcal{B}|   \log\left(  \frac{\Phi}{|\mathcal{B}| }\right).  \end{equation}Now, if the event $\Phi \leq m \cdot c_6 \cdot n^3$ occurs, then by \eqref{eq:bad_times} and \eqref{eq:logsumbd} we have %
 \begin{align*}
    \Bigl|\left\{ s \in [1,m] \colon \Gap_{U(P)}(s) \geq \frac{10}{\alpha} \cdot \log n \right\}\Bigr| &\leq c_{\alpha}n \cdot  |\mathcal{B}| \cdot   \log\left(  \frac{\Phi}{|\mathcal{B}| }\right)\\   
    &\leq c_{\alpha}n \cdot  (m \cdot n^{-3-c}) \cdot   \log\left(  m \cdot c_6 \cdot n^3\right)\\
      &\leq c_{\alpha}' m \cdot n^{-2-c}  \cdot \left( \log  m  + \log n \right)\\
      &\leq  m \cdot n^{-c}  \cdot \log  m ,
 \end{align*}
where the third inequality is for some constant $c'_\alpha$ depending on $\alpha$ and the last inequality holds since  $\alpha>0$ is a small but fixed constant thus $c_\alpha'$ is constant. Since the inequality holds for \emph{any} unfolding of $P$, once the event in \eqref{eq:markov_inequality} occurs, we obtain the corollary.
 \end{proof}
 
 \subsection{Improved Analysis of the \Caching Process} \label{sec:ballbyball_caching}

In this section, we will prove for the \Caching process that \Whp\footnote{In general, with high probability refers to probability of at least $1 - n^{-c}$ for some constant $c > 0$.}~$\Gap(m) = \Oh(\log n)$, where $m \geq 1$ is arbitrary. 
This improves the guarantee we obtained in \cref{sec:unfolding_general}, which was based on regarding \Caching as an ``unfolded'' version of a filling process.

We define an exponential potential $\Psi^t $ which is similar to $\Phi^t$, but with $\tilde{\alpha} := \frac{1}{12n}$, and is given by 
\begin{equation}\label{eq:modpot}
\Psi^t 
:= \sum_{i: y_{i}^t\geq 2} \exp\left( \tilde{\alpha} \cdot  y_i^t \right) =  \sum_{i=1}^n \exp\left( \tilde{\alpha} \cdot  y_i^t \right)\cdot  \mathbf{1}_{\{ y_{i}^{t}\geq 2\}}.
\end{equation}
Our first goal will be to prove that $\ex{\Psi^{t_0}} \leq 144 n^3$ for an arbitrary round $t_0$, which in turn implies, using Markov's inequality, that \Whp~$\Delta^{t_0} = \Oh(n^2 \log n)$. Note that this is weaker than the $\Oh(\log n)$ gap we are aiming to prove. %

We will then use this gap as a starting point at $t_0 := m - n^7$ for a filling process satisfying \POne and \WOne, whose unfolding is \Caching. Such a process exists by \cref{lem:memory_compressed} and we will call it \FoldedCaching in the rest of the section. For the potential function $\Phi$ defined in \cref{sec:filling_potential}, we will show using, \cref{thm:filling_key} and \cref{lem:slowed-down}, that $\ex{\Phi^{t_0 + s}} = \Oh(n)$ for any $s \geq t_0 + n^3 \log^2 n$ rounds (which may unfold to multiple atomic allocations as the rounds of the \Caching process). Finally, by taking Markov's inequality and the union bound over the next $n^7$ rounds which include the allocation of the $m$-th ball, we conclude that the exponential potential is small \Whp~and hence $\Gap(m) = \Oh(\log n)$.

The most involved part of this proof is to prove that if $\Psi^t$ is large, then it decreases in expectation. To this end, we will analyze two consecutive rounds of the \Caching process. In these two rounds we show that even for the worst-case choice of $b$ at the beginning of the two rounds, the \Caching process has at least a $1/n^2$ bias to place the two balls away from the heaviest bin. This bias is enough to obtain a $\poly(n)$ gap. Note that over one step, the potential might increase in expectation even when large, e.g., when $b$ stores the heaviest bin, in which case \Caching allocates the next ball using \OneChoice. 

\begin{lem} \label{lem:caching_mult_drop}
For any round $t \geq 0$, the \Caching process satisfies
\[
\ex{\Psi^{t+2} \mid \mathfrak{F}^{t}} \leq \Psi^t \cdot \Big( 1 - \frac{1}{24n^3} \Big) + 6.
\]
\end{lem}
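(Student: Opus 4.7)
The plan is to analyze two consecutive rounds of the \Caching process by conditioning on the cached bin at time $t$ and tracking how it evolves. The key observation, as hinted in the preceding text, is that over any two consecutive rounds the heaviest bin receives a ball with probability strictly smaller than under \OneChoice, regardless of the initial cache state, and this bias propagates into a multiplicative drop of the potential.

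First I would decompose the potential as $\Psi^{t+2} = \sum_{k=1}^n \Psi_k^{t+2}$ and split the analysis into (a) bins $k$ with $y_k^t \geq 2$, which already contribute to $\Psi^t$, and (b) bins whose normalized load is less than $2$ at time $t$ but may cross this threshold in the next two rounds. For case (b), using $\tilde\alpha = 1/(12n)$ together with the fact that at most two balls are allocated in two rounds, each crossing bin contributes at most $\exp(3\tilde\alpha)\leq 2$ to $\Psi^{t+2}$; since only a constant number of bins can cross, this yields the additive term $+6$.

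For case (a), fix a bin $k$ with $y_k^t \geq 2$ and let $N_k$ denote the number of balls placed in $k$ during rounds $t+1$ and $t+2$. Because the average load increases by exactly $2/n$ over two rounds, we have $y_k^{t+2} = y_k^t + N_k - 2/n$, so
\[
\Psi_k^{t+2} \leq \exp(\tilde\alpha\, y_k^t)\cdot\exp\bigl(\tilde\alpha(N_k-2/n)\bigr).
\]
Taking expectations and applying the Taylor estimate $e^x\leq 1+x+x^2$ (justified since $\tilde\alpha$ is $O(1/n)$), it suffices to bound $\ex{N_k\mid\mathfrak{F}^t}$ from above. The crux is to obtain $\ex{N_{k^*}\mid\mathfrak{F}^t}\leq 2/n - \Omega(1/n^2)$ for a heaviest bin $k^*$; a routine \OneChoice-style estimate gives $\ex{\Psi_k^{t+2}\mid\mathfrak{F}^t}\leq \Psi_k^t(1+O(\tilde\alpha^2/n))$ for every other bin $k$.

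The heart of the proof is therefore to bound $\ex{N_{k^*}\mid\mathfrak{F}^t}$ by a case analysis on $b^t$. In the case $b^t\neq k^*$ with $x_{b^t}^t < x_{k^*}^t$, no ball is placed in $k^*$ in round $t+1$: if the sample $i$ equals $k^*$ then $x_i^t > x_{b^t}^t$ and the ball is routed to $b^t$, while if $i\neq k^*$ the ball lands in a bin of load at most $\max(x_i^t,x_{b^t}^t) < x_{k^*}^t$. A careful argument shows that $b^{t+1}$ remains no heavier than $k^*$, so in round $t+2$ the bin $k^*$ receives a ball only if $i'=k^*$ and some tied bin sits in the cache, yielding $\ex{N_{k^*}\mid\mathfrak{F}^t, b^t\neq k^*} \leq 1/n$. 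In the complementary case $b^t=k^*$, a ball is placed in $k^*$ in round $t+1$ only if the sample equals $k^*$ (probability $1/n$); otherwise the cache is updated to a strictly lighter bin, putting us into the previous case for round $t+2$, so $\ex{N_{k^*}\mid\mathfrak{F}^t, b^t=k^*} \leq 1/n + 1/n^2$. In either case $\ex{N_{k^*}\mid\mathfrak{F}^t} \leq 2/n - 1/(2n)$, and since $\Psi^t_{k^*}\geq \Psi^t/n$ (being the largest of $n$ terms), summing the per-bin bounds and simplifying delivers the claimed drop of $1/(24n^3)$ together with the additive $+6$.

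The main obstacle will be the bookkeeping around ties at the maximum load, in particular showing that after round $t+1$ the cache still cannot point to a \emph{strictly} heaviest bin. This is where the analysis is delicate; fortunately the lemma only requires a very small $\Omega(1/n^2)$ bias over two rounds, whereas the true bias is $\Omega(1/n)$, leaving ample slack to absorb the loss from any tied configurations.
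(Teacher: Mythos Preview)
Your case analysis on $b^t$ contains a genuine error, and it propagates into a gap the ``ample slack'' does not cover.

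\textbf{The claimed bias on $k^*$ is wrong.} In the case $b^t=k^*$, you assert that if $i_1\neq k^*$ then the cache is updated to a \emph{strictly} lighter bin. This is false: if $x_{i_1}^t=x_{k^*}^t-1$, then after the ball is placed in $i_1$ the cached bin has load $x_{i_1}^t+1=x_{k^*}^t=x_{k^*}^{t+1}$, so by the tie rule a sample $i_2=k^*$ in round $t{+}2$ does place a ball in $k^*$. Concretely, take $n-2$ bins at load $x_{k^*}^t-1$, one very light bin, and $b^t=k^*$; this is compatible with $y_{k^*}^t\geq 2$, and a direct count gives
\[
\Ex{N_{k^*}\mid b^t=k^*}=\frac{1}{n}\Bigl(1+\frac{1}{n}\Bigr)+\frac{n-2}{n}\cdot\frac{1}{n}=\frac{2}{n}-\frac{1}{n^2}.
\]
So the bias on the heaviest bin is only $\Theta(1/n^2)$, not $\Omega(1/n)$.

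\textbf{A single-bin bias of $1/n^2$ is not enough.} With $\tilde\alpha=1/(12n)$, the first-order drop you extract from $k^*$ is $\tilde\alpha\cdot\Theta(1/n^2)\cdot\Psi_{k^*}^t$, and using $\Psi_{k^*}^t\geq\Psi^t/n$ this is only $\Theta(1/n^4)\cdot\Psi^t$. Meanwhile the second-order term summed over all heavy bins is $\Theta(\tilde\alpha^2/n)\cdot\Psi^t=\Theta(1/n^3)\cdot\Psi^t$, which dominates. Thus the approach of biasing only the single heaviest bin cannot yield the $1/(24n^3)$ drop. Your remark that ``the lemma only requires a $\Omega(1/n^2)$ bias'' would be correct if that bias held for \emph{every} heavy bin, but your argument supplies it only for one.

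\textbf{A secondary issue.} The ``routine \OneChoice{}-style estimate'' for the remaining bins is also not safe: if $b^t=k$ with $y_k^t\geq 2$ but $k$ not heaviest, then $k$ receives a ball in round $t{+}1$ with probability $|\{j:x_j^t\geq x_k^t\}|/n$, which can be much larger than $1/n$, so $\Ex{\Psi_k^{t+2}\mid\mathfrak{F}^t}\leq\Psi_k^t(1+O(\tilde\alpha^2/n))$ is not valid for this one bin.

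\textbf{How the paper handles it.} The paper sidesteps both difficulties by introducing an auxiliary process \textsc{WeakCaching} that forgets the cache at time $t$ and decides both placements using $x^t$ only. A short majorisation argument shows $\Ex{\Psi^{t+2}\mid x^t}\leq\Ex{\tilde\Psi^{t+2}\mid x^t}+3$, so it suffices to analyse \textsc{WeakCaching}. For this process the one-ball and two-ball probabilities are explicit ($\tilde p_{1,i}^t=\tfrac{1}{n}-\tfrac{1}{n^2}$, $\tilde p_{2,i}^t=\tfrac{i}{n^2}$), and the crucial step is \emph{not} a per-bin bias on the heaviest bin but rather the observation that the \emph{average} of $\tilde p_{2,i}^t$ over $\{i:y_i^t\geq 2\}$ is at most $\tfrac{1}{2n}$ (since at most $n{-}1$ bins can be this heavy). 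Replacing $\tilde p_{2,i}^t$ by this average via a majorisation lemma yields a uniform $\Theta(1/n)$ bias across all heavy bins, and that is what produces the $1-\tfrac{1}{24n^3}$ factor.
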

\begin{proof}
Recall that the filtration $\mathfrak{F}^t$ reveals the load vector $x^t$. Throughout this proof, we consider any labeling of the $n$ bins such that $x_1^{t} \geq x_2^{t} \geq \cdots \geq x_n^{t}$ for round $t$. 
Now, consider a modified version of \Caching, which we call \WeakCaching, which ``forgets'' the cache at time $t$ and makes its decision for rounds $t$ and $t+1$ using the loads $x^t$:
\begin{itemize}[itemsep=0pt]
  \item In round $t$, place the ball in a bin $i_1$ sampled using \OneChoice.
  \item In round $t+1$, sample another bin $i_2$ using \OneChoice and place the ball in the least loaded of the $i_1$ and $i_2$, using the load information of round $t$. We break ties using the bin indices at round $t$. 
\end{itemize}
Let $\tilde{\Psi}$ be the potential \eqref{eq:modpot} for \WeakCaching. The fact that \WeakCaching resets the cache implies that $\ex{\Psi^{t+2}  \mid x^t} \leq \ex{\tilde{\Psi}^{t+2}  \mid x^t} + 3$ (as we will show in \cref{clm:modified_process_majorises_caching}), while the fact that \WeakCaching uses outdated information from $x^t$ allows us to analyze $\ex{\tilde{\Psi}^{t+2} \mid x^t}$. 
\begin{clm}
\label{clm:modified_process_majorises_caching}
For any $x^t$ and any choice of the cache $b$ at round $t$,%
\[
\Ex{\Psi^{t+2}  \mid x^t} \leq \Ex{\tilde{\Psi}^{t+2}  \mid x^t} + 3.
\]
\end{clm}\begin{poc}
Fix an arbitrary load vector $x^{t}$ and consider the allocation of \WeakCaching and \Caching in rounds $t$ and $t+1$.
We consider a coupling, where $i_1$ and $i_2$ are the bin samples of the two processes at rounds $t$ and $t+1$. Let $i_1^A$, $i_2^A$ and $i_1^B$, $i_2^B$ be the pairs of bin indices where the \Caching and \WeakCaching processes respectively allocate the two balls corresponding to $t$\textsuperscript{th} and $(t+1)$\textsuperscript{th} allocation, and let $x_{A}^{t+1}, x_{A}^{t+2}$ and $x_{B}^{t+1}, x_{B}^{t+2}$ be the resulting load distributions. %
We will prove that $x_{B}^{t+2}$ majorizes $x_{A}^{t+2}$ by showing that $\langle x_{A, i_1^A}^{t+1}, x_{A, i_2^A}^{t+2}\rangle \leq \langle x_{B, i_1^B}^{t+1}, x_{B, i_2^B}^{t+2} \rangle$, where $\langle a, b \rangle := (\max\{a, b\}, \min\{a, b\})$. The claim will follow by \cref{clm:schur_varphi}, since $2e^{3\alpha} < 3$.

The claim holds because \WeakCaching can allocate according to the best of two choices only for the second ball using outdated information, while \Caching has this ability for the allocation of \emph{both} balls. More formally, \WeakCaching will allocate to loads that are at least as large (because of the decisions being made on outdated information) as the two smallest loads in the multiset $M_B := \{ x_{i_1}, x_{i_1} + 1, x_{i_2} \}$. We consider the following cases (where each process allocates two balls) based on the cache $b$ of \Caching and the two choices $i_1$ and $i_2$.
\begin{itemize}
  \item \textbf{Case A} [$b, i_1, i_2$ all different]. \Caching allocates to the two smallest loads from the multiset $M_A = \{ x_b, x_b + 1, x_{i_1}, x_{i_1} + 1, x_{i_2} \}$, while $M_B = \{ x_{i_1}, x_{i_1} + 1, x_{i_2} \}$. So $M_B \subseteq M_A$. 
  \item \textbf{Case B} [$b \neq i_1 = i_2$]. \Caching allocates to the two smallest loads from the multiset $M_A = \{ x_b, x_b + 1, x_{i_1}, x_{i_1} + 1 \}$, while $M_B = \{ x_{i_1}, x_{i_1} + 1 \}$. So $M_B \subseteq M_A$.
  \item \textbf{Case C} [$b = i_1 \neq i_2$]. \Caching allocates to the two smallest loads from the multiset $M_A = \{ x_{i_1}, x_{i_1} + 1, x_{i_2} \}$, while $M_B = \{ x_{i_1}, x_{i_1} + 1, x_{i_2} \}$. So $M_B \subseteq M_A$. 
  \item \textbf{Case D} [$b = i_2 \neq i_1$]. \Caching allocates to the two smallest loads from the multiset $M_A = \{ x_{i_1}, x_{i_1} + 1, x_{i_2}, x_{i_2} + 1 \}$, while $M_B = \{ x_{i_1}, x_{i_1} + 1, x_{i_2} \}$. So $M_B \subseteq M_A$.
  \item \textbf{Case E} [$b = i_2 = i_1$]. \Caching and \WeakCaching will allocate both balls to $i_1$.
\end{itemize} In all these cases, $M_B \subseteq M_A$, so $\langle x_{A, i_1^A}^{t+1}, x_{A, i_2^A}^{t+2}\rangle \leq \langle x_{B, i_1^B}^{t+1}, x_{B, i_2^B}^{t+2} \rangle$, establishing the claim.
\end{poc}

Next we define $\tilde{p}_{1,i}^t$ as the probability of allocating exactly one ball and $\tilde{p}_{2,i}^t$ the probability of allocating two balls in bin $i$ over the next two rounds $t$ and $t+1$ using \WeakCaching. For any bin $i \in [n]$ with $y_i^t \geq 4 + 2/n$, we know that $y_i^{t+2} \geq 2$ so,%
\begin{align*}
\ex{\tilde{\Psi}_i^{t+2} \mid x^t} 
 & = \tilde{\Psi}_i^t \cdot e^{-2\tilde{\alpha}/n} \cdot (1 - \tilde{p}_{1,i}^t - \tilde{p}_{2,i}^t) + \tilde{\Psi}_i^t \cdot e^{\tilde{\alpha} - 2\tilde{\alpha}/n} \cdot \tilde{p}_{1,i}^t + \tilde{\Psi}_i^t \cdot e^{2\tilde{\alpha} - 2\tilde{\alpha}/n} \cdot \tilde{p}_{2,i}^t \\
 & = \tilde{\Psi}_i^t \cdot e^{-2\tilde{\alpha}/n} \cdot (1 + \tilde{p}_{1,i}^t \cdot (e^{\tilde{\alpha}} - 1) + \tilde{p}_{2,i}^t \cdot (e^{2\tilde{\alpha}} - 1)) \\
 & = \Psi_i^t \cdot e^{-2\tilde{\alpha}/n} \cdot (1 + \tilde{p}_{1,i}^t \cdot (e^{\tilde{\alpha}} - 1) + \tilde{p}_{2,i}^t \cdot (e^{2\tilde{\alpha}} - 1)).
\end{align*}
Since for any $i$ with $y_i^{t+2} < 2$, we have $\tilde{\Psi}_i^{t+2} = 0$, for any bin $i \in [n]$ with $y_i^t \in [2,4+2/n]$
\[
\ex{\tilde{\Psi}_i^{t+2} \mid x^t} \leq \Psi_i^t \cdot e^{-2\tilde{\alpha}/n} \cdot (1 + \tilde{p}_{1,i}^t \cdot (e^{\tilde{\alpha}} - 1) + \tilde{p}_{2,i}^t \cdot (e^{2\tilde{\alpha}} - 1)).
\]
In rounds $t, t+1$ at most two bins can be change from $y_i^t < 2$ to $y_i^{t+2} > 2$. In this case, we have $y_i^{t+2} < 4$. So, their total contribution will be at most $2e^{\tilde{\alpha}(3 + 2/n)} < 2e^{4\tilde{\alpha}}$, so
\[
\ex{\tilde{\Psi}^{t+2} \mid x^{t}} \leq \sum_{i : y_i^t \geq 2} \Psi_i^t \cdot e^{-2\tilde{\alpha}/n} \cdot (1 + \tilde{p}_{1,i}^t \cdot (e^{\tilde{\alpha}} - 1) + \tilde{p}_{2,i}^t \cdot (e^{2\tilde{\alpha}} - 1)) + 2e^{4\tilde{\alpha}}.
\]
Using the Taylor estimate $e^z \leq 1 + z + z^2$ for $z < 1.75$ and that $2e^{4\tilde{\alpha}} < 3$,
\[
\ex{\tilde{\Psi}^{t+2} \mid x^t}
\leq \sum_{i : y_i^t \geq 2} \Psi_i^t \cdot \Big(1 - \frac{2\tilde{\alpha}}{n} + \frac{4\tilde{\alpha}^2}{n^2} \Big) \cdot (1 + \tilde{p}_{1,i}^t \cdot (\tilde{\alpha} + \tilde{\alpha}^2) + \tilde{p}_{2,i}^t \cdot (2\tilde{\alpha} +4\tilde{\alpha}^2)) + 3.
\]

For \WeakCaching, the probability of allocating exactly one ball to bin $i \in [n]$ (with respect to the load ordering at round $t$) over the next two rounds is,
\begin{align*}
\tilde{p}_{1,i}^t & = \Pro{\left\{i_1 = i\right\} \cap \left\{ i_2 > i\right\}} + \Pro{\left\{i_1 < i\right\} \cap \left\{i_2 = i\right\}}  = \frac{1}{n} \cdot \Big(1 - \frac{i}{n}\Big) + \frac{i-1}{n} \cdot \frac{1}{n} = \frac{1}{n} - \frac{1}{n^2},
\end{align*}
where we have used the fact that \WeakCaching allocates the two balls using the information of the load vector $x^t$ only, and break ties according to the ball index of the (sorted) load vector $x^t$. Similarly, the probability of allocating two balls is, 
\[
\tilde{p}_{2,i}^t = \Pro{ \left\{ i_1 = i \right\} \cap \left\{ i_2 \leq i \right\} } =  \frac{1}{n} \cdot \frac{i}{n}.
\]
Let $\overline{p}_2^t := \frac{1}{|\{ i : y_i^t \geq 2 \}|} \cdot \sum_{i : y_i^t \geq 2} \tilde{p}_{2,i}^t$ be the average probability of allocating two balls to bin $i$ with $y_i^t \geq 2$. Note that since $\tilde{p}_{2,i}^t$ is increasing in $i$, the larger $|\{ i : y_i^t \geq 2 \}|$ is, the larger $\overline{p}_2^t$ will be. Since there can be at most $n - 1$ bins in with $y_i^t \geq 2$, we have that $|\{ i : y_i^t \geq 2 \}| \leq n - 1$, hence,  
\[
\overline{p}_2^t \leq \frac{1}{n^2} \cdot \frac{1}{n-1} \cdot  \sum_{i = 1}^{n-1} i
=  \frac{n(n-1)}{2 n^2 (n-1)}
=  \frac{1}{2 n}.
\]
Now, the expected change for the $\Psi$ potential over the next two rounds is at most,
\begin{align*}
\ex{\Psi^{t+2} \mid \mathcal{F}^t}
 & = \ex{\Psi^{t+2} \mid \mathcal{F}^t, x^t} \\
 & \stackrel{(a)}{\leq} \ex{\tilde{\Psi}^{t+2} \mid x^t} + 3 \\
 & \leq \sum_{i : y_i^t \geq 2}\Psi_i^t \cdot \Big(1 - \frac{2\tilde{\alpha}}{n} + \frac{4\tilde{\alpha}^2}{n^2} \Big) \cdot (1 + \tilde{p}_{1,i}^t \cdot (\tilde{\alpha} + \tilde{\alpha}^2) + \tilde{p}_{2,i}^t \cdot (2\tilde{\alpha} +4\tilde{\alpha}^2)) + 6 \\
  & \stackrel{(b)}{\leq} \sum_{i : y_i^t \geq 2}\Psi_i^t \cdot \Big(1 - \frac{2\tilde{\alpha}}{n} + \frac{4\tilde{\alpha}^2}{n^2} \Big) \cdot (1 + \tilde{p}_{1,i}^t \cdot (\tilde{\alpha} + \tilde{\alpha}^2) + \overline{p}_{2}^t \cdot (2\tilde{\alpha} +4\tilde{\alpha}^2)) + 6 \\
 & \leq \sum_{i : y_i^t \geq 2} \Psi_i^t \cdot \Big(1 - \frac{2\tilde{\alpha}}{n} + \frac{4\tilde{\alpha}^2}{n^2} \Big) \cdot \Big(1 + \Big(\frac{1}{n} - \frac{1}{n^2}\Big) \cdot (\tilde{\alpha} + \tilde{\alpha}^2)  + \frac{1}{2 n} \cdot (2\tilde{\alpha} +4\tilde{\alpha}^2)\Big) + 6 \\
  &
  = \sum_{i : y_i^t \geq 2} \Psi_i^t \cdot \Big( 1 - \frac{2\tilde{\alpha}}{n} + \frac{4\tilde{\alpha}^2}{n^2} \Big) \cdot 
  \Big(1 + \frac{2\tilde{\alpha}}{n} - \frac{\tilde{\alpha}}{n^2} + \frac{3\tilde{\alpha}^2}{n} - \frac{\tilde{\alpha}^2}{n^2} \Big) + 6 \\
 & \stackrel{(c)}{=} \sum_{i : y_i^t \geq 2} \Psi_i^t \cdot \Big(1 - \frac{2\tilde{\alpha}}{n} + \frac{2\tilde{\alpha}}{n} - \frac{\tilde{\alpha}}{n^2} + \frac{3\tilde{\alpha}^2}{n} + o(n^{-3}) \Big) + 6 \\
 & = \sum_{i : y_i^t \geq 2} \Psi_i^t \cdot \Big(1 - \frac{\tilde{\alpha}}{n^2} + \frac{3\tilde{\alpha}^2}{n} + o(n^{-3}) \Big) + 6 \\
 & \stackrel{(d)}{\leq} \sum_{i : y_i^t \geq 2} \Psi_i^t \cdot \Big(1 - \frac{1}{24 n^3} \Big) + 6,
\end{align*}
where $(a)$ uses \cref{clm:modified_process_majorises_caching}, $(b)$ uses majorization (\cref{lem:quasilem}), since $\Psi_i^t$ is non-decreasing and $\tilde{p}_{2,i}^t$ is increasing and $(c)$ as well as $(d)$ use that $\tilde{\alpha} = 1/(12n)$.
\end{proof}

\begin{lem} \label{lem:caching_poly_n_expectation}
For any round $t \geq 0$ of the \Caching process we have
\[
\ex{\Psi^t } \leq 144 n^3.
\]
\end{lem}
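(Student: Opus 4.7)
The plan is to iterate the two-step contraction of \cref{lem:caching_mult_drop}. Taking unconditional expectations on both sides of that lemma and writing $\gamma := 1 - \frac{1}{24 n^3} \in (0,1)$, I obtain the linear recurrence
\[
\ex{\Psi^{t+2}} \leq \gamma \cdot \ex{\Psi^{t}} + 6 \qquad \text{for every } t \geq 0.
\]
The initial conditions are $\Psi^0 = 0$ (empty load configuration) and $\Psi^1 = 0$ as well, since after a single allocation exactly one bin has load $1$ and all normalized loads lie in the interval $(-1, 1)$, so no bin satisfies $y_i^1 \geq 2$ and hence contributes to $\Psi^1$.

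I then split into the even and odd parity classes and iterate the recurrence separately along the two subsequences $(\ex{\Psi^{2k}})_{k \geq 0}$ and $(\ex{\Psi^{2k+1}})_{k \geq 0}$. For even $t = 2k$, unrolling the recurrence and summing the geometric series yields
\[
\ex{\Psi^{2k}} \leq \gamma^{k} \cdot \ex{\Psi^{0}} + 6 \sum_{j=0}^{k-1} \gamma^{j} \leq \frac{6}{1 - \gamma} = 144\, n^3,
\]
and the identical computation starting from $\ex{\Psi^{1}} = 0$ handles odd $t = 2k+1$. Equivalently, one can package this by applying the second statement of \cref{lem:geometric_arithmetic} to each parity class with $a := \gamma$ and $b := 6$.

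The only substantive work has already been carried out in \cref{lem:caching_mult_drop}; once the two-step multiplicative drop is in hand, the present lemma is just the fixed-point estimate for this linear recurrence, and no additional obstacle arises. Combining the two parity bounds establishes $\ex{\Psi^{t}} \leq 144\, n^3$ for every $t \geq 0$, as claimed.
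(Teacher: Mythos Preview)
Your proposal is correct and essentially identical to the paper's proof: both take unconditional expectations in \cref{lem:caching_mult_drop}, use the base cases $\Psi^0=\Psi^1=0$, and exploit the two-step contraction to bound $\ex{\Psi^t}$ by the fixed point $6/(1-\gamma)=144n^3$. The only cosmetic difference is that the paper phrases this as a strong induction while you unroll the geometric series along the two parity classes; these are equivalent presentations of the same argument.
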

\begin{proof}
We will prove the statement by strong induction. For $t=0$ and $t=1$, we have $\Psi^0 = 0$ and $\Psi^1 = 0$ deterministically, so the statement holds. Assume $\ex{\Psi^{s}} \leq 144 n^3$ for all $s\leq t+1$, then  using \cref{lem:caching_mult_drop},
\[
\ex{\Psi^{t+2}} = \ex{\ex{\Psi^{t+2} \mid \mathfrak{F}^{t}}} \leq \ex{\Psi^t} \cdot \Big(1 - \frac{1}{24 n^3} \Big) + 6 \leq 144 n^3 - \frac{144n^3}{24 n^3} + 6 = 144 n^3.
\qedhere\]
\end{proof}

\begin{lem} \label{lem:caching_base_case}
For the \Caching process, for any round $t \geq 0$,
\[
\Pro{\max_{i \in [n]} y_i^t \leq 14 \cdot n \log n} \geq 1 - n^{-9}.
\]
\end{lem}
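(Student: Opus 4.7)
The plan is to derive this high-probability bound as a direct corollary of the expected potential bound in \cref{lem:caching_poly_n_expectation}, using Markov's inequality followed by inversion of the exponential to translate the potential bound into a load bound. No additional stochastic analysis is required here; the hard work is already done in \cref{lem:caching_mult_drop,lem:caching_poly_n_expectation}.

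Concretely, I would first apply Markov's inequality to $\Psi^t$. Since \cref{lem:caching_poly_n_expectation} gives $\Ex{\Psi^t} \leq 144 n^3$, choosing the threshold $144 \cdot n^{12}$ yields
\[
 \Pr\!\left[\Psi^t \geq 144 \cdot n^{12}\right] \leq \frac{144 n^3}{144 n^{12}} = n^{-9}.
\]
Thus with probability at least $1-n^{-9}$ we have $\Psi^t < 144 \cdot n^{12}$.

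Next I would invert the exponential. Recall that $\tilde{\alpha} = 1/(12n)$ and $\Psi^t = \sum_{i : y_i^t \geq 2} \exp(\tilde{\alpha} y_i^t)$. For any bin $i$ with $y_i^t \geq 2$, the single term $\exp(\tilde{\alpha} y_i^t)$ is a lower bound on $\Psi^t$, so on the event $\Psi^t < 144 \cdot n^{12}$,
\[
 y_i^t \;\leq\; \tilde{\alpha}^{-1} \cdot \log\!\bigl(144 \cdot n^{12}\bigr) \;=\; 12n \cdot \bigl(12 \log n + \log 144\bigr) \;\leq\; 14 \cdot n \log n,
\]
for all sufficiently large $n$ (absorbing the additive $60n$ into a slight adjustment of the leading constant). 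Meanwhile, any bin $i$ with $y_i^t < 2$ trivially satisfies $y_i^t < 14 n \log n$.

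Combining the two cases, on the event $\Psi^t < 144 \cdot n^{12}$ we have $\max_{i \in [n]} y_i^t \leq 14 n \log n$, which occurs with probability at least $1-n^{-9}$, as claimed. The only subtlety worth flagging is ensuring the constants line up (the crude Markov bound of $144 n^{12}$ is deliberately loose so that after taking a logarithm and multiplying by $1/\tilde{\alpha} = 12n$ the resulting $\Oh(n \log n)$ bound is clean), but this is a routine verification rather than a genuine obstacle.
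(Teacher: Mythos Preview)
Your approach is essentially identical to the paper's: apply Markov's inequality to $\Psi^t$ using \cref{lem:caching_poly_n_expectation}, then invert the exponential (the paper uses threshold $144n^{13}$ rather than $144n^{12}$, but that is immaterial). One arithmetic slip to flag: $12n\cdot(12\log n+\log 144)=144\,n\log n+O(n)$, not $14\,n\log n$; the paper's own proof has the same constant mismatch, and since the downstream argument only needs an $O(n\log n)$ bound, nothing of substance is affected.
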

\begin{proof}
Using \cref{lem:caching_poly_n_expectation} and Markov's inequality, we get,
\[
\Pro{\Psi^t \leq 144 n^{13}} \geq 1 - n^{-9}.
\]
Since $\Psi^t \leq 144 n^{13}$ implies $\min_{i \in [n]} y_i^t \leq 13 \cdot n \log n + \log 144 \leq 14 \cdot n \log n$, we conclude that
\[
\Pro{\max_{i \in [n]} y_i^t \leq 14 \cdot n \log n} \geq 1 - n^{-9}.\qedhere
\]
\end{proof}

The next claim holds not only for \FoldedCaching, but more generally for any allocation process satisfying \WOne, and we will make use of the more general claim in Section~\ref{sec:lower}.
\begin{clm}\label{clm:bound_wone}  
For any allocation process satisfying \WOne, it holds for any $t \geq 0$ that $\Delta^{t+1} \leq \Delta^t + 4$.
\end{clm}
\begin{proof}
We have to relate $\Delta^{t}$ to $\Delta^{t+1}$ where we recall that, for any $t\geq 0$,
\[
 \Delta^{t} := \sum_{i \in [n]} \left| x_i^{t} - \frac{W^{t}}{n} \right|.
\]
To do this, let $i \in [n]$ be the chosen bin for the allocation in round $t$.

\medskip

\noindent\textbf{Case 1} [$y_i^{t} \geq 0$]. If $i \in [n]$ satisfies $y_i^{t} \geq 0$, i.e., the chosen bin is overloaded, then we place exactly one ball into $i$. This increases the average load by $\frac{1}{n}$ and increments the load of exactly one bin by $1$. By using the triangle inequality this implies
\[
 \Delta^{t+1} \leq \Delta^{t} + 2.
\]

\medskip 

\noindent\textbf{Case 2} [$y_i^{t} < 0$]. If bin $i$ is underloaded, then we allocate exactly $w^{t}:=\lceil -y_i^{t} \rceil + 1$ many balls in round $t$. We now split the allocation of these $w^t$ balls within round $t$ into $w^{t}$ atomic allocations labeled $1,2,\ldots,w^{t}$.
Correspondingly, we define the absolute value potential for atomic allocations between $t$ and $t+1$, by setting for any $\ell \in [0,w^{t}]$
\[
 \Delta^{t,\ell} := \sum_{i \in [n]} \left| x_i^{t,\ell} - \frac{W^{t}+\ell}{n} \right|,
\] 
where $x_{t,\ell}$ is the load vector after $\ell$-th atomic allocation of round $t$. Since the order of the placement of the balls does not matter for $\Delta^{t,w^{t}}=\Delta^{t+1}$, we may assume that the bins $k_1$ and $k_2$ in the definition of \WOne (if they exist), are used in the first and second atomic allocation of round $t$, respectively. Using the same as argument as in Case 1, we conclude that
$
 \Delta^{t,1} \leq \Delta^{t,0} + 2,
$
and
$
 \Delta^{t,2} \leq \Delta^{t,1} + 2,
$
so that
\[
 \Delta^{t,2} \leq \Delta^{t,0} + 4.
\]
Further, note that any bin $j \in [n]$ apart from $k_1$ and $k_2$ receives at most $\lceil -y_j^{t} \rceil - 1$ many balls in round $t$. Hence if we allocate to a bin $j$ in the $\ell$-th atomic allocation, where $\ell \geq 3$, it can have received at most $\bigl( \lceil -y_j^{t} \rceil - 1\bigr) - 1
=\lceil -x_j^{t} + \frac{W^{t}}{n} \rceil - 2$ many balls in previous atomic allocations. Hence
\begin{align*}
 x_{j}^{t,\ell} - \frac{W^{t}+\ell}{n} &\leq  x_{j}^{t,\ell} - \frac{W^{t}}{n} \leq x_{j}^{t} + \left\lceil -x_j^{t} + \frac{W^{t}}{n} \right\rceil - 2 - \frac{W^{t}}{n} \leq -1.
\end{align*} 

Therefore,
\begin{align}
 \left| x_{j}^{t,\ell+1} - \frac{W^{t}+(\ell+1)}{n} \right| &=
  \left| \underbrace{ x_{j}^{t,\ell} - \frac{W^{t}+\ell}{n}}_{\leq -1}  + 1 - \frac{1}{n} \right|
 =
 \left| x_{j}^{t,\ell}  - \frac{W^{t}+\ell}{n} \right| - \left( 1 - \frac{1}{n} \right). \label{eq:une}
\end{align}
On the other hand, for any $j \neq i$, we have by the triangle inequality
\begin{align}
    \sum_{j \in [n],j \neq i} \left| x_{j}^{t,\ell+1} - \frac{W^{t}+(\ell+1)}{n}
    \right| &=  \sum_{j \in [n],j \neq i} \left| x_{j}^{t,\ell} - \frac{W^{t}+\ell}{n} + \frac{1}{n}
    \right| \notag \\
    &\leq \sum_{j \in [n],j \neq i} \left| x_{j}^{t,\ell} - \frac{W^{t}+\ell}{n} \right| + (n-1) \cdot \frac{1}{n}. \label{eq:deux}
\end{align}
If we combine \eqref{eq:une} and \eqref{eq:deux} then the $-(1-1/n )$ and $+(n-1 )/n$ terms cancel, so we have  
\[
 \Delta^{t,\ell+1} \leq \Delta^{t,\ell}.
\]
Aggregating the changes over $\ell =0,1,2,\ldots,w^{t}$, we conclude
\[
  \Delta^{t+1} = \Delta^{t,w^{t}} \leq \Delta^{t,0} + 4 = \Delta^{t} + 4.\qedhere
\]
\end{proof}

We now (re-)state and prove the main result of this section: the gap bound for \Caching.
{\renewcommand{\thelem}{\ref{lem:cachingfixedtimestep}}
\begin{lem}[restated]
\cachingfixedtimestep
	\end{lem}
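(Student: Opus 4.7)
My plan is to combine the preliminary polynomial gap bound from \cref{lem:caching_base_case} with the drift analysis for filling processes developed in \cref{sec:filling_potential}. The main obstacle is that \Caching does not satisfy \POne and \WOne directly, but \cref{lem:memory_compressed} provides a coupling with a filling process \FoldedCaching which does (and whose unfolding equals \Caching). Moreover, \cref{thm:filling_key} and \cref{lem:slowed-down} only control the gap at rounds of the filling process or at all but a polynomially small fraction of atomic allocations, whereas the statement requires a bound at the specific atomic allocation $m$.

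I would first dispose of the case $m \leq n^{7}$ by applying \cref{lem:slowed-down} to \FoldedCaching with a constant $c$ large enough that $n^{-c} \cdot m \log m < 1$: this guarantees that every atomic allocation in $[m]$ has gap at most $C \log n$ with probability at least $1 - n^{-2}$. For the remaining case $m > n^{7}$, let $t_0 := m - n^{7}$. By \cref{lem:caching_base_case}, the event $\mathcal{E}_0 := \{\max_{i \in [n]} y_i^{t_0} \leq 14 n \log n\}$ holds with probability at least $1 - n^{-9}$, and on $\mathcal{E}_0$ we have $\Delta^{t_0} \leq 14 n^{2} \log n$ deterministically. Fix the constant $\alpha$ used in the proof of \cref{thm:filling_key}, so that on $\mathcal{E}_0$ we have $\Phi^{t_0} \leq n \cdot \exp(14 \alpha n \log n)$. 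Then I couple \Caching with \FoldedCaching so that the load vectors agree at every round of \FoldedCaching, and let $\tau_0$ denote the unique round with $W^{\tau_0} = t_0$.

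Iterating the one-batch drift underlying the proof of \cref{thm:filling_key} (\cref{lem:filling_supermartingale} combined with \cref{cor:enough_good_quantiles}) yields, for each integer $k \geq 1$,
\[
\Ex{\Phi^{\tau_0 + k n} \,\mid\, \mathfrak{F}^{\tau_0}} \leq \Phi^{\tau_0} \cdot \exp(-c_4 \alpha k) + \Oh(n).
\]
Picking $k_0 := \lceil 16 n \log n / c_4 \rceil = \Oh(n \log n)$ drives the first term down to $\Oh(1)$, so $\Ex{\Phi^{\tau_0 + k_0 n} \mid \mathcal{E}_0} = \Oh(n)$; iterating the first inequality of \cref{lem:filling_good_quantile} then propagates this to every round $\tau \geq \tau_0 + k_0 n$. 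Meanwhile \cref{clm:bound_wone} gives $\Delta^{\tau_0 + j} \leq \Delta^{\tau_0} + 4 j$, so the total number of atomic allocations in the first $k_0 n$ rounds of \FoldedCaching is at most $\Oh(k_0 n \cdot \Delta^{\tau_0} + k_0^{2} n^{2}) = \Oh(n^{4} \log^{2} n)$, which for large $n$ is much smaller than $n^{7}$. Consequently the unique \FoldedCaching round $\tau^{\star}$ whose atomic allocation window contains $m$ satisfies $\tau^{\star} \in [\tau_0 + k_0 n,\, \tau_0 + n^{7}]$.

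Finally, Markov's inequality gives $\Pro{\Phi^{\tau} > c_{6} n^{10} \mid \mathcal{E}_0} \leq n^{-9}$ for each $\tau$ in this range, and a union bound over the at most $n^{7}$ candidate rounds bounds the failure probability by $n^{-2}$. On the complementary event, $\Phi^{\tau^{\star}} \leq c_{6} n^{10}$ forces $\Gap_{\FoldedCaching}(\tau^{\star}) = \Oh(\log n)$, and the unfolding correspondence (exactly as used in the proof of \cref{lem:slowed-down}) transfers this bound to $\Gap_{\Caching}(m) = \Oh(\log n)$ up to an additive constant, completing the proof. The principal technical obstacle is keeping the bookkeeping tight enough—via \cref{clm:bound_wone} and the polynomial choice of $k_0$—that the final union bound ranges over only polynomially many \FoldedCaching rounds.
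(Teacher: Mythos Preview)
Your proposal is correct and follows essentially the same two-phase strategy as the paper: dispose of $m \leq n^{7}$ via \cref{lem:slowed-down}, and for larger $m$ start from $t_0 = m - n^{7}$, invoke \cref{lem:caching_base_case} to get a polynomial gap, then run the \FoldedCaching drift analysis until $\Ex{\Phi}$ becomes $\Oh(n)$, bound via \cref{clm:bound_wone} how many atomic allocations this burn-in consumes, and finish with Markov plus a union bound over the $\leq n^{7}$ candidate \FoldedCaching rounds. The only cosmetic differences are that the paper takes a longer burn-in of $n^{3}\log^{2} n$ rounds (versus your tighter $\Oh(n^{2}\log n)$) and, rather than defining $\tau_0$ as the \FoldedCaching round with $W^{\tau_0}=t_0$ (which need not exist exactly, since $t_0$ may fall mid-round), it simply starts \FoldedCaching afresh from the \Caching configuration $x^{t_0}$—a cleaner formulation you may wish to adopt.
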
}
	\addtocounter{lem}{-1}
\begin{proof}
First note that if $m < n^7$, then the statement of the lemma is also covered by \cref{lem:slowed-down} with $c=8$. Thus we will assume $m \geq n^7$ in the following. 

In order to analyze the state of \Caching at round $m$, we will analyze \Caching in two different phases. The first phase lasts from round $1$ to $m-n^7$, and in this phase we consider the (original) \Caching process. We use the above analysis to prove a (polynomial) bound on the potential $\Psi$ for round $m-n^7$, which implies that $\Delta^t \leq 14 n^2 \log n$ and an initial bound on $\Phi^{m - n^7}$. The second phase starts with the load configuration at round $m-n^7$, and then considers $n^7$ rounds of \FoldedCaching (as a process which satisfies \POne and \WOne). The folded process allocates at least one ball (and possibly more) at each round, and hence the allocation of the $m$-th ball comes before round $m$ in the second phase. Also, as we will prove below, the allocation of the $m$-th ball does not happen before round $m-\frac{1}{2} n^{3.5}$, which gives enough time for the exponential potential $\Phi$ (used in the analysis in \cref{sec:analysis_filling}) to decrease.

Define $t_0:=m-n^7$, which is the last round of phase one when we switch from \Caching to \FoldedCaching.
Following \cref{lem:caching_base_case}, we
define an event
\[
 \mathcal{G}^{t_0} := \left\{ \max_{i \in [n]} y_i^{t_0} \leq 14 \cdot n \log n
 \right\}
\]
and thus at at round $t_0$, we have
\begin{align}
\Pro{ \mathcal{G}^{t_0} } \geq 1-n^{-9}. \label{eq:lemma_58}
\end{align}
If $\mathcal{G}^{t_0}$ occurs, then the following two bounds on potential functions hold:
\begin{align}
\Phi^{t_0} \leq n \cdot \exp(\alpha \cdot 14 \cdot n \log n), \label{eq:phi_t_0}
\end{align}
and
\begin{align}
\Delta^{t_0} \leq 14 \cdot n^2 \log n. \label{eq:Delta_t_0}
\end{align}

Fix now any filtration $\mathcal{F}^{t_0}$, including the load vector $x^{t_0}$, such that $\mathcal{G}^{t_0}$ holds.
We will now look at the process after round $t_0$, as mentioned before, we will consider \FoldedCaching process (i.e., multiple balls may be allocated in each round).
For convenience, the time-indices larger than $t_0$ will refer to the rounds of \FoldedCaching the instead of atomic allocations. Thus, for any integer $s \geq 0$, $x^{t_0+s}$ refers to the resulting load vector after $s$ allocations of \FoldedCaching have been completed, starting with $x^{t_0}$.

Next consider $s \geq n^3 \log^2 n > \log(n \cdot \exp(14 \cdot n \log n))$ rounds after $t_0$ and write $s = \kappa_1 \cdot n + \kappa_2$ for $\kappa_1, \kappa_2 \in \mathbb{N}$ and $0 \leq \kappa_2 < n$. Then,
\begin{align*}
\Ex{ \Phi^{t_0 + n \cdot \kappa_1} \,|\, \mathfrak{F}^{t_0} },\mathcal{G}^{t_0} & \stackrel{(a)}{\leq}
 \Phi^{t_0} \cdot e^{- c_4 \alpha \cdot \kappa_1} +  \sum_{i=0}^{\kappa_1-1} c_3 e^{-c_4 \alpha} \cdot e^{- c_4 \alpha \cdot i} \\ 
 & \stackrel{(b)}{\leq} 
 n \cdot e^{14 \cdot n \log n} \cdot e^{- c_4 \alpha \cdot \lfloor n^2 \log^2 n \rfloor} +  \sum_{i=0}^{\kappa_1-1} c_3 e^{-c_4 \alpha} \cdot e^{- c_4 \alpha \cdot i } \leq c_5 n,
\end{align*}
for some constant $c_5 > 0$, where $(a)$ used \eqref{eq:decrease_over_k_batches} and $(b)$ used \eqref{eq:phi_t_0}.

Similarly to \eqref{eq:potential_small},
\begin{align*}
 \Ex{ \Phi^{t_0 + \kappa_1 \cdot n + \kappa_2} \, \mid \, \mathfrak{F}^{t_0},\mathcal{G}^{t_0} } &\leq \Ex{\Phi^{\kappa_1 \cdot n} \, \mid \, \mathfrak{F}^{t_0}, \mathcal{G}^{t_0}} \cdot \left(1 + \frac{c_1 \alpha^2}{n} \right)^{\kappa_2} 
 + n \cdot \left(1 + \frac{c_1 \alpha^2}{n} \right)^{n} \cdot e^{3\alpha} \\
 &\leq c_5 \cdot n \cdot \exp( c_1 \alpha^2) + n \cdot \exp( c_1 \alpha^2) \cdot e^{3 \alpha} \leq c_6 \cdot n,
 \end{align*}
for some constant $c_6 > 0$. Note that in any round $t$, \FoldedCaching allocates at most $\Delta^{t}+4$ balls. Hence during the $s$ rounds after $t_0$ of \FoldedCaching, the total number of balls allocated is at most 
\begin{align*}
(\Delta^{t_0}+4) + (\Delta^{t_0+1}+4) + \ldots + (\Delta^{t_0+s}+4)
& \leq (\Delta^{t_0}+4) + (\Delta^{t_0} + 8) + \ldots + (\Delta^{t_0} + 4\cdot (s+1)) \\
 & \leq \Delta^{t_0} \cdot (s +1) + 2 (s+1)(s+2),
\end{align*}
where we used that for any round $t \geq t_0$, $\Delta^{t+1} \leq \Delta^{t} + 4$ (\cref{clm:bound_wone}).

So, for any $s < \frac{1}{2} n^{3.5}$, the total number of balls allocated in the $s$ rounds after $t_0$ is at most 
\[
\Delta^{t_0} \cdot (s+1) + 2 (s+1)(s+2) < (14 \cdot n^2 \log n) \cdot \left(\frac{1}{2} n^{3.5}+1\right) + \frac{1}{2} n^{7}\left(1+\mathcal{O}\left(\frac{1}{n}\right)\right) < n^7,
\]
where we used \eqref{eq:Delta_t_0} in the first inequality. Since the first $t_0$ rounds allocate $m-n^7$ balls in total, this means that we do not allocate the $m$-th ball before the $s$-th round after $t_0$. Also since \FoldedCaching allocates at each round at least one ball, we allocate the $m$-th ball at some round $r$ in $\mathcal{R} := [\frac{1}{2} n^{3.5}, n^7]$ after $t_0$.

Now applying Markov's inequality we have, for any fixed $r \in \mathcal{R}$,
\[
\Pro{\Phi^r \leq c_6 n^{11} \, \mid \, \mathcal{G}^{t_0} } \geq 1 - n^{-{10}}.
\]
By taking the union bound over the rounds $\mathcal{R}$ (with $|R| < n^7$),
\[
\Pro{\bigcap_{s \in \mathcal{R}} \left\{ \Phi^{t_0 + s} \leq c_6 n^{11} \right\} \mid \,  \mathcal{G}^{t_0} } \geq 1 - n^7 \cdot n^{-{10}} \geq 1 - n^{-3}.
\]
Note that $\Phi^{t_0 + s} \leq c_6 n^{11}$ implies that for some constant $C > 0$, $x_{\max}^{t_0 + s} \leq C \log n$. Using this,
\[
\Pro{\bigcap_{s \in \mathcal{R}}  \left \{ x_{\max}^{t_0 + s} \leq C \log n \right\} \mid \,  \mathcal{G}^{t_0} } \geq 1 - n^{-3}.
\]
Finally, since if $\mathcal{G}^{t_0}$ holds, the round when we throw the $m$-th ball is in $\mathcal{R}$, by the relationship between the \FoldedCaching and \Caching in \cref{lem:memory_compressed} for the last $n^7$ rounds, we conclude from the above and \eqref{eq:lemma_58}, 
\begin{align*}
\Pro{\Gap(m) \leq C \log n } &=
\Pro{\Gap(m) \leq C \log n \, \mid \,  \mathcal{G}^{t_0} } \cdot \Pro{ \mathcal{G}^{t_0} } \\
&\geq (1-n^{-3}) \cdot (1-n^{-9})
\geq 1 - n^{-2}.\qedhere
\end{align*}
\end{proof}

\section{Overview of the Analysis of Non-Filling Processes}\label{sec:analysis_thinning}

The analysis for non-filling processes re-uses some ideas from \cref{sec:analysis_filling}, but is substantially more involved (a diagram summarizing all key steps in the analysis is given in \cref{fig:proof_outline_tikz}). The reason for this complexity comes from the inability to fill ``big holes'', i.e., even if a bin with extremely small load is sampled, only a constant number (e.g.\ two for \Twinning) of balls will be allocated (due to \WThree). This is not only a challenge for the analysis, but it also leads to a completely different behavior of the process, in comparison to filling processes studied in \cref{sec:analysis_filling}.

All of the following analysis in this section (and  \cref{sec:non_filling_analysis_tools,sec:non_filling_potential_functions,sec:non_filling_analysis}) will be for processes which satisfy \PTwo and \WTwo. If the stronger condition (which is the precondition of \cref{thm:main_technical}) that additionally at least one of \PThree or \WThree hold, then we state this explicitly by referring to a $\PThree \cap \WTwo$-process, or, $\PTwo \cap \WThree$-process.

Our analysis will study the interplay between the following potential functions. %
\begin{itemize}\itemsep0pt
    \item The \emph{absolute value potential} (this is also known as the \textit{number of holes} in~\cite{BCSV06}): 
    \[
    \Delta^{t}:=\sum_{i=1}^{n} |y_i^t|.
    \]
    In \cref{lem:good_quantile}, we prove that when $\Delta^t = \Oh(n)$, then \Whp, among next $\Theta(n)$ rounds a constant fraction of them are rounds $t$ with $\delta^t \in (\epsilon, 1 - \epsilon)$, for some constant $\epsilon > 0$.
    
    \item The \emph{quadratic potential}: 
    \[
    \Upsilon^t:=\sum_{i=1}^{n} \left( y_i^t \right)^2.
    \]
    In \cref{lem:quadratic_absolute_relation_for_w_plus_w_minus}, we prove that for processes satisfying \WThree or \PThree, if for some round $t$, $\Delta^t = \Omega(n)$ holds, then $\Upsilon^t$ decreases in expectation. 

    \item The exponential potential function for a constant $\alpha > 0$ to be specified later: 
    \[
    \Lambda^t:= \sum_{i=1}^{n} \exp\left( \alpha \cdot |y_i^t | \right) =
    \sum_{i \in B_+^t} \exp\left( \alpha y_i^{t} \right) + \sum_{i \in B_{-}^t} \exp\left( -\alpha (-y_i^{t}) \right).
    \]
    In \cref{cor:change_for_large_lambda}, we prove that for $\Lambda^t = \Omega(n)$, if $\delta^t \in (\epsilon, 1 - \epsilon)$, then $\Lambda^t$ decreases in expectation, otherwise it increases by a smaller factor (\cref{cor:change_for_large_lambda}). Note that unlike the analysis of the $(1+\beta)$-process in~\cite{PTW15} and unlike one batch of $n$ balls in filling processes, the potential can increase in expectation over a single round, even when it is large (\cref{clm:bad_configuration_lambda}).
    
    \item A ``weaker'' instance of the $\Lambda$ potential function where $\tilde{\alpha} = \Theta(1/n)$: 
    \[
    V^t := \sum_{i=1}^{n} \exp\left( \tilde{\alpha} \cdot |y_i^t | \right) = \sum_{i \in B_+^t} \exp\left( \tilde{\alpha} y_i^t  \right) + \sum_{i \in B_{-}^t} \exp\left( -\tilde{\alpha} (-y_i^t) \right).
    \]
    For this potential, we can directly establish that $\ex{V^t} = \poly(n)$ at an arbitrary round $t$, without considering the quantile $\delta^t$. Then, using Markov's inequality we establish \Whp~that $\Gap(t) = \Oh(n \log n)$. Note that this is similar to the case $\beta = \Theta(1/n)$ in the $(1+\beta)$ process~\cite{PTW15}. We use this as the base case of the heavily-loaded case.
\end{itemize}

Having defined the potential functions, let us now describe in more detail why the exponential potential function $\Lambda$ may increase in expectation in some of the rounds.

While for filling processes, from {\em any} load configuration, the exponential potential function goes down every $\Oh(n)$ rounds (unless it is already small), this is not true for, e.g., the \Thinning process. There exist bad configurations where $(i)$ $\Gap(t)$ is large and $(ii)$ for all $s \in [t,t+\omega(n)]$ rounds $\delta^{s}=1-o(1)$ (or $\delta^{s}=o(1)$), where $\delta^s$ is the quantile of the mean load at round $t$. Note that if $\delta^s$ is too close to $1$ (or $0$), then the bias away from any \emph{fixed} overloaded (or towards any \emph{fixed} underloaded) bin is too small, and the process allocates balls almost uniformly, similarly to \OneChoice. As a result, the exponential potential may increase for several rounds, until $\delta^s$ is bounded away from $0$ and $1$ (see \cref{fig:experiments} for an illustration showing experimental results and \cref{clm:bad_configuration_lambda} for a concrete example).

So, instead we start with the weaker exponential potential function $V$ where $\tilde{\alpha} = \Theta(1/n)$. Because non-filling processes have a small bias to place away from the maximum load, we are able to prove in \cref{sec:v_potential} that when $V$ is sufficiently large it decreases in expectation. This allows us to prove that $\ex{V^t} = \poly(n)$ at an arbitrary round $t$ and infer, using Markov's inequality, that the gap is \Whp $\Gap(t) = \Oh(n \log n)$ (\cref{lem:initial_gap_nlogn}). Then, our next goal is to show that starting with $\Gap(t_0) = \Oh(n \log n)$ we reach $s \in [t_0, t_0 + \Theta(n^3 \log^4 n)]$ where $\Lambda^s = \Oh(n)$ (the \textit{recovery phase} -- \cref{sec:recovery}) and finally show that the gap remains $\Oh(\log n)$ for steps $[s, m]$ (the \textit{stabilization phase} -- \cref{sec:stabilization}).

\begin{figure}[h]
    \centering
    \includegraphics[scale=0.7]{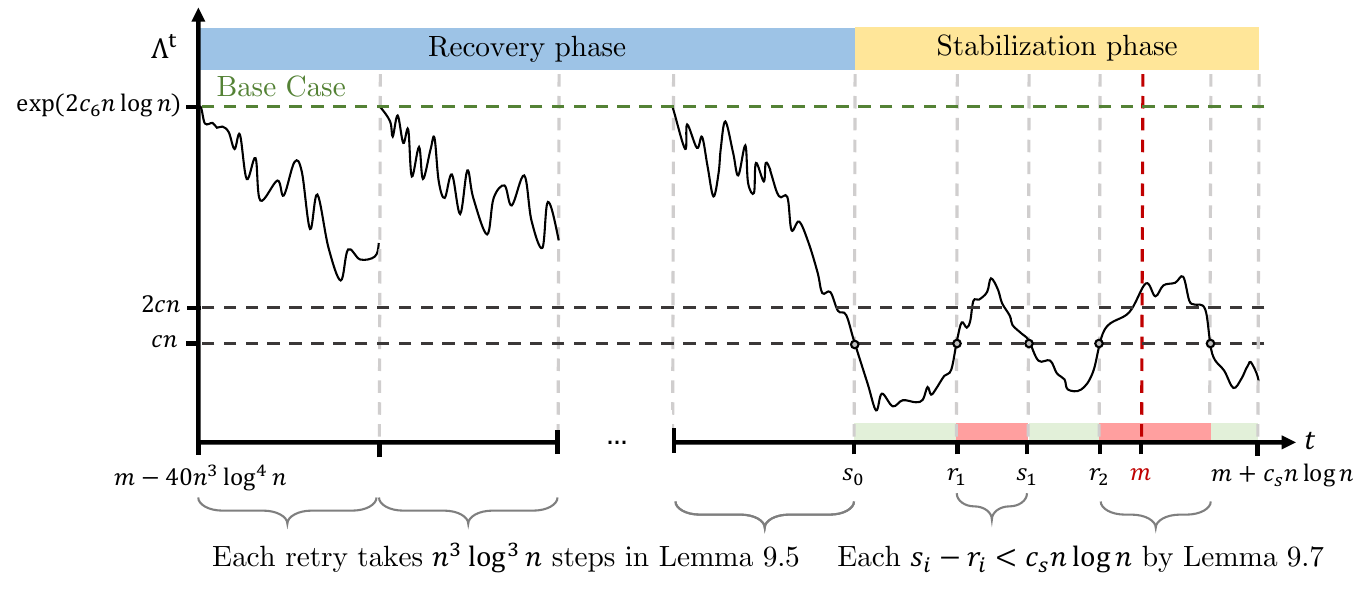}
    \caption{Overview of the recovery phase and stabilization phase. In the recovery phase starting with $V^t = \poly(n)$ (and so $\Gap(t) = \Oh(n \log n)$), we will perform several retries, each consisting of $n^3 \log^3 n$ consecutive steps until we find a round $s_0$ with $\Lambda^{s_0} < c n$. Each retry is successful with constant probability $>0$, such after at most $\Oh(\log n)$ retries we will have \Whp one success. We then switch to the stabilization phase, where in \cref{lem:stabilization} we prove that every $\C{stab_time} n \log n$ rounds $\Lambda^{s} < cn$ is satisfied, which allows us to infer that $\Gap(m)=\Oh(\log n)$.}
    \label{fig:recovery_stabilisation}
\end{figure}

In both the recovery and the stabilization phase, we will study the evolution of $\delta^t$ and prove that from any load configuration, the process eventually reaches a value $\delta^t$ in $(\epsilon,1-\epsilon)$ for some constant $\epsilon > 0$, sufficiently often. The next lemma provides a sufficient condition for the process at round $t_1$ to reach a phase $[t_1,t_1+\Theta(n)$] where $\delta^t \in (\epsilon,1-\epsilon)$ occurs often: 

\def\goodquantile{
Consider any allocation process satisfying \PTwo and \WTwo. Then for any integer constant $\C{small_delta} \geq 1$, there exists some $\epsilon:=\epsilon( \C{small_delta}) > 0$ such that for any integers $t_0\geq 0$ and $t_3:=t_0+\left\lceil \frac{2\C{small_delta} n }{w_{+}} \right\rceil + \left\lceil \frac{n}{w_{+}}  \right\rceil + \left\lceil \frac{n}{10w_{-}}\right\rceil$ we have  
\[
 \Pro{ \left| \left\{ t \in [t_0, t_3] \colon 
 \delta^{t} \in (\epsilon,1-\epsilon) \right\} \right| \geq \eps \cdot n ~ \Big| ~ \mathfrak{F}^{t_0}, \Delta^{t_0} \leq C \cdot n } \geq 1-e^{-\eps \cdot n}.
\]  }

\begin{lem}[Mean Quantile Stabilization]\label{lem:good_quantile}
\goodquantile
\end{lem}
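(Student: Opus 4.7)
I would combine a deterministic ``mean-shift'' observation with a probabilistic concentration step, driven by three inputs: $(i)$ since each round adds at least $w_+$ balls, $\mu^t:=W^t/n$ is monotonically non-decreasing and $\mu^{t_3}-\mu^{t_0}\ge 2\C{small_delta}+1+w_+/(10 w_-)$, so in particular the mean grows by at least $2\C{small_delta}$ already in the first $\lceil 2\C{small_delta} n/w_+\rceil$ rounds; $(ii)$ Markov applied to $\Delta^{t_0}\le \C{small_delta} n$ gives at least $n/2$ \emph{moderate} bins $i$ with $|y_i^{t_0}|\le 2\C{small_delta}$; $(iii)$ condition \PTwo gives $p_+^t\le 1/n\le p_-^t$, yielding $P_-^t\ge 1-\delta^t$ and bounding each overloaded bin's per-round sampling probability by $1/n$.

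First, I would propagate the spread bound forward. The elementary per-round estimate $\Delta^{t+1}\le \Delta^t+2w_-$ (from the triangle inequality applied to the changes $+w^s(1-1/n)$ on the sampled bin and $-w^s/n$ elsewhere), together with $t_3-t_0=\Theta(n)$, gives $\Delta^t=\Oh(n)$ throughout $[t_0,t_3]$, so the moderate set retains size $\Omega(n)$ at every round. I would then set up a two-sided drift for $\delta^t$. Whenever $\delta^t\ge 1-\epsilon$, the set $B_+^t$ contains $\Omega(n)$ moderate bins; each such bin is unsampled throughout the first sub-window with probability $\ge(1-1/n)^{\lceil 2\C{small_delta} n/w_+\rceil}=\Omega(1)$, so in expectation $\Omega(n)$ of them drop below the rising mean and $\delta^t$ is pushed downward. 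Whenever $\delta^t\le \epsilon$, the set $B_-^t$ contains $\Omega(n)$ moderate bins (those with $y_i\in(-w_-,0)$, obtained after subtracting at most $\C{small_delta} n/(2w_-)$ ``deeply'' underloaded bins by Markov); condition \PTwo makes the probability of sampling one of them in a given round $\Omega(1)$, and receiving $w_-$ balls flips it to overloaded, giving a positive drift of $\Omega(1)$ per round on $|B_+^t|$. An Azuma / bounded-differences argument converts both first-moment estimates into high-probability statements with failure probability $e^{-\Omega(n)}$.

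Combining the two drifts, $\delta^t$ cannot stay outside $(\epsilon,1-\epsilon)$ for more than $\Oh(n)$ consecutive rounds. Crucially, within any ``phase'' between two consecutive integer crossings of $\mu^t$, $\delta^t$ changes only by $\pm 1/n$ per round (from single sampling events), and each phase has length at least $n/w_-$, so any traversal of the middle range contributes $\Omega(n)$ rounds. The length $\lceil n/w_+\rceil+\lceil n/(10 w_-)\rceil$ of the trailing portion of the window is enough to accommodate a full such traversal, yielding the target $\epsilon n$ rounds with $\delta^t\in(\epsilon,1-\epsilon)$.

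\textbf{Main obstacle.} The principal technical difficulty is that $\delta^t$ can change \emph{discontinuously} at an integer crossing of $\mu^t$, since all bins currently at load $\lceil \mu^t\rceil$ flip simultaneously from overloaded to underloaded; a naive bound would therefore spend only $\Oh(1)$ rounds in the middle per crossing. I would handle this by bounding the total number of integer crossings over the window by $\mu^{t_3}-\mu^{t_0}=\Oh(\C{small_delta})$, and arguing that after each jump the continuous drift within the next phase forces $\delta^t$ to dwell in $(\epsilon,1-\epsilon)$ for $\Omega(n)$ rounds before the next crossing can happen.
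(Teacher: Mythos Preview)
Your proposal has the right raw ingredients (the moderate set $\{i:|y_i^{t_0}|<2C\}$ of size $\ge n/2$, the bound $p_+^t\le 1/n$, and bounded-differences concentration), but two things go wrong. First, your ``main obstacle'' is a non-obstacle: the one-sided Lipschitz property $\delta^{t+1}\le \delta^t+1/n$ holds \emph{unconditionally} at every round, because at most one bin receives balls and hence at most one bin can newly enter $B_+$. Downward jumps of $\delta^t$ at integer crossings of $\mu^t$ are therefore irrelevant; one only needs to exhibit rounds $s_1<s_2$ with $\delta^{s_1}\le 1-\epsilon$ and $\delta^{s_2}\ge\epsilon$, and then an elementary calculus fact (the paper's \cref{lem:smoothness}) forces $\delta^t\in(\epsilon/2,1-\epsilon/2)$ for at least $\min\{\epsilon n/2,\,s_2-s_1\}$ rounds in between. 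Your phase-counting scheme is unnecessary. Second, your upward-drift step when $\delta^t\le\epsilon$ has a genuine gap: after propagating $\Delta^t\le \Delta^{t_0}+2w_-(t-t_0)$ over $\Theta(Cn/w_+)$ rounds you only get $\Delta^t\le C'n$ with $C'=\Theta(Cw_-/w_+)$, and Markov then excludes merely $C'n/(2w_-)=\Theta(Cn/w_+)$ bins with $y_i\le -w_-$; this can exceed $(1-\epsilon)n$ already for $C=1$, leaving no bins in $(-w_-,0)$ at all.

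The paper produces $s_1$ and $s_2$ directly, bypassing drift. For $s_1$: among the $\ge n/4$ moderate overloaded bins at $t_0$, pigeonhole over the $\le 2C$ integer load levels in $[W^{t_0}/n,\,W^{t_0}/n+2C)$ yields $\ge n/(8C)$ bins at \emph{exactly the same} load; any such bin never sampled during $[t_0,t_1]$ (probability $\ge 4^{-2C/w_+}$) becomes underloaded, and crucially all of them flip at the \emph{same} round, so concentration on the number of unsampled ones immediately gives a single round $s_1\in[t_0,t_1]$ with $\Omega(n)$ underloaded bins. For $s_2$: the paper tracks the original moderate bins, which by $t_2$ still satisfy $y_i^{t_2}\ge -\Theta(C)w_-$; each becomes overloaded if sampled $\Theta(C)$ times during $[t_2,t_3]$, an event of constant (tiny) probability since $t_3-t_2=\lceil n/(10w_-)\rceil$. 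Bounded differences gives $\Omega(n)$ such bins, and because $\mu$ crosses at most one integer in so short a window, at either the crossing time or at $t_3$ at least half of them are simultaneously in $B_+$, yielding $s_2$.
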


Thus, whenever the absolute value potential $\Delta^{t_1}$ is at most linear, the quantile $\delta^t$ is `good' (bounded away from $0$ or $1$) in a constant fraction of the next $\Theta(n)$ rounds.
The next step in the proof is to establish that the sufficient condition on $\Delta^{t_1}$ will be satisfied. To this end, we use a relation between the absolute value potential $\Delta^{t}$ and the quadratic potential $\Upsilon^t$, showing that $\Upsilon^{t}$ drops in expectation as long as $\Delta^t=\Omega(n)$. Thus, $\Delta^t$ must eventually become linear.
\def\quadraticabsoluterelation{Consider any allocation process satisfying \WTwo and \PTwo. Then for any $t \geq 0$, the quadratic potential satisfies
\[
\Ex{\Upsilon^{t+1} \mid \mathfrak{F}^t} \leq \Upsilon^t  - (p_-^t \cdot w_- - p_+^t \cdot w_+) \cdot \Delta^t + 4 \cdot (w_-)^2.
\]
Hence for any $\PThree \cap \WTwo$-process or $\PTwo \cap \WThree$-process, this implies by \cref{lem:additive_drift} that there exist  constants $\C{quad_delta_drop}, \C{quad_const_add} > 0$ such that for any $t \geq 0$,
\[
\Ex{\Upsilon^{t+1} \mid \mathfrak{F}^t} \leq \Upsilon^t  - \frac{\C{quad_delta_drop}}{n} \cdot \Delta^t + \C{quad_const_add}.
\]}

\begin{lem}\label{lem:quadratic_absolute_relation_for_w_plus_w_minus}
\quadraticabsoluterelation
\end{lem}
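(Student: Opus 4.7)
\medskip

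\noindent\textbf{Proof Plan.} The natural approach is to compute the one-step change $\Upsilon^{t+1}-\Upsilon^t$ directly. First I would condition on the sampled bin $i$ at round $t$. Writing $w_i := w_+$ if $i \in B_+^t$ and $w_i := w_-$ if $i \in B_-^t$, the update rules give $y_i^{t+1} = y_i^t + w_i(1-1/n)$ for the chosen bin and $y_j^{t+1} = y_j^t - w_i/n$ for every other bin $j \neq i$. Squaring and summing, and crucially using that $\sum_{j=1}^n y_j^t = 0$ (so $\sum_{j \neq i} y_j^t = -y_i^t$), the cross terms involving $y_i^t w_i/n$ cancel cleanly, leaving the clean identity
\[
\Upsilon^{t+1} - \Upsilon^{t} \;=\; 2\, y_i^t\, w_i \;+\; \tfrac{n-1}{n}\, w_i^2,
\]
conditional on bin $i$ being sampled.

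Next I would take expectation over $i \sim p^t$ and split according to $B_+^t$ and $B_-^t$:
\[
\Ex{\Upsilon^{t+1}-\Upsilon^t \mid \mathfrak{F}^t} \;=\; 2 w_+ \!\!\sum_{i \in B_+^t}\!\! p_i^t\, y_i^t \;+\; 2 w_- \!\!\sum_{i \in B_-^t}\!\! p_i^t\, y_i^t \;+\; \tfrac{n-1}{n}\!\sum_{i=1}^n p_i^t\, w_i^2.
\]
The last term is bounded by $(w_-)^2$ since $w_+ \leq w_-$ and $\sum_i p_i^t = 1$; this comfortably absorbs into the claimed $4(w_-)^2$ slack. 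For the first sum, I would use that for $i \in B_+^t$ we have $y_i^t \geq 0$ and $p_i^t \leq p_+^t$ by definition of $p_+^t$, so $p_i^t y_i^t \leq p_+^t y_i^t$. For the second sum, $i \in B_-^t$ means $y_i^t < 0$ and $p_i^t \geq p_-^t$, and multiplying the inequality $p_i^t \geq p_-^t$ by the negative number $y_i^t$ reverses it, yielding $p_i^t y_i^t \leq p_-^t y_i^t$. Combining this with the standard identity $\sum_{i \in B_+^t} y_i^t = -\sum_{i \in B_-^t} y_i^t = \tfrac{\Delta^t}{2}$, which follows from $\sum_i y_i^t=0$, gives
\[
\Ex{\Upsilon^{t+1}-\Upsilon^t \mid \mathfrak{F}^t} \;\leq\; p_+^t w_+ \,\Delta^t \;-\; p_-^t w_- \,\Delta^t \;+\; (w_-)^2,
\]
which is exactly the first claimed inequality (with room to spare in the constant). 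The second statement is then immediate: \cref{lem:additive_drift} gives $p_-^t w_- - p_+^t w_+ \geq \C{quad_delta_drop}/n$ for both $\PThree \cap \WTwo$ and $\PTwo \cap \WThree$ processes, so setting $\C{quad_const_add} := 4(w_-)^2$ finishes the lemma.

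There is no genuine obstacle here: the whole argument is a short variance-style drift computation, and the work is almost entirely bookkeeping. The only subtlety is the sign flip on the underloaded side (the inequality $p_i^t \geq p_-^t$ reverses when multiplied by $y_i^t < 0$), and the key structural ingredient is the zero-sum property $\sum_i y_i^t = 0$, which both kills the $\Oh(1/n)$ cross terms in the expansion and converts $\sum_{i \in B_+^t} y_i^t$ and $-\sum_{i \in B_-^t} y_i^t$ into $\Delta^t/2$.
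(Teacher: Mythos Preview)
Your proof is correct and follows essentially the same approach as the paper: both compute the one-step change of $\Upsilon$, bound the linear terms via $p_i^t \le p_+^t$ on $B_+^t$ and $p_i^t \ge p_-^t$ on $B_-^t$, and convert the sums using $\sum_{i\in B_+^t} y_i^t = -\sum_{i\in B_-^t} y_i^t = \Delta^t/2$. The only difference is organizational: you condition on the sampled bin and use $\sum_j y_j^t=0$ immediately to obtain the clean identity $\Upsilon^{t+1}-\Upsilon^t = 2 y_i^t w_i + \tfrac{n-1}{n} w_i^2$, whereas the paper decomposes per target bin with a three-way case split (ball lands in $i$, in $B_+^t\setminus\{i\}$, in $B_-^t$) and carries $P_+^t,P_-^t$ terms that only cancel after summing over all bins. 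Your route is tidier and yields the constant $(w_-)^2$ rather than the paper's $4(w_-)^2$.
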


Combining the above two lemmas, we prove that for a constant fraction rounds the mean quantile $\delta^t$ is good. In particular, for the recovery phase, we show this guarantee holds with constant probability for an interval of length $\Theta(n \log^3 n)$ (\cref{lem:average_expected_delta_is_small}) and for the stabilization phase, we prove that it holds \Whp\ for an interval of length $\Theta(n \log n)$ given that we start with $\Lambda^s = \Oh(n)$  (\cref{lem:stabilisation_many_good_quantiles_whp}). %
In these good rounds, similarly to \cref{lem:filling_good_quantile}, we prove that the exponential potential $\Lambda^t$ decreases by $(1-\C{good_quantile_mult}\alpha / n)$ (see \cref{lem:good_quantile_good_decrease}). In other rounds, the potential $\Lambda^t$ increases by $(1+c_5 \alpha^2 /n)$ (see \cref{lem:bad_quantile_increase_bound}). Combining these, we obtain that the exponential potential function eventually becomes $\Oh(n)$, which implies a logarithmic gap. For the recovery phase, we use a ``retry'' argument to amplify the probability from constant to $1 - n^{-\Omega(1)}$. We refer to \cref{fig:recovery_stabilisation} for a high-level overview of recovery and stabilization, as well as \cref{fig:proof_outline_tikz} for a diagram summarizing most of the crucial lemmas used in the analysis and outlining their relationship.

 \newcommand{\wcp}{w.c.p.\xspace}

\begin{figure}[H]
\begin{center}
\makebox[\textwidth][c]{
\scalebox{0.91}{
\begin{tikzpicture}[
  txt/.style={anchor=west,inner sep=0pt},
  Dot/.style={circle,fill,inner sep=1.25pt},
  implies/.style={-latex,dashed}]

\def\betaO{0}
\def\betaA{1}
\def\betaB{5}
\def\betaC{12.5}
\def\End{14.5}
\def\yO{-9.5}

\draw[dashed] (\betaA, 1.0) -- (\betaA, \yO);
\node[anchor=north] at (\betaA, \yO) {$m - 40 \cdot n^3 \log^4 n$};
\node[anchor=north] at (\betaB, \yO) {};
\draw[dashed] (\betaC, -8) -- (\betaC, \yO);
\node[anchor=north] at (\betaC, \yO) {$\phantom{\log n}m\phantom{\log n}$};
\draw[dashed] (\End, 1.0) -- (\End, \yO);
\node[anchor=north] at (\End, \yO) {$m + c_s n \log n$};
\draw[dashed] (\betaC, -8) -- (\betaC, \yO);
\node[anchor=north] at (10.5, \yO) {$m - c_s n \log n$};
\draw[dashed] (10.5, -8) -- (10.5, \yO);
\draw[|-|, thick] (\betaA, 1.0) -- (\betaC, 1.0) ;

\node (step1) at (\betaA,1.5) {};
\node[txt] at (step1) {$\Ex{V^t} = \Oh(n^4)$ for any $t \in [m -40 \cdot n^3 \log^4 n, m]$ (\cref{clm:v_expected_value})};
\draw[|-|, thick] (\betaA, 1.0) -- (\betaC, 1.0) ;
\draw[-, dashed] (\betaA, 1.0) -- (\End, 1.0) ;

\node (step2) at (\betaA,0.5) {};
\node[txt] at (step2) {$\Lambda^t \leq \exp(2 c_6 n \log n)$ \Whp for any $t \in [m -40 \cdot n^3 \log^4 n, m]$ (\cref{lem:initial_gap_nlogn})};
\draw[|-|, thick] (\betaA, 0) -- (\betaC, 0) ;
\draw[-, dashed] (\betaA, 0) -- (\End, 0) ;

\node (step3) at (\betaA + 0.5,-0.5) {};
\node[txt] at (step3) {For any interval $[s, s + n^3 \log^3 n]$ for $s \in [m - 40 \cdot n^3 \log^4 n, m]$ with $\Lambda^s \leq \exp(2 c_6 n \log n)$};
\node[txt] at (\betaA + 1.5,-1.0) {$\tilde{G}_{s}^{s + n^3 \log^3 n} \geq \tilde{r} \cdot n^3 \log^3 n$ \wcp, i.e., const fraction of rounds $t$ with $\Delta^t \leq Cn$};
\node (step3H1) at (\betaA + 1.2,-1.0) {};
\node[txt] at (\betaA + 1.5,-1.5) {$G_{s}^{s + n^3 \log^3 n} \geq r \cdot n^3 \log^3 n$ \wcp, i.e., const fraction of rounds $t$ with $\delta^t \in (\eps, 1 - \eps)$};
\node[txt] at (\betaA + 13.2, -2.0) {(\cref{lem:newcorrespondence})};
\node (step3H2) at (\betaA + 1.4,-1.65) {};
\draw[-, dashed] (\betaA, -2.0) -- (\End, -2.0) ;
\draw[|-|, thick] (\betaA + 1.5, -2.0) -- (\betaA + 5, -2.0) ;

\node (step4) at (\betaA + 1.5,-2.5) {};
\node[txt] at (\betaA + 1.5,-2.5) {$\exists \tilde{s} \in [s, s + n^3 \log^3 n]: \Lambda^{\tilde{s}} < cn$, \wcp (\cref{clm:final_claim})};
\node[Dot] at (\betaA + 2.5,-3.0){};
\draw[-, dashed] (\betaA, -3.0) -- (\End, -3.0) ;
\draw[|-|, thick] (\betaA + 1.5, -3.0) -- (\betaA + 5, -3.0) ;

\node (step5) at (\betaB - 1,-3.5) {};
\node[txt] at (\betaB - 1, -3.5) {$\exists s \in [m - 40 \cdot n^3 \log^4 n, m] : \Lambda^s < cn$ \Whp ({\small \textbf{Recovery \cref{lem:recovery}}})};
\node[Dot] at (\betaB,-4.0){};
\draw[|-|, thick] (\betaA, -4.0) -- (\betaC, -4.0) ;

\node (step6) at (\betaB + 0.5,-4.25) {};
\node[txt] at (step6) {For any interval $[t, t + c_s n \log n]$ for $t \in [s, m]$ with $\Lambda^t < cn$};
\node (step6H1) at (\betaB + 1,-4.8) {};
\node[txt] at (\betaB + 1,-5.0) {$\tilde{G}_{t}^{t + c_s n \log n} \geq \tilde{r} \cdot c_s n \log n$ \Whp};
\node (step6H2) at (\betaB + 1,-5.5) {};
\node[txt] at (\betaB + 1,-5.5) {$G_{t}^{t + c_s n \log n} \geq r \cdot c_s n \log n$ \Whp (\cref{lem:stabilisation_many_good_quantiles_whp})};
\draw[-, dashed] (\betaA, -6.0) -- (\End, -6.0) ;
\draw[|-|, thick] (\betaB + 1, -6.0) -- (\betaB + 3.5, -6.0) ;

\node (step7) at (\betaB + 1,-6.5) {};
\node[txt] at (\betaB + 1, -6.5) {$\exists \tilde{s} \in [s, s + c_s n \log n]: \Lambda^{\tilde{s}} < cn$ \Whp ({\small \textbf{Stabilization Lemma~\ref{lem:stabilization}}})};
\node[Dot] at (\betaB + 3,-7.0){};
\draw[-, dashed] (\betaA, -7.0) -- (\End, -7.0) ;
\draw[|-|, thick] (\betaB + 1, -7.0) -- (\betaB + 3.5, -7.0) ;

\node (step8) at (\betaB + 1,-7.5) {};
\node[txt] at (\betaB + 1, -7.5) {$\forall s: \exists \tilde{s} \in [s, s + c_s n \log n]: \Lambda^{\tilde{s}} < cn$ \Whp ({\small\cref{lem:nonfilling_good_gap_after_good_lambda}})};
\draw[-, dashed] (\betaA, -8.0) -- (\End, -8.0) ;
\draw[|-|, thick] (\betaB + 1, -8.0) -- (\End, -8.0) ;

\node (step9) at (\betaC - 3,-8.5) {};
\node[txt] at (\betaC - 3, -8.5) {$\forall i \in [n] : |y_i^m| < \kappa \log n$ \Whp ({\small \cref{thm:main_technical}})};
\node[Dot] at (\betaC,-9.0){};

\draw[implies] (step1) edge[bend right=70] (step2);
\node[anchor=east, black!50!blue] at (\betaA - 0.5, 1) {\small Markov};

\draw[implies] (step2) edge[bend right=70] (step3H1);
\node[anchor=east, black!50!blue] at (\betaA+0.4, -0.6) {\small \cref{lem:quadratic_absolute_relation_for_w_plus_w_minus} \&};
\node[anchor=east, black!50!blue] at (\betaA + 0.2, -0.95) {\small Markov};

\draw[implies] (step3H1) edge[bend right=70] (step3H2);
\node[anchor=east, black!50!blue] at (\betaA + 0.9, -1.5) {\small Lemma~\ref{lem:good_quantile}};

\draw[implies] (step3H2) edge[bend right=70] (step4);
\node[anchor=east, black!50!blue] at (\betaA + 1.1, -2.3) {\small Lemma~\ref{lem:gamma_tilde_is_supermartingale}};

\draw[implies] (step4) edge[bend right=70] (step5);
\node[anchor=east, black!50!blue] at (\betaA + 2.0, -3.5) {\small Using $40 \log n$ retries};

\draw[implies] (step6) edge[bend right=70] (step6H1);

\draw[implies] (step6H1) edge[bend right=70] (step6H2);
\node[anchor=east, black!50!blue] at (\betaA + 4.25, -4.55) {\small \cref{lem:quadratic_absolute_relation_for_w_plus_w_minus} \& Azuma};

\draw[implies] (step6H2) edge[bend right=70] (step7);
\node[anchor=east, black!50!blue] at (\betaA + 4.45, -5.25) {\small Lemma~\ref{lem:good_quantile}};

\node[anchor=east, black!50!blue] at (\betaA + 4.45, -6.25) {\small Lemma~\ref{lem:gamma_tilde_is_supermartingale}};

\draw[implies] (step7) edge[bend right=70] (step8);
\node[anchor=east, black!50!blue] at (\betaA + 4.45, -7.3) {\small U.~Bound};

\draw[implies] (step8) edge[bend right=70] (step9);
\draw[-latex, thick] (\betaA - 0.5, \yO) -- (\End + .8, \yO);
\end{tikzpicture}}}
\end{center}
\caption{Summary of the key steps in the proof for \cref{thm:main_technical}.}
\label{fig:proof_outline_tikz}
\end{figure}
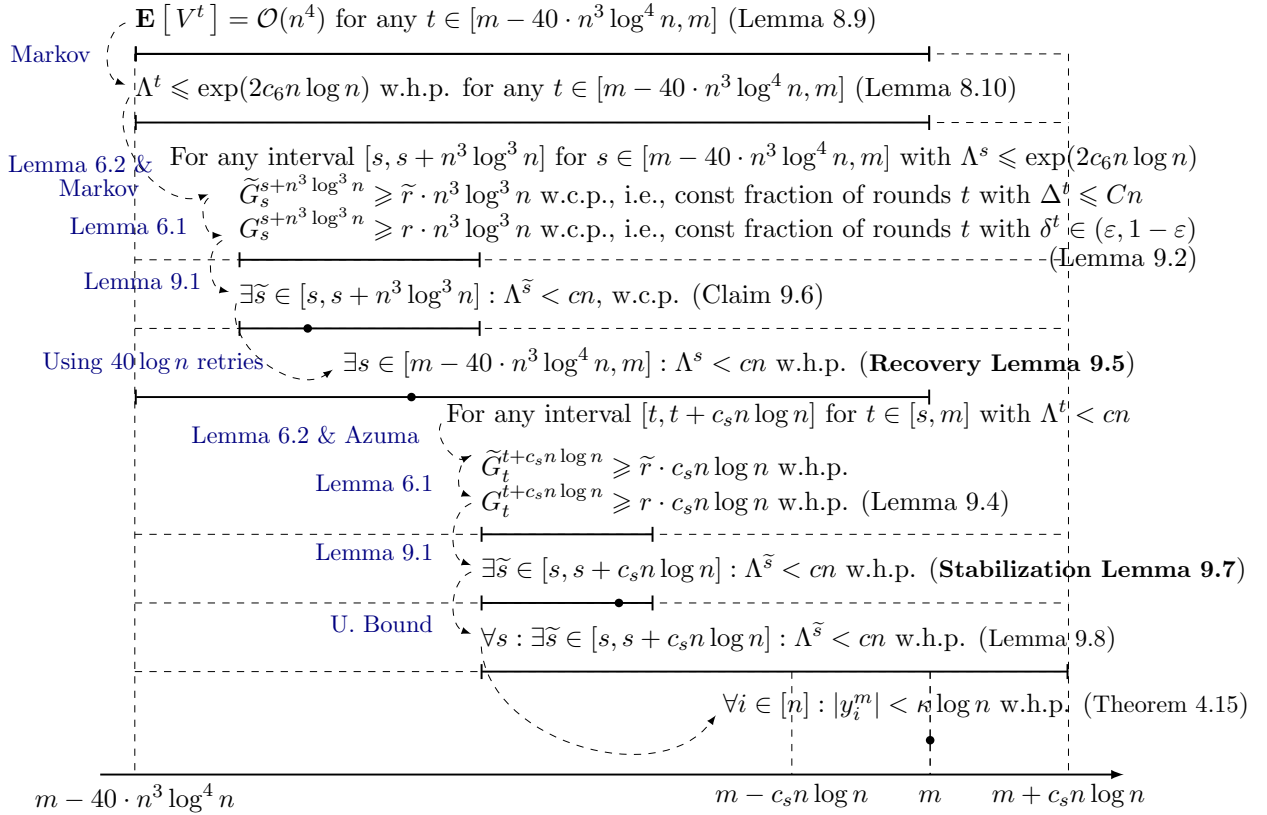
 
\begin{figure}

 \begin{center}
\includegraphics[scale=0.75]{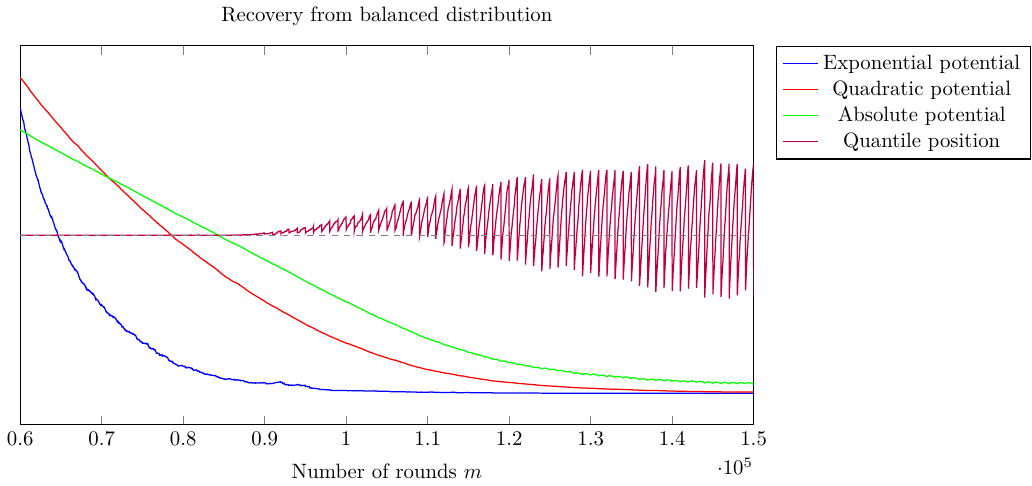}%

~\vspace{.7em}~

\includegraphics[scale=0.75]{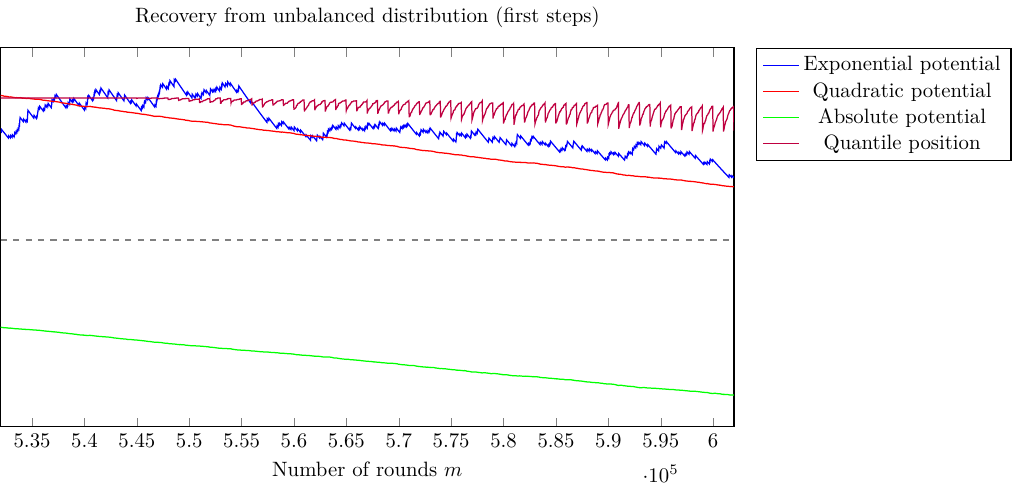}

~\vspace{.em}~

\includegraphics[scale=0.75]{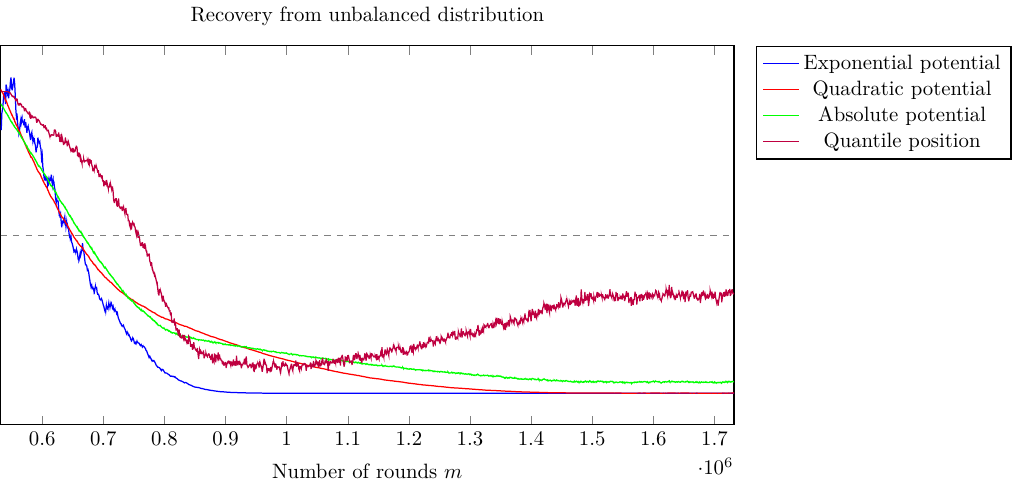}

\caption{Scaled versions of the potential functions for \MeanThinning with different initial load distributions for $n = 1000$. The first example above starts from a load distribution where half of the bins have (normalized) load $+\log n$ and the other half have load $-\log n$. 
The second example starts from a load distribution such that the quantile of the average load remains very close to $1$ for  $\omega(n \log n)$ many rounds. As it can be seen, the exponential potential increases a bit at the beginning, but both the absolute and quadratic potential all improve immediately. Once the quantile is sufficiently bounded away from $0$ and $1$, the exponential potential also decreases, eventually stabilizing at $\Oh(n)$, as shown in the third figure. Note that the dashed gray line corresponds a perfectly balanced mean quantile, i.e., $\delta^t=1/2$.
} \label{fig:experiments}
\end{center}
\end{figure}

There are a considerable number of constants in this paper and their interdependence can be quite complex. We hope the following remark will shed some light on their respective roles. 

\begin{rem}[Relationship Between the Constants] All the constants used in the analysis depend on the constants $w_-, w_+$ (and \C{p2k1}, \C{p2k2} if \PThree holds) of the process being analyzed. The relation between the absolute value and quadratic potential functions (and specifically \C{quad_delta_drop} and \C{quad_const_add} in \cref{lem:quadratic_absolute_relation_for_w_plus_w_minus}) defines what it means for $\Delta^t$ to be small, i.e., $\Delta^t \leq \C{small_delta} \cdot n$. This $\C{small_delta}$ in turn specifies the constant $\eps > 0$, given by \cref{lem:good_quantile}, which defines a good quantile to be $\delta^t \in (\eps, 1 - \eps)$. Then, $\eps$ specifies the fraction $r$ of rounds with $\delta^t \in (\eps, 1 - \eps)$ (in \cref{lem:average_expected_delta_is_small,lem:stabilisation_many_good_quantiles_whp}) and the constants $\C{good_quantile_mult}, \C{bad_quantile_mult}$ in the inequalities for the expected change of $\Lambda^t$ for sufficiently small constant $\alpha$ (in Lemmas~\ref{lem:good_quantile_good_decrease} and~\ref{lem:bad_quantile_increase_bound}). Then, using the constants $\eps$, $r$ and $\C{good_quantile_mult}, \C{bad_quantile_mult}$ we finally set the value of $\alpha$ in $\Lambda$. Finally, $\eps$ and $\alpha$ specify $\C{lambda_bound}$ in \cref{cor:change_for_large_lambda} which we use in proving $\Lambda^t < 2cn$.%
\end{rem}

\bigskip

\noindent \textbf{Organization of the Remaining Part of the Proof.}

\begin{itemize}[itemsep=-1pt, topsep=-1pt]
    \item In \cref{sec:non_filling_analysis_tools}, we prove the mean quantile stabilization lemma (\cref{lem:good_quantile}) and the relation between the absolute value and the quadratic potential (\cref{lem:quadratic_absolute_relation_for_w_plus_w_minus}).
    \item In \cref{sec:non_filling_potential_functions}, we prove simple relations between the quadratic and the exponential potential (\cref{sec:quadratic_and_exp_potentials}), analyze the expected change of the exponential potential $\Lambda$ (\cref{sec:non_filling_lambda}) and prove that $\Ex{V^t} = \poly(n)$ (\cref{sec:v_potential}), deducing an $\Oh(n \log n)$ gap. 
    \item In \cref{sec:non_filling_analysis}, we complete the proof for the $\Oh(\log n)$ gap for non-filling processes by analyzing the recovery (\cref{sec:recovery}) and  stabilization (\cref{sec:stabilization}) phases.
\end{itemize}

\section{Mean Quantile Stabilization} \label{sec:non_filling_analysis_tools}

In \cref{sec:mean_quantile_stab}, we prove the stabilization of the mean quantile $\delta^t$. In \cref{sec:quadratic_absolute_potential}, we relate the change of the quadratic potential $\Upsilon^t$ to the absolute value potential $\Delta^t$.

\subsection{Mean Quantile Stabilization based on Absolute Value Potential} \label{sec:mean_quantile_stab}

The next lemma proves that once the absolute value potential is $\Oh(n)$, then the allocation process will satisfy $\delta^{t} \in (\epsilon,1-\epsilon)$ for ``many'' of the following rounds.
The condition on $\delta^{t}$ means that a constant fraction of bins are overloaded and a constant fraction are underloaded. Interestingly, this lemma does not need the stronger property \PThree or \WThree.

{\renewcommand{\thelem}{\ref{lem:good_quantile}}
\begin{lem}[Mean Quantile Stabilization, restated]
\goodquantile
	\end{lem}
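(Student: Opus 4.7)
The strategy leverages three observations: (i) the precondition $\Delta^{t_0} \leq C \cdot n$ forces the load distribution at $t_0$ to have a \emph{core} of $\Theta(n)$ bins with $|y_i^{t_0}| = \Oh(1)$; (ii) a single step changes $\Delta^t$ by at most $2w_{-}$ (the picked bin contributes at most $w_{-}(1 - 1/n)$ in absolute value and the remaining bins collectively at most $w_{-}(n-1)/n$), so $\Delta^t = \Oh(n)$ is maintained throughout $[t_0, t_3]$; (iii) the average load $W^t/n$ grows by a predictable amount over each of the three sub-intervals. Concretely, Markov's inequality applied to $\Delta^{t_0} \leq C \cdot n$ produces a core set $M \subseteq [n]$ of size $|M| \geq n/2$ with $|y_i^{t_0}| \leq 2C$ for all $i \in M$.

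I would then partition $[t_0, t_3]$ into three consecutive sub-intervals $I_1, I_2, I_3$ of lengths $\lceil 2Cn/w_{+}\rceil$, $\lceil n/w_{+}\rceil$, and $\lceil n/(10 w_{-})\rceil$, matching the three terms in the definition of $t_3$. During $I_1 \cup I_2$, each round places at least $w_{+}$ balls, so at least $(2C + 1)n$ balls are placed in total and the average $W^t/n$ rises by at least $2C + 1$. Any core bin that is picked only a bounded number of times in $I_1 \cup I_2$ therefore satisfies $y_i^{t_2} \leq -1$ and is underloaded with margin at least $1$ at round $t_2$. Via a concentration bound on the per-bin pick count (see below), with probability $\geq 1 - e^{-\Omega(n)}$ a constant fraction of the core bins is ``lightly picked'', yielding $\Omega(n)$ margin-underloaded bins at $t_2$. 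Combined with the identity $\sum_i y_i^{t_2} = 0$ and the $\Oh(n)$ bound on $\Delta^{t_2}$, this negative mass must be balanced by positive mass distributed over a constant fraction of bins, which forces $\Omega(n)$ margin-overloaded bins at $t_2$ as well. Finally, in $I_3$ the total number of picks is at most $n/(10 w_{-})$ and the average shifts by at most $1/10$, so each unpicked bin has $y_i^t$ moving by at most $1/10$. Hence both the margin-underloaded set and the margin-overloaded set at $t_2$ persist throughout $I_3$, and a suitable choice of $\epsilon = \epsilon(C, w_{+}, w_{-}) > 0$ gives $\delta^t \in (\epsilon, 1-\epsilon)$ for \emph{every} $t \in I_3$. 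Since $|I_3| \geq n/(10 w_{-}) = \Omega(n)$, this yields at least $\epsilon \cdot n$ good rounds, as required.

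The main obstacle is the probabilistic step establishing that $\Omega(n)$ core bins are lightly picked with probability $1 - e^{-\Omega(n)}$: because $p^t$ depends on $\mathfrak{F}^t$, the picks across rounds are not independent and Chernoff does not apply directly. I would handle this by expressing the number of picks received by each core bin as a martingale with respect to $\mathfrak{F}^t$ and applying Azuma's inequality, using $\sum_i p_i^t = 1$ together with $p_i^t \leq p_{+}^t \leq 1/n$ for overloaded bins (from \PTwo) to bound the per-step martingale increments; summing and applying a union bound over the (constantly many) ``moderately picked'' thresholds then gives the desired exponentially small failure probability. A secondary subtlety is ruling out a single ``spike'' at $t_2$ where all positive mass concentrates in one bin, which is addressed by noting that a bin's load rises by at most $w_{-}$ per pick, so $y_i^{t_2} \geq c n$ would require $\Omega(n)$ picks on that bin, an event ruled out by the same martingale argument via a union bound over bins.
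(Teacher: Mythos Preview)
Your step 4 contains a genuine gap. The deduction ``$\Omega(n)$ bins with $y_i^{t_2}\le -1$'' together with ``$\sum_i y_i^{t_2}=0$ and $\Delta^{t_2}=\Oh(n)$'' does \emph{not} force the positive mass to be spread over $\Omega(n)$ bins. A configuration with $\Theta(n)$ bins at $-1$, $\Theta(\sqrt{n})$ bins at $\Theta(\sqrt{n})$, and the rest at $0$ satisfies all three hypotheses yet has only $o(n)$ margin-overloaded bins, so $\delta^{t}=o(1)$ once the bins at $0$ slip below the average during $I_3$. This configuration \emph{can} arise under a valid $\mathcal{P}_2\cap\mathcal{W}_2$ process: the condition $\Delta^{t_0}\le Cn$ permits $\Theta(\sqrt{n})$ non-core bins already at height $\Theta(\sqrt{n})$ at $t_0$; being overloaded they are picked with probability $\le 1/n$ and hence persist at height $\Theta(\sqrt{n})$ throughout $[t_0,t_2]$. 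Your proposed fix only rules out a single bin at height $\Omega(n)$ and ignores that non-core bins may have large $y_i^{t_0}$ to begin with, so it does not address this obstruction. There is also a secondary issue in step~3: to get $y_i^{t_2}\le -1$ you need the bin to be picked \emph{zero} times (one pick already gives $y_i^{t_2}\le w_--1\ge 0$), and once a core bin becomes underloaded during $I_1\cup I_2$, condition $\mathcal{P}_2$ gives no upper bound on $p_i^t$, so your Azuma sketch does not control the number of never-picked core bins.

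The paper circumvents the step-4 obstruction by \emph{not} insisting that many underloaded and many overloaded bins coexist at a single round. Instead it proves two separate claims: at some $s_1\in[t_0,t_1]$ a constant fraction of bins is underloaded (so $\delta^{s_1}\le 1-\kappa_1$), and at some later $s_2\in[t_2,t_3]$ a constant fraction is overloaded (so $\delta^{s_2}\ge\kappa_3$). Since $\delta^{t+1}\le\delta^t+1/n$ and $s_2-s_1\ge\lceil n/w_+\rceil$, a simple Lipschitz/intermediate-value argument then yields $\Omega(n)$ rounds $t\in[s_1,s_2]$ with $\delta^t\in(\epsilon,1-\epsilon)$. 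The ``many overloaded'' claim is proved not via a mass argument but by using the \emph{lower} bound $p_i^t\ge 1/n$ for underloaded bins from $\mathcal{P}_2$: over the short interval $[t_2,t_3]$ each underloaded core bin (whose deficit at $t_2$ is only $\Oh(C)$) is picked $\ge 6C$ times with constant probability and thereby becomes overloaded, with concentration via bounded differences. This probabilistic mechanism is the missing ingredient in your plan.
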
}
	\addtocounter{lem}{-1}

\begin{proof}

Define $ B_*:=\left\{ i \in [n] \colon |y_i^{t_0}| < 2\C{small_delta} \right\}$ to be the bins whose load deviates from the average $W^{t}/n$ by less than $2\C{small_delta}$. Then, 
\[
 \C{small_delta} \cdot n \geq \Delta^{t_0} = \sum_{i \in [n]} |y_i^{t_0}| \geq  \sum_{i \colon |y_i^{t_0}| \geq 2\C{small_delta}} 2\C{small_delta} = (n-|B_*|) \cdot 2\C{small_delta},
\]
rearranging this gives that conditional on the event $\mathcal{C}:=\{\Delta^{t_0} \leq \C{small_delta}\cdot  n \} $ we have
\begin{equation}\label{eq:sizeofB_*}
 |B_*| \geq \frac{n \cdot \C{small_delta}}{2\C{small_delta}} = \frac{n}{2}.
\end{equation}Note that the bins in $B_*$ may be underloaded or overloaded. 

We now proceed with two claims. Using the fact that $|B_*|$ is large, the first claim (\cref{clm:underloaded}) proves that there exists an ``early'' round $t \in [t_0,t_1]$ such that a constant fraction of bins are underloaded. Similarly, the second claim (\cref{clm:overloaded})  proves that there exists a round $t \in [t_2,t_3]$ with $t_2 \geq t_1$ such that a constant fraction of bins are overloaded. Since the set of overloaded bins can only increase by $1$ per round, we then finally conclude that for $\Omega(n)$ rounds $t$, both conditions hold, i.e., $\delta^t \in (\epsilon,1-\epsilon)$.

\arxive{Interestingly, we seem to run into problems here if the loads were arbitrary fractions. I guess we maybe then would need to use property that the process is not able to allocate to an underloaded bin with probability larger than $2/n$.}
\begin{clm}\label{clm:underloaded}
For any integer constant $\C{small_delta} \geq 1$, there exists a constant $\kappa_1:=\kappa_1(C)>0$ such that for $t_1:=t_0+\left\lceil\frac{2\C{small_delta}}{w_{+}} \cdot n\right\rceil$ we have 
\[
 \Pro{  \bigcup_{ t \in [t_0,t_1] } \left\{\delta^t \geq \kappa_1  \right\}  \Big| ~ \mathfrak{F}^{t_0}, \Delta^{t_0} \leq C \cdot n } \geq 1-e^{-\kappa_1 \cdot n }.
\]
 
\end{clm}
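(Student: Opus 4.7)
I would argue by contradiction: suppose $\delta^t < \kappa_1$ for every $t \in [t_0, t_1]$, where $\kappa_1$ is a small positive constant to be fixed later, and show this event has probability at most $e^{-\kappa_1 n}$. The starting point is the lower bound $|B_*| \geq n/2$ obtained in~\eqref{eq:sizeofB_*}. Under the contradiction hypothesis, condition~\PTwo (the $p_+^t \leq 1/n$ half) yields $P_+^t \leq |B_+^t|/n < \kappa_1$, so in every round an underloaded bin is selected with probability at least $1-\kappa_1$. Moreover, since every round allocates at least $w_+$ balls, over the interval of length $\lceil 2Cn/w_+ \rceil$ at least $2Cn$ balls are placed deterministically, so the average load $W^t/n$ grows by at least $2C$.

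The next step is a case split on the sign structure of bins in $B_*$ at time $t_0$. Write $B_*^{\pm} := B_* \cap B_{\pm}^{t_0}$, so $|B_*^+|+|B_*^-| \geq n/2$. If $|B_*^+| \geq n/4$ then $\delta^{t_0} \geq 1/4$ and the claim holds trivially for $\kappa_1 \leq 1/4$. So assume $|B_*^-| \geq n/4$; these are bins with $y_i^{t_0} \in [-2C, 0)$. By the $p_-^t \geq 1/n$ half of~\PTwo, each bin in $B_*^-$, as long as it is still underloaded at time~$t$, is selected with probability at least $1/n$ per round. Hence the expected number of selections landing on any fixed such bin over the interval is at least $2C/w_+$. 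A Chernoff/Azuma-type concentration bound (exploiting that a single bin is selected per round, giving bounded step sizes) together with a union bound over $B_*^-$ would then show that, with failure probability at most $e^{-\Omega(n)}$, a linear fraction of bins in $B_*^-$ accumulate $\Theta(C/w_+)$ hits each. Each such hit adds $w_-$ balls and increases the normalized load by $w_-(1 - 1/n)$, so after $\Theta(C/w_+)$ hits a bin's load has risen by $\Theta(C w_-/w_+) \geq 2C$, which is enough to overcome the $2C$ rise of the average and push it into $B_+^t$ at some point during the interval.

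The main obstacle is converting these many individual transitions into the \emph{simultaneous} statement $|B_+^t| \geq \kappa_1 n$ at some round $t$. I would handle this via the observation that the average grows by at most $w_-/n$ per round, so a bin that has just entered $B_+^t$ with some small positive normalized load remains overloaded for $\Omega(n)$ further rounds. Summing these residence times of $\Omega(n)$ transition events over the $O(n)$-round window and applying a simple averaging/pigeonhole argument then forces some round to witness $\kappa_1 n$ simultaneously overloaded bins, contradicting the assumption. The delicate part of this final step is the bookkeeping: a bin may also exit $B_+^t$ when the rising average catches up to its load, and bins in $B_*^-$ with $y_i^{t_0} \in [-2C, -w_-)$ need more than one hit before their very first transition. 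Both issues can be controlled by using the quantitative bound $(t_1-t_0) \cdot w_- \leq 2C(w_-/w_+)\cdot n$ on total weight (hence on the possible average increase) and by ensuring $\kappa_1$ is small enough that the ``incoming flux'' of new transitions strictly dominates the ``outgoing flux'' from the rising average.
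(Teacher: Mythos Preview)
Your proposal proves the overloaded direction ($\delta^t \geq \kappa_1$, i.e., many overloaded bins), which is what the displayed event literally says, but the paper's own proof of this claim establishes the complementary statement that some round has many \emph{underloaded} bins ($|B_-^t| \geq \kappa_1 n$). The claim's name and the way the two claims are combined at the end of Lemma~\ref{lem:good_quantile} (``the first claim implies \ldots\ a round $s_1$ with $\kappa_1 n$ underloaded bins'') confirm this; the inequality in the displayed event appears to be a typo, mirrored in Claim~\ref{clm:overloaded}. So you and the paper are arguing different directions.

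The paper's argument for the underloaded direction avoids precisely the obstacle you call ``the delicate part''. The case split is flipped: if $|B_* \cap B_-^{t_0}| \geq n/4$ one is done immediately, else $|B_* \cap B_+^{t_0}| \geq n/4$, and since these bins have integer loads lying in an interval of length $2C$, pigeonhole gives a subset $\tilde B_* \subseteq B_+^{t_0} \cap B_*$ of size $\geq n/(8C)$ all with \emph{identical} load. Any bin in $\tilde B_*$ that is never selected over $[t_0,t_1]$ sees its normalized load drop by at least $2C$ and hence becomes underloaded; the never-selected probability is a constant $\geq 4^{-2C/w_+}$, and the Method of Bounded Differences concentrates the count of such bins. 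The payoff of the identical-load trick is that all unchosen bins in $\tilde B_*$ cross zero in the \emph{same round}, so simultaneity is automatic --- no residence-time or pigeonhole bookkeeping is needed.

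Even for the direction you attempt, two steps in your sketch do not work as stated. First, ``Chernoff/Azuma on a single bin together with a union bound over $B_*^-$'' cannot give failure probability $e^{-\Omega(n)}$: the single-bin tail is only a constant, so a union bound over $\Theta(n)$ bins yields nothing --- one must apply bounded-differences concentration directly to the \emph{count} of well-hit bins. Second, over the long interval $[t_0,t_1]$ of length $\lceil 2Cn/w_+\rceil$ the average can rise by as much as $2Cw_-/w_+$, which matches (up to constants) the raw-load gain from $\Theta(C/w_+)$ hits of weight $w_-$; the net change in normalized load need not be positive, and the integer part of the average can jump $\Theta(Cw_-/w_+)$ times, so your ``incoming flux dominates outgoing flux'' bookkeeping is far from automatic. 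The paper handles the overloaded direction (Claim~\ref{clm:overloaded}) over a deliberately \emph{short} interval of length $n/(10w_-)$, precisely so that the integer average changes at most once and the simultaneity question collapses to two cases.
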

\begin{poc}
If $|B_{-}^{t_0} \cap B_*| \geq 1/2 \cdot |B_*|$, then the statement of the claim follows immediately for $t=t_0$ and $\kappa_1=1/4$.
Otherwise, assume $|B_{+}^{t_0} \cap B_*| \geq 1/2 \cdot |B_*|$, i.e., at least half of the bins in $B_*$ are overloaded at round $t_0$. Note that the bins in $B_{+}^{t_0} \cap B_*$ all have loads in the range $[W^{t_0}/n,\;W^{t_0}/n + 2C)$. Thus, since loads are integers (as $w_{+}$ and $w_{-}$ are integers), there can be at most $2C$ different load levels within $B_{+}^{t_0} \cap B_*$. Hence, by the pigeonhole principle and \eqref{eq:sizeofB_*}, there exists a subset $\tilde{B_*} \subseteq B_{+}^{t_0} \cap B_*$, with $|\tilde{B_*}| \geq |B_{+}^{t_0} \cap B_*| \cdot \frac{1}{2\C{small_delta}} \geq  \frac{1}{8 \C{small_delta}} \cdot n$ such that all bins $i \in \tilde{B_*}$ have the same (non-negative) load at round $t_0$.%

Note that if a bin $i \in  \tilde{B_*}$ was never chosen for an allocation during rounds $[t_0,t_1]$, then its normalized load would satisfy
\[
 y_i^{t_1} \leq y_i^{t_0} - \frac{w_+}{n} \cdot (t_1-t_0)\leq y_i^{t_0} - 2\C{small_delta} < 2\C{small_delta} - 2\C{small_delta}= 0,
\]
and hence the bin would become underloaded at least once before round $t_1 =t_0+\left\lceil 2\C{small_delta} n /w_{+} \right\rceil$.

Since for any round, a fixed overloaded bin is chosen for an allocation with probability at most $p_{+}^t \leq \frac{1}{n}$, we conclude for any $i \in  \tilde{B_*}$, 
\[
 \Pro{ i \in \bigcup_{t \in [t_0,t_1]} B_{-}^t ~ \Big| ~ \mathfrak{F}^{t_0} } \geq \Pro{ \mathsf{Bin}\left( \frac{2\C{small_delta}}{w_{+}} \cdot n , \frac{1}{n} \right) = 0 } =\left(1-\frac{1}{n}\right)^{n\cdot 2C/w_+} \geq 4^{-2C/w_+}  ,
\]
where in the last inequality we used that $\left( 1 - \frac{1}{n} \right)^{n}$ is non-decreasing in $n \geq 2$. Define the constant $\kappa_2 := 4^{-2C/w_+}>0$. Further, let
\[
 Y:= \left| \bigcup_{t \in [t_0,t_1]} B_{-}^t   \cap \tilde{B_*} \right|,
\]
be the number of bins in $\tilde{B_*}$ that become underloaded at least once during the interval $[t_0,t_1]$.

Let us now consider a modified process in which the probability for each overloaded bin to be chosen is not only at most $\frac{1}{n}$, but is instead equal to $\frac{1}{n}$ in each round. Let $\tilde{Y}$ be the number of bins in $\tilde{B_*}$ that are never chosen in the modified process during the interval $[t_0,t_1] =\left[t_0,t_0+\left\lceil 2\C{small_delta} n /w_{+} \right\rceil \right]$. Note that $\tilde{Y}$ is stochastically smaller than $Y$. 

Recall that $\mathcal{C}=\{\Delta^{t_0} \leq \C{small_delta} \cdot n \}$. By linearity of expectation and \eqref{eq:sizeofB_*}, we have
\[
 \ex{\tilde{Y}\mid \mathfrak{F}^{t_0} ,\; \mathcal{C}} \geq \kappa_2 \cdot \Ex{| \tilde{B_*} |\;\big| \; \mathfrak{F}^{t_0} ,\; \mathcal{C} } \geq \kappa_2 \cdot \frac{1}{4\C{small_delta}} \Ex{|  B_*|\;\big| \; \mathfrak{F}^{t_0} ,\;\mathcal{C} } \geq \kappa_2 \cdot \frac{1}{4\C{small_delta}} \cdot \frac{ n }{2 }= \kappa_2 \cdot \frac{n}{8\C{small_delta}}.
\]
In the modified process, changing the bin sample in one round can change $\tilde{Y}$ only by at most one, and hence applying the Method of Bounded Independent Differences~(\cref{mobd}) yields,
\begin{align*}
 \Pro{  \tilde{Y}  \leq \frac{1}{2}  \ex{\tilde{Y}\mid \mathfrak{F}^{t_0} ,\; \mathcal{C}}  ~ \Big| ~ \mathfrak{F}^{t_0} ,\; \mathcal{C} } \leq   \exp\left(- \frac{ \frac{1}{4} \left(\ex{\tilde{Y}\mid \mathfrak{F}^{t_0} ,\; \mathcal{C}}\right)^2 }{\frac{2\C{small_delta}}{w_+} \cdot n \cdot 1^2} \right) \leq  \exp\left(-\frac{\kappa_2^2w_+}{8^3C^3} \cdot n \right),
\end{align*}
and so by stochastic domination and the two equations above we have 
\[
 \Pro{ Y \geq \frac{\kappa_2 n}{ 16C} \;\Big|\; \mathfrak{F}^{t_0},\;  \mathcal{C} } \geq 1-   \exp\left(-\frac{\kappa_2^2w_+}{8^3C^3} \cdot n \right) .
\]
Since all load levels in $\tilde{B_*}$ before round $t_0$ are identical, we conclude that they all become underloaded the first time in the same round, so
\[
 \Pro{ | B_{-}^{t} \cap \tilde{B_*} | \geq \kappa_1 \cdot n \mid \mathfrak{F}^{t_0}, \; \Delta^{t_0} \leq C \cdot n } \geq 1-e^{-\kappa_1 \cdot n },
\]
where $\kappa_1:=\min\left\{\frac{\kappa_2}{16C} ,  \frac{\kappa_2^2w_+}{8^3C^3}\right\}  >0$ and $\kappa_2 = 4^{-2C/w_+}$, as defined above.
\end{poc}

\begin{clm} \label{clm:overloaded}
For any integer constant $\C{small_delta} \geq 1$ there exists a constant $\kappa_3:=\kappa_3(C)>0$ such that for $t_2 := t_1 + \left\lceil  \frac{n}{w^{+}}\right\rceil $, and
$t_3 := t_2 + \left\lceil \frac{n}{10w_{-}}\right\rceil $ we have
\[
\Pro{  \bigcup_{ t \in [t_2,t_3] } \left\{\delta^t \leq 1- \kappa_3 \right\}  \Big| ~ \mathfrak{F}^{t_0}, \Delta^{t_0} \leq C \cdot n } \geq 1-e^{-\kappa_3 \cdot n }.
\]
\end{clm}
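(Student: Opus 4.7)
The plan is to follow the template of Claim~\ref{clm:underloaded} and exhibit many underloaded bins at the specific round $t = t_3 \in [t_2, t_3]$. The crux is a mean-shift calculation: since each round allocates at least $w_+$ balls and $t_3 - t_0 \geq (2C+1)n/w_+ + n/(10 w_-)$, the cumulative mean shift satisfies
\[
\frac{W^{t_3} - W^{t_0}}{n} \;\geq\; 2C + 1 + \frac{w_+}{10\, w_-},
\]
which exceeds $2C$ by a margin of at least $1$.

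As in Claim~\ref{clm:underloaded}, on the conditioning event $\{\Delta^{t_0} \leq C \cdot n\}$ the set $B_* := \{i \in [n] : |y_i^{t_0}| < 2C\}$ has size $\geq n/2$. For any bin $i \in B_*$ that is never chosen during $[t_0, t_3]$, the load $x_i^t$ is constant, so
\[
y_i^{t_3} \;\leq\; y_i^{t_0} - \frac{W^{t_3} - W^{t_0}}{n} \;<\; 2C - \Big(2C + 1 + \frac{w_+}{10\, w_-}\Big) \;<\; -1,
\]
placing $i$ strictly in $B_-^{t_3}$. Hence it suffices to show that with probability at least $1 - e^{-\Omega(n)}$, a constant fraction of $B_*$ remains untouched throughout $[t_0, t_3]$.

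Each non-filling process covered by the framework draws at most two uniform samples per round, so $p_i^t \leq 2/n$ for every bin and round. A fixed bin therefore escapes selection in any single round with probability $\geq 1 - 2/n$, and over $t_3 - t_0 = \Theta(n)$ rounds the escape probability is at least $(1 - 2/n)^{t_3 - t_0} \geq q$ for some constant $q = q(C, w_+, w_-) > 0$. Let $Y$ count the bins in $B_*$ that are never chosen in $[t_0, t_3]$; then linearity gives $\Ex{Y \mid \mathfrak{F}^{t_0}, \Delta^{t_0} \leq C \cdot n} \geq q n / 2$, and swapping any single one of the $t_3 - t_0$ random samples alters $Y$ by at most one. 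The Method of Bounded Independent Differences (\cref{mobd}) then yields $\Pro{ Y \leq qn/4 \mid \mathfrak{F}^{t_0}, \Delta^{t_0} \leq C \cdot n } \leq \exp(-\Omega(n))$. Since $Y \leq |B_-^{t_3}|$ deterministically on the event of step two, setting $\kappa_3 := q/4$ gives $\delta^{t_3} = |B_+^{t_3}|/n \leq 1 - \kappa_3$ on a high-probability event, and $t_3 \in [t_2, t_3]$ completes the argument.

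The main obstacle is the uniform per-bin upper bound $p_i^t \leq 2/n$: condition \PTwo forces only $p_+^t \leq 1/n$ and says nothing about $p_-^t$. The bound I rely on is an implicit structural property of every process the framework is designed to cover (\Twinning, \MeanThinning, $(1+\eta)$-\MeanThinning, $(1+\beta)$-process and \TwoChoice), namely that each round uses at most two uniform samples; this parallels the two-sample averaging that already underlies the motivation of \PThree. Without such a restriction, an adversarial $p_-^t$ could concentrate selections on $B_*$ and defeat the probabilistic estimate, so making this structural property explicit (or else deriving it from \PTwo together with $\sum_i p_i^t = 1$ under a mild monotonicity assumption on $p^t$) is the subtlety the proof must handle.
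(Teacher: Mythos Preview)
The paper's own proof of this claim establishes the \emph{opposite} inequality: it shows some $t\in[t_2,t_3]$ has $|B_+^t|\ge\kappa_3 n$, i.e.\ $\delta^t\ge\kappa_3$. The displayed statement $\delta^t\le 1-\kappa_3$ is evidently a typo---compare the surrounding text (``a constant fraction of bins are overloaded''), the combining step after the two claims, and the final line of the paper's proof (``$|B_+^t|\ge\kappa_3\cdot n$''). The argument runs in the opposite direction from yours and uses the \emph{lower} bound $p_-^t\ge 1/n$ given by \PTwo: setting $\tilde{B_*}:=B_*\cap B_-^{t_2}$, one first checks $y_i^{t_2}\ge -(5C+o(1))w_-$, so any bin hit at least $6C$ times while underloaded must become overloaded before $t_3$; a coupling with a uniform-$1/n$ process plus bounded differences (\cref{mobd}) then yields $\Omega(n)$ such bins. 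The last twist handles non-simultaneity: because $t_3-t_2\le n/(10w_-)$, the integer part of the mean load changes at most once in $[t_2,t_3]$, so at most one round can turn overloaded bins back into underloaded ones; splitting the $\Omega(n)$ transitions across that single round gives one $t$ with $\kappa_3 n$ simultaneously overloaded bins.

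Your proposal targets the written (but erroneous) inequality by re-running the \cref{clm:underloaded} template over the longer window $[t_0,t_3]$. The gap you flagged is genuine and not repairable from the framework's axioms: \PTwo and \WTwo impose no upper bound on $p_i^t$ for underloaded $i$, so a distribution satisfying \PTwo may concentrate all the underloaded mass on $B_*\cap B_-^t$ and drive your never-chosen count $Y$ to zero. This is precisely why the paper's proof of \cref{clm:underloaded} restricts, via pigeonhole, to a same-load subset of \emph{overloaded} bins---there the bound $p_+^t\le 1/n$ remains in force all the way up to the common crossing time.
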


\begin{poc}%
	If  $|B_* \cap B_{+}^{t_2}| \geq \frac{1}{2} \cdot |B_*| \geq \frac{n}{4}$, the claim follows immediately, so assume $|B_* \cap B_{-}^{t_2}| \geq \frac{1}{2} \cdot |B_*|$. 
	Let $\tilde{B_*}:= B_* \cap B_{-}^{t_2}$ and consider any bin $i$ in $\tilde{B_*}$. After round $t_2$, the normalized load satisfies 
	\[
	y_i^{t_2} \geq y_{i}^{t_0} - (t_2-t_0) \cdot \frac{w_{-}}{n} \geq
	-2\C{small_delta} - \left(\frac{2C+1}{w_{+}} +\frac{2}{n}\right)\cdot w_{-}  \geq -\left(5 C+\frac{2}{n}\right)\cdot w_{-},
	\]
	where we used $y_i^{t_0} > -2C$ as $i \in \tilde{B_*} \subseteq B_*$, $C\geq 1$, and $w_-\geq w_+\geq 1$ by \WTwo.

		As long as a bin $i$ is underloaded, we choose it for allocation in each round with probability at least $p_{-}^t \geq 1/n$, independently of previous rounds. If the bin has been chosen $6C$ times (while being underloaded), it must become overloaded before round $t_3$ at least once since
	\begin{align*}
	y_i^{t_3} &\geq y_i^{t_2}  - (t_3-t_2)\frac{w_-}{n} + 6C \cdot w_{-} \geq 	-\left(5 C+\frac{2}{n}\right)\cdot w_{-}  - \left(\frac{1}{10}+\frac{w_-}{n}\right) + 6C \cdot w_{-}
	> 0.
	\end{align*} 
 
	Note that the lower bound of $1/n$ on the probability for allocating to any underloaded bin in a single round $t$ holds in any round, regardless of the ball configuration. Hence, for any $i \in  \tilde{B_*}$,
	\begin{align*}
	\Pro{ i \in \bigcup_{t \in [t_2,t_3]} B_{+}^t  ~ \,\Bigg|\, \mathfrak{F}^{t_0}  }
	&\geq \Pro{ \mathsf{Bin}\left( \frac{n}{10w_{-}}, \frac{1}{n}\right) \geq 6C  }\\ &\geq  \binom{\frac{n}{10w_{-}}}{6C} \left(\frac{1}{n} \right)^{6C}\cdot\left(1-\frac{1}{n}  \right)^{\frac{n}{10w_{-}} - 6C}  \\
	&\geq \left(\frac{1}{60Cw_-} \right)^{6C} \cdot e^{-\frac{1}{10w_{-}}}/2,
	\end{align*}
where we used \cref{lem:cheatsheet} and $\binom{n}{k}\geq (\frac{n}{k})^k$ in the last inequality. Define the constant $\kappa_4:=\left( 60Cw_-\right)^{-6C} \cdot e^{-\frac{1}{10w_{-}}}/2>0$ and let  %
	\[
	Z:= \left| \bigcup_{t \in [t_2,t_3]} B_{+}^t \cap \tilde{ B_*} \right|,
	\]
	that is, the number of bins in $\tilde{B_*}$ that become overloaded at least once during the interval $[t_2,t_3]$. Similar to the proof of the previous claim, consider a modified process where each underloaded bin has a probability of exactly $\frac{1}{n}$ to be incremented in each round. Let $\tilde{Z}$ be the number of bins in $\tilde{B_*}$, where $|\tilde{B_*}|\geq n/4$ by \eqref{eq:sizeofB_*} if we condition on $\mathcal{C}=\Delta^{t_0} \leq C \cdot n$, that are chosen at least $6C$ times in this modified process during rounds $[t_2,t_3]$. Note that for any $\mathfrak{F}^{t_0}$, $\tilde{Z}$ is stochastically smaller than $Z$. 
	By linearity of expectations we have  
	\[
 \ex{\tilde{Z}\;\big|\;\mathfrak{F}^{t_0} ,\; \mathcal{C}} \geq \kappa_4 \cdot \Ex{| \tilde{B_*} |\;\Big| \; \mathfrak{F}^{t_0} ,\; \mathcal{C} } \geq \kappa_2 \cdot \frac{1}{2} \cdot \Ex{|  B_*|\;\big| \; \mathfrak{F}^{t_0} ,\;\mathcal{C} } \geq \kappa_2 \cdot \frac{1}{2} \cdot \frac{ n }{2 }= \kappa_2 \cdot \frac{n}{4}.
\]
	In the modified process, changing the bin sample in one round can change $\tilde{Z}$ only by at most one, and hence applying the Method of Bounded Independent Differences~(\cref{mobd}) yields,
	\begin{equation*}
	\Pro{   \tilde{Z} < \frac{1}{2} \cdot  \ex{\tilde{Z}\mid \mathfrak{F}^{t_0} ,\; \mathcal{C} } \;\Bigg| \; \mathfrak{F}^{t_0} ,\; \mathcal{C}} \leq    \exp\left(- \frac{ \frac{1}{4} \left(\ex{\tilde{Z} \mid \mathfrak{F}^{t_0} ,\; \mathcal{C} } \right)^2 }{2\cdot (t_3-t_2)  \cdot 1^2} \right) = \exp\left(-\frac{5\kappa_4^2\cdot w_-  }{2^4} \cdot n  \right).	\end{equation*}
	and by the stochastic domination, we have
	\[
	\Pro{ Z \geq \frac{\kappa_4}{4} \cdot n \,\Big|\, \mathfrak{F}^{t_0},\; \mathcal{C}} \geq 1-\exp\left(-\frac{5\kappa_4^2\cdot w_-  }{2^4} \cdot n  \right).
	\]
We now fix the constant $\kappa_3 := \min\left\{\frac{1}{2}\cdot\frac{\kappa_4}{4}, \;    \frac{5\kappa_4^2\cdot w_-  }{2^4} \right\}$, where $\kappa_4>0$ is given above.

 At this point we see that with probability at least $1-e^{-\kappa_3 n} $ at least $2\kappa_3 \cdot n $ bins become overloaded at least once at some round in $[t_2,t_3]$. However, for each bin, there is the possibility that it is overloaded only for one round before becoming underloaded again. If each bin is overloaded for only one round, and all at different rounds, then it is possible that at no single round in $[t_2,t_3]$ do we have $\Omega(n)$ overloaded bins. We now explain why this is not the case. 
	
	Note that during the interval $[t_2,t_3]$, as $t_3-t_2 = \left\lceil \frac{n}{10 w_-}\right\rceil$, there is at most one round $t$ where the integer parts of the average load changes, that is, there is only at most one round $t\in [t_2,t_3)$ such that $\lfloor W^{t}/  n \rfloor < \lfloor W^{t+1}/n \rfloor$. Hence only at the transition from $t$ to $t+1$ could an overloaded bin become underloaded during the interval $[t_2,t_3]$. Therefore, if $|Z|\geq 2\kappa_3 \cdot n $, we must have
	\[
	| B_{+}^{t} \cap B_* | \geq  \kappa_3\cdot n \qquad \mbox{ or } \qquad
	| B_{+}^{t_3} \cap B_* | \geq \kappa_3 \cdot n.
	\]
	Hence $|Z| \geq 2\kappa_3 \cdot n$, implies that there exists $t \in [t_2,t_3]$ such that $|B_{+}^{t}| \geq \kappa_3 \cdot n$.
\end{poc}

The first claim implies that, w.p.\ $ 1- e^{-\kappa_1 n}$, that the process reaches a round $s_1 \in [t_0,t_1]$ with $\kappa_1 \cdot n$ underloaded bins. The second claim shows, w.p.\ $ 1-e^{-\kappa_3 n}$, that the process reaches a round $s_2 \in [t_1+\lceil \frac{n}{w_{+}}\rceil ,t_3]$ with $ \kappa_3  \cdot n$ overloaded bins. By the union bound, both events occur with probability $ 1- e^{-\kappa_1 n}-e^{-\kappa_3 n}$.
Since $\delta^{t}=|B_{+}^{t}|/n$, we have $\delta^{t+1} \leq \delta^{t} + \frac{1}{n}$,
and in this case applying \cref{lem:smoothness} with $r_0=s_1$, $r_1=s_2$, $f(t)=\delta^{t}$, $\epsilon:= \min \{ \kappa_3, \kappa_1,2/3 \}$ and $\xi =1/n$ gives,
\begin{align*}
\left| \left\{ t \in [t_0,t_3] \colon \delta^t \in (\epsilon/2,1-\epsilon/2) \right\} \right| &\geq
 \left| \left\{ t \in [s_1,s_2] \colon \delta^t \in (\epsilon/2,1-\epsilon/2) \right\} \right| \\ &\geq \min\{ \epsilon/2 \cdot n, s_2 - s_1  \} \\
 &\geq \min\{ \epsilon/2 \cdot n, \frac{n}{w^+} \} .
\end{align*}
 Thus, taking $\eps$ in the statement to be $\min\{ \min\{\kappa_3, \kappa_1,2/3 \}/2, 1/w_+\} $ gives the result.   
\end{proof}

\subsection{Quadratic and Absolute Value Potential Functions} \label{sec:quadratic_absolute_potential}
{\renewcommand{\thelem}{\ref{lem:quadratic_absolute_relation_for_w_plus_w_minus}}
\begin{lem}[restated]

\quadraticabsoluterelation
	\end{lem}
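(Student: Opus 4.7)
The plan is to compute the one-step change in $\Upsilon^t$ conditional on the chosen bin $j \in [n]$ and then take expectation with respect to the sampling distribution $p^t$. This reduces to two subproblems: extracting a linear-in-$y_j^t$ drift term and a bounded quadratic noise term from the square-expansion, and then controlling the drift term via the sign structure of $y_j^t$ on $B_+^t$ and $B_-^t$.

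Concretely, if bin $j$ is selected, then $w(j) \in \{w_-, w_+\}$ balls are placed into $j$, where $w(j) = w_-$ if $j \in B_-^t$ and $w(j) = w_+$ otherwise. The average load therefore increases by $w(j)/n$, so $y_j^{t+1} = y_j^t + w(j)(1 - 1/n)$ while every other $y_i^{t+1} = y_i^t - w(j)/n$. Expanding the squares and using $\sum_i y_i^t = 0$ to cancel the cross terms coming from the $n-1$ unchosen bins, I expect the clean identity $\Upsilon^{t+1} - \Upsilon^t = 2 \, y_j^t \, w(j) + w(j)^2 \cdot (n-1)/n$ conditional on $j$ being picked. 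Averaging over $j \sim p^t$ and bounding the noise contribution by $w_-^2 \leq 4 w_-^2$ then reduces the problem to controlling $\sum_j p_j^t \, y_j^t \, w(j)$.

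The key step is to split this sum into its contributions from $B_+^t$ and $B_-^t$. For $j \in B_+^t$ one has $y_j^t \geq 0$ and $p_j^t \leq p_+^t$, so $p_j^t \, y_j^t \, w(j) \leq p_+^t w_+ \, y_j^t$. For $j \in B_-^t$ one has $y_j^t < 0$ and $p_j^t \geq p_-^t$, so $p_j^t \, y_j^t \, w(j) \leq p_-^t w_- \, y_j^t$ as well, since multiplying a negative quantity by a larger positive number decreases the product. Summing and using $\sum_{j \in B_+^t} y_j^t = -\sum_{j \in B_-^t} y_j^t = \Delta^t/2$ gives $\sum_j p_j^t \, y_j^t \, w(j) \leq -\frac{\Delta^t}{2}(p_-^t w_- - p_+^t w_+)$. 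Combining with the noise bound yields the first displayed inequality.

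The second displayed inequality is then immediate from the uniform lower bound $p_-^t w_- - p_+^t w_+ \geq \C{quad_delta_drop}/n$ provided by \cref{lem:additive_drift}, with $\C{quad_const_add} := 4 w_-^2$. The only subtle point to navigate is the sign bookkeeping in the splitting step: one must be careful that replacing $p_j^t$ by the one-sided bound $p_+^t$ on $B_+^t$ and by $p_-^t$ on $B_-^t$ really produces an upper bound on the drift (which uses that $y_j^t$ and $p_j^t - p_\pm^t$ have opposite signs on the respective sides). Beyond that, everything is a direct algebraic expansion, and no concentration or probabilistic tool beyond linearity of expectation is required.
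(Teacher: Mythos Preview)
Your proof is correct and in fact slightly sharper than the paper's (you obtain the additive term $w_-^2$ rather than $4(w_-)^2$), but it proceeds along a genuinely different decomposition. The paper fixes a bin $i$ and computes $\Ex{\Upsilon_i^{t+1} \mid \mathfrak{F}^t}$ by averaging over \emph{which} bin is chosen for allocation, splitting into the cases $i \in B_+^t$ and $i \in B_-^t$; this produces cross terms involving $P_+^t$ and $P_-^t$, and the constant-order remainder is then bounded using the \PTwo inequalities $p_+^t \leq 1/n \leq p_-^t$, giving the $4(w_-)^2$. You instead condition on the chosen bin $j$ first and exploit the identity $\sum_i y_i^t = 0$ to collapse the square expansion to the exact formula $\Upsilon^{t+1} - \Upsilon^t = 2\, y_j^t\, w(j) + w(j)^2 (n-1)/n$; the drift step then needs only the \emph{definitions} $p_+^t = \max_{i \in B_+^t} p_i^t$ and $p_-^t = \min_{i \in B_-^t} p_i^t$, not the \PTwo bias. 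Your route is more economical and shows that the first inequality holds even without \PTwo (only \WTwo is used), whereas the paper's argument is structured to mirror the per-bin decomposition used elsewhere for the exponential potentials.
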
}
	\addtocounter{lem}{-1}
\begin{proof}

We begin by decomposing the potential $\Upsilon$ over the $n$ bins, using $\Upsilon_i^{t+1}:=(y_i^{t+1})^2$:
\[
 \Upsilon^{t+1} = \sum_{i=1}^n \Upsilon_i^{t+1} = \sum_{i=1}^n  (y_i^{t+1})^2.
\]
We shall analyze each term $\Upsilon_{i}^{t+1}$ separately in two cases depending on the load of the bin $i$. By the \WTwo assumption, if the chosen bin is overloaded, we place a ball of weight $w_+$, otherwise we place a ball of weight $w_-$. Also, recall that $P_+^t := \sum_{i \in B_+^t} p_i^t$ and $P_-^t := \sum_{i \in B_-^t} p_i^t$.

\medskip 

\noindent\textbf{Case 1} [$i \in B_{+}^t$]. If $i$ is assigned to an overloaded bin,
\begin{align*}
\Ex{\Upsilon_{i}^{t+1} \mid \mathfrak{F}^t } 
 & = \underbrace{p_i^t \cdot \Big(y_i^t + w_+ - \frac{w_+}{n}\Big)^2}_{\text{placing a ball in $i$}} + \underbrace{(P_+^t - p_i^t) \cdot \Big(y_i^t - \frac{w_+}{n}\Big)^2}_{\text{placing a ball in $B_+^t \setminus \lbrace i \rbrace$}} + \underbrace{P_-^t \cdot \Big(y_i^t - \frac{w_-}{n}\Big)^2}_{\text{placing a ball in $B_-^t$}}.
\end{align*}
Expanding out the squares,
\begin{align*}
\Ex{\Upsilon_{i}^{t+1} \mid \mathfrak{F}^t }
 & = (y_i^t)^2 + 2y_i^t \cdot \left(p_i^t \cdot \Big(w_+ - \frac{w_+}{n}\Big) - (P_+^t - p_i^t) \cdot \frac{w_+}{n} - P_-^t \cdot \frac{w_-}{n} \right) \\ 
 & \qquad + p_i^t \cdot \Big(w_+ - \frac{w_+}{n} \Big)^2 + (P_+^t - p_i^t) \cdot \frac{(w_+)^2}{n^2} +  P_-^t \cdot \frac{(w_-)^2}{n^2} \\
 & = (y_i^t)^2 + 2y_i^t \cdot \Big(p_i^t \cdot w_+ - P_+^t \cdot \frac{w_+}{n} - P_-^t \cdot \frac{w_-}{n}\Big) \\ 
 & \qquad + p_i^{t} \cdot (w_{+})^2 -2 \cdot p_i^t \cdot \frac{(w_+)^2}{n} + P_{+}^t \cdot \frac{(w_+)^2}{n^2} + P_{-}^t \cdot \frac{(w_-)^2}{n^2} \\
 & \leq (y_i^t)^2 + 2y_i^t \cdot \Big(p_i^t \cdot w_+ - P_+^t \cdot \frac{w_+}{n} - P_-^t \cdot \frac{w_-}{n}\Big) + 2 \cdot \frac{(w_{-})^2}{n},
\end{align*}
where in the last step we used that $p_{i}^t \leq p_+^t \leq 1/n$ by \PTwo, which implies $p_i^{t} \cdot (w_{+})^2 + P_{+}^t \cdot \frac{(w_+)^2}{n^2} + P_{-}^t \cdot \frac{(w_-)^2}{n^2} \leq 2 \cdot \frac{(w_{-})^2}{n}$.

\medskip

\noindent\textbf{Case 2} [$i \in B_{-}^t$]. If $i$ is assigned to an underloaded bin then 
\begin{align*}
\Ex{\Upsilon_{i}^{t+1} \mid \mathfrak{F}^t } 
 & = \underbrace{p_i^t \cdot \Big(y_i^t + w_- - \frac{w_-}{n}\Big)^2}_{\text{placing a ball in $i$}} + \underbrace{(P_-^t - p_i^t) \cdot \Big(y_i^t - \frac{w_-}{n}\Big)^2}_{\text{placing a ball in $i \in B_-^t \setminus \lbrace i \rbrace$}} + \underbrace{P_+^t \cdot \Big(y_i^t - \frac{w_+}{n}\Big)^2}_{\text{placing a ball in $i \in B_+^t$}}.
\end{align*}
Expanding out the squares,
\begin{align*}
\Ex{\Upsilon_{i}^{t+1} \mid \mathfrak{F}^t } 
 & = (y_i^t)^2 + 2y_i^t \cdot \Big(p_i^t \cdot \Big(w_- - \frac{w_-}{n}\Big) - (P_-^t - p_i^t) \cdot \frac{w_-}{n} - P_+^t \cdot \frac{w_+}{n}\Big) \\
 & \qquad +p_i^t \cdot \Big(w_- - \frac{w_-}{n} \Big)^2 + (P_-^t - p_i^t) \cdot \frac{(w_-)^2}{n^2} + P_+^t \cdot \frac{(w_+)^2}{n^2} \\
 & = (y_i^t)^2 + 2y_i^t \cdot \Big(p_i^t \cdot w_- - P_-^t \cdot \frac{w_-}{n} - P_+^t \cdot \frac{w_+}{n} \Big) \\
 & \qquad +p_i^t \cdot (w_-)^2 - 2 \cdot p_i^t \cdot \frac{(w_-)^2}{n} + P_-^t \cdot \frac{(w_-)^2}{n^2} + P_+^t \cdot \frac{(w_+)^2}{n^2} \\
 & \leq (y_i^t)^2 + 2y_i^t \cdot \Big(p_i^t \cdot w_- - P_-^t \cdot \frac{w_-}{n} - P_+^t \cdot \frac{w_+}{n} \Big) + 2\cdot p_i^t \cdot (w_-)^2,
\end{align*}
where in the last step we used that $p_i^{t} \geq p_{-}^t \geq 1/n$ by \PTwo, which implies
$p_i^{t} \cdot (w_{-})^2 + P_{-}^t \cdot \frac{(w_-)^2}{n^2} + P_{+}^t \cdot \frac{(w_+)^2}{n^2} \leq 2 \cdot p_i^t \cdot (w_{-})^2$.

\medskip 

Combining the two inequalities for the two cases and since $\sum_{i \in B_+^t} y_i^t = - \sum_{i \in B_-^t} y_i^t = \frac{1}{2} \cdot \Delta^t$ we get,
\begin{align*}
\Ex{\Upsilon^{t+1} \mid \mathfrak{F}^t} 
&= \sum_{i \in B_+^t} \Ex{\Upsilon_{i}^{t+1} \mid \mathfrak{F}^t} + \sum_{i \in B_-^t} \Ex{\Upsilon_{i}^{t+1} \mid \mathfrak{F}^t} \\
 & \leq 
  \Upsilon^t  + \sum_{i \in B_+^t} 2y_i^t p_i^t \cdot w_+ + \sum_{i \in B_-^t} 2y_i^t p_i^t \cdot w_- + \sum_{i \in B_+^t} 2 \cdot \frac{(w_-)^2}{n} + \sum_{i \in B_-^t} 2\cdot p_i^t  \cdot (w_-)^2\\
  &\leq \Upsilon^t  - ( p_-^t \cdot w_- - p_+^t \cdot w_+) \cdot \Delta^t + 4 \cdot (w_-)^2 , 
\end{align*}
where the last inequality follows since $p_i^t \leq p_+^t$ for $i \in B_+^t$, and $p_i^t \geq p_-^t$ for $i \in B_-^t$.
\end{proof}

\section{Potential Function Inequalities} \label{sec:non_filling_potential_functions}

In this section we derive several inequalities involving potential functions. Most of the effort goes into establishing a drop of the exponential potential function for some suitable choices of $\alpha$, which in turn depends on the constants defined by the process. One of the main insights is \cref{cor:change_for_large_lambda}, which establishes: $(i)$ a significant drop of the exponential potential function if the quantile satisfies $\delta^t \in (\epsilon,1-\epsilon)$, and $(ii)$ a not too large increase of the exponential potential function for any quantile.

On a high level the analysis follows relatively standard estimates and bears resemblance to the one in \cite{PTW15}. Even though some extra care is needed due to the more general allocation process, a reader may wish to skip this section (or the proofs) and continue with the proof in \cref{sec:non_filling_analysis}. 

\subsection{Quadratic and Exponential Potential Functions} \label{sec:quadratic_and_exp_potentials}

The next lemma bounds the quadratic potential in terms of the exponential potential.%
\begin{lem} \label{clm:bound_on_gamma_implies_bound_on_upsilon}\label{clm:bound_on_gamma_implies_bound_on_upsilon_2}
Consider any $\PThree \cap \WTwo$-process or $\PTwo \cap \WThree$-process. For any $0 < \alpha < 1$ . Then for any $t \geq 0$,
\[
\Upsilon^t \leq \alpha^{-2} \cdot n \cdot ( \log \Lambda^t )^2, \qquad \text{and}\qquad \Upsilon^t \leq \left(\frac{4}{\alpha} \cdot \log\left(\frac{4}{\alpha}\right)\right)^2\cdot \Lambda^t.
\]
\end{lem}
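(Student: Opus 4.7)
\textbf{Proof plan for Lemma~\ref{clm:bound_on_gamma_implies_bound_on_upsilon}.} Both inequalities are pointwise consequences of the definitions $\Upsilon^t=\sum_i (y_i^t)^2$ and $\Lambda^t=\sum_i \exp(\alpha|y_i^t|)$, so no drift analysis is needed — the argument is pure calculus. I do not anticipate any real obstacle; the only care point is keeping the constants in the stated form.

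For the first inequality, my plan is to start from the trivial observation that each exponential summand is dominated by the total: for every $j \in [n]$,
\[
\exp(\alpha |y_j^t|) \leq \sum_{i=1}^n \exp(\alpha |y_i^t|) = \Lambda^t,
\]
which gives $|y_j^t| \leq \alpha^{-1}\log \Lambda^t$. Squaring and summing over the $n$ bins then yields $\Upsilon^t \leq n\cdot \alpha^{-2}(\log \Lambda^t)^2$, as required.

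For the second inequality, the plan is to establish the pointwise comparison
\[
z^2 \leq \Bigl(\tfrac{4}{\alpha}\log\tfrac{4}{\alpha}\Bigr)^2 \exp(\alpha z) \qquad \text{for all } z \geq 0,
\]
and then apply it with $z=|y_i^t|$ and sum over $i$. To verify the pointwise bound, I would analyze $h(z):=z^2 e^{-\alpha z}$ by differentiating: $h'(z)=z(2-\alpha z)e^{-\alpha z}$, so $h$ attains its global maximum on $[0,\infty)$ at $z=2/\alpha$ with value $h(2/\alpha)=4/(\alpha^2 e^2)$. Hence $z^2 \leq \tfrac{4}{\alpha^2 e^2}\exp(\alpha z)$, and it remains to check $\tfrac{4}{\alpha^2 e^2} \leq \bigl(\tfrac{4}{\alpha}\log\tfrac{4}{\alpha}\bigr)^2$, i.e.\ $\log(4/\alpha) \geq 1/(2e)$, which holds comfortably for every $\alpha \in (0,1)$ since then $4/\alpha > 4$. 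Summing the pointwise bound over $i \in [n]$ gives $\Upsilon^t \leq \bigl(\tfrac{4}{\alpha}\log\tfrac{4}{\alpha}\bigr)^2 \Lambda^t$, completing the proof.

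Both steps are short; no stronger process condition (neither \PThree nor \WThree) is actually used — the statement holds purely by convexity/optimization comparisons between the two potentials, so the hypothesis ``$\PThree \cap \WTwo$-process or $\PTwo \cap \WThree$-process'' in the lemma is only there for uniformity with the surrounding section.
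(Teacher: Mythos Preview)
Your proof is correct. The first inequality is handled identically to the paper: bound each $|y_i^t|$ by $\alpha^{-1}\log\Lambda^t$ via the trivial $\exp(\alpha|y_i^t|)\leq\Lambda^t$, square, and sum.

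For the second inequality your route differs slightly from the paper's. You optimize $h(z)=z^2 e^{-\alpha z}$ directly by differentiation to get the sharp constant $4/(\alpha^2 e^2)$ and then check it is dominated by $\bigl(\tfrac{4}{\alpha}\log\tfrac{4}{\alpha}\bigr)^2$. The paper instead splits into two regimes: for $y\geq (4/\alpha)\log(4/\alpha)$ it uses the elementary $e^u\geq u$ twice to show $e^{\alpha y}\geq y^2$, while for smaller $y$ it bounds $y^2\leq\kappa\leq\kappa\Lambda_i^t$ using $\Lambda_i^t\geq 1$. Both arguments establish the same pointwise comparison and are equally short; yours gives a tighter intermediate constant at the cost of a derivative, the paper's stays calculus-free at the cost of a case split. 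Your closing remark that neither \PThree{} nor \WThree{} is actually invoked is also correct.
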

\begin{proof}
Note that by the definition of $\Lambda^t $, if $\Lambda^t \leq \lambda$ then  $y_i^t \leq \frac{1}{\alpha} \cdot \log \lambda$ and $-y_i^t \leq \frac{1}{\alpha} \cdot \log \lambda$ for all $i\in[n]$. Hence, $(y_i^t)^2 \leq \frac{1}{\alpha^2} \cdot (\log \lambda)^2$ for all $i\in[n]$, which proves the first statement by aggregating over all bins.

For the second statement let $\kappa := ((4/\alpha) \cdot \log(4/\alpha))^2$. Note that $e^y \geq y$ (for any $y \geq 0$) and hence for any $y \geq (4/\alpha) \cdot \log(4/\alpha)$,
\[
e^{\alpha y/2} = e^{\alpha y/4} \cdot e^{\alpha y/4} \geq \frac{\alpha y}{4} \cdot \frac{4}{\alpha} \geq y.
\]
Hence for $y \geq (4/ \alpha) \cdot \log(4/\alpha)$,
\[
e^{\alpha y} = e^{\alpha y/2} \cdot e^{\alpha y/2} \geq y \cdot y = y^2.
\]
Thus we conclude
\begin{align*}
 \Upsilon^t &= \sum_{i=1}^n ( y_i^t )^2 \leq \sum_{i=1}^n \max \left\{ \Lambda_i^{t},  \kappa \right\} \leq \sum_{i=1}^n \max \left\{ \Lambda_i^{t},  \Lambda_i^{t} \kappa \right\} \leq \Lambda^t \cdot  \kappa,
\end{align*} 
where the third inequality used that $\Lambda_i^{t} \geq 1$ for any $i \in [n]$ and the fourth used $\kappa \geq 1$.
\end{proof}

The next lemma is very basic but is used in \cref{lem:stabilisation_many_good_quantiles_whp} so we prove it for completeness. 
\begin{lem}\label{lem:basic}
Consider any $\PThree \cap \WTwo$-process or $\PTwo \cap \WThree$-process. For any $t\geq 0$ we have \[|\Upsilon^{t+1}-\Upsilon^{t}|\leq 4 w_-\cdot \max_{i \in [n]} |y_i^t| + 2w_-^2.\]
	\end{lem}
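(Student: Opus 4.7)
The proof is a direct one-step computation of how $\Upsilon^t = \sum_{i \in [n]}(y_i^t)^2$ changes after a single round. The plan is to identify the chosen bin, express the change exactly, and then bound it using $\sum_i y_i^t = 0$.

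First I would let $j = j^t$ denote the bin chosen for allocation in round $t$, and let $w \in \{w_+, w_-\}$ denote the number of balls placed into it, as specified by condition \WTwo. By definition the total load increases by $w$, so the average load increases by $w/n$, and therefore
\[
y_i^{t+1} = y_i^t - \tfrac{w}{n} \quad \text{for } i \neq j, \qquad y_j^{t+1} = y_j^t + w - \tfrac{w}{n}.
\]

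Next I would expand $\Upsilon^{t+1} - \Upsilon^t$ term by term. For $i \neq j$ we have $(y_i^{t+1})^2 - (y_i^t)^2 = -\tfrac{2w}{n} y_i^t + \tfrac{w^2}{n^2}$, while the $j$-th term contributes $2 y_j^t (w - \tfrac{w}{n}) + (w - \tfrac{w}{n})^2$. Summing and using $\sum_{i \neq j} y_i^t = -y_j^t$ (which follows from $\sum_i y_i^t = 0$), all the $y_j^t/n$ contributions combine and simplify to
\[
\Upsilon^{t+1} - \Upsilon^t = 2 w\, y_j^t + w^2\bigl(1 - \tfrac{1}{n}\bigr).
\]

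Finally, taking absolute values and using $w \le w_-$ and $|y_j^t| \le \max_{i \in [n]} |y_i^t|$ yields
\[
|\Upsilon^{t+1} - \Upsilon^t| \le 2 w_- \cdot \max_{i \in [n]} |y_i^t| + w_-^2,
\]
which is in fact stronger than the stated bound $4 w_- \cdot \max_i |y_i^t| + 2 w_-^2$; the claimed version follows by relaxing the constants.

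There is no real obstacle here: the argument is a short algebraic identity, and the only non-trivial step is remembering to use $\sum_i y_i^t = 0$ to eliminate the $\sum_{i \neq j} y_i^t$ term, after which the bound is immediate.
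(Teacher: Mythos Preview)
Your proof is correct and in fact yields a sharper inequality than the paper's. The two arguments differ in a small but pleasant way: the paper bounds $|\Upsilon^{t+1}-\Upsilon^t|$ by applying the triangle inequality term by term, i.e.\ it writes
\[
|\Upsilon^{t+1}-\Upsilon^t| \le \bigl|(y_j^t + w(1-\tfrac{1}{n}))^2 - (y_j^t)^2\bigr| + \sum_{i\neq j}\bigl|(y_i^t - \tfrac{w}{n})^2 - (y_i^t)^2\bigr|,
\]
and then bounds each summand separately using $|y_i^t|\le \max_k |y_k^t|$; this is how the factors $4$ and $2$ arise. You instead keep the exact expression, use $\sum_i y_i^t = 0$ to collapse the off-diagonal contributions, and obtain the identity $\Upsilon^{t+1}-\Upsilon^t = 2w\,y_j^t + w^2(1-\tfrac{1}{n})$ before bounding. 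That extra simplification buys you the improved constants $2w_-\max_i|y_i^t| + w_-^2$, from which the stated lemma follows a fortiori. Either route suffices for the application in \cref{lem:stabilisation_many_good_quantiles_whp}, where only a bound of order $\max_i|y_i^t|$ is needed.
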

\begin{proof}
Recall that $w^{t}=W^{t+1}-W^{t}$ is the total number of balls allocated in the round $t$. To begin since only one bin, we shall call this $i$, is updated at each round we have
	\begin{align*}
	|\Upsilon^{t-1}-\Upsilon^{t}| &= \left|\sum_{j=1}^n ( y_j^{t+1} )^2 - \sum_{j=1}^n ( y_j^{t} )^2\right|\\&\leq \left|\left(y_{i}^t +   w^t  \cdot \Bigl( 1 - \frac{1}{n}\Bigr) \right)^2 -  \left( y_i^t \right)^2\right| + \sum_{j\in[n], j\neq i }\left|  \left( y_j^{t} -\frac{w^t}{n}   \right)^2 -  \left( y_j^{t}  \right)^2\right|.%
	\end{align*}Now for any $j\in [n]$ we have \begin{align*}\left|  \left( y_j^{t} -\frac{w^t}{n}   \right)^2 -  \left( y_j^{t}  \right)^2\right| &\leq 2 \left|   \frac{w^t}{n}\right|\cdot |y_{i}^t| + \left|\frac{w^t}{n}\right|^2  \leq \frac{2w_- }{n }\cdot  \max_{i \in [n]} |y_i^t| + \frac{w_-^2}{n^2}, 
\end{align*}since  $W^{t}\leq W^{t+1}\leq W^{t} + w_-$ and $n\geq 1$. Similarly for any $i\in [n]$ we have 
\begin{align*}
&  \left|\left(y_{i}^t + w^t \cdot \Bigl( 1 - \frac{1}{n} \Bigr) \right)^2 -  \left( y_i^t \right)^2\right|\\
 & \qquad \leq 2 \cdot |y_i^t| \cdot \left|w^t \cdot \Bigl( 1 - \frac{1}{n}\Bigr) \right| + \left| w^t \cdot \Bigl( 1 - \frac{1}{n}\Bigr) \right|^2 \\
 & \qquad \leq 2\max_{i \in [n]} |y_i^t|\cdot w_- + w_-^2. \qedhere \end{align*}
\end{proof}

\subsection{Exponential Potential \texorpdfstring{$\Lambda$}{Lambda}} \label{sec:non_filling_lambda}

In this section, we consider the exponential potential $\Lambda$. Let $\mathcal{G}^t$ be the event that $\delta^t \in (\epsilon, 1 - \epsilon)$ holds. We will prove in \cref{lem:good_quantile_good_decrease} that the potential drops in expectation when the quantile is good after round $t$, i.e.,
\[
\ex{\Lambda^{t+1} \mid \mathfrak{F}^t, \mathcal{G}^t} \leq \Lambda^t \cdot \Big(1 - \frac{2 \C{good_quantile_mult} \alpha}{n} \Big) + \C{good_quantile_mult}',
\]
and in \cref{lem:bad_quantile_increase_bound} that it has a bounded increase at a round when the quantile is not good,
\[
\ex{\Lambda^{t+1} \mid \mathfrak{F}^t, \neg \mathcal{G}^t} \leq  \Lambda^t \cdot \Big( 1 + \frac{\C{bad_quantile_mult} \alpha^2}{2n}\Big) + \C{bad_quantile_mult}.
\]
Note that the decrease factor can be made arbitrarily larger than the increase factor, by choosing $\alpha > 0 $ smaller. So, once we prove that there is a constant fraction of rounds with a good quantile (\cref{lem:stabilisation_many_good_quantiles_whp}), we can deduce that there is overall an expected decrease in the exponential potential (\cref{lem:gamma_tilde_is_supermartingale} and \cref{lem:stabilization}), when the potential is sufficiently large. Note that the exponential potential cannot decrease in expectation in every round (\cref{clm:bad_configuration_lambda}). 

For the analysis of $\Lambda$ as well as $V$ (in \cref{sec:v_potential}), we consider the labeling of the bins $i \in [n]$ used by the allocation process in round $t$ so that $x_i^{t}$ is non-decreasing in $i \in [n]$. We write 
\[
\Lambda^t =: \sum_{i=1}^n \Lambda_i^t = \sum_{i=1}^n e^{\alpha |y_i^t|} \quad  \Big(\text{and} \quad V^t =: \sum_{i=1}^n V_i^t = \sum_{i=1}^n e^{\tilde{\alpha} |y_i^t|} \Big),
\]
and handle separately the following three cases of bins based on their load: %
\begin{itemize}
  \item \textbf{Case 1} [Robustly Overloaded Bins]. The set of bins $B_{++}^t$ with load $y_i^t \geq \frac{w_-}{n}$. These are bins in $B_+^t$ that are guaranteed to be in $B_+^{t+1}$ (that is, overloaded), since the average load can increase by at most $w_-/n$.
  
  For the exponential potential $\Lambda^t$ or (and $V^t$ respectively), the change of a single bin $i \in B_{++}^t$ is given by,
\begin{equation*} 
\Ex{ \Lambda_i^{t+1} \mid \mathfrak{F}^t}
 = \Lambda_i^t \cdot \Big(\underbrace{p_i^t \cdot e^{\alpha w_+ -\alpha w_+/n}}_{\text{placing a ball in $i$}} + \underbrace{(P_+^t - p_i^t) \cdot e^{-\alpha w_+/n}}_{\text{placing a ball in $B_{+}^t\setminus \{ i \}$}} + \underbrace{P_-^t \cdot e^{-\alpha w_-/n} }_{\text{placing a ball in $B_{-}^t$}}\Big).
\end{equation*}
  
  \item \textbf{Case 2} [Robustly Underloaded Bins]. The set of bins $B_{--}^t$ with load $y_i^t \leq -w_-$. These are bins in $B_-^t$ that are guaranteed to be in $B_-^{t+1}$ (that is, underloaded), since a bin can receive a weight of at most $w_-$ in one round.
  
  For the exponential potential $\Lambda^t$ or (and $V^t$ respectively), the change of a single bin $i \in B_{--}^t$ is given by,
\begin{align*}
\Ex{ \Lambda_i^{t+1} \mid \mathfrak{F}^t}
  = \Lambda_i^t \cdot \Big( \underbrace{p_i^t \cdot e^{- \alpha w_- + \alpha w_-/n}}_{\text{placing a ball in $i$}} + \underbrace{(P_-^t - p_i^t) \cdot e^{\alpha w_-/n}}_{\text{placing a ball in $B_-^t \setminus \{i \}$}} + \underbrace{P_+^t \cdot e^{\alpha w_+/n}}_{\text{placing a ball in $B_+^t$}} \Big).
\end{align*}
  \item \textbf{Case 3} [Swinging Bins]. The set of bins $B_{+/-}^t$ with load $y_i^t \in (-w_-, \frac{w_-}{n})$.
\end{itemize}

We begin by showing that the aggregated contribution of the swinging bins to the change of the potential $\Lambda$ is at most a constant. This will be used in the proofs of Lemmas \ref{lem:good_quantile_good_decrease} and \ref{lem:bad_quantile_increase_bound}.

\begin{lem}\label{lem:bins_close_to_mean}
For any constant $\alpha \in (0, 1]$, for any constant $\kappa_1 \geq 0$ and any $t \geq 0$, we have 
\[
\sum_{i \in B_{+/-}^t} \ex{\Lambda_i^{t+1} \mid \mathfrak{F}^t} \leq \sum_{i \in B_{+/-}^t} \Lambda_i^t \cdot \Big( 1 - \frac{2 \kappa_1 \alpha}{n} \Big) + 3(\kappa_1 + w_-) \cdot e^{2 w_-}.
\]
\end{lem}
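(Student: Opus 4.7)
The plan is to bound $\Lambda_i^{t+1}$ pointwise for each $i\in B_{+/-}^t$ by splitting on whether bin $i$ is chosen, then sum and use that swinging bins contribute boundedly to the potential.

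First I will establish two deterministic bounds. For any swinging bin $i\in B_{+/-}^t$ we have $|y_i^t|<w_-$, so $\Lambda_i^t\le e^{\alpha w_-}\le e^{w_-}$ whenever $\alpha\le 1$. In round $t$ the load changes as $y_i^{t+1}=y_i^t+w_i^t-w^t/n$, where $w^t\in\{w_+,w_-\}$ is the weight placed in this round and $w_i^t$ is the weight actually placed on $i$ (zero if $i$ is not chosen). Thus:
\begin{itemize}
\item If $i$ is chosen (probability $p_i^t$), the change $w_i^t-w^t/n$ has absolute value at most $w_-$, hence $|y_i^{t+1}|\le |y_i^t|+w_-<2w_-$ and $\Lambda_i^{t+1}\le e^{2\alpha w_-}\le e^{2w_-}$.
\item If $i$ is not chosen, only the average shifts: $|y_i^{t+1}|\le |y_i^t|+w_-/n$, giving $\Lambda_i^{t+1}\le \Lambda_i^t\cdot e^{\alpha w_-/n}$.
\end{itemize}

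Combining these, for each swinging bin
\[
\Ex{\Lambda_i^{t+1}\mid\mathfrak{F}^t}\le p_i^t\cdot e^{2w_-}+\Lambda_i^t\cdot e^{\alpha w_-/n}.
\]
Next I will apply the Taylor estimate $e^x\le 1+x+x^2$ valid for $x\le 1$ (which applies to $x=\alpha w_-/n$ for $n$ large), obtaining $\Lambda_i^t\cdot e^{\alpha w_-/n}\le \Lambda_i^t\bigl(1+\alpha w_-/n+(\alpha w_-/n)^2\bigr)$. Rearranging so as to isolate the target coefficient $1-2\kappa_1\alpha/n$ yields
\[
\Ex{\Lambda_i^{t+1}\mid\mathfrak{F}^t}-\Lambda_i^t\!\left(1-\tfrac{2\kappa_1\alpha}{n}\right)\le p_i^t\cdot e^{2w_-}+\Lambda_i^t\cdot\left(\tfrac{(w_-+2\kappa_1)\alpha}{n}+\tfrac{\alpha^2w_-^2}{n^2}\right).
\]

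Finally I will sum over $i\in B_{+/-}^t$. Using $\sum_i p_i^t\le 1$, the uniform bound $\Lambda_i^t\le e^{\alpha w_-}\le e^{w_-}$, and $|B_{+/-}^t|\le n$, the sum is at most
\[
e^{2w_-}+e^{w_-}\!\left((w_-+2\kappa_1)\alpha+\tfrac{\alpha^2w_-^2}{n}\right)\le e^{2w_-}+e^{w_-}(w_-+2\kappa_1+1)\le (2+w_-+2\kappa_1)\,e^{2w_-},
\]
for $n$ sufficiently large and $\alpha\le 1$. Since $w_-\ge 1$ and $\kappa_1\ge 0$, we have $2+w_-+2\kappa_1\le 3(\kappa_1+w_-)$, which yields the claimed bound.

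I do not foresee a real obstacle: the only subtlety is that the "$(1-2\kappa_1\alpha/n)$" contraction factor on the right-hand side does not come from any per-step drop in $\Lambda_i^{t+1}$ itself (for a swinging bin not chosen the potential actually grows slightly by $e^{\alpha w_-/n}$), but rather gets absorbed into the additive constant once we exploit that $\Lambda_i^t$ is uniformly bounded by $e^{w_-}$ on $B_{+/-}^t$, so the total contribution $\tfrac{2\kappa_1\alpha}{n}\sum_i\Lambda_i^t$ is $O(\kappa_1 e^{w_-})$ and fits comfortably inside $3(\kappa_1+w_-)e^{2w_-}$. The bookkeeping of the Taylor remainder $(\alpha w_-/n)^2$ is harmless for $n$ large.
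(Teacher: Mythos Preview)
Your proof is correct and follows essentially the same approach as the paper: split each swinging bin on whether it is chosen (yielding the uniform bound $\Lambda_i^{t+1}\le e^{2\alpha w_-}$) or not (yielding $\Lambda_i^{t+1}\le \Lambda_i^t\,e^{\alpha w_-/n}$), apply a Taylor estimate, then absorb the artificial $(1-2\kappa_1\alpha/n)$ factor into the additive constant using $\Lambda_i^t\le e^{\alpha w_-}$ and $|B_{+/-}^t|\le n$. The only cosmetic differences are that the paper keeps the $(1-p_i^t)$ factor one line longer, uses the cruder Taylor bound $e^z\le 1+2z$, and performs the add-and-subtract of $\sum_i\Lambda_i^t\cdot 2\kappa_1\alpha/n$ after summing rather than per bin; the final constant-chasing is the same.
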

\begin{proof}

For each bin $i \in B_{+/-}^t$, the two events affecting the contribution of the bin to $\Lambda^{t+1}$ are $(i)$ ``internal'' due to a ball being placed $i$ and $(ii)$ ``external'' due to the change in the average. 

For $(i)$, the chosen bin $i \in B_{+/-}^t$ can increase by at most $w_-$, so the potential value satisfies $\Lambda_i^{t+1} \leq e^{2\alpha w_-}$. For $(ii)$, the maximum change in the average load is at most $w_-/n$, and this leads to $\Lambda_i^{t+1} \leq \Lambda_i^t \cdot e^{\alpha w_- /n}$.  Combining the two contributions
\begin{align*}
\sum_{i \in B_{+/-}^t} \ex{\Lambda_i^{t+1} \mid \mathfrak{F}^t} 
 & \leq 
\sum_{i \in B_{+/-}^t} \Lambda_i^t \cdot e^{\alpha w_-/n} \cdot (1 - p_i^t) + \sum_{i \in B_{+/-}^t} e^{2\alpha w_-} \cdot p_i^t \\
 & \leq \sum_{i \in B_{+/-}^t} \Lambda_i^t \cdot e^{\alpha w_-/n} + e^{2\alpha w_-} \\
 & \stackrel{(a)}{\leq} \sum_{i \in B_{+/-}^t} \Lambda_i^t \cdot \Big(1 + \frac{2 w_- \alpha}{n}\Big) + e^{2\alpha w_-} 
 \\ &\stackrel{(b)}{\leq} \sum_{i \in B_{+/-}^t} \Lambda_i^t + 2\alpha w_- \cdot e^{\alpha w_-} +  e^{2\alpha w_-},
\end{align*}
where we used the Taylor estimate $e^z \leq 1 + 2z$ for any $z < 1.2$, for sufficiently large $n$ in $(a)$, and $\Lambda_i^{t} \leq e^{\alpha w_{-}}$ for any $i \in B_{+/-}^t$ in $(b)$.  By adding and subtracting $\sum_{i \in B_{+/-}^t} \Lambda_i^t \cdot \frac{2 \alpha \kappa_1}{n}$,
\begin{align*}
\sum_{i \in B_{+/-}^t} \ex{\Lambda_i^{t+1} \mid \mathfrak{F}^t} 
 & \leq \sum_{i \in B_{+/-}^t} \Lambda_i^t\cdot \Big(1 - \frac{2 \alpha \kappa_1}{n}\Big) + \sum_{i \in B_{+/-}^t} \Lambda_i^t \cdot \frac{2 \alpha \kappa_1}{n} + 2\alpha w_- \cdot e^{\alpha w_-} +  e^{2\alpha w_-} \\
 & \leq \sum_{i \in B_{+/-}^t} \Lambda_i^t\cdot\Big(1 - \frac{2 \alpha \kappa_1}{n}\Big) + (2\alpha \kappa_1 + 2\alpha w_-) \cdot e^{\alpha w_-} +  e^{2\alpha w_-}, %
\end{align*}
where we used that $\Lambda_i^t \leq e^{\alpha w_-}$ and $|B_{+/-}^t| \leq n$. Finally we have $(2\alpha \kappa_1 + 2\alpha w_-) \cdot e^{\alpha w_-} +  e^{2\alpha w_-}\leq (2\kappa_1 + 2 w_-) \cdot e^{w_-} +  e^{2 w_-} \leq 3(\kappa_1 + w_-) \cdot e^{2 w_-}$, as $w_-\geq 1$ and  $\alpha\leq 1$.
\end{proof}

Now, we show that if the quantile $\delta^t$ of the mean is in $(\epsilon, 1 - \epsilon)$, then the potential function exhibits a multiplicative drop.

\begin{lem}\label{lem:good_quantile_good_decrease} For any $\WThree \cap \PThree$-process and any constant $\epsilon \in (0, 1)$, 
choose a constant $\alpha:=\alpha(\eps)$ such that
\begin{equation}\label{eq:c_3alphacond1}
0 < \alpha \leq \min\left\lbrace\frac{1}{w_-}, \frac{\C{p2k2} \epsilon}{2 w_- (1 + \C{p2k2}\epsilon)}, \frac{\C{p2k1} \epsilon}{2 w_+(1-\C{p2k1} \epsilon)} \right\rbrace.
\end{equation}
For any $\PTwo \cap \WTwo$-process and any constant $\epsilon \in (0, 1)$,
choose a constant $\alpha:=\alpha(\eps)$ such that
\begin{equation}\label{eq:c_3alphacond2}
0 <  \alpha \leq \min\left\lbrace\frac{1}{w_-}, \frac{\epsilon( w_- - w_+)}{4 w_-^2}, \frac{\epsilon}{2 \cdot w_- \cdot (2+\epsilon)} \right\rbrace.
\end{equation}
Then there exists a constant $\C{good_quantile_mult}:=\C{good_quantile_mult}(\epsilon) > 0$ such that for $\C{good_quantile_mult}':=3(\C{good_quantile_mult} + w_-) \cdot e^{2 w_-}$ and any $t\geq 0$ we have
\[
\Ex{ \Lambda^{t+1} \mid \mathfrak{F}^t, \mathcal{G}^t } \leq \Lambda^t \cdot \Big(1 - \frac{2 \C{good_quantile_mult} \alpha}{n} \Big) + \C{good_quantile_mult}'. %
\]
\end{lem}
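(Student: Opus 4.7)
The plan is to decompose $[n]$ into the robustly overloaded bins $B_{++}^t := \{i : y_i^t \geq w_-/n\}$, the robustly underloaded bins $B_{--}^t := \{i : y_i^t \leq -w_-\}$, and the swinging bins $B_{+/-}^t$ near the mean. The swinging contribution is handled directly by \cref{lem:bins_close_to_mean} (applied with $\kappa_1 := \C{good_quantile_mult}$), which already delivers the desired factor $(1 - 2\C{good_quantile_mult}\alpha/n)$ together with the additive constant $\C{good_quantile_mult}'$. For the robust classes, no bin can flip sign in one round, so $\Ex{\Lambda_i^{t+1}\mid\mathfrak{F}^t}$ splits cleanly by conditioning on whether bin $i$, another bin in $B_+^t$, or a bin in $B_-^t$ is chosen; for $i \in B_{++}^t$ this gives
\[
\Ex{\Lambda_i^{t+1}\mid\mathfrak{F}^t} = \Lambda_i^t \cdot \bigl(p_i^t\, e^{\alpha w_+ (1 - 1/n)} + (P_+^t - p_i^t)\, e^{-\alpha w_+/n} + P_-^t\, e^{-\alpha w_-/n}\bigr),
\]
and symmetrically for $B_{--}^t$ with $w_+ \leftrightarrow w_-$ and sign flips. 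Since $\alpha w_- \leq 1$ the bound $e^z \leq 1 + z + z^2$ applies to each exponential, so after collecting terms
\[
\Ex{\Lambda_i^{t+1}\mid \mathfrak{F}^t}/\Lambda_i^t \;\leq\; 1 + \alpha L_i + R_i,
\]
with $R_i = O(\alpha^2/n)$ and $L_i = p_i^t w_+ - P_+^t w_+/n - P_-^t w_-/n$ for $i \in B_{++}^t$ (resp. $L_i = -p_i^t w_- + P_-^t w_-/n + P_+^t w_+/n$ for $i \in B_{--}^t$), using $p_i^t + (P_+^t - p_i^t) + P_-^t = 1$.

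Under \PThree I bound $L_i$ pointwise. For $i \in B_{++}^t$, inserting $p_i^t \leq p_+^t \leq \tfrac{1}{n} - \tfrac{\C{p2k1}(1-\delta^t)}{n}$ and using $P_+^t + P_-^t = 1$ telescopes $L_i$ to $-(P_-^t(w_- - w_+) + \C{p2k1}(1-\delta^t) w_+)/n$; with $w_- \geq w_+$ (from \WTwo) and $1 - \delta^t \geq \eps$ this is at most $-\C{p2k1}\eps w_+/n$. A symmetric calculation on $B_{--}^t$ with $p_i^t \geq p_-^t \geq \tfrac{1}{n} + \tfrac{\C{p2k2}\delta^t}{n}$ and $\delta^t \geq \eps$ yields $L_i \leq -\C{p2k2}\eps w_-/n$. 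Setting $2\C{good_quantile_mult} := \min\{\C{p2k1}\eps w_+,\,\C{p2k2}\eps w_-\}$, the restriction on $\alpha$ in~\eqref{eq:c_3alphacond1} is chosen precisely so that $R_i$ eats at most half of $\alpha L_i$, giving the per-bin inequality $\Ex{\Lambda_i^{t+1}\mid\mathfrak{F}^t} \leq \Lambda_i^t(1 - 2\C{good_quantile_mult}\alpha/n)$, which sums over $i \in B_{++}^t \cup B_{--}^t$ and combines with the swinging-bin bound.

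The \WThree case is the main obstacle, because without \PThree one only has $p_i^t \leq 1/n$ on $B_+^t$ and $p_i^t \geq 1/n$ on $B_-^t$, which is too weak: if $P_+^t$ is near zero then the pointwise $L_i$ can vanish. Instead I aggregate over $B_{--}^t$ and apply Chebyshev's sum inequality. The monotonicity forced by \WThree makes $p_i^t$ non-decreasing in $i$, and $\Lambda_i^t = e^{-\alpha y_i^t}$ is also non-decreasing in $i$ on $B_-^t$, so $\sum_{i \in B_{--}^t}\Lambda_i^t p_i^t \geq (P/B)\sum_{i \in B_{--}^t}\Lambda_i^t$ with $P := \sum_{i \in B_{--}^t} p_i^t$ and $B := |B_{--}^t|$. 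Since $B_{--}^t$ consists of the largest-index bins of $B_-^t$, monotonicity also gives $P/B \geq P_-^t/|B_-^t|$; plugging in $P_-^t \geq 1 - \delta^t$, $|B_-^t| = (1-\delta^t)n$, and $P_+^t \leq \delta^t$ collapses the aggregate linear contribution to at most $-\eps(w_- - w_+)/n \cdot \sum_{i \in B_{--}^t}\Lambda_i^t$. A symmetric reverse-Chebyshev argument on $B_{++}^t$, where $\Lambda_i^t$ is non-increasing and $p_i^t$ non-decreasing in $i$, delivers the same aggregate drop. Taking $2\C{good_quantile_mult} := \tfrac{1}{2}\eps(w_- - w_+)$ and using~\eqref{eq:c_3alphacond2} to absorb the quadratic remainder completes the proof, with $\C{good_quantile_mult}'$ inherited from \cref{lem:bins_close_to_mean}.
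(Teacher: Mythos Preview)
Your decomposition and overall plan match the paper's: the three-way split into $B_{++}^t$, $B_{--}^t$, $B_{+/-}^t$, the Taylor estimate $e^z \leq 1+z+z^2$, and the appeal to \cref{lem:bins_close_to_mean} for the swinging bins are all the same. There is, however, a genuine gap in your treatment of $B_{--}^t$: the claim $R_i = O(\alpha^2/n)$ is false there. The quadratic remainder is dominated by the term $p_i^t\bigl(\alpha w_-(1-1/n)\bigr)^2$, and conditions \PTwo, \PThree, \WThree impose only a \emph{lower} bound $p_i^t \geq 1/n$ on underloaded bins; nothing rules out $p_i^t = \Theta(1)$ for some particular $i$, which makes $R_i = \Theta(\alpha^2)$ rather than $\Theta(\alpha^2/n)$. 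Having already replaced $L_i$ by the weak pointwise bound $-\C{p2k2}\epsilon w_-/n$ (or, under \WThree, by the Chebyshev average), you can no longer absorb such an $R_i$. On $B_{++}^t$ there is no such issue, since $p_i^t \leq p_+^t \leq 1/n$ by \PTwo.

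The fix, which is precisely what the paper does, is not to separate: keep the $p_i^t$-quadratic attached to the $p_i^t$-linear and write, for $i \in B_{--}^t$,
\[
\Ex{\Lambda_i^{t+1}\mid\mathfrak{F}^t}\big/\Lambda_i^t \;\leq\; 1 \;-\; p_i^t\,\alpha w_-\bigl(1 - \alpha w_-\bigr) \;+\; P_-^t\,\frac{\alpha w_-}{n} \;+\; P_+^t\,\frac{\alpha w_+}{n} \;+\; o(n^{-1}),
\]
where $1 - \alpha w_- > 0$ since $\alpha \leq 1/w_-$. Now your pointwise argument under \PThree goes through with $p_i^t \geq (1+\C{p2k2}\epsilon)/n$, and your Chebyshev aggregation under \WThree goes through verbatim because the coefficient of $\sum_i \Lambda_i^t\, p_i^t$ remains negative. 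The paper in fact handles $B_{--}^t$ under \WThree by a further split into the subcases $P_-^t \leq 1 - \epsilon/2$ (pointwise with $p_i^t \geq 1/n$) and $P_-^t > 1 - \epsilon/2$ (majorization via \cref{lem:quasilem}); your unified Chebyshev treatment is a tidier alternative to that split, but it only becomes correct once the quadratic is carried along with the linear term as above.
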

\begin{proof}This proof will work with the labeling of the bins $i \in [n]$ used by the allocation process in round $t$ so that $x_i^{t}$ is non-decreasing in $i \in [n]$.

\noindent\textbf{Case 1} [Robustly Overloaded Bins]. For $i \in B_{++}^t$ in this case
\begin{align*}
\Ex{ \Lambda_i^{t+1} \mid \mathfrak{F}^t}
 & = \Lambda_i^t \cdot \Big(p_i^t \cdot e^{-\alpha w_+/n + \alpha w_+} + (P_+^t - p_i^t) \cdot e^{-\alpha w_+/n} + P_-^t \cdot e^{-\alpha w_-/n} \Big) . \\
 \intertext{Applying the Taylor estimate $e^{z} \leq 1+z+z^2$, which holds for any $z \leq 1.75$, since $\alpha w_+ \leq 1$ (and $\alpha w_- \leq 1$),}
 \Ex{ \Lambda_i^{t+1} \mid \mathfrak{F}^t} & \leq \Lambda_i^t \cdot \Big(1 + p_i^t \cdot \Big(-\frac{\alpha w_+}{n} + \alpha w_+ + \Bigl(-\frac{\alpha w_+}{n} + \alpha w_+ 
 \Bigr)^2 \Big)   \\ 
 & \qquad + (P_+^t - p_i^t) \cdot \Big(-\frac{\alpha w_+}{n} + \Big(\frac{\alpha w_+}{n}\Big)^2 \Big) + P_-^t \cdot \Big(- \frac{\alpha w_-}{n} + \Big(\frac{\alpha w_-}{n} \Big)^2 \Big) \Big). \\
 \intertext{By gathering $o(n^{-1})$ terms and rearranging terms we obtain,}
 \Ex{ \Lambda_i^{t+1} \mid \mathfrak{F}^t} & \leq \Lambda_i^t \cdot \Big(1 + p_i^t \cdot \Big(-\frac{\alpha w_+}{n} + \alpha w_+ + \Bigl(-\frac{\alpha w_+}{n} + \alpha w_+ 
 \Bigr)^2 \Big)   \\ 
 & \qquad - (P_+^t - p_i^t) \cdot \frac{\alpha w_+}{n} - P_-^t \cdot \frac{\alpha w_-}{n} + o(n^{-1}) \Big) \\
 & = \Lambda_i^t \cdot \Big(1 + p_i^t \cdot \Bigl(\alpha w_+ + \Bigl(-\frac{\alpha w_+}{n} + \alpha w_+ 
 \Bigr)^2\Bigr) - P_+^t \cdot \frac{\alpha w_+}{n} - P_-^t \cdot \frac{\alpha w_-}{n} + o(n^{-1}) \Big).
\end{align*}
Applying $(-\frac{\alpha w_{+}}{n} + \alpha w_{+})^2 \leq (\alpha w_{+})^2$ and then replacing $P_-^t$ by $1 - P_+^t$, gives,
\begin{align}
 \Ex{ \Lambda_i^{t+1} \mid \mathfrak{F}^t} & \leq \Lambda_i^t \cdot \Big(1 + p_i^t \cdot (\alpha w_+ + (\alpha w_+)^2) - P_+^t \cdot \frac{\alpha w_+}{n} - P_-^t \cdot \frac{\alpha w_-}{n} + o(n^{-1}) \Big) \nonumber \\
 & = \Lambda_i^t \cdot \Big(1 + p_i^t \cdot (\alpha w_+ + (\alpha w_+)^2) + P_+^t \cdot \Big(\frac{\alpha w_-}{n} - \frac{\alpha w_+}{n}\Big) - \frac{\alpha w_-}{n} + o(n^{-1}) \Big).\label{eq:lambda_case_1}
\end{align}
By condition \PTwo, we have $p_i^t \leq 1/n$ for any overloaded bin $i \in B_{+}^t$. Since by assumption, $\delta^t \leq 1-\epsilon$ we also have that $P_+^t \leq (\delta^t \cdot n) \cdot 1/n \leq 1- \epsilon$. Hence,
\begin{align*}
 \Ex{ \Lambda_i^{t+1} \mid \mathfrak{F}^t} & \leq \Lambda_i^t \cdot \Big(1 + p_i^t  \cdot (\alpha w_+ + (\alpha w_+)^2) + (1 - \epsilon) \cdot \Bigl(\frac{\alpha w_-}{n} - \frac{\alpha w_+}{n} \Bigr) - \frac{\alpha w_-}{n} + o(n^{-1}) \Big) \nonumber\\
 & = \Lambda_i^t \cdot \Big(1 - \alpha w_+ \cdot \Bigl(\frac{1}{n} - p_i^t \cdot (1 + \alpha w_+ )\Bigr) - \frac{\alpha \epsilon}{n} \cdot (w_- - w_+) + o(n^{-1}) \Big).
\end{align*}
\noindent \textbf{Case 1.A} [\PThree holds]. Then, as for any overloaded bin $p_i^t \leq \frac{1 - \C{p2k1} \epsilon}{n}$ and $w_- \geq w_-$,
\begin{align*}
\Ex{ \Lambda_i^{t+1} \mid \mathfrak{F}^t}
 & \leq \Lambda_i^t \cdot \Big(1 - \frac{\alpha w_+}{n} \cdot \Bigl(1 - (1 - \C{p2k1} \epsilon) \cdot (1 + \alpha w_+ )\Bigr) - \frac{\alpha \epsilon}{n} \cdot (w_- - w_+) + o(n^{-1}) \Big) \\
 & \leq \Lambda_i^t \cdot \Big(1 - \frac{\alpha w_+}{n} \cdot \Bigl(1 - (1 - \C{p2k1} \epsilon) \cdot (1 + \alpha w_+ )\Bigr) + o(n^{-1}) \Big) \\
 & = \Lambda_i^t \cdot \Big(1 - \frac{\alpha w_+}{n} \cdot \Bigl(\C{p2k1} \epsilon - \alpha w_+ \cdot (1 - \C{p2k1} \epsilon )\Bigr) + o(n^{-1}) \Big) \\
 & \stackrel{(a)}{\leq} \Lambda_i^t \cdot \Big(1 - \frac{\alpha w_+ }{n} \cdot \frac{\C{p2k1} \epsilon}{2} + o(n^{-1}) \Big) \\ 
 & \stackrel{(b)}{\leq} \Lambda_i^t \cdot \Big(1 - \frac{\alpha w_+ \C{p2k1} \epsilon}{4n} \Big),
\end{align*}
where in inequality $(a)$ we used  that $\frac{\C{p2k1} \epsilon}{2}\geq \alpha w_+(1 - \C{p2k1} \epsilon )$ (as implied by $\alpha \leq \frac{\C{p2k1} \epsilon}{2 w_+(1-\C{p2k1} \epsilon)}$) and in inequality $(b)$ that $\alpha, w_+, \epsilon, \C{p2k1}$ are constants.

\noindent \textbf{Case 1.B} [\WThree holds]. Applying the bound $p_i^t \leq p_+^t \leq 1/n$,
\begin{align*}
\Ex{ \Lambda_i^{t+1} \mid \mathfrak{F}^t}
 & \leq \Lambda_i^t \cdot \Big(1 - \alpha w_+ \cdot \Bigl(\frac{1}{n} - \frac{1}{n} \cdot (1 + \alpha w_+ )\Bigr) - \frac{\alpha \epsilon}{n} \cdot (w_- - w_+) + o(n^{-1}) \Big) \\
 & = \Lambda_i^t \cdot \Big(1 + \frac{(\alpha w_+)^2}{n} - \frac{\alpha \epsilon}{n} \cdot (w_- - w_+) + o(n^{-1}) \Big) \\
 & = \Lambda_i^t \cdot \Big(1 - \frac{\alpha}{n} \cdot (\epsilon \cdot (w_- - w_+) - \alpha w_+^2) + o(n^{-1}) \Big) \\
 & \stackrel{(a)}{\leq} \Lambda_i^t \cdot \Big(1 - \frac{\alpha }{n} \cdot \frac{\epsilon}{2} \cdot (w_- - w_+) + o(n^{-1}) \Big) \\
 & \stackrel{(b)}{\leq} \Lambda_i^t \cdot \Big(1 - \frac{\alpha \epsilon}{4n} \cdot(w_- - w_+) \Big),
\end{align*}
where we used in $(a)$ that $\frac{\epsilon}{2} \cdot (w_- - w_+) \geq \alpha w_+^2$ (as implied by $\alpha \leq \frac{\epsilon (w_- - w_+)}{4w_-^2} \leq \frac{\epsilon (w_- - w_+)}{2w_+^2}$) and in $(b)$ that $\alpha, \epsilon, (w_- - w_+)$ are constants.

So, in both \textbf{Case 1.A} and \textbf{Case 1.B}, we conclude that there exists a constant $\C{good_quantile_mult} >0$ such that,
\begin{align*}
\sum_{i \in B_{++}^t} \Ex{ \Lambda_i^{t+1} \mid \mathfrak{F}^t} 
\leq \sum_{i \in B_{++}^t} \Lambda_i^t \cdot \Big(1 - \frac{2 \C{good_quantile_mult} \alpha}{n} \Big). 
\end{align*}

\noindent \textbf{Case 2} [Robustly Underloaded Bins]. For $i \in B_{--}^t$,
\begin{align*}
\Ex{ \Lambda_i^{t+1} \mid \mathfrak{F}^t}
 & = \Lambda_i^t \cdot \Big(p_i^t \cdot e^{\alpha w_-/n - \alpha w_-} + (P_-^t - p_i^t) \cdot e^{\alpha w_-/n} + P_+^t \cdot e^{\alpha w_+/n} \Big). 
 \intertext{Applying the Taylor estimate $e^{z} \leq 1+z+z^2$, which holds for any $z \leq 1.75$, since $\alpha w_+/n \leq 1$ (and $\alpha w_-/n \leq 1$),}
 \Ex{ \Lambda_i^{t+1} \mid \mathfrak{F}^t} & \leq \Lambda_i^t \cdot \Big(1 + p_i^t \cdot \Big(\frac{\alpha w_-}{n} - \alpha w_- + \Big(\frac{\alpha w_-}{n} - \alpha w_- \Big)^2 \Big)  \\ 
 & \qquad + (P_-^t - p_i^t) \cdot \Big(\frac{\alpha w_-}{n} + \Big(\frac{\alpha w_-}{n} \Big)^2 \Big) + P_+^t \cdot \Big(\frac{\alpha w_+}{n} + \Bigl(\frac{\alpha w_+}{n} \Big)^2 \Big) \Big).
 \intertext{By gathering $o(n^{-1})$ terms and rearranging terms we obtain,}
 \Ex{ \Lambda_i^{t+1} \mid \mathfrak{F}^t} & \leq \Lambda_i^t \cdot \Big(1 + p_i^t \cdot \Big(\frac{\alpha w_-}{n} - \alpha w_- + \Big(\frac{\alpha w_-}{n} - \alpha w_- \Big)^2 \Big) \\ 
 & \qquad + (P_-^t - p_i^t) \cdot \frac{\alpha w_-}{n} + P_+^t \cdot \frac{\alpha w_+}{n} + o(n^{-1}) \Big) \\
 & = \Lambda_i^t \cdot \Big(1 - p_i^t \cdot \Big(\alpha w_- - \Big(\frac{\alpha w_-}{n} - \alpha w_- \Big)^2 \Big) + P_-^t \cdot \frac{\alpha w_-}{n}  + P_+^t \cdot \frac{\alpha w_+}{n} + o(n^{-1}) \Big).
\end{align*}
Applying $(\frac{\alpha w_-}{n} - \alpha w_-)^2 \leq (\alpha w_-)^2$ and then replacing $P_+^t$ by $1 - P_-^t$,
\begin{align}
 \Ex{ \Lambda_i^{t+1} \mid \mathfrak{F}^t} & \leq \Lambda_i^t \cdot \Big(1 - p_i^t \cdot (\alpha w_- - (\alpha w_-)^2) + P_-^t \cdot \frac{\alpha w_-}{n}  + P_+^t \cdot \frac{\alpha w_+}{n} + o(n^{-1}) \Big) \nonumber \\
 & = \Lambda_i^t \cdot \Big(1 - p_i^t \cdot (\alpha w_- - (\alpha w_-)^2) + P_-^t \cdot \left(\frac{\alpha w_-}{n} - \frac{\alpha w_+}{n}\right) + \frac{\alpha w_+}{n} + o(n^{-1}) \Big). \label{eq:lambda_case_2} %
\end{align}

\noindent\textbf{Case 2.A} [\PThree holds]. Using that $p_-^t \geq \frac{1 + \C{p2k2}\epsilon}{n}$, $P_-^t \leq 1$, and applying this to \eqref{eq:lambda_case_2} yields $(a)$ below (since $(\alpha w_-)^2 \leq \alpha w_-$), and further rearranging gives,
\begin{align*}
\Ex{ \Lambda_i^{t+1} \mid \mathfrak{F}^t}
 & \stackrel{(a)}{\leq} \Lambda_i^t \cdot \Big(1 - \frac{1 + \C{p2k2}\epsilon}{n} \cdot (\alpha w_- - (\alpha w_-)^2) + 1 \cdot \left(\frac{\alpha w_-}{n} - \frac{\alpha w_+}{n}\right) + \frac{\alpha w_+}{n} + o(n^{-1}) \Big) \\
 & = \Lambda_i^t \cdot \Big(1 - \alpha w_- \cdot \Big(\frac{1 + \C{p2k2}\epsilon}{n} \cdot (1 - \alpha w_-) - \frac{1}{n} \Big)  + o(n^{-1}) \Big) \\
 & = \Lambda_i^t \cdot \Big(1 - \frac{\alpha w_-}{n} \cdot \Big((1 + \C{p2k2}\epsilon ) \cdot (1 - \alpha w_-) - 1 \Big) + o(n^{-1})\Big) \\
 & = \Lambda_i^t \cdot \Big(1 - \frac{\alpha w_-}{n} \cdot \Big(\C{p2k2}\epsilon - \alpha w_- \cdot (1 + \C{p2k2} \epsilon)\Big) + o(n^{-1})\Big) \\
 & \stackrel{(b)}{\leq} \Lambda_i^t \cdot \Big(1 - \frac{\alpha w_- }{n} \cdot \frac{\C{p2k2} \epsilon}{2} + o(n^{-1}) \Big) \\ 
 & \stackrel{(c)}{\leq} \Lambda_i^t \cdot \Big(1 - \frac{\alpha w_- \C{p2k2} \epsilon}{4n}\Big),
\end{align*}
where we used in $(b)$ that $\frac{\C{p2k2} \epsilon}{2} \geq \alpha w_- \cdot (1 + \C{p2k2} \epsilon)$ (as implied by $\alpha \leq \frac{\C{p2k2} \epsilon}{2 w_- (1 + \C{p2k2}\epsilon)}$) and in $(c)$ that $\alpha, w_-, \C{p2k2}, \epsilon$ are constants.

\noindent \textbf{Case 2.B} [\WThree and $P_-^t \leq 1 - \frac{\epsilon}{2}$ holds]. Then since $p_i^t \geq 1/n$ and $(\alpha w_-)^2 \leq \alpha w_-$, \eqref{eq:lambda_case_2} implies,
\begin{align*}
\Ex{ \Lambda_i^{t+1} \mid \mathfrak{F}^t} 
 & \leq \Lambda_i^t \cdot \Big(1 - \frac{1}{n} \cdot (\alpha w_- - (\alpha w_-)^2) + (1 - \frac{\epsilon}{2} ) \cdot \Big(\frac{\alpha w_-}{n} - \frac{\alpha w_+}{n}\Big) + \frac{\alpha w_+}{n} + o(n^{-1}) \Big) \\
 & = \Lambda_i^t \cdot \Big(1 + \frac{(\alpha w_-)^2}{n} - \frac{\epsilon}{2} \cdot \Big(\frac{\alpha w_-}{n} - \frac{\alpha w_+}{n}\Big) + o(n^{-1}) \Big) \\
 & = \Lambda_i^t \cdot \Big(1 - \frac{\alpha}{n} \cdot \Big( \frac{\epsilon}{2} \cdot (w_- - w_+) - \alpha w_-^2 \Big) + o(n^{-1}) \Big) \\
 & \stackrel{(a)}{\leq} \Lambda_i^t \cdot \Big(1 - \frac{\alpha}{n} \cdot \frac{\epsilon}{4} \cdot (w_- - w_+) + o(n^{-1}) \Big) \\
 & \stackrel{(b)}{\leq} \Lambda_i^t \cdot \Big(1 - \frac{\alpha \epsilon}{8n} \cdot (w_- - w_+) \Big).
\end{align*}
where we used in $(a)$ that $\frac{\epsilon}{4} \cdot (w_- - w_+) \geq \alpha w_-^2$ (as implied by $\alpha \leq \frac{\epsilon( w_- - w_+)}{4 w_-^2}$) and in $(b)$ that $\alpha, \epsilon, (w_- - w_+)$ are constants.

\medskip 

\noindent\textbf{Case 2.C} [\WThree and $P_-^t > 1 - \frac{\epsilon}{2}$ holds]. We will derive a similar inequality by a majorization argument. Using \eqref{eq:lambda_case_2}, for the bins in $B_{--}^t$,
\[
\sum_{i \in B_{--}^t} \ex{\Lambda_i^{t+1} \mid \mathfrak{F}^t} \leq \sum_{i \in B_{--}^t} \Lambda_i^t \cdot \Big(1 - p_i^t \cdot (\alpha w_- - (\alpha w_-)^2) + P_-^t \cdot \left(\frac{\alpha w_-}{n} - \frac{\alpha w_+}{n}\right) + \frac{\alpha w_+}{n} + o(n^{-1}) \Big).
\]
Since $\Lambda_i^t$ is non-decreasing for $i \in B_{--}^t$ (assuming bins are sorted decreasingly according to their load) and $p_i^t$ is non-decreasing, applying \cref{lem:quasilem}, we can upper bound the above expression by replacing each $p_i^t$ by the average probability $\overline{p}_i^t$ of robustly underloaded bins. Note that because of monotonicity of the allocation distribution $p^t$, the average probability over $B_{--}^t$ is at least as large as the average probability over $B_-^t$, so it satisfies
\[
\overline{p}_i^t \geq  \frac{P_-^t}{|B_-^t|} 
> \frac{1 - \frac{\epsilon}{2}}{(1 - \epsilon)n} > \frac{1 + \frac{\epsilon}{2}}{n},
\]
where we have used $P_-^t > 1 - \frac{\epsilon}{2}$, $|B_-^t| \geq (1 - \epsilon) \cdot n$ and \cref{clm:eps_ineq}. Hence, using that $(\alpha w_-)^2 \leq \alpha w_-$
\begin{align*}
\sum_{i \in B_{--}^t} \ex{\Lambda_i^{t+1} \mid \mathfrak{F}^t} & \leq \sum_{i \in B_{--}^t} \Lambda_i^t \cdot \Big(1 - \overline{p}_i^t \cdot (\alpha w_- - (\alpha w_-)^2) + 1 \cdot \Big(\frac{\alpha w_-}{n} - \frac{\alpha w_+}{n} \Big) + \frac{\alpha w_+}{n} + o(n^{-1}) \Big) \\
 & = \sum_{i \in B_{--}^t} \Lambda_i^t \cdot \Big(1 - \overline{p}_i^t \cdot (\alpha w_- - (\alpha w_-)^2) + \frac{\alpha w_-}{n} + o(n^{-1}) \Big) \\
 & = \sum_{i \in B_{--}^t} \Lambda_i^t \cdot \Big(1 - \alpha w_- \cdot \Big(\overline{p}_i^t \cdot (1 - \alpha w_-) - \frac{1}{n}\Big) + o(n^{-1}) \Big) \\
 & \leq \sum_{i \in B_{--}^t} \Lambda_i^t \cdot \Big(1 - \frac{\alpha w_-}{n} \cdot \Big( \Big(1 + \frac{\epsilon}{2}\Big) \cdot (1 - \alpha w_-) - 1 \Big) + o(n^{-1})\Big) \\ 
 & = \sum_{i \in B_{--}^t} \Lambda_i^t \cdot \Big(1 - \frac{\alpha w_-}{n} \cdot \Big(\frac{\epsilon}{2} - \alpha w_- \cdot \Big(1 + \frac{\epsilon}{2} \Big)\Big) + o(n^{-1})\Big) \\ 
 & \stackrel{(a)}{\leq} \sum_{i \in B_{--}^t} \Lambda_i^t \cdot \left(1 - \frac{\alpha w_-}{n} \cdot \frac{\epsilon}{4} + o(n^{-1})\right) \\
 & \stackrel{(b)}{\leq} \sum_{i \in B_{--}^t} \Lambda_i^t \cdot \left(1 - \frac{\alpha w_-}{4} \cdot \frac{\epsilon}{2n}\right),
\end{align*}
where we used in $(a)$ that $\frac{\epsilon}{4} \geq \alpha w_- \cdot (1 + \frac{\epsilon}{2})$ (as implied by $\alpha \leq \frac{\epsilon}{2 \cdot w_- \cdot (2+\epsilon)}$) and in $(b)$ that $\alpha, w_-, \epsilon$ are constants. Hence, in all subcases of \textbf{Case 2}, we can find a constant $\C{good_quantile_mult} > 0$,
\[
\sum_{i \in B_{--}^t} \Ex{ \Lambda_i^{t+1} \mid \mathfrak{F}^t } \leq \sum_{i \in B_{--}^t} \Lambda^t \cdot \Big(1 - \frac{2 \C{good_quantile_mult} \alpha}{n} \Big).
\]

\noindent\textbf{Case 3} [Swinging Bins]. For the bins in $B_{+/-}^t$, using \cref{lem:bins_close_to_mean} with $\kappa_1 := \C{good_quantile_mult}$ gives 
\[
\sum_{i \in B_{+/-}^t} \ex{\Lambda_i^{t+1} \mid \mathfrak{F}^t} \leq \sum_{i \in B_{+/-}^t} \Lambda_i^t \cdot \Big( 1 - \frac{2 \C{good_quantile_mult} \alpha}{n} \Big) +  3(\C{good_quantile_mult} + w_-) \cdot e^{2 w_-} .
\]
In conclusion, by choosing a sufficiently small constant $\C{good_quantile_mult}=\C{good_quantile_mult}(w_+, w_-,$ $\C{p2k1}, \C{p2k2}, \epsilon) > 0$ if \PThree holds and $\C{good_quantile_mult}=\C{good_quantile_mult}(w_+, w_-, \epsilon) > 0$ if \WThree holds, which satisfies all the above cases, the claim follows.
\end{proof}

Now, we will prove that the expected increase of $\Lambda$ is bounded by a factor of $\alpha^2 \C{bad_quantile_mult}/2n$ at an arbitrary round. 

\begin{lem} \label{lem:bad_quantile_increase_bound}
Consider any $\PThree \cap \WTwo$-process or $\PTwo \cap 
\WThree$-process. Then for the constant $\C{bad_quantile_mult} := 3w_- \cdot e^{2 w_-} > 0$ and any constant $\alpha$ satisfying the preconditions of \cref{lem:good_quantile_good_decrease}, and any $t \geq 0$, we have 
\[
\ex{\Lambda^{t+1} \mid \mathfrak{F}^t} \leq \Lambda^t \cdot \Big(1 + \frac{\alpha^2 \C{bad_quantile_mult}}{2n} \Big) + \C{bad_quantile_mult}.
\]
\end{lem}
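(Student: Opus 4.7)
The plan is to mirror the case analysis of \cref{lem:good_quantile_good_decrease}, but without exploiting any lower bound on $\min\{\delta^t, 1-\delta^t\}$; instead of a linear-in-$\alpha$ drop we will only obtain a quadratic-in-$\alpha$ possible increase, which is exactly what the statement claims. As before, we partition the bins into the robustly overloaded set $B_{++}^t=\{i:y_i^t\geq w_-/n\}$, the robustly underloaded set $B_{--}^t=\{i:y_i^t\leq -w_-\}$, and the swinging set $B_{+/-}^t=\{i:-w_-<y_i^t<w_-/n\}$, and bound each contribution to $\Ex{\Lambda^{t+1}\mid\mathfrak{F}^t}$ separately.

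For $i\in B_{++}^t$, we start from identity \eqref{eq:lambda_case_1} derived in the proof of \cref{lem:good_quantile_good_decrease}, apply the bounds $p_i^t\le 1/n$ (from \PTwo) and $P_+^t\le 1$, and use $\alpha w_+\le 1$ to bound the quadratic Taylor remainder by $(\alpha w_+)^2$. The linear-in-$\alpha$ contributions $\alpha w_+/n$ and $-\alpha w_+/n$, together with $P_+^t\cdot(\alpha w_-/n-\alpha w_+/n)-\alpha w_-/n\le 0$, cancel up to a non-positive term, leaving only the quadratic term $+(\alpha w_+)^2/n$. The analogous computation for $i\in B_{--}^t$ starts from \eqref{eq:lambda_case_2}: using $p_i^t\ge 1/n$ on the negative term $-p_i^t\cdot(\alpha w_--(\alpha w_-)^2)$ and $P_-^t\le 1$ on the positive term, the linear parts again cancel and only $+(\alpha w_-)^2/n$ survives (plus $o(n^{-1})$). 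Hence for each $i\in B_{++}^t\cup B_{--}^t$,
\[
\Ex{\Lambda_i^{t+1}\mid\mathfrak{F}^t}\le \Lambda_i^t\cdot\Bigl(1+\tfrac{(\alpha w_-)^2}{n}+o(n^{-1})\Bigr).
\]

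For $i\in B_{+/-}^t$, we invoke \cref{lem:bins_close_to_mean} with $\kappa_1:=0$, giving
\[
\sum_{i\in B_{+/-}^t}\Ex{\Lambda_i^{t+1}\mid\mathfrak{F}^t}\le \sum_{i\in B_{+/-}^t}\Lambda_i^t+3w_-\cdot e^{2w_-}.
\]
Summing over the three disjoint classes of bins and using that $\Lambda_i^t\ge 1$, all the $\alpha^2/n$ terms combine into a single factor $1+\alpha^2\C{bad_quantile_mult}/(2n)$: specifically, choosing $\C{bad_quantile_mult}:=3w_-\cdot e^{2w_-}$ gives $(\alpha w_-)^2\le \tfrac{1}{2}\alpha^2\C{bad_quantile_mult}$ since $w_-\le \tfrac{3}{2}e^{2w_-}$, so absorbing the $o(n^{-1})$ slack (valid for $n$ sufficiently large, as per the convention in \cref{sec:notation}) yields the claimed inequality, with the same $\C{bad_quantile_mult}$ serving as the additive constant.

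The only genuinely delicate point, and the one worth double-checking, is the sign handling in Case 2 (the underloaded bins): the term $-p_i^t\cdot(\alpha w_--(\alpha w_-)^2)$ is non-positive because $\alpha w_-\le 1$, and we need to upper bound it using $p_i^t\ge 1/n$ rather than an upper bound on $p_i^t$ (which we do not have under \PTwo alone). After this substitution, the quadratic term $+p_i^t\cdot(\alpha w_-)^2$ is absorbed into $(\alpha w_-)^2/n$ since $p_i^t\cdot(\alpha w_-)^2\le \tfrac{1}{n}\alpha w_-\cdot \alpha w_-$ only works when $p_i^t\le 1/n$, but for underloaded bins we have $p_i^t\ge 1/n$; this is not an issue, however, because the same bound $(\alpha w_-)^2/n$ arises after combining $-\tfrac{1}{n}(\alpha w_--(\alpha w_-)^2)$ with the $+\alpha w_-/n$ coming from $P_-^t\cdot(\alpha w_-/n-\alpha w_+/n)+\alpha w_+/n$. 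Everything else is routine Taylor expansion and bookkeeping.
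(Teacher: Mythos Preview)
Your proposal is correct and follows essentially the same approach as the paper: the same three-case split into $B_{++}^t$, $B_{--}^t$, $B_{+/-}^t$, reuse of \eqref{eq:lambda_case_1} and \eqref{eq:lambda_case_2} with the crude bounds $p_i^t\le 1/n$, $p_i^t\ge 1/n$, $P_\pm^t\le 1$, and \cref{lem:bins_close_to_mean} with $\kappa_1=0$ for the swinging bins. The sign worry you flag in Case~2 is handled exactly as you describe (and as the paper does): replacing $p_i^t$ by $1/n$ in the non-positive term $-p_i^t(\alpha w_--(\alpha w_-)^2)$ and then cancelling the linear parts leaves only $(\alpha w_-)^2/n$; the final packaging $2w_-^2\le \tfrac12\C{bad_quantile_mult}$ (equivalently $4w_-^2\le 3w_-e^{2w_-}$) is the same inequality the paper invokes.
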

\begin{proof}

\noindent \textbf{Case 1} [Robustly Overloaded Bins]. Using \eqref{eq:lambda_case_1} in \textbf{Case 1} of \cref{lem:good_quantile_good_decrease},
\begin{align*}
\Ex{ \Lambda_i^{t+1} \mid \mathfrak{F}^t}
 & \leq \Lambda_i^t \cdot \Big(1 + p_i^t \cdot \Big(\alpha w_+ + (\alpha w_+)^2\Big) + P_+^t \cdot \Big(\frac{\alpha w_-}{n} - \frac{\alpha w_+}{n} \Big) - \frac{\alpha w_-}{n} + o(n^{-1}) \Big). \\
 \intertext{Using that $p_+^t \leq 1/n$ and $P_+^t \leq 1$,}
 \Ex{ \Lambda_i^{t+1} \mid \mathfrak{F}^t} & \leq \Lambda_i^t \cdot \Big(1 + \frac{1}{n} \cdot \Big(\alpha w_+ + (\alpha w_+)^2\Big) + \Big(\frac{\alpha w_-}{n} - \frac{\alpha w_+}{n} \Big) - \frac{\alpha w_-}{n} + o(n^{-1}) \Big) \\
 & = \Lambda_i^t \cdot \Big(1 + \frac{1}{n} \cdot (\alpha w_+)^2 + o(n^{-1}) \Big) \leq \Lambda_i^t \cdot \Big(1 + \frac{2\alpha^2 w_+^2}{n}\Big),
\end{align*}
where in the last step we used that $\alpha$ and $w_+$ are constants.

\medskip

\noindent \textbf{Case 2} [Robustly Underloaded Bins]. Using \eqref{eq:lambda_case_2} in \textbf{Case 2} of \cref{lem:good_quantile_good_decrease}, we get
\begin{align*}
\Ex{ \Lambda_i^{t+1} \mid \mathfrak{F}^t}
 & \leq \Lambda_i^t \cdot \Big(1 + p_i^t \cdot (-\alpha w_- + (\alpha w_-)^2) + P_-^t \cdot \Big(\frac{\alpha w_-}{n} - \frac{\alpha w_+}{n} \Big) + \frac{\alpha w_+}{n} + o(n^{-1})\Big) \\
\intertext{Using that $p_-^t \geq \frac{1}{n}$, $\alpha w_- \geq (\alpha w_-)^2$ and $P_-^t \leq 1$,}
\Ex{ \Lambda_i^{t+1} \mid \mathfrak{F}^t} & \leq \Lambda_i^t \cdot \Big(1 + \frac{1}{n} \cdot (-\alpha w_- + (\alpha w_-)^2) + \Big(\frac{\alpha w_-}{n} - \frac{\alpha w_+}{n}\Big) + \frac{\alpha w_+}{n} + o(n^{-1})\Big) \\
 & = \Lambda_i^t \cdot \Big(1 + \frac{1}{n} \cdot (\alpha w_-)^2 + o(n^{-1})\Big) \leq \Lambda_i^t \cdot \Big(1 + \frac{2\alpha^2 w_-^2}{n}\Big),
\end{align*}
where in the last step we used that $\alpha$ and $w_-$ are constants.
\medskip

\noindent \textbf{Case 3} [Swinging Bins]. Using \cref{lem:bins_close_to_mean} with $\kappa_1 = 0$ gives 
\[
\sum_{i \in B_{+/-}^t} \ex{\Lambda_i^{t+1} \mid \mathfrak{F}^t} \leq \sum_{i \in B_{+/-}^t} \Lambda_i^t + 3w_- \cdot e^{2 w_-} \leq \sum_{i \in B_{+/-}^t} \Lambda_i^t \cdot \Big( 1 + \frac{\alpha^2w_-^2}{2n}\Big) + 3w_- \cdot e^{2 w_-}.
\]
Thus, aggregating over the three cases and choosing $\C{bad_quantile_mult} := 3w_- \cdot e^{2 w_-} \geq 4w_-^2$, gives the result.
\end{proof}

We now combine the statements (and constants) from Lemmas \ref{lem:good_quantile_good_decrease} and \ref{lem:bad_quantile_increase_bound} into a single corollary. 

\begin{cor}
\label{cor:change_for_large_lambda}
Consider any $\PThree \cap \WTwo$-process or $\PTwo \cap \WThree$-process, and let $\eps \in (0, 1)$ be any constant. Choose $\C{lambda_bound} := \max\Big(\frac{3(\C{good_quantile_mult} + w_-) \cdot e^{2 w_-}}{\alpha\C{good_quantile_mult}},\frac{2}{\alpha^2}\Big) > 1$, for the constants $\C{good_quantile_mult}(\eps)$ and $\alpha(\eps)$ as defined in \cref{lem:good_quantile_good_decrease}.
Then for any $t \geq 0$,
\[
\ex{\Lambda^{t+1} \mid \mathfrak{F}^t, \mathcal{G}^t, \Lambda^t \geq \C{lambda_bound} \cdot n} \leq \Lambda^t \cdot \Big(1 - \frac{\C{good_quantile_mult} \alpha}{n}\Big).
\]
More generally, for any $t \geq 0$, and $\C{bad_quantile_mult} := 3w_- \cdot e^{2 w_-} > 0$, we have 
\[
\ex{\Lambda^{t+1} \mid  \mathfrak{F}^t, \Lambda^t \geq \C{lambda_bound} \cdot n} \leq \Lambda^t \cdot \Big(1 + \frac{\alpha^2 \C{bad_quantile_mult}}{n}\Big).
\]
\end{cor}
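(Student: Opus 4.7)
The statement is a direct consequence of Lemmas~\ref{lem:good_quantile_good_decrease} and~\ref{lem:bad_quantile_increase_bound}, which both have the form ``multiplicative change plus a constant additive term''. The plan is to show that once $\Lambda^t \geq \C{lambda_bound} \cdot n$, the additive constant is dominated by half of the multiplicative change, so it can be absorbed while only halving the rate constants. The value of $\C{lambda_bound}$ is precisely calibrated so that this absorption works for both inequalities simultaneously.

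For the first (conditional on $\mathcal{G}^t$) inequality, I would start from
\[
\Ex{\Lambda^{t+1} \mid \mathfrak{F}^t, \mathcal{G}^t} \leq \Lambda^t \cdot \Big(1 - \frac{2\C{good_quantile_mult}\alpha}{n}\Big) + \C{good_quantile_mult}',
\]
given by \cref{lem:good_quantile_good_decrease} with $\C{good_quantile_mult}' = 3(\C{good_quantile_mult} + w_-) e^{2 w_-}$. The goal is to rewrite the RHS as $\Lambda^t \cdot (1 - \C{good_quantile_mult}\alpha / n)$. It suffices to show that $\C{good_quantile_mult}' \leq \Lambda^t \cdot \C{good_quantile_mult}\alpha / n$, which is equivalent to $\Lambda^t \geq n \C{good_quantile_mult}'/(\C{good_quantile_mult}\alpha) = 3n(\C{good_quantile_mult}+w_-)e^{2w_-}/(\C{good_quantile_mult}\alpha)$. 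This is exactly (one of) the terms controlled by the definition of $\C{lambda_bound}$, so the hypothesis $\Lambda^t \geq \C{lambda_bound}\cdot n$ is enough.

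For the second (unconditional) inequality, I would analogously start from
\[
\Ex{\Lambda^{t+1} \mid \mathfrak{F}^t} \leq \Lambda^t \cdot \Big(1 + \frac{\alpha^2 \C{bad_quantile_mult}}{2n}\Big) + \C{bad_quantile_mult},
\]
given by \cref{lem:bad_quantile_increase_bound}. To absorb the additive $\C{bad_quantile_mult}$ into the multiplicative factor and double the coefficient from $\alpha^2 \C{bad_quantile_mult}/(2n)$ to $\alpha^2 \C{bad_quantile_mult}/n$, it is enough that $\C{bad_quantile_mult} \leq \Lambda^t \cdot \alpha^2 \C{bad_quantile_mult}/(2n)$, i.e.\ $\Lambda^t \geq 2n/\alpha^2$. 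This is the second term in the definition of $\C{lambda_bound}$.

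There is no real obstacle here; the only mild care needed is to check that the same $\alpha$ (chosen per $\eps$ via the preconditions in \cref{lem:good_quantile_good_decrease}) validates \cref{lem:bad_quantile_increase_bound}, which it does since the latter explicitly requires only the same preconditions. With both absorption conditions satisfied under the single threshold $\C{lambda_bound}$, the two displayed bounds of the corollary follow immediately.
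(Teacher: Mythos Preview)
Your proposal is correct and matches the paper's proof essentially line for line: the paper also invokes \cref{lem:good_quantile_good_decrease} and \cref{lem:bad_quantile_increase_bound} and absorbs the additive constants $\C{good_quantile_mult}'$ and $\C{bad_quantile_mult}$ into the multiplicative factors using precisely the two thresholds that define $\C{lambda_bound}$.
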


\begin{proof}
For the first statement, using \cref{lem:good_quantile_good_decrease}, where $\C{good_quantile_mult}':=3(\C{good_quantile_mult} + w_-) \cdot e^{2 w_-}$,  gives\[\ex{\Lambda^{t+1} \mid \mathfrak{F}^t, \mathcal{G}^t, \Lambda^t \geq \C{lambda_bound} \cdot n} \leq \Lambda^t \cdot \Big(1 - \frac{2\C{good_quantile_mult} \alpha}{n}\Big) + \C{good_quantile_mult}'  = \Lambda^t \cdot \Big(1 - \frac{\C{good_quantile_mult} \alpha}{n}\Big) + \Big(\C{good_quantile_mult}' - \Lambda^t \cdot \frac{\C{good_quantile_mult} \alpha}{n}\Big)  \leq \Lambda^t \cdot \Big(1 - \frac{\C{good_quantile_mult} \alpha}{n}\Big).
\]
For the second statement, by applying \cref{lem:bad_quantile_increase_bound} we have  
\[
\ex{\Lambda^{t+1} \mid \mathfrak{F}^t, \Lambda^t \geq \C{lambda_bound} \cdot n} 
\leq \Lambda^t \cdot \Big(1 + \frac{\alpha^2 \C{bad_quantile_mult}}{2n}\Big) + \C{bad_quantile_mult}
\leq \Lambda^t \cdot \Big(1 + \frac{\alpha^2 \C{bad_quantile_mult}}{2n}\Big) + \Lambda^t \cdot \frac{\alpha^2 \C{bad_quantile_mult}}{2n}
\leq \Lambda^t \cdot \Big(1 + \frac{\alpha^2 \C{bad_quantile_mult}}{n}\Big),
\]as claimed.
\end{proof}

The next lemma shows that loads cannot deviate too wildly within a $\Theta(n\log n)$-length interval following a round with small exponential potential. 
\begin{lem} \label{clm:small_change_for_linear_lambda}
Consider any $\PThree \cap \WTwo$-process or $\PTwo \cap \WThree$-process. For any constant $\kappa >0$, and any rounds $t_0$, $t_1$ such that $t_0 \leq t_1 \leq t_0 + \kappa \cdot n \log n$,
\[
\Pro{\max_{t \in [t_0,t_1]}\max_{i \in [n]} | y_i^{t} | \leq \log^2 n \, \Big| \, \mathfrak{F}^{t_0},\; \Lambda^{t_0} \leq n^2} \geq 1 - n^{-12}.
\]
\end{lem}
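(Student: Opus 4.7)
The plan is to iterate the unconditional one-step bound on the expected exponential potential from \cref{lem:bad_quantile_increase_bound} to show that $\Ex{\Lambda^{t}}$ stays polynomial in $n$ throughout the interval $[t_0,t_1]$, then use Markov's inequality together with a union bound over rounds to obtain a polynomial bound holding simultaneously at every round of the interval, and finally convert this into the required $\log^2 n$ bound on $\max_i|y_i^t|$.

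First, since $\alpha$ is a fixed (positive) constant depending only on the process parameters, \cref{lem:bad_quantile_increase_bound} gives $\Ex{\Lambda^{t+1}\mid\mathfrak{F}^t}\leq \Lambda^t\cdot(1+\alpha^2 \C{bad_quantile_mult}/(2n))+\C{bad_quantile_mult}$ for every round $t$. Iterating this from $t_0$ with initial value $\Lambda^{t_0}\leq n^2$, using the Taylor estimate $(1+x/(2n))^s\leq e^{xs/(2n)}$ and summing the geometric series gives, for every $0\leq s\leq t_1-t_0\leq \kappa n\log n$,
\[
 \Ex{\Lambda^{t_0+s}\mid \mathfrak{F}^{t_0}, \Lambda^{t_0}\leq n^2} \leq n^2 \cdot e^{\alpha^2 \C{bad_quantile_mult}\kappa \log n/2} + \C{bad_quantile_mult}\cdot s\cdot e^{\alpha^2\C{bad_quantile_mult}\kappa\log n/2} \leq C_1 \cdot n^{C_2},
\]
for suitable constants $C_1,C_2>0$ depending on $\alpha$, $\C{bad_quantile_mult}$ and $\kappa$ only.

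Applying Markov's inequality at a single round $s$ yields
\[
 \Pro{\Lambda^{t_0+s}\geq C_1 n^{C_2+14}\mid \mathfrak{F}^{t_0}, \Lambda^{t_0}\leq n^2}\leq n^{-14},
\]
and a union bound over the at most $\kappa n\log n + 1$ rounds $s\in[0,t_1-t_0]$ shows that, with probability at least $1-n^{-12}$ (for $n$ sufficiently large), $\Lambda^t< C_1 n^{C_2+14}$ holds simultaneously for every $t\in[t_0,t_1]$.

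Finally, whenever $\Lambda^t< C_1 n^{C_2+14}$, the inequality $e^{\alpha|y_i^t|}\leq \Lambda^t$ (valid for each bin individually by the definition of $\Lambda^t$) gives $|y_i^t|\leq \alpha^{-1}(\log C_1 + (C_2+14)\log n)\leq \log^2 n$ for $n$ sufficiently large, which is the claimed bound. Since the heavy lifting --- controlling the worst-case expected growth of $\Lambda^t$ over any single round, even when the quantile is bad --- has already been carried out in \cref{lem:bad_quantile_increase_bound}, no real obstacle remains; the argument is a direct iteration followed by a standard Markov-plus-union-bound tail estimate.
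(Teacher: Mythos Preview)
Your proposal is correct and follows essentially the same approach as the paper: iterate the one-step bound from \cref{lem:bad_quantile_increase_bound} (via the tower property and \cref{lem:geometric_arithmetic}) to get a polynomial bound on $\Ex{\Lambda^{t}\mid\mathfrak{F}^{t_0},\Lambda^{t_0}\le n^2}$, apply Markov's inequality with slack $n^{14}$, take a union bound over the $\Oh(n\log n)$ rounds, and then read off $|y_i^t|\le \alpha^{-1}\log\Lambda^t = \Oh(\log n)<\log^2 n$. The paper's proof differs only in cosmetic details (it names the exponent explicitly as $3+\kappa\alpha^2\C{bad_quantile_mult}$ rather than abstract constants $C_1,C_2$).
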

\begin{proof}
Consider the sequence $(\ex{\Lambda^t \mid \mathfrak{F}^{t_0}, \Lambda^{t_0} \leq n^2 })_{t = t_0}^{t_1}$, then for every $t \in (t_0, t_1]$, using \cref{lem:bad_quantile_increase_bound},
\[
\ex{\Lambda^{t+1} \mid \mathfrak{F}^{t}} \leq  \Lambda^t \cdot \Big( 1 + \frac{\C{bad_quantile_mult} \alpha^2}{2n}\Big) + \C{bad_quantile_mult}.
\]
Using the tower law of expectation, we have that
\[
\ex{\Lambda^{t+1} \mid \mathfrak{F}^{t_0}, \Lambda^{t_0} \leq n^2} \leq  \Ex{\Lambda^t \mid \mathfrak{F}^{t_0}, \Lambda^{t_0} \leq n^2} \cdot \Big( 1 + \frac{\C{bad_quantile_mult} \alpha^2}{2n}\Big) + \C{bad_quantile_mult}.
\]
Hence, applying \cref{lem:geometric_arithmetic} with $a := 1 + \frac{\C{bad_quantile_mult} \alpha^2}{2n} > 1$ and $b := \C{bad_quantile_mult} > 0$, we get that 
\begin{align*}
\ex{\Lambda^{t} \mid \mathfrak{F}^{t_0}, \Lambda^{t_0} \leq n^2} 
 & \leq \Lambda^{t_0} \cdot a^{t - t_0} + b \cdot \sum_{s = t_0}^{t-1} a^{s-t_0} \\
 & \leq n^2 \cdot \Big(1 + \frac{\alpha^2 \C{bad_quantile_mult}}{n}\Big)^{t_1-t_0} + \C{bad_quantile_mult} \cdot (t_1 - t_0) \cdot \Big(1 + \frac{\alpha^2 \C{bad_quantile_mult}}{n}\Big)^{t_1-t_0} \\
 & \stackrel{(a)}{\leq} n^2 \cdot n^{\kappa \cdot \alpha^2 \cdot \C{bad_quantile_mult}} + \C{bad_quantile_mult} \cdot (\kappa n \log n) \cdot n^{\kappa \cdot \alpha^2 \cdot \C{bad_quantile_mult}}
 \leq n^{3 + \kappa \cdot \alpha^2 \cdot \C{bad_quantile_mult}}.
\end{align*}
where in $(a)$ we used that $1 + z \leq e^z$ for any $z$.

Using Markov's inequality, $\Pro{\Lambda^{t} \leq n^{3 + \kappa \cdot \alpha^2 \cdot \C{bad_quantile_mult} + 14} ~\big|~ \mathfrak{F}^{t_0}, \Lambda^{t_0} \leq n^{2}} \geq 1 - n^{-14}$ for any $t\in [t_0, t_1]$, which implies
\[
 \Pro{ \max_{i \in [n]} |y_i^{t}| \leq \frac{1}{\alpha} \cdot (\kappa \cdot \alpha^2 \cdot \C{bad_quantile_mult} + 17) \cdot \log n  ~\Big|~ \mathfrak{F}^{t_0}, \;\Lambda^{t_0} \leq n^2 } \geq 1 - n^{-14}.
\]
Since $\frac{1}{\alpha} \cdot (\kappa \cdot \alpha^2 \cdot \C{bad_quantile_mult} + 17) \cdot \log n < \log^2 n$ for sufficiently large $n$, by taking a union bound over all rounds $t \in [t_0, t_1]$ we get the claim.
\end{proof}

\subsection{Weak Exponential Potential \texorpdfstring{$V$}{V}} \label{sec:v_potential}

For the massively-loaded case, we will use the exponential potential $V$ with a parameter $\tilde{\alpha} = \Theta(1/n)$ (to be fixed below) and show that its expectation is $\Oh(n^4)$. In \cref{clm:v_expected_value}, using Markov's inequality, we will deduce that \Whp~the gap at an arbitrary round is $\Oh(n \log n)$.

As in \cref{sec:non_filling_lambda}, the analysis will be done over the three possible bins: robustly underloaded bins $B_{++}^t$ with load $y_i^t \geq \frac{w_-}{n}$, robustly underloaded bins $B_{--}^t$ with load $y_i^t \leq -w_-$ and swinging bins $B_{+/-}^t$ with load $y_i^t \in (-w_-, \frac{w_-}{n})$. %

\begin{lem} \label{lem:v_multiplicative_drop}
For any \WThree-process set 
\[
\tilde{\alpha} := \min\left\lbrace\frac{\C{p2k1}}{2 \cdot w_- \cdot (n - \C{p2k1})}, \frac{\C{p2k2}}{2 \cdot w_- \cdot (n + \C{p2k2})} \right\rbrace,
\]
and for any \PThree-process set
\[
 \tilde{\alpha} := \min\left\lbrace \frac{w_- - w_+}{4 \cdot w_-^2 \cdot n}, \frac{1}{w_- \cdot (4n + 2)} \right\rbrace.
\]
Then there exists a constant $\C{v_mult_factor} > 0$ such that for any $t \geq 0$,
\[
\Ex{ V^{t+1} \mid \mathfrak{F}^t} 
  \leq V^t \cdot \left(1 - \frac{\C{v_mult_factor}}{n^3} \right) + 2n.
\]
\end{lem}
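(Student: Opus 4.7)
The plan is to mirror the structure of Lemmas \ref{lem:good_quantile_good_decrease} and \ref{lem:bad_quantile_increase_bound}, but with the much smaller choice $\tilde\alpha=\Theta(1/n)$ so that one can prove an \emph{unconditional} multiplicative drop (no hypothesis on the quantile $\delta^t$). As there, I partition the $n$ bins into robustly overloaded $B_{++}^t=\{i\colon y_i^t\geq w_-/n\}$, robustly underloaded $B_{--}^t=\{i\colon y_i^t\leq -w_-\}$ and swinging bins $B_{+/-}^t$. For a swinging bin $|y_i^t|\leq w_-$, so $V_i^t\leq e^{\tilde\alpha w_-}=\Oh(1)$ and their total contribution is $\Oh(n)$, to be absorbed into the additive $+2n$ term by an argument analogous to \cref{lem:bins_close_to_mean}.

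For $i\in B_{++}^t$ the sign of $y_i^{t+1}$ is fixed, so
\[
\Ex{V_i^{t+1}\mid\mathfrak{F}^t}=V_i^t\cdot\Bigl(p_i^t\,e^{\tilde\alpha w_+(1-1/n)}+(P_+^t-p_i^t)\,e^{-\tilde\alpha w_+/n}+P_-^t\,e^{-\tilde\alpha w_-/n}\Bigr),
\]
with an analogous formula for $i\in B_{--}^t$. Applying $e^z\leq 1+z+z^2$ (valid for $z<1.75$, which holds as $\tilde\alpha w_-\leq 1$ by the prescribed choice) and collecting terms using $P_+^t+P_-^t=1$, the linear-in-$\tilde\alpha$ coefficient is $L_i^+=p_i^tw_+-P_+^tw_+/n-P_-^tw_-/n$ for a robustly overloaded bin (with the mirror version for underloaded), while the quadratic remainder is $\Oh(\tilde\alpha^2 w_-^2)=\Oh(1/n^2)$. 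The key step is to bound $L_i^+$ via the hypothesis: in the $\PThree\cap\WTwo$ case, $w_-=w_+$ reduces $L_i^+$ to $w_+(p_i^t-1/n)$, so $p_i^t\leq(1-\C{p2k1}(1-\delta^t))/n$ gives $L_i^+\leq-\C{p2k1}(1-\delta^t)w_+/n$ and symmetrically $L_i^-\leq-\C{p2k2}\delta^tw_-/n$. In the $\PTwo\cap\WThree$ case, using $p_+^t\leq 1/n\leq p_-^t$ together with $w_->w_+$ one obtains $L_i^+\leq-(w_--w_+)P_-^t/n$ and $L_i^-\leq-(w_--w_+)P_+^t/n$.

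Summing $V_i^tL_i^{\pm}$ over the robust bins and choosing $\tilde\alpha$ exactly as in the statement (so that the quadratic remainder is at most half of the leading linear drop), one obtains the claimed factor of $(1-\C{v_mult_factor}/n^3)$ plus the $\Oh(n)$ contribution from swinging bins. The main technical obstacle is verifying that the aggregated linear drop is at least of order $V^t/n^2$ \emph{uniformly in $\delta^t$}, in particular at the extreme quantiles where $\delta^t$ is close to $0$ or $1$. There, only one of the two sides ($B_{++}^t$ or $B_{--}^t$) has a per-bin coefficient of magnitude $\Theta(1/n)$ while the other is much weaker; one must carefully exploit the constraints $P_+^t\leq\delta^t$, $P_-^t\geq 1-\delta^t$, and, in the \WThree case, the monotonicity of $p^t$ to show that the side carrying the stronger drop coefficient also carries enough of the mass $V^t$ for the global drop to remain $\Omega(V^t/n^2)$. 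Multiplying this by $\tilde\alpha=\Theta(1/n)$ then accounts for the extra factor of $n$ in the denominator, yielding the claimed $(1-\C{v_mult_factor}/n^3)$ factor; boundary configurations $\delta^t\in\{1/n,1\}$ force $V^t=\Theta(n)$ and are absorbed directly by the additive $+2n$.
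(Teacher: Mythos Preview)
Your three-way split and Taylor-expansion setup match the paper, but the resolution you sketch for ``uniformity in $\delta^t$'' is both wrong in one place and unnecessarily complicated in another. The claim that $\delta^t=1/n$ forces $V^t=\Theta(n)$ is false: one bin at $+K$ and $n-1$ bins at $-K/(n-1)$ gives $\delta^t=1/n$ with $V^t$ arbitrarily large. More importantly, no ``mass-balancing'' argument is needed. The paper simply observes that whenever any robustly over- or underloaded bin exists, $\sum_i y_i^t=0$ forces $\delta^t\in[1/n,\,1-1/n]$. Plugging $1-\delta^t\geq 1/n$ and $\delta^t\geq 1/n$ into your own bounds $L_i^+\leq -k_1(1-\delta^t)w_+/n$ and $L_i^-\leq -k_2\delta^t w_-/n$ already yields a \emph{per-bin} coefficient of $-\Omega(1/n^2)$ for every robust bin, hence a multiplicative factor $1-\Omega(\tilde\alpha/n^2)=1-\Omega(1/n^3)$ regardless of how the mass $V^t$ is distributed between the two sides. (Also, \WTwo only gives $w_+\leq w_-$, not equality; your reduction of $L_i^+$ is an upper bound, not an identity.)

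There is one genuine subtlety your sketch does not cover. In the $\PTwo\cap\WThree$ underloaded case, your bound $L_i^-\leq -(w_--w_+)P_+^t/n$ is insufficient because nothing in \PTwo lower-bounds $P_+^t$; it can be $o(1/n)$ or even zero. The paper splits further: if $P_-^t\leq 1-1/(2n)$ then $P_+^t\geq 1/(2n)$ and your bound suffices; if $P_-^t>1-1/(2n)$, the paper instead observes that the \emph{average} probability over underloaded bins satisfies $\bar p\geq P_-^t/|B_-^t|\geq(1-1/(2n))/(n-1)\geq 1/n+1/(2n^2)$, and because both $V_i^t$ and $p_i^t$ are non-decreasing in $i$ on $B_{--}^t$ (the latter is the monotonicity clause in \WThree), a majorisation step (\cref{lem:quasilem}) lets one replace each $p_i^t$ by $\bar p$ in the aggregated sum, recovering the $-\Omega(1/n^2)$ per-bin coefficient. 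Your reference to ``monotonicity of $p^t$'' gestures at this but does not capture the actual argument.
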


\begin{proof}
Following the text before the lemma, we divide $\Ex{ V^{t+1} \mid \mathfrak{F}^t} $ into three cases of bins as follows,
\begin{align}
    \Ex{ V^{t+1} \mid \mathfrak{F}^t} 
    &= \sum_{i \in B_{++}^t} \Ex{ V_i^{t+1} \mid \mathfrak{F}^t} 
      +
    \sum_{i \in B_{--}^t} \Ex{ V_i^{t+1} \mid \mathfrak{F}^t} + \sum_{i \in B_{+/-}^t} \Ex{ V_i^{t+1} \mid \mathfrak{F}^t}, \label{eq:split}
\end{align}
and upper bound each of the three sums separately. Note that if $\delta^t = 1$, then all bins have $y^t=0$, so all bins belong to \textbf{Case 3}. So, in \textbf{Case 1} and \textbf{Case 2}, we can assume that $1/n \leq \delta^t \leq 1 - 1/n$.

\noindent \textbf{Case 1} [Robustly Overloaded Bins]. For $i \in B_{++}^t$,
\begin{align}
\Ex{ V_i^{t+1} \mid \mathfrak{F}^t}
 & = V_i^t \cdot \Big(p_i^t \cdot e^{-\tilde{\alpha} w_+/n + \tilde{\alpha} w_+} + (P_+^t - p_i^t) \cdot e^{-\tilde{\alpha} w_+/n} + P_-^t \cdot e^{-\tilde{\alpha} w_-/n} \Big) \notag \\
 \intertext{Applying the Taylor estimate $e^{z} \leq 1+z+z^2$, which holds for any $z \leq 1.75$, since $\tilde{\alpha} \leq \frac{1}{n}$,}
\Ex{ V_i^{t+1} \mid \mathfrak{F}^t}  &\leq V_i^t \cdot \Big(1 + p_i^t \cdot \Big(-\frac{\tilde{\alpha} w_+}{n} + \tilde{\alpha} w_+ + \Big(-\frac{\tilde{\alpha} w_+}{n} + \tilde{\alpha} w_+\Big)^2\Big)  \notag \\
 & \qquad + (P_+^t - p_i^t) \cdot \Big(-\frac{\tilde{\alpha} w_+}{n} + \Big(\frac{\tilde{\alpha} w_+}{n}\Big)^2\Big) + P_-^t \cdot \Big(- \frac{\tilde{\alpha} w_-}{n} + \Big(\frac{\tilde{\alpha} w_-}{n}\Big)^2 \Big) \Big).\notag \\
 \intertext{Using the fact that $\frac{\tilde{\alpha}^2}{n^2} = \Oh(n^{-4})$ and then subsequently rearranging terms we obtain}
 \Ex{ V_i^{t+1} \mid \mathfrak{F}^t} &\leq V_i^t \cdot \Big(1 + p_i^t \cdot \Big(-\frac{\tilde{\alpha} w_+}{n} + \tilde{\alpha} w_+ + \Big(-\frac{\tilde{\alpha} w_+}{n} + \tilde{\alpha} w_+\Big)^2\Big) \notag \\ 
 & \qquad - (P_+^t - p_i^t) \cdot \frac{\tilde{\alpha} w_+}{n} - P_-^t \cdot \frac{\tilde{\alpha} w_-}{n} +o(n^{-3})\Big) \notag \\
 & = V_i^t \cdot \Big(1 + p_i^t \cdot \Big(\tilde{\alpha} w_+ + \Big(-\frac{\tilde{\alpha} w_+}{n} + \tilde{\alpha} w_+\Big)^2\Big) - P_+^t \cdot \frac{\tilde{\alpha} w_+}{n} - P_-^t \cdot \frac{\tilde{\alpha} w_-}{n} + o(n^{-3}) \Big). \notag \\
 \intertext{Applying $(-\frac{\tilde{\alpha} w_{+}}{n} + \tilde{\alpha} w_{+})^2 \leq (\tilde{\alpha} w_{+})^2$ then gives}
 \Ex{ V_i^{t+1} \mid \mathfrak{F}^t} & \leq V_i^t \cdot \Big(1 + p_i^t \cdot (\tilde{\alpha} w_+ + (\tilde{\alpha} w_+)^2) - P_+^t \cdot \frac{\tilde{\alpha} w_+}{n} - P_-^t \cdot \frac{\tilde{\alpha} w_-}{n} + o(n^{-3}) \Big) \notag \\
 & = V_i^t \cdot \Big(1 + p_i^t \cdot (\tilde{\alpha} w_+ + (\tilde{\alpha} w_+)^2) + P_+^t \cdot \Big(\frac{\tilde{\alpha} w_-}{n} - \frac{\tilde{\alpha} w_+}{n} \Big)- \frac{\tilde{\alpha} w_-}{n} + o(n^{-3}) \Big) \notag ,\\
 \intertext{where in the first equality we replaced $P_{-}^t$ by $1- P_{+}^t$ and rearranged. Using $P_+^t \leq \delta^t \leq 1 - \frac{1}{n}$,}
\Ex{ V_i^{t+1} \mid \mathfrak{F}^t}  & \leq  V_i^t \cdot \Big(1 + p_i^t \cdot (\tilde{\alpha} w_+ + (\tilde{\alpha} w_+)^2) + \Big(1 - \frac{1}{n}\Big) \cdot \Big(\frac{\tilde{\alpha} w_-}{n} - \frac{\tilde{\alpha} w_+}{n} \Big)- \frac{\tilde{\alpha} w_-}{n} + o(n^{-3}) \Big) \notag \\
 & =  V_i^t \cdot \Big(1 - \tilde{\alpha} w_+ \cdot \Big(\frac{1}{n} - p_i^t \cdot (1 + \tilde{\alpha} w_+)  \Big)  -\frac{\tilde{\alpha}}{n^2} \cdot (w_- - w_+) + o(n^{-3}) \Big). \label{eq:overloaded_sum}
 \end{align}
\textbf{Case 1.A} [\PThree holds]. The bounds $p_+^t \leq \frac{1}{n} - \frac{\C{p2k1} \cdot (1 - \delta^t)}{n} \leq \frac{1}{n} - \frac{\C{p2k1}}{n^2}$ and $w_- \geq w_+$ applied to \eqref{eq:overloaded_sum} above give
\begin{align*}
\Ex{ V_i^{t+1} \mid \mathfrak{F}^t}
 & \leq V_i^t \cdot \Big(1 - \tilde{\alpha} w_+ \cdot \Big(\frac{1}{n} - \Big( \frac{1}{n} - \frac{\C{p2k1}}{n^2} \Big)\cdot (1 + \tilde{\alpha} w_+) \Big) + o(n^{-3}) \Big) \\
 & = V_i^t \cdot \Big(1 - \frac{\tilde{\alpha} w_+}{n} \cdot \Big(\frac{\C{p2k1}}{n} - \Big(1 - \frac{\C{p2k1}}{n} \Big)\cdot (\tilde{\alpha} w_+) \Big) + o(n^{-3}) \Big) \\
 & \stackrel{(a)}{\leq} V_i^t \cdot \Big(1 - \frac{\tilde{\alpha} w_+}{n} \cdot \frac{\C{p2k1}}{2n} + o(n^{-3}) \Big) \\
 &\stackrel{(b)}{\leq} V_i^t \cdot \Big(1 - \frac{\tilde{\alpha} w_+ \C{p2k1}}{4n^2}\Big),
\end{align*}
where we used in $(a)$ that $\frac{\C{p2k1}}{2n} \leq (1 - \frac{\C{p2k1}}{n} )\cdot \tilde{\alpha} w_+$ (as implied by $\tilde{\alpha} \leq \frac{\C{p2k1}}{2 \cdot w_- \cdot (n - \C{p2k1})}$) and in $(b)$ that $\tilde{\alpha} = \Omega(n^{-1})$.

\noindent \textbf{Case 1.B} [\WThree holds]. Applying the bound $p_+^t \leq \frac{1}{n}$ to \eqref{eq:overloaded_sum} yields
\begin{align*}
\Ex{ V_i^{t+1} \mid \mathfrak{F}^t}
 & \leq V_i^t \cdot \Big(1 - \tilde{\alpha} w_+ \cdot \Big(\frac{1}{n} - \frac{1}{n}\cdot (1 + \tilde{\alpha} w_+) \Big) -\frac{\tilde{\alpha}}{n^2} \cdot (w_- - w_+) + o(n^{-3}) \Big) \\
 & = V_i^t \cdot \Big(1 +\frac{(\tilde{\alpha} w_+)^2}{n} -\frac{\tilde{\alpha}}{n^2} \cdot (w_- - w_+) + o(n^{-3}) \Big) \\
 & \stackrel{(a)}{\leq} V_i^t \cdot \Big(1 -\frac{\tilde{\alpha}}{2n^2} \cdot (w_- - w_+) + o(n^{-3}) \Big)\\
 &\stackrel{(b)}{\leq} V_i^t \cdot \Big(1 -\frac{\tilde{\alpha}}{4n^2} \cdot (w_- - w_+)\Big),
\end{align*}
where we used in $(a)$ that $\frac{(\tilde{\alpha} w_+)^2}{2n} \leq \frac{\tilde{\alpha}}{n^2} \cdot (w_- - w_+)$ (as implied by $\tilde{\alpha} \leq \frac{w_- - w_+}{4 \cdot w_-^2 \cdot n} \leq \frac{w_- - w_+}{2 \cdot w_+^2 \cdot n}$) and in $(b)$ that $\tilde{\alpha} = \Omega(n^{-1})$.

So, in both \textbf{Case 1.A} and \textbf{Case 1.B}, we get for some constant $\C{v_mult_factor} > 0$,
\[
\sum_{i \in B_{++}^t} \Ex{ V_i^{t+1} \mid \mathfrak{F}^t} 
  \leq \sum_{i \in B_{++}^t} V_i^t \cdot \Big(1 - \frac{\C{v_mult_factor}}{n^3} \Big).
\]

\noindent \textbf{Case 2} [Robustly Underloaded Bins]. For any $i \in B_{--}^t$, using again the Taylor estimate $e^z \leq 1 + z + z^2$ (since $\tilde{\alpha} = \Theta(1/n)$) yields
\begin{align}
\Ex{ V_i^{t+1} \mid \mathfrak{F}^t}
 & = V_i^t \cdot (p_i^t \cdot e^{\tilde{\alpha} w_-/n - \tilde{\alpha} w_-} + (P_-^t - p_i^t) \cdot e^{\tilde{\alpha} w_-/n} + P_+^t \cdot e^{\tilde{\alpha} w_+/n}) \notag \\
 & \leq V_i^t \cdot \Big(1 + p_i^t \cdot \Big(\frac{\tilde{\alpha} w_-}{n} - \tilde{\alpha} w_- + \Big(\frac{\tilde{\alpha} w_-}{n} - \tilde{\alpha} w_-\Big)^2 \Big) \notag \\ 
 & \qquad + (P_-^t - p_i^t) \cdot \Big(\frac{\tilde{\alpha} w_-}{n} + \Big(\frac{\tilde{\alpha} w_-}{n}\Big)^2\Big) + P_+^t \cdot \Big(\frac{\tilde{\alpha} w_+}{n} + \Big(\frac{\tilde{\alpha} w_+}{n}\Big)^2 \Big) \Big) \notag \\
 \intertext{Using the fact that $\frac{\tilde{\alpha}^2}{n^2} = \Oh(n^{-4})$ and then substituting $P_{+}^t=1-P_{-}^t$ yields}
 & \leq V_i^t \cdot \Big(1 + p_i^t \cdot \Big(\frac{\tilde{\alpha} w_-}{n} - \tilde{\alpha} w_- + \Big(\frac{\tilde{\alpha} w_-}{n} - \tilde{\alpha} w_-\Big)^2 \Big)  \notag \\ 
 & \qquad + (P_-^t - p_i^t) \cdot \frac{\tilde{\alpha} w_-}{n} + P_+^t \cdot \frac{\tilde{\alpha} w_+}{n} + o(n^{-3}) \Big) \notag \\
 & = V_i^t \cdot \Big(1 - p_i^t \cdot (\tilde{\alpha} w_- - \Big(-\frac{\tilde{\alpha} w_-}{n} + \tilde{\alpha} w_-\Big)^2\Big) + P_-^t \cdot \Big(\frac{\tilde{\alpha} w_-}{n} - \frac{\tilde{\alpha} w_+}{n}\Big) + \frac{\tilde{\alpha} w_+}{n} + o(n^{-3}) \Big). \notag \\
 \intertext{Now, $(-\frac{\tilde{\alpha} w_-}{n} + \tilde{\alpha} w_-)^2 \leq (\tilde{\alpha} w_-)^2$ implies that}
  \Ex{ V_i^{t+1} \mid \mathfrak{F}^t}& \leq V_i^t \!\cdot\! \Big(1 - p_i^t \cdot (\tilde{\alpha} w_- - (\tilde{\alpha} w_-)^2) + P_-^t \cdot \Big(\frac{\tilde{\alpha} w_-}{n} - \frac{\tilde{\alpha} w_+}{n}\Big) + \frac{\tilde{\alpha} w_+}{n} + o(n^{-3}) \Big). \label{eq:underloaded_bins}
\end{align}

\textbf{Case 2.A} [\PThree holds]. Using that $p_i^t \geq \frac{1 + \C{p2k2} \cdot \delta^t}{n} \geq \frac{1 + \C{p2k2} \cdot \frac{1}{n}}{n} \geq \frac{1}{n} + \frac{\C{p2k2}}{n^2}$, $P_-^t \leq 1$, and applying this to the previous inequality yields $(a)$ below (since $(\tilde{\alpha}w_-)^2 \leq \tilde{\alpha}w_-$ the factor after $p_i^t$ is positive), and further rearranging gives
\begin{align*}
 \Ex{ V_i^{t+1} \mid \mathfrak{F}^t}
  & \stackrel{(a)}{\leq} V_i^t \cdot \Big(1 - \Big(\frac{1}{n} + \frac{\C{p2k2}}{n^2}\Big) \cdot (\tilde{\alpha} w_- - (\tilde{\alpha} w_-)^2) + 1 \cdot \Big(\frac{\tilde{\alpha} w_-}{n} - \frac{\tilde{\alpha} w_+}{n}\Big) + \frac{\tilde{\alpha} w_+}{n} + o(n^{-3}) \Big) \\
 & = V_i^t \cdot \Big(1 - \tilde{\alpha} w_- \cdot \Big(\Big( \frac{1}{n} + \frac{\C{p2k2}}{n^2}\Big) \cdot (1 - \tilde{\alpha} w_-) - \frac{1}{n}\Big) + o(n^{-3})\Big)  \\
 & = V_i^t \cdot \Big(1 - \frac{\tilde{\alpha} w_-}{n} \cdot \Big(\frac{\C{p2k2}}{n} -\Big(1 + \frac{\C{p2k2}}{n} \Big) \cdot \tilde{\alpha} w_-\Big) + o(n^{-3}) \Big) \\
 & \stackrel{(b)}{\leq} V_i^t \cdot \Big(1 - \frac{\tilde{\alpha} w_-}{n} \cdot \frac{\C{p2k2}}{2n} + o(n^{-3}) \Big) \\
 &\stackrel{(c)}{\leq} V_i^t \cdot \Big(1 - \frac{\tilde{\alpha} w_- \C{p2k2}}{4n^2}\Big).
\end{align*}
where we used in $(b)$ that
$\frac{\C{p2k2}}{2n} \geq (1 + \frac{\C{p2k2}}{n}) \cdot \tilde{\alpha} w_-$ (equivalent to $\tilde{\alpha} \leq \frac{\C{p2k2}}{2 \cdot w_- \cdot (n + \C{p2k2})}$) and in $(c)$ that $\alpha = \Omega(n^{-1})$.

\noindent \textbf{Case 2.B} [\WThree holds and $P_-^t \leq 1 - \frac{1}{2n}$]. Applying $p_i^t \geq 1/n$ and $P_-^t \leq 1 - \frac{1}{2n}$ to \eqref{eq:underloaded_bins} yields $(a)$ below, and then further rearranging gives
\begin{align*}
 \Ex{ V_i^{t+1} \mid \mathfrak{F}^t}
 & \stackrel{(a)}{\leq} V_i^t \cdot \Big(1 - \frac{1}{n} \cdot (\tilde{\alpha} w_- - (\tilde{\alpha} w_-)^2) + \Big( 1 - \frac{1}{2n}\Big) \cdot \Big(\frac{\tilde{\alpha} w_-}{n} - \frac{\tilde{\alpha} w_+}{n}\Big) + \frac{\tilde{\alpha} w_+}{n} + o(n^{-3}) \Big) \\
 & = V_i^t \cdot \Big(1 - \Big(\frac{1}{n} \cdot (1 - \tilde{\alpha} w_-) - \frac{1}{n}\Big) \cdot \tilde{\alpha} w_- - \frac{1}{2n} \cdot \Big(\frac{\tilde{\alpha} w_-}{n} - \frac{\tilde{\alpha} w_+}{n}\Big) + o(n^{-3}) \Big) \\
 & = V_i^t \cdot \Big(1 + \frac{(\tilde{\alpha} w_-)^2}{n} - \frac{\tilde{\alpha}}{2n^2} \cdot (w_- - w_+) + o(n^{-3}) \Big) \\
 & \stackrel{(b)}{\leq} V_i^t \cdot \Big(1 - \frac{\tilde{\alpha}}{4n^2} \cdot (w_- - w_+) + o(n^{-3}) \Big)\\
& \stackrel{(c)}{\leq} V_i^t \cdot \Big(1 - \frac{\tilde{\alpha}}{8n^2} \cdot (w_- - w_+) \Big),
\end{align*}
where we used in $(b)$ that $\frac{\tilde{\alpha}}{4n^2} \cdot (w_- - w_+)\geq \frac{(\tilde{\alpha} w_-)^2}{n}$ (equivalent to $\tilde{\alpha} \leq \frac{w_- - w_+}{4 \cdot w_-^2 \cdot n}$) and in $(c)$ that $\tilde{\alpha} = \Omega(n^{-1})$ and $w_- - w_+$ is a positive constant.

\noindent \textbf{Case 2.C} [\WThree holds and $P_-^t > 1 - \frac{1}{2n}$]. We will derive a similar inequality as in the previous cases, but we will take the sum over all $i \in B_{--}^t$ first to apply a majorization argument.  %
Aggregating the contributions over all bins in $B_{--}^t$ yields (taking the sum of \eqref{eq:underloaded_bins}),
\begin{align*}
\sum_{i \in B_{--}^t} \ex{V_i^{t+1} \mid \mathfrak{F}^t} 
 & \leq \sum_{i \in B_{--}^t} V_i^t \cdot \Big(1 - p_i^t \cdot (\tilde{\alpha} w_- - (\tilde{\alpha} w_-)^2) + P_-^t \cdot \Big(\frac{\tilde{\alpha} w_-}{n} - \frac{\tilde{\alpha} w_+}{n}\Big) + \frac{\tilde{\alpha} w_+}{n} + o(n^{-3}) \Big) \\
 & \leq  \sum_{i \in B_{--}^t} V_i^t \cdot \Big(1 - p_i^t \cdot (\tilde{\alpha} w_- - (\tilde{\alpha} w_-)^2) + 1 \cdot \Big(\frac{\tilde{\alpha} w_-}{n} - \frac{\tilde{\alpha} w_+}{n} \Big) + \frac{\tilde{\alpha} w_+}{n} + o(n^{-3}) \Big) \\
 & = \sum_{i \in B_{--}^t} V_i^t \cdot \Big(1 - p_i^t \cdot (\tilde{\alpha} w_- - (\tilde{\alpha} w_-)^2) + \frac{\tilde{\alpha} w_-}{n} + o(n^{-3}) \Big) \\
 & = \sum_{i \in B_{--}^t} V_i^t \cdot \Big(1 - \tilde{\alpha} w_- \cdot \Big(p_i^t \cdot (1 - \tilde{\alpha} w_-) - \frac{1}{n} \Big) + o(n^{-3})\Big),
\end{align*}
Since $V_i^t$ is non-decreasing for $i \in B_{--}^t$ (recall the bins are labeled in non-increasing order) and $p_i^t$ is non-decreasing, applying \cref{lem:quasilem}, we can upper bound the above expression by replacing each $p_i^t$ by the average probability $\overline{p}_i^t$ of robustly underloaded bins. Note that because of monotonicity of $p_i^t$, the average probability over $B_{--}^t$ is at least as large as the average probability over $B_-^t$, so it satisfies
\[
\overline{p}_i^t \geq  \frac{P_-^t}{|B_-^t|} 
\geq \frac{1 - \frac{1}{2n}}{n-1} = \frac{1 - \frac{1}{2n}}{1 - \frac{1}{n}} \cdot \frac{1}{n} \geq \frac{1}{n} + \frac{1}{2n^2},
\]
where we have used the case assumption $P_-^t > 1 - \frac{1}{2n}$ and the invariant $|B_-^t| \leq n - 1$ and \cref{clm:eps_ineq} for $\epsilon := \frac{1}{n}$ (and $n > 1$). Hence, since the factor after $p_i^t$ is positive as $\tilde{\alpha} w_- \leq 1$, we get 
\begin{align*}
 \sum_{i \in B_{--}^t} \ex{V_i^{t+1} \mid \mathfrak{F}^t}
 & \leq \sum_{i \in B_{--}^t} V_i^t \cdot \Big(1 - \tilde{\alpha} w_- \cdot \Big( \Big(\frac{1}{n} + \frac{1}{2n^2}\Big) \cdot (1 - \tilde{\alpha} w_-) - \frac{1}{n} \Big) + o(n^{-3})\Big) \\ 
 & = \sum_{i \in B_{--}^t} V_i^t \cdot \Big(1 - \frac{\tilde{\alpha} w_-}{n} \cdot \Big( \frac{1}{2n} - \tilde{\alpha} w_- \cdot \Big( 1 + \frac{1}{2n}\Big) \Big) + o(n^{-3})\Big). \end{align*} 
 Now, since $\frac{1}{4n} \geq \tilde{\alpha} w_- \cdot ( 1 + \frac{1}{2n})$ (as implied by $\tilde{\alpha} \leq \frac{1}{w_- \cdot (4n + 2)}$)  we have 
 \begin{align*} 
 \sum_{i \in B_{--}^t} \ex{V_i^{t+1} \mid \mathfrak{F}^t} &  \leq \sum_{i \in B_{--}^t} V_i^t \cdot \Big(1 - \frac{\tilde{\alpha} w_-}{n} \cdot \frac{1}{4n} + o(n^{-3})\Big).%
\end{align*}
Finally as  $\tilde{\alpha} = \Omega(n^{-1})$ for the contribution of the robustly underloaded bins we have
\[  \sum_{i \in B_{--}^t} \ex{V_i^{t+1} \mid \mathfrak{F}^t} \leq \sum_{i \in B_{--}^t} V_i^t \cdot \left(1 - \frac{\tilde{\alpha} w_-}{16n^2}\right).\]

So, in all \textbf{Case 2.A}, \textbf{Case 2.B} and \textbf{Case 2.C}, we get for some constant $\C{v_mult_factor} > 0$,
\[
\sum_{i \in B_{--}^t} \Ex{ V_i^{t+1} \mid \mathfrak{F}^t} 
  \leq \sum_{i \in B_{--}^t} V_i^t \cdot \Big(1 - \frac{\C{v_mult_factor}}{n^3} \Big).
\]

\noindent\textbf{Case 3} [Swinging Bins]. For bins $B_{+/-}^t$ with load $y_i^t \in (-w_-, \frac{w_-}{n})$, we have,
\[
\sum_{i \in B_{+/-}^t} \ex{V_i^{t+1} \mid \mathfrak{F}^t} \leq
\sum_{i \in B_{+/-}^t} e^{2 \tilde{\alpha} w_-} \leq 2n \leq 2n + \sum_{i \in B_{+/-}^t} V_i^t \cdot \Big(1 - \frac{\C{v_mult_factor}}{n^3}\Big).
\]
Applying the above inequalities for the six cases to \eqref{eq:split}, we get the claim of the lemma.
\end{proof}

\begin{lem} \label{clm:v_expected_value}
For any $\PThree \cap \WTwo$-process or $\PTwo \cap \WThree$-process, it holds that for any $t \geq 0$,
\[
\Ex{ V^{t}} \leq \frac{2}{\C{v_mult_factor}} \cdot n^4.
\]
\end{lem}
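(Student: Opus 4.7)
The plan is to prove this by a straightforward induction on $t$, using the one-step recursion from \cref{lem:v_multiplicative_drop} as the inductive tool. Specifically, taking expectations on both sides of the inequality in \cref{lem:v_multiplicative_drop} via the tower property gives
\[
\Ex{V^{t+1}} \leq \Ex{V^t} \cdot \Big(1 - \frac{\C{v_mult_factor}}{n^3}\Big) + 2n,
\]
so we have a linear recursion of the form $a_{t+1} \leq a_t \cdot (1 - \gamma) + b$ with $\gamma = \C{v_mult_factor}/n^3$ and $b = 2n$, whose fixed point is $b/\gamma = 2n^4/\C{v_mult_factor}$.

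For the base case $t = 0$, all bins are empty so $y_i^0 = 0$ for every $i$, which gives $V^0 = \sum_{i=1}^n e^{\tilde{\alpha} \cdot 0} = n$, and this is certainly at most $\frac{2}{\C{v_mult_factor}} \cdot n^4$ for large $n$. For the inductive step, suppose $\Ex{V^t} \leq \frac{2}{\C{v_mult_factor}} \cdot n^4$. Then
\[
\Ex{V^{t+1}} \leq \frac{2}{\C{v_mult_factor}} \cdot n^4 \cdot \Big(1 - \frac{\C{v_mult_factor}}{n^3}\Big) + 2n = \frac{2}{\C{v_mult_factor}} \cdot n^4 - 2n + 2n = \frac{2}{\C{v_mult_factor}} \cdot n^4,
\]
closing the induction. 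Alternatively, one can unroll the recursion directly via \cref{lem:geometric_arithmetic} with $a = 1 - \C{v_mult_factor}/n^3 < 1$ and $b = 2n$, giving $\Ex{V^t} \leq V^0 \cdot a^t + b/(1-a) \leq n + 2n^4/\C{v_mult_factor}$, which yields the same bound up to a negligible additive term that is absorbed into the constant.

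There is no real obstacle here: all the technical work has already been done in \cref{lem:v_multiplicative_drop}, where the careful choice of $\tilde{\alpha} = \Theta(1/n)$ ensures that the multiplicative drop of order $1/n^3$ dominates the additive increase of $2n$ from the swinging bins, producing a polynomial equilibrium value rather than an exponentially large one. The proof is essentially a one-line fixed-point calculation combined with an induction.
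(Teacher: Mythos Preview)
Your proposal is correct and essentially identical to the paper's proof: both use induction on $t$, invoking \cref{lem:v_multiplicative_drop} together with the tower property to obtain the one-step recursion, then verify that $\frac{2}{\C{v_mult_factor}} \cdot n^4$ is a fixed point. Your alternative via \cref{lem:geometric_arithmetic} is also valid but not needed.
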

\begin{proof}[Proof of \cref{clm:v_expected_value}]
We will prove this by induction. The base case follows trivially, since $V^0 = n$. Assume that $\Ex{V^{t}} \leq n^4 \cdot \frac{2}{\C{v_mult_factor}}$ for some $t \geq 0$, then by \cref{lem:v_multiplicative_drop} and the tower property of expectation,
\[
\Ex{V^{t+1}} %
\leq \Ex{V^t} \cdot \Big(1 - \frac{\C{v_mult_factor}}{n^3}\Big) + 2n \leq \frac{2}{\C{v_mult_factor}} \cdot n^4 - \frac{2n^4}{\C{v_mult_factor}} \cdot \frac{\C{v_mult_factor}}{n^3} + 2n = \frac{2}{\C{v_mult_factor}} \cdot n^4.
\qedhere \]
\end{proof}
We can now apply the above results to deduce that there is a $\poly(n)$ gap at an arbitrary round. This will be used in the proof of \cref{thm:main_technical} as the starting point of the ``recovery phase'' (\cref{lem:recovery}).  

\begin{lem} \label{lem:initial_gap_nlogn}
For any $\PThree \cap \WTwo$-process or $\PTwo \cap \WThree$-process, there is some constant $\C{poly_n_gap} > 0$ such that for any $m \geq 1$,
\[
\Pro{\Gap(m) \leq \C{poly_n_gap} \cdot n \log n} \geq 1 - n^{-12},
\]
and so, for any constant $\alpha < 1$,
\[
\Pro{\Lambda^m \leq \exp(2\C{poly_n_gap} n \log n)} \geq 1 - n^{-12}.
\]
\end{lem}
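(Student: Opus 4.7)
The plan is to obtain the polynomial-gap bound as an immediate corollary of \cref{clm:v_expected_value} via Markov's inequality, and then translate the bound on the weak exponential potential $V^m$ into a bound first on the maximum absolute normalized load, and subsequently on the standard exponential potential $\Lambda^m$.

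First I would apply Markov's inequality to $V^m$. By \cref{clm:v_expected_value}, $\Ex{V^m} \leq (2/\C{v_mult_factor}) \cdot n^4$, so for any $m \geq 1$,
\[
\Pro{V^m \leq \frac{2}{\C{v_mult_factor}} \cdot n^{16}} \geq 1 - n^{-12}.
\]
Recall that $V^m = \sum_{i=1}^{n} \exp(\tilde\alpha |y_i^m|)$ with $\tilde\alpha = \Theta(1/n)$ (as fixed in \cref{lem:v_multiplicative_drop}). In particular, for any bin $i \in [n]$, $V^m \geq \exp(\tilde\alpha |y_i^m|)$, so on the event $\{ V^m \leq (2/\C{v_mult_factor}) \cdot n^{16} \}$ one has
\[
\max_{i \in [n]} |y_i^m| \leq \frac{1}{\tilde\alpha} \cdot \log\!\Big( \tfrac{2}{\C{v_mult_factor}} \cdot n^{16} \Big) \leq \C{poly_n_gap} \cdot n \log n,
\]
for a suitable constant $\C{poly_n_gap} = \C{poly_n_gap}(\tilde\alpha,\C{v_mult_factor}) > 0$ and all $n$ sufficiently large. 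Since $\Gap(m) = \max_i y_i^m \leq \max_i |y_i^m|$, the first bound follows.

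For the second statement, I would simply plug the uniform bound $|y_i^m| \leq \C{poly_n_gap} \cdot n \log n$ into the definition of $\Lambda^m$: on the same event above,
\[
\Lambda^m = \sum_{i=1}^n e^{\alpha |y_i^m|} \leq n \cdot \exp\bigl( \alpha \cdot \C{poly_n_gap} \cdot n \log n \bigr) \leq \exp\bigl( 2 \C{poly_n_gap} \cdot n \log n \bigr),
\]
where the last inequality uses $\alpha < 1$ and absorbs the factor $n$ into the exponent for sufficiently large $n$. Combined with the probability estimate from Markov's inequality, both claimed bounds hold with probability at least $1 - n^{-12}$.

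There is essentially no obstacle here: the heavy lifting was done in \cref{lem:v_multiplicative_drop} and \cref{clm:v_expected_value}, which engineer $\tilde\alpha$ to be small enough ($\Theta(1/n)$) that $V$ has a multiplicative drop in expectation regardless of the mean quantile $\delta^t$, yielding the uniform-in-$m$ polynomial expectation bound. The only subtlety is checking that the constants in the final inequality absorb correctly, which is why we allow the exponent $2\C{poly_n_gap}$ rather than $\C{poly_n_gap}$ in the bound on $\Lambda^m$.
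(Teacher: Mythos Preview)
Your proposal is correct and follows essentially the same approach as the paper: apply Markov's inequality to $V^m$ using the expectation bound from \cref{clm:v_expected_value}, convert the resulting bound on $V^m$ into a bound on $\max_i |y_i^m|$ via $\tilde\alpha = \Theta(1/n)$, and then plug into $\Lambda^m$ using $\alpha < 1$. The paper's proof is essentially verbatim the same argument.
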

\begin{proof}
Using \cref{clm:v_expected_value}, we have that
\[
\Ex{ V^{m}} \leq \frac{2}{\C{v_mult_factor}} \cdot n^4.
\]
By Markov's inequality, we have $\Pro{V^m \leq \frac{2}{\C{v_mult_factor}} \cdot n^{16}}  \geq 1 - n^{-12}$. By taking the log of the potential function we obtain 
\begin{align*}
 \Pro{\max_{i\in [n]} |y_i^m| \leq  \log\left(\frac{2}{\C{v_mult_factor}} \right) + \frac{16}{\tilde{\alpha}} \cdot \log n } \geq 1 - n^{-12}, 
\end{align*} thus, as $\tilde{\alpha}=\Theta(1/n)$, we have $ \Pro{\Gap(m) \leq \C{poly_n_gap} n \log n } \geq 1 - n^{-12}$ for some constant $\C{poly_n_gap} > 0$.
Finally, since $\alpha < 1$, when $\max_{i \in [n]} |y_i^m| \leq \C{poly_n_gap} n \log n$, we get $\Lambda^m \leq n \cdot \exp(\C{poly_n_gap} n \log n) \leq \exp(2\C{poly_n_gap} n \log n)$ and deduce the second statement.
\end{proof}

\section{Analysis of Non-Filling Processes} \label{sec:non_filling_analysis}
 
In this section, we complete the proof that any $\PThree \cap \WTwo$ or $\PTwo \cap \WThree$-process satisfies $\Gap(m) = \Oh(\log n)$ \Whp~at an arbitrary round $m$. As outlined in \cref{sec:analysis_thinning}, the proof consists of a recovery phase and a stabilization phase.  \cref{fig:recovery_stabilisation} depicts these two phases. 

In the recovery phase (\cref{sec:recovery})\NOTE{T}{maybe briefly mention te other sections here or later?}\NOTE{J}{Added the stuff in red below}, we prove that starting at $t_0 = m - \Theta(n^3 \log^4 n)$ with $V^{t_0} = \poly(n)$ (i.e., $\Lambda^{t_0} \leq \exp(2 c_6 n \log n)$), \Whp~there exists a round $s \in [t_0, m]$ with $\Lambda^s < cn$. We do this by proving that with constant probability if $\Lambda^{t_0} \leq \exp(2 c_6 n \log n)$, then there is a constant fraction of rounds $r \in [t_0, t_0 + n^3 \log^3 n]$ with $\delta^r \in (\eps, 1 - \eps)$. By analyzing an ``adjusted version'' $\tilde{\Lambda}$ of the exponential potential $\Lambda$, taking advantage of the fact that $\Lambda^t$ decreases in expectation when $\delta^t \in (\eps, 1- \eps)$ and increases at most by a smaller factor otherwise (\cref{cor:change_for_large_lambda}),  we show that there exists $s \in [t_0, t_0 +  n^3 \log^3 n]$ such that $\Lambda^s < cn$. By repeating this argument $\Theta(\log n)$ times we amplify the probability to get that \Whp~there exists $s \in [t_0, m]$ with $\Lambda^s < cn$. 

Then, in the stabilization phase (\cref{sec:stabilization}), we first show that if $\Lambda^s < 2cn$ for any round $s$, then $\Lambda^{t} < cn$ for some round $t\in (s, s + \Theta(n \log n)]$. We do this by proving that \Whp~if $\Lambda^s < cn$, then there is a constant fraction of rounds $r \in [s, s + \Theta(n \log n)]$ with $\delta^r \in (\eps, 1 - \eps)$. 
Again, by analyzing an ``adjusted version'' $\tilde{\Lambda}$ of the exponential potential $\Lambda$, we show that \Whp~there exists $t \in (s, s + \Theta(n \log n)]$ such that $\Lambda^t < cn$. Next, we take the union bound over the remaining $\Oh(n^3 \log^4 n)$ rounds, which gives $\Lambda^{r_1} \leq cn$ \Whp~at some $r_1 \in  [m, m + \Theta(n \log n)]$ and $\Lambda^{r_2} \leq cn$ \Whp~at some $r_2 \in [m - \Theta(n \log n), m]$, this in turn implies that $\max_{i \in [n]} |y_i^m| = \Oh(\log n)$. 
 
\red{In order to complete the analysis in Sections \ref{sec:recovery} and \ref{sec:stabilization} described above, we must first give some definitions and establish some technical tools. In particular, in \cref{sec:adj_martin} we shall prove that the new adjusted exponential potential function $\tilde{\Lambda}^t$ is a super-martingale. In \cref{sec:taming} we then prove bounds on the random variables involved in this new potential so that we can utilize the super-martingale property to establish a drop in potential later in the proof.}

In the following, we consider an arbitrary round $t_0 \geq 0$ which will be starting point of our analysis.
For integers $s\geq t_0\geq 0$ and some arbitrary $\eps\in (0,1)$, to be fixed later in \cref{lem:stabilization}, we let $\C{lambda_bound}(\eps)>1$ be the constant from \cref{cor:change_for_large_lambda}, and with it we define the event \begin{equation*}%
\mathcal{E}_{t_0}^{s} := \bigcap_{r \in [t_0, s]} \{\Lambda^r \geq \C{lambda_bound} n\}.  \end{equation*} 
Showing that this event does not hold for suitable $t_0,s$ is a large part of proving the gap bound. For rounds $s \geq t_0$, we let $G_{t_0}^{s}:=G_{t_0}^{s}(\eps)$ be the number of rounds $r \in [t_0, s]$ with $\delta^r \in (\epsilon, 1 - \epsilon)$, and similarly we define $B_{t_0}^{s} := (s - t_0+1) - G_{t_0}^s$. 

We now introduce the \textit{adjusted exponential potential function} $\tilde{\Lambda}_{t_0}^s$ which involves the random variables and event above. Let $\C{good_quantile_mult}(\eps)$ be the constant in Lemma \ref{lem:good_quantile_good_decrease} and $0<\gamma \leq 1$ be an arbitrary constant (fixed in \cref{lem:recovery}). Then we set $\tilde{\Lambda}_{t_0}^{t_0} := {\Lambda}^{t_0} $ and, for any $s > t_0$, we define the sequence
\begin{equation}\label{eq:newpot}
\tilde{\Lambda}_{t_0}^s := \Lambda^s \cdot \mathbf{1}_{\mathcal{E}_{t_0}^{s-1}} \cdot \exp\left( - \frac{\C{good_quantile_mult} \alpha \gamma}{n} \cdot B_{t_0}^{s-1} \right) \cdot \exp\left( + \frac{\C{good_quantile_mult} \alpha}{n} \cdot G_{t_0}^{s-1} \right). 
\end{equation}

In this section we shall fix many of the constants such as $\alpha$. We begin by setting \[C := \left\lceil\frac{8\C{quad_const_add}}{\C{quad_delta_drop}}\right\rceil+1,\] where $\C{quad_delta_drop}, \C{quad_const_add}$ are the constants from \cref{lem:quadratic_absolute_relation_for_w_plus_w_minus}. 

We also define $\tilde{G}_{t_0}^{s} := \tilde{G}_{t_0}^{s}(C)$ to be the number of rounds $r \in [t_0, s]$ with $\Delta^r \leq C \cdot n$. 

\subsection{The Adjusted Exponential Potential is a Super-Martingale} \label{sec:adj_martin}
We now show that, for a suitably choice of parameters, the sequence defined by \eqref{eq:newpot} forms a super-martingale.

\begin{lem}\label{lem:gamma_tilde_is_supermartingale}
Consider any $\PThree \cap \WTwo$-process or $\PTwo \cap \WThree$-process. Let $\epsilon\in (0,1)$ and $0 < \gamma \leq 1$ be arbitrary constants, and let
$\C{good_quantile_mult}:=\C{good_quantile_mult}(\eps)$ be the constant in Lemma \ref{lem:good_quantile_good_decrease}. Further, let $0 < \alpha \leq \C{good_quantile_mult} \gamma/(3w_- \cdot e^{2 w_-})$ be a constant which additionally meets the conditions of \cref{lem:good_quantile_good_decrease}. Then for any $t_0 \geq 0$, we have for any $s \geq t_0$ that
\[
 \ex{\tilde{\Lambda}_{t_0}^{s+1} \mid \mathfrak{F}^s} \leq \tilde{\Lambda}_{t_0}^{s}.
\]
\end{lem}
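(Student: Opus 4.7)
The proof will proceed by a case analysis based on the value of the indicator $\mathbf{1}_{\mathcal{E}_{t_0}^{s-1}}$ at round $s$, and when this indicator equals $1$, a further case split based on whether $\Lambda^s \geq \C{lambda_bound} n$ (i.e., whether $\mathcal{E}_{t_0}^s$ also holds) and on whether $\mathcal{G}^s := \{\delta^s \in (\epsilon,1-\epsilon)\}$ occurs. The key observation is that the indicator and the correcting exponential factors in $\tilde{\Lambda}_{t_0}^s$ are $\mathfrak{F}^s$-measurable, so the expectation reduces to a one-step estimate of $\Lambda^{s+1}$ times a deterministic factor.

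If $\mathcal{E}_{t_0}^{s-1}$ does not hold, then $\tilde{\Lambda}_{t_0}^s = 0$ and since $\mathcal{E}_{t_0}^{s} \subseteq \mathcal{E}_{t_0}^{s-1}$ we also have $\tilde{\Lambda}_{t_0}^{s+1}=0$, so the super-martingale property holds trivially. Similarly, if $\mathcal{E}_{t_0}^{s-1}$ holds but $\Lambda^s < \C{lambda_bound} n$, then $\mathbf{1}_{\mathcal{E}_{t_0}^s} = 0$ and $\tilde{\Lambda}_{t_0}^{s+1}=0 \leq \tilde{\Lambda}_{t_0}^s$.

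The substantive case is $\mathcal{E}_{t_0}^s$ holding, in which case $\Lambda^s \geq \C{lambda_bound} n$ and \cref{cor:change_for_large_lambda} applies. If $\mathcal{G}^s$ occurs, then $G_{t_0}^s = G_{t_0}^{s-1}+1$ and $B_{t_0}^s = B_{t_0}^{s-1}$, so the first bound of \cref{cor:change_for_large_lambda} combined with $1-c_g\alpha/n \leq \exp(-c_g\alpha/n)$ gives
\[
\ex{\tilde{\Lambda}_{t_0}^{s+1}\mid \mathfrak{F}^s} \leq \Lambda^s\cdot \exp\!\left(-\frac{\C{good_quantile_mult}\alpha}{n}\right)\cdot\exp\!\left(\frac{\C{good_quantile_mult}\alpha}{n}\right)\cdot\exp\!\left(-\frac{\C{good_quantile_mult}\alpha\gamma}{n}B_{t_0}^{s-1}\right)\cdot\exp\!\left(\frac{\C{good_quantile_mult}\alpha}{n}G_{t_0}^{s-1}\right) = \tilde{\Lambda}_{t_0}^s.
\]
If $\mathcal{G}^s$ does not occur, then $G_{t_0}^s = G_{t_0}^{s-1}$ and $B_{t_0}^s = B_{t_0}^{s-1}+1$, so the second (unconditional) bound of \cref{cor:change_for_large_lambda}, together with $1+x \leq e^x$, yields a multiplicative factor of $(1+\alpha^2 \C{bad_quantile_mult}/n)\exp(-\C{good_quantile_mult}\alpha\gamma/n) \leq \exp(\alpha^2\C{bad_quantile_mult}/n - \C{good_quantile_mult}\alpha\gamma/n)$ in front of $\tilde{\Lambda}_{t_0}^s$.

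The main (and essentially only) obstacle is verifying that this last factor is at most $1$, which reduces to the scalar inequality $\alpha \C{bad_quantile_mult} \leq \C{good_quantile_mult}\gamma$. Since $\C{bad_quantile_mult} = 3w_-e^{2w_-}$ by \cref{lem:bad_quantile_increase_bound}, this is exactly the assumed bound $\alpha \leq \C{good_quantile_mult}\gamma/(3w_-e^{2w_-})$, so the argument closes. Aside from correctly bookkeeping the indicator and the $\mathfrak{F}^s$-measurability of $B_{t_0}^{s-1}$, $G_{t_0}^{s-1}$, and $\mathbf{1}_{\mathcal{E}_{t_0}^{s-1}}$, no further technical difficulty arises.
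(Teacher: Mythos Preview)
Your proposal is correct and follows essentially the same approach as the paper: both arguments pull the $\mathfrak{F}^s$-measurable factors $\mathbf{1}_{\mathcal{E}_{t_0}^{s}}$, $G_{t_0}^{s}$, $B_{t_0}^{s}$ out of the conditional expectation, split on whether $\mathcal{G}^s$ holds, apply the two estimates of \cref{cor:change_for_large_lambda}, and close the bad-quantile case using exactly the constraint $\alpha \leq \C{good_quantile_mult}\gamma/\C{bad_quantile_mult}$ with $\C{bad_quantile_mult}=3w_-e^{2w_-}$. The only cosmetic difference is that you dispose of the cases $\mathbf{1}_{\mathcal{E}_{t_0}^{s-1}}=0$ and $\Lambda^s<\C{lambda_bound} n$ explicitly at the outset, whereas the paper carries the indicator $\mathbf{1}_{\mathcal{E}_{t_0}^{s}}$ through the computation; both are equivalent.
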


\begin{proof}
We see that $\ex{ \tilde{\Lambda}_{t_0}^{s+1}   \mid \mathfrak{F}^s}$ is given by 
\begin{align*}
   &\ex{ \Lambda^{s+1} \cdot \mathbf{1}_{\mathcal{E}_{t_0}^{s}} \mid \mathfrak{F}^s} \cdot \exp\left( - \frac{\C{good_quantile_mult} \alpha \gamma}{n} \cdot B_{t_0}^{s} \right) \cdot \exp\left( \frac{\C{good_quantile_mult} \alpha}{n} \cdot G_{t_0}^{s} \right) \\ 
 & = \ex{ \Lambda^{s+1} \cdot \mathbf{1}_{\mathcal{E}_{t_0}^{s}} \mid \mathfrak{F}^s} \cdot \exp\left(\frac{\C{good_quantile_mult} \alpha \gamma}{n} \cdot \Big(\frac{\gamma + 1}{\gamma} \cdot \mathbf{1}_{\mathcal{G}^s} - 1\Big) \right) \cdot \exp\left( - \frac{\C{good_quantile_mult} \alpha \gamma}{n} \cdot B_{t_0}^{s - 1} \right) \cdot \exp\left( \frac{\C{good_quantile_mult} \alpha}{n} \cdot G_{t_0}^{s - 1} \right).
\end{align*}
Thus, we see that it suffices to prove that \begin{equation}\label{eq:sufficient2}\ex{\Lambda^{s+1} \cdot \mathbf{1}_{\mathcal{E}_{t_0}^{s}} \mid \mathfrak{F}^s} \cdot \exp\left(\frac{\C{good_quantile_mult} \alpha \gamma}{n} \cdot \Big(\frac{\gamma + 1}{\gamma} \cdot \mathbf{1}_{\mathcal{G}^s} - 1\Big) \right) \leq \Lambda^{s} \cdot \mathbf{1}_{\mathcal{E}_{t_0}^{s-1}}.\end{equation}To show \eqref{eq:sufficient2}, we consider two cases based on whether $\mathcal{G}^s$ holds.

\medskip 

\noindent\textbf{Case 1} [$\mathcal{G}^s$ holds]. Recall that the event $\mathcal{G}^s$ means $\delta^s\in(\eps,1-\eps)$ holds. Further, we are additionally conditioning on the event $\mathcal{E}_{t_0}^{s}$, via the indicator, and so $\Lambda^t>cn$ holds for any round $t \in [t_0,s]$. Thus we can use the upper bound from \cref{cor:change_for_large_lambda} to give  
\[
\ex{ \Lambda^{s+1} \cdot \mathbf{1}_{\mathcal{E}_{t_0}^{s}} \mid \mathfrak{F}^s, \mathcal{G}^s} \leq \Lambda^{s} \cdot \mathbf{1}_{\mathcal{E}_{t_0}^{s-1}}  \cdot \left(1 - \frac{\C{good_quantile_mult} \alpha}{n} \right) \leq \Lambda^{s} \cdot \mathbf{1}_{\mathcal{E}_{t_0}^{s-1}} \cdot \exp\left(- \frac{\C{good_quantile_mult} \alpha}{n} \right).
\]
Hence, since in this case $\mathbf{1}_{\mathcal{G}^s}=1$, the left hand side of \eqref{eq:sufficient2} is equal to
\begin{align*}
\ex{\Lambda^{s+1} \cdot \mathbf{1}_{\mathcal{E}_{t_0}^{s}} \mid \mathfrak{F}^s, \mathcal{G}^s} \cdot \exp\left(\frac{\C{good_quantile_mult} \alpha \gamma}{n} \cdot \left(\frac{\gamma + 1}{\gamma} - 1\right) \right) &\leq \left(\Lambda^{s} \cdot \mathbf{1}_{\mathcal{E}_{t_0}^{s-1}} \cdot \exp\left(- \frac{\C{good_quantile_mult} \alpha}{n} \right)\right) \cdot \exp\left( \frac{\C{good_quantile_mult} \alpha}{n} \right) \\ & = \Lambda^{s} \cdot \mathbf{1}_{\mathcal{E}_{t_0}^{s-1}} .
\end{align*}
\noindent\textbf{Case 2} [$\mathcal{G}^s$ does not hold]. Recall from \cref{cor:change_for_large_lambda} that $\C{bad_quantile_mult} = 3w_- \cdot e^{2 w_-}$, thus our condition on $\alpha$ can be expressed as $\alpha \leq \frac{\C{good_quantile_mult}\gamma }{\C{bad_quantile_mult}}$. The second inequality of~\cref{cor:change_for_large_lambda} then  implies \[
\ex{\Lambda^{s+1} \cdot \mathbf{1}_{\mathcal{E}_{t_0}^{s}} \mid \mathfrak{F}^s, \neg \mathcal{G}^s} \leq \Lambda^{s} \cdot \mathbf{1}_{\mathcal{E}_{t_0}^{s-1}} \cdot \left(1 + \frac{\C{bad_quantile_mult} \alpha^2 }{n}\right)  \leq \Lambda^{s} \cdot \mathbf{1}_{\mathcal{E}_{t_0}^{s-1}} \cdot \left(1 + \frac{\C{good_quantile_mult} \alpha \gamma}{n}\right) \leq \Lambda^{s} \cdot \mathbf{1}_{\mathcal{E}_{t_0}^{s-1}}\cdot \exp\left(\frac{\C{good_quantile_mult} \alpha \gamma}{n} \right).
\]
Hence,  since in this case $\mathbf{1}_{\mathcal{G}^s}=0$, the left hand side of \eqref{eq:sufficient2} is equal to
\begin{align*}
\ex{\Lambda^{s+1} \cdot \mathbf{1}_{\mathcal{E}_{t_0}^{s}} \mid \mathfrak{F}^s, \neg \mathcal{G}^s} \cdot \exp\left(\frac{\C{good_quantile_mult} \alpha \gamma}{n} \cdot (  - 1) \right) &\leq \left( \Lambda^{s} \cdot \mathbf{1}_{\mathcal{E}_{t_0}^{s-1}}\cdot \exp\left(\frac{\C{good_quantile_mult} \alpha \gamma}{n} \right)\right)   \cdot \exp\left( -\frac{\C{good_quantile_mult} \alpha \gamma}{n}   \right) \\  &= \Lambda^{s} \cdot \mathbf{1}_{\mathcal{E}_{t_0}^{s-1}}.
\end{align*} Since  \eqref{eq:sufficient2} holds in either case, we deduce that $(\tilde{\Lambda}_{t_0}^s)_{s \geq t_0}$ forms a super-martingale. 
\end{proof}

\subsection{Taming the Mean Quantile and Absolute Value Potential} \label{sec:taming}
 
We have seen in the previous section that if we augment the exponential potential $\Lambda^t$ with some terms involving the random variable $G_{t_0}^s$, then the resulting potential $\tilde{\Lambda}^t$ is a super-martingale. Thus it will be useful to control $G_{t_0}^{s}:=G_{t_0}^{s}(\eps)$, which we recall is the number of rounds $r \in [t_0, s]$ with $\delta^r \in (\epsilon, 1 - \epsilon)$. This is done in part by controlling $\tilde{G}_{t_0}^{s} := \tilde{G}_{t_0}^{s}(C)$, which we recall is the number of rounds $r \in [t_0, s]$ with $\Delta^r \leq C \cdot n$. In particular, our first lemma shows that for an interval with length at least $5Cn$, if we have a constant fraction of steps $t$ with $\Delta^t \leq Cn$, then \Whp we also have a constant fraction of steps with $\delta^t \in (\eps, 1 - \eps)$.

\begin{lem}\label{lem:newcorrespondence}
Consider any integer constant $C \geq 1$ and any two rounds $t_0$ and $t_1$ such that $t_0 + 5Cn \leq t_1$. Then there exists a constant $\eps:=\eps(C)>0$ such that %
\[\Pro{ G_{t_0}^{t_1}> \frac{ \epsilon}{5C } \cdot \tilde{G}_{t_0}^{t_1-5Cn} \;\Big|\; \mathfrak{F}^{t_0}} \geq 1- (t_1-t_0) \cdot e^{-\eps n}. \]
\end{lem}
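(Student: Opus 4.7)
The plan is to apply the Mean Quantile Stabilization lemma (Lemma~\ref{lem:good_quantile}) at every round $r \in [t_0, t_1 - 5Cn]$ with $\Delta^r \leq Cn$, and then to combine the resulting guarantees by a simple double counting argument. From the integer constant $C \geq 1$, Lemma~\ref{lem:good_quantile} supplies a constant $\eps' := \eps(C) > 0$ such that, whenever $\Delta^r \leq Cn$, the next $T := \lceil 2Cn/w_+ \rceil + \lceil n/w_+ \rceil + \lceil n/(10 w_-) \rceil$ rounds contain at least $\eps' n$ \emph{good rounds}, i.e.\ rounds $s$ with $\delta^s \in (\eps', 1-\eps')$, with conditional probability at least $1 - e^{-\eps' n}$. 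Since $w_+, w_- \geq 1$ are positive integers and $C \geq 1$, for $n$ large enough one has $T \leq 5Cn$, so each of these guarantees lives in a window of length at most $5Cn$ (which is why the statement shifts by this amount).

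Next I would make the guarantee uniform over $r$ via a union bound. Define the bad event
\[
A_r := \{\Delta^r \leq Cn\} \cap \{\text{the window }[r, r+5Cn] \text{ contains fewer than } \eps' n \text{ good rounds}\};
\]
conditioning on $\mathfrak{F}^r$ and invoking Lemma~\ref{lem:good_quantile} with the tower property gives $\Pro{A_r \mid \mathfrak{F}^{t_0}} \leq e^{-\eps' n}$, using that on $\{\Delta^r > Cn\}$ the event $A_r$ is empty. Summing over the at most $t_1 - t_0$ choices of $r \in [t_0, t_1 - 5Cn]$, with probability at least $1 - (t_1 - t_0) \cdot e^{-\eps' n}$ every small round $r$ comes equipped with a set $S_r \subseteq [r, r+5Cn] \subseteq [t_0, t_1]$ of at least $\eps' n$ good rounds.

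On this high-probability event, I would finish by double counting. On one hand $\sum_{r} |S_r| \geq \tilde{G}_{t_0}^{t_1 - 5Cn} \cdot \eps' n$, where the sum ranges over small $r \in [t_0, t_1 - 5Cn]$. On the other hand, each good round $s \in [t_0, t_1]$ can lie in $S_r$ only when $r \in [s - 5Cn, s]$, so $\sum_r |S_r| \leq G_{t_0}^{t_1} \cdot (5Cn + 1)$. Rearranging,
\[
G_{t_0}^{t_1} \;\geq\; \frac{\eps' n}{5Cn + 1} \cdot \tilde{G}_{t_0}^{t_1 - 5Cn}.
\]
For $n$ sufficiently large one checks that $\eps' n/(5Cn+1) > \eps'/(10C)$, so setting $\eps := \eps'/2$ delivers $G_{t_0}^{t_1} > (\eps/(5C)) \cdot \tilde{G}_{t_0}^{t_1 - 5Cn}$. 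Since $\eps \leq \eps'$, the failure probability $(t_1 - t_0) e^{-\eps' n}$ is no larger than $(t_1 - t_0) e^{-\eps n}$, matching the statement.

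The main obstacle is precisely the overlap between the windows $[r, r+5Cn]$ coming from different small rounds $r$: naively summing the guarantees would multiply-count the same good round up to $\Theta(n)$ times. The factor $5Cn + 1$ in the double counting is exactly the cost of this overlap, and it is what forces the $1/(5C)$ constant in the conclusion rather than a constant independent of $C$.
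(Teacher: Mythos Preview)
Your proof is correct and follows essentially the same approach as the paper: apply Lemma~\ref{lem:good_quantile} at each round in the range, take a union bound to make the guarantee uniform, and then use a double-counting argument to handle the overlap of the windows. The paper phrases the double count as $|\bigcup_t g(t)| \geq \sum_t |g(t)| / (\text{max multiplicity})$ and works with the exact window length $\ell \leq 5Cn$, thereby obtaining the factor $\eps/(5C)$ directly without halving $\eps$; your version with $5Cn+1$ and $\eps := \eps'/2$ is a cosmetic variant of the same idea.
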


\begin{proof}  Let $\ell:=\frac{n}{10}+ n\cdot (2\C{small_delta}+1)/w_{+} $. For any round $t \in [t_0,t_1]$, define
\[
\mathcal{Q}^t := \Bigl\{ \Delta^{t} > C \cdot n  \Bigr\} \cup \Bigl\{ \bigl| \left\{ s \in [t,t+\ell] \colon 
 \delta^{s} \in (\epsilon,1-\epsilon) \right\} \bigr| \geq \eps \cdot n \Bigr\},
\]
where  $ \eps:=\eps(C)$ is the constant from \cref{lem:good_quantile}.
Note that event $\mathcal{Q}^t$ is logically equivalent to the statement: $\Delta^{t} \leq C \cdot n$ implies 
$ \left| \left\{ s \in [t,t+\ell] \colon 
 \delta^{s} \in (\epsilon,1-\epsilon) \right\} \right| \geq \eps \cdot n$. 
Then by \cref{lem:good_quantile},
\[
 \Pro{ \mathcal{Q}^t \;\Big|\; \mathfrak{F}^{t_0}} \geq 1- e^{-\epsilon n},
\]
and so the union bound gives
\[
  \Pro{ \bigcap_{t=t_0}^{t_1} \mathcal{Q}^{t} \;\Big|\; \mathfrak{F}^{t_0}} \geq 1 - (t_1-t_0) \cdot e^{- \epsilon n}.
\]
In the following, we will condition on the event $\bigcap_{t=t_0}^{t_1} \mathcal{Q}^{t} $. Next define for any $t \in [t_0,t_1-\ell]$,
\[
  g(t) := \left\{ s \in [t,t+\ell] \colon 
 \delta^{s} \in (\epsilon,1-\epsilon) \right\}.
\]Then,
\begin{align*}
 G_{t_0}^{t_1} &\geq \left|
 \bigcup_{t=t_0}^{t_1-\ell} g(t) \right|  \geq \frac{\sum_{t=t_0}^{t_1-\ell} |g(t)|}{ \max_{r \in [t,t+\ell]} \Bigl| \bigr\{t \in [t_0,t_1-\ell] \colon r \in g(t)  \bigr\}\Bigr| }  \geq \frac{ \tilde{G}_{t_0}^{t_1-\ell} \cdot \epsilon \cdot n}{ \ell } \geq \frac{ \tilde{G}_{t_0}^{t_1-5Cn} \cdot \epsilon}{5C },
\end{align*}
where the last step used $\ell \leq 5 Cn$. This completes the proof.\end{proof}

We will now prove using \cref{lem:quadratic_absolute_relation_for_w_plus_w_minus} that when $\Lambda^{t_0} \leq \exp(2\C{poly_n_gap} n \log n)$, half of the rounds $t \in [t_0, t_0 + n^3 \log^3 n]$ satisfy $\Delta^t \leq Cn$ with constant probability. We will use this lemma in the recovery phase (\cref{lem:recovery}).

\begin{lem} \label{lem:average_expected_delta_is_small}
Consider any $\PThree \cap \WTwo$-process or $\PTwo \cap \WThree$-process. Then for $\eps := \eps(C)$ from \cref{lem:good_quantile}, $r= \min\{\frac{\eps}{20C}, \frac{1}{2}\}$, and for any rounds $t_0$ and $t_1$ with $t_1 \geq t_0 + n^3 \log^3 n$, we have
\[
\Pro{ G_{t_0 }^{t_1} > r \cdot (t_1 - t_0) \;\Big|\; \mathfrak{F}^{t_0}, \Lambda^{t_0} \leq \exp(2 \C{poly_n_gap} n \log n)} \geq \frac{1}{2} - (t_1 - t_0)\cdot e^{-\eps n},
\]
where $\C{poly_n_gap} > 0$ is the constant from \cref{lem:initial_gap_nlogn}.
\end{lem}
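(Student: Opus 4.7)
\medskip
\noindent\textbf{Proof Plan.} The plan is to first establish that the quadratic potential starts out polynomially bounded, use the drift inequality in \cref{lem:quadratic_absolute_relation_for_w_plus_w_minus} to argue that the time-average of $\Delta^t$ must be small, and then translate ``$\Delta^t \leq Cn$ often'' into ``$\delta^t \in (\eps,1-\eps)$ often'' via \cref{lem:newcorrespondence}. Concretely, I will first apply \cref{clm:bound_on_gamma_implies_bound_on_upsilon} with the bound $\Lambda^{t_0} \leq \exp(2\C{poly_n_gap} n \log n)$ to get $\Upsilon^{t_0} \leq \alpha^{-2} \cdot n \cdot (2\C{poly_n_gap} n \log n)^2 = \Oh(n^3 \log^2 n)$.

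Next, summing the bound in \cref{lem:quadratic_absolute_relation_for_w_plus_w_minus} from $t_0$ to $t_1-1$ and using $\Upsilon^{t_1} \geq 0$ gives
\[
\frac{\C{quad_delta_drop}}{n} \cdot \Ex{\sum_{t=t_0}^{t_1-1} \Delta^t \,\Bigg|\, \mathfrak{F}^{t_0}} \leq \Upsilon^{t_0} + (t_1-t_0)\cdot \C{quad_const_add}.
\]
Let $N_{\mathrm{bad}} := (t_1-t_0) - \tilde{G}_{t_0}^{t_1-1}$ denote the number of rounds $t\in[t_0,t_1-1]$ with $\Delta^t > Cn$. Since each such round contributes more than $Cn$ to the sum on the left, the above yields
\[
\Ex{N_{\mathrm{bad}} \mid \mathfrak{F}^{t_0}} \leq \frac{\Upsilon^{t_0}}{C\C{quad_delta_drop}} + \frac{\C{quad_const_add}}{C\C{quad_delta_drop}}(t_1-t_0) \leq \frac{\Oh(n^3\log^2 n)}{C\C{quad_delta_drop}} + \frac{t_1-t_0}{8},
\]
where I have used the choice $C = \lceil 8\C{quad_const_add}/\C{quad_delta_drop}\rceil + 1 \geq 8\C{quad_const_add}/\C{quad_delta_drop}$. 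Because $t_1-t_0\geq n^3\log^3 n$, the first term is $o(t_1-t_0)$, so for $n$ large enough $\Ex{N_{\mathrm{bad}} \mid \mathfrak{F}^{t_0}} \leq (t_1-t_0)/4$. Markov's inequality then gives
\[
\Pro{\tilde{G}_{t_0}^{t_1-1} \geq (t_1-t_0)/2 \;\Big|\; \mathfrak{F}^{t_0}} = \Pro{N_{\mathrm{bad}} \leq (t_1-t_0)/2 \;\Big|\; \mathfrak{F}^{t_0}} \geq 1/2.
\]

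Finally, I will feed this into \cref{lem:newcorrespondence}. Since $5Cn = o(t_1-t_0)$, on the $1/2$-probability event above we also have $\tilde{G}_{t_0}^{t_1-5Cn} \geq (t_1-t_0)/2 - 5Cn \geq (t_1-t_0)/4$ for $n$ large. \cref{lem:newcorrespondence} then guarantees, except on a failure event of probability at most $(t_1-t_0)\cdot e^{-\eps n}$, that
\[
G_{t_0}^{t_1} > \frac{\eps}{5C}\cdot \tilde{G}_{t_0}^{t_1-5Cn} \geq \frac{\eps}{5C}\cdot\frac{t_1-t_0}{4} = \frac{\eps}{20C}\cdot(t_1-t_0) \geq r\cdot (t_1-t_0).
\]
Combining via the union bound yields the claimed probability $\tfrac{1}{2} - (t_1-t_0)\cdot e^{-\eps n}$. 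There are no substantive obstacles here; the only mild care needed is to confirm that the $\Upsilon^{t_0}$ contribution is dominated by the linear-in-$(t_1-t_0)$ term, which is exactly why the precondition on the interval length has been chosen as $n^3\log^3 n$ (this matches the $n^3\log^2 n$ bound on the starting quadratic potential with room to spare).
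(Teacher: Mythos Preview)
Your proposal is correct and follows essentially the same approach as the paper: bound $\Upsilon^{t_0}$ via \cref{clm:bound_on_gamma_implies_bound_on_upsilon}, telescope the drift inequality from \cref{lem:quadratic_absolute_relation_for_w_plus_w_minus} to control $\Ex{\sum_t \Delta^t \mid \mathfrak{F}^{t_0}}$, apply Markov to get $\tilde{G}_{t_0}^{t_1} \geq (t_1-t_0)/2$ with probability at least $1/2$, and finish with \cref{lem:newcorrespondence}. The only cosmetic difference is that you apply Markov directly to $N_{\mathrm{bad}}$ whereas the paper applies it to $\sum_t \Delta^t$ first; these are equivalent.
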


\begin{proof}

Using \cref{lem:quadratic_absolute_relation_for_w_plus_w_minus}, we have for any $t \geq 0$,
\begin{align*}
\ex{\Upsilon^{t+1} \mid \mathfrak{F}^t} &\leq \Upsilon^{t} - \frac{\C{quad_delta_drop}}{n} \cdot \Delta^t  + \C{quad_const_add}.
\end{align*}
By taking the expectations on both sides, we get
\[
\ex{\Upsilon^{t+1} \mid \mathfrak{F}^{t_0}} = \ex{\ex{\Upsilon^{t+1} \mid \mathfrak{F}^t} \mid \mathfrak{F}^{t_0}} \leq \ex{\Upsilon^{t} \mid \mathfrak{F}^{t_0}} - \frac{\C{quad_delta_drop}}{n} \cdot \ex{\Delta^t \mid \mathfrak{F}^{t_0}} + \C{quad_const_add}.
\]
Applying this to rounds $t_0, t_0+1, \ldots , t_1 + 1$, we get
\begin{align*}
\ex{\Upsilon^{t_0+1} \mid \mathfrak{F}^{t_0}} & \leq  \ex{\Upsilon^{t_0} \mid \mathfrak{F}^{t_0}} - \frac{\C{quad_delta_drop}}{n} \cdot \ex{\Delta^{t_0} \mid \mathfrak{F}^{t_0}} + \C{quad_const_add}, \\
\ex{\Upsilon^{t_0+2} \mid \mathfrak{F}^{t_0}} & \leq  \ex{\Upsilon^{t_0+1} \mid \mathfrak{F}^{t_0}} - \frac{\C{quad_delta_drop}}{n} \cdot \ex{\Delta^{t_0+1} \mid \mathfrak{F}^{t_0}} + \C{quad_const_add}, \\
 & \vdots \\
\ex{\Upsilon^{t_1+1} \mid \mathfrak{F}^{t_0}} & \leq  \ex{\Upsilon^{t_1} \mid \mathfrak{F}^{t_0}} - \frac{\C{quad_delta_drop}}{n} \cdot \ex{\Delta^{t_1} \mid \mathfrak{F}^{t_0}} + \C{quad_const_add}.
\end{align*}
Hence by induction, and using $\ex{\Upsilon^{t_0} \mid \mathfrak{F}^{t_0}} = \Upsilon^{t_0}$, we get
\[
\ex{\Upsilon^{t_1+1} \mid \mathfrak{F}^{t_0}} \leq \Upsilon^{t_0} - \frac{\C{quad_delta_drop}}{n} \cdot \sum_{r = t_0}^{t_1} \ex{\Delta^{r} \mid \mathfrak{F}^{t_0}} + \C{quad_const_add} \cdot (t_1 - t_0 +1).
\]
Since the left hand side is at least $0$, rearranging the above inequality yields
\[
 \sum_{r = t_0}^{t_1} \ex{\Delta^{r} \mid \mathfrak{F}^{t_0}} \leq \Upsilon^{t_0} \cdot \frac{n}{\C{quad_delta_drop}} + \C{quad_const_add} \cdot (t_1 - t_0+1) \cdot \frac{n}{\C{quad_delta_drop}}.
\]
We will now make use of the condition $\Lambda^{t_0} \leq \exp(2c_6 n \log n)$, and conclude by the first statement of \cref{clm:bound_on_gamma_implies_bound_on_upsilon} that
\[
\Upsilon^{t_0} \leq \alpha^{-2}\cdot n \cdot (\log \Lambda^{t_0})^2 \leq 4 \cdot \alpha^{-2} \cdot \C{poly_n_gap}^2 \cdot n^3 \cdot \log^2 n.
\] So, we obtain that
\begin{align*}
 \sum_{r = t_0}^{t_1 } \ex{\Delta^{r} \mid \mathfrak{F}^{t_0}, \Lambda^{t_0} \leq \exp(2 \C{poly_n_gap} n \log n)} &\leq
4 \cdot \alpha^{-2} \cdot \C{poly_n_gap}^2 \cdot n^3 \cdot \log^2 n \cdot \frac{n}{\C{quad_delta_drop}} +\C{quad_const_add} \cdot (t_1 - t_0+1)\cdot \frac{n}{\C{quad_delta_drop}} \\
 & \leq 2 \cdot \frac{\C{quad_const_add}}{\C{quad_delta_drop}} \cdot (t_1 - t_0) \cdot n \leq C \cdot (t_1 - t_0) \cdot \frac{n}{\C{quad_delta_drop}},
\end{align*}
where we have used that $t_1 - t_0  \geq n^3 \log^3 n  > 4 \cdot \alpha^{-2} \cdot \C{poly_n_gap}^2 \cdot n^3 \cdot \log^2 n + \frac{\C{quad_const_add}}{\C{quad_delta_drop}} \cdot n$ for constant $\alpha > 0$.

Finally, by applying Markov's inequality, we obtain
\begin{equation*}
\Pro{\sum_{r = t_{0}}^{t_1} \Delta^r \leq  \C{small_delta} \cdot (t_1 - t_0) \cdot n/2  ~\Bigg|~ \mathfrak{F}^{t_{0}}, \Lambda^{t_0} \leq \exp(2 \C{poly_n_gap} n \log n)} \leq \frac{1}{2},
\end{equation*}
and so at least half of the rounds $t \in [t_{0},t_{1}]$ satisfy $\Delta^t \leq \C{small_delta} n$ w.p.\ at least $1/2$, i.e., 
\begin{equation} \label{eq:weak_bound_on_g_tilde}
\Pro{\tilde{G}_{t_{0} }^{t_1} > \frac{1}{2} \cdot (t_1 - t_0) ~\Bigg|~ \mathfrak{F}^{t_{0}}, \Lambda^{t_0} \leq \exp(2 \C{poly_n_gap} n \log n)} \geq \frac{1}{2},
\end{equation}
By \cref{lem:newcorrespondence}, there exists a constant $\eps:=\eps(C)>0$
\begin{align}\label{eq:8_1_restatement}
\Pro{G_{t_0 }^{t_1} > \frac{\eps}{5C} \tilde{G}_{t_0 }^{t_1 - 5Cn} \mid \mathfrak{F}^{t_0}, \Lambda^{t_0} \leq \exp(2 \C{poly_n_gap} n \log n)} \geq 1 - (t_1 - t_0) \cdot e^{-\eps n}.
\end{align}
Then, noting that
$\tilde{G}_{t_0 }^{t_1 - 5Cn} \geq \tilde{G}_{t_0 }^{t_1} - 5Cn$ and by taking the union bound of \eqref{eq:weak_bound_on_g_tilde} and \eqref{eq:8_1_restatement} we have,
\begin{equation*}
\Pro{ G_{t_0}^{t_1}> \frac{ \epsilon}{5C } \cdot \Big(\frac{1}{2} \cdot (t_1 - t_0) - 5Cn\Big) \;\Big|\; \mathfrak{F}^{t_{0}}, \Lambda^{t_0} \leq \exp(2 \C{poly_n_gap} n \log n)} \geq \frac{1}{2} - (t_1 - t_0)\cdot e^{-\eps n}.
\end{equation*}
Finally, since $ \frac{t_1 - t_0}{4}  > 5Cn$, we can deduce for $r= \min\{\frac{\eps}{20C}, \frac{1}{2}\}$,
\[
\Pro{ G_{t_0 }^{t_1} > r \cdot (t_1 - t_0) \;\Big|\; \mathfrak{F}^{t_0}, \Lambda^{t_0} \leq \exp(2 \C{poly_n_gap} n \log n)} \geq \frac{1}{2} - (t_1 - t_0)\cdot e^{-\eps n}.\qedhere 
\]
\end{proof}

The next lemma establishes the key fact that \Whp~when $\Lambda^{t_0} < \kappa_1 n$ there are many rounds close to $t_0$ with small absolute value potential. In contrast to \cref{lem:average_expected_delta_is_small}, this claim is proven \Whp and will be used in the stabilization phase (\cref{lem:stabilization}).

 \begin{lem} \label{lem:stabilisation_many_good_quantiles_whp}
Consider any $\PThree \cap \WTwo$-process or $\PTwo \cap \WThree$-process. Let $\eps := \eps(C)$ be as in \cref{lem:good_quantile}, $r:= \min\{\frac{\eps}{20C}, \frac{1}{2}\}$ and $\alpha>0$ be any constant. Then, for any constants $\kappa_1,\kappa_2 > 0$ and for any rounds $t_0$ and $t_1$ satisfying $t_1 := t_0 + \kappa_2 \cdot n \log n$, we have 
\[
\Pro{ G_{t_0}^{t_1}> r \cdot (t_1 - t_0) \;\Big|\; \mathfrak{F}^{t_0}, \Lambda^{t_0} \leq \kappa_1 \cdot n} \geq 1 - 3 \cdot n^{-12}.
\] 
\end{lem}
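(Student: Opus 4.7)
\medskip

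The plan is to follow the same skeleton as \cref{lem:average_expected_delta_is_small} but strengthen the concentration from constant probability (Markov) to a high-probability bound (Azuma–Hoeffding). The starting condition $\Lambda^{t_0}\le \kappa_1 n$ is much tighter than the one in \cref{lem:average_expected_delta_is_small} and it gives, via the second statement of \cref{clm:bound_on_gamma_implies_bound_on_upsilon}, an initial quadratic potential $\Upsilon^{t_0}\le \bigl((4/\alpha)\log(4/\alpha)\bigr)^2\cdot \kappa_1 n = \Oh(n)$. Also, since $\Lambda^{t_0}\le \kappa_1 n\le n^2$, \cref{clm:small_change_for_linear_lambda} applies and yields an event $\mathcal{H}:=\bigl\{ \max_{t\in[t_0,t_1]}\max_{i\in[n]} |y_i^t|\le \log^2 n\bigr\}$ with $\Pro{\mathcal{H}\mid \mathfrak{F}^{t_0},\Lambda^{t_0}\le\kappa_1 n}\ge 1-n^{-12}$. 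Throughout the rest of the argument I condition on $\mathcal{H}$.

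Next, I would turn the one-step drift bound of \cref{lem:quadratic_absolute_relation_for_w_plus_w_minus} into a supermartingale. Namely, set
\[
Y^{t}:=\Upsilon^{t}+\frac{\C{quad_delta_drop}}{n}\sum_{s=t_0}^{t-1}\Delta^{s}-\C{quad_const_add}\cdot (t-t_0),
\]
which satisfies $\ex{Y^{t+1}\mid\mathfrak{F}^{t}}\le Y^{t}$, with $Y^{t_0}=\Upsilon^{t_0}=\Oh(n)$. To apply Azuma to this supermartingale I need bounded differences, so let $\tau:=\min\{t\ge t_0:\max_i |y_i^t|>\log^2 n\}$ and work with the stopped process $(Y^{t\wedge\tau})$. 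Under $\mathcal{H}$ we have $\tau>t_1$, so $Y^{t_1\wedge\tau}=Y^{t_1}$. On each step before $\tau$, \cref{lem:basic} gives $|\Upsilon^{t+1}-\Upsilon^{t}|\le 4w_-\log^2 n + 2 w_-^2$, and trivially $\Delta^{t}\le n\log^{2}n$, so $|Y^{t+1\wedge\tau}-Y^{t\wedge\tau}|\le D$ for some $D=\Oh(\log^2 n)$.

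Azuma–Hoeffding then yields
\[
\Pro{ Y^{t_1}- Y^{t_0} > D\sqrt{2(t_1-t_0)\cdot 12\log n}\,\Big|\,\mathfrak{F}^{t_0},\mathcal{H}}\le n^{-12},
\]
and since $t_1-t_0=\kappa_2 n\log n$ the deviation is $\Oh(\log^{5/2}n\sqrt{n\log n})=o(n\log n)$. Combining this with $\Upsilon^{t_1}\ge 0$ and the definition of $Y^{t}$ gives, with probability $\ge 1-n^{-12}$,
\[
\frac{\C{quad_delta_drop}}{n}\sum_{s=t_0}^{t_1-1}\Delta^{s}\le \C{quad_const_add}(t_1-t_0)+\Upsilon^{t_0}+o(n\log n)\le 2\C{quad_const_add}(t_1-t_0).
\]
Hence the number of rounds $s\in[t_0,t_1]$ with $\Delta^{s}>C n$ is at most $\tfrac{1}{Cn}\sum_{s}\Delta^{s}\le \tfrac{2\C{quad_const_add}}{C\C{quad_delta_drop}}(t_1-t_0)\le \tfrac{1}{4}(t_1-t_0)$, using $C\ge 8\C{quad_const_add}/\C{quad_delta_drop}$. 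Thus $\tilde G_{t_0}^{t_1}\ge \tfrac{3}{4}(t_1-t_0)$ on this high-probability event.

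Finally, I would invoke \cref{lem:newcorrespondence} (with failure probability $\le (t_1-t_0)\,e^{-\eps n}\le n^{-12}$) to promote this bound on $\tilde G$ into a lower bound $G_{t_0}^{t_1}\ge (\eps/(5C))\cdot \tilde G_{t_0}^{t_1-5Cn}\ge r\cdot(t_1-t_0)$, using that $\tilde G_{t_0}^{t_1-5Cn}\ge \tilde G_{t_0}^{t_1}-5Cn\ge \tfrac{1}{2}(t_1-t_0)$ for $t_1-t_0=\kappa_2 n\log n\gg n$. The union bound over the three failure events ($\mathcal{H}$, Azuma, \cref{lem:newcorrespondence}) gives total failure probability at most $3n^{-12}$. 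The only real delicacy is the stopping-time argument needed to apply Azuma to the supermartingale $Y^{t}$; everything else is a direct strengthening of the Markov step in \cref{lem:average_expected_delta_is_small}.
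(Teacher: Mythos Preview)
Your proposal is correct and follows essentially the same route as the paper: define the supermartingale $Y^t=\Upsilon^t+\tfrac{c_1}{n}\sum_{s<t}\Delta^s-c_2(t-t_0)$, stop it at $\tau=\min\{t:\max_i|y_i^t|>\log^2 n\}$ to obtain $\Oh(\log^2 n)$ increments via \cref{lem:basic}, apply Azuma, use $\Upsilon^{t_1}\ge 0$ together with $\Upsilon^{t_0}=\Oh(n)$ from \cref{clm:bound_on_gamma_implies_bound_on_upsilon} to bound $\sum_s\Delta^s$, and then invoke \cref{lem:newcorrespondence}. The only cosmetic differences are that the paper phrases the final step as a proof by contradiction (assuming $\tilde G_{t_0}^{t_1}<\tfrac{1}{2}(t_1-t_0)$ forces $\Upsilon^{t_1+1}<0$) rather than your direct counting, and it chooses the Azuma deviation $\lambda=n$ rather than your $\sqrt{\Theta((t_1-t_0)\log n)}\cdot D$; both yield the same conclusion with the same three-term union bound.
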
  
\begin{proof}
For the constants $\C{quad_delta_drop}, \C{quad_const_add}$ given in \cref{lem:quadratic_absolute_relation_for_w_plus_w_minus}, we define for any $t \geq t_0$ the sequence
\[
 Z^{t} := \Upsilon^{t} - \C{quad_const_add} \cdot (t-t_0) + \frac{\C{quad_delta_drop}}{n}\sum_{i=t_0}^{t-1} \Delta^{i}.
\]
This sequence forms a super-martingale since by \cref{lem:quadratic_absolute_relation_for_w_plus_w_minus},
\begin{align*}
\Ex{Z^{t+1} \mid \mathfrak{F}^{t}} 
  & = \Ex{\Upsilon^{t+1} - \C{quad_const_add} \cdot (t-t_0+1) + \frac{\C{quad_delta_drop}}{n}\sum_{i=t_0}^{t} \Delta^{i} ~\Big|~ \mathfrak{F}^{t}} \\
  & = \Ex{\Upsilon^{t+1}\mid \mathfrak{F}^t} - \C{quad_const_add} \cdot (t-t_0+1) + \frac{\C{quad_delta_drop}}{n}\sum_{i=t_0}^{t} \Delta^{i} \\
  & \leq \Upsilon^{t} +\C{quad_const_add} - \frac{\C{quad_delta_drop}}{n} \cdot \Delta^t - \C{quad_const_add} \cdot (t-t_0+1) + \frac{\C{quad_delta_drop}}{n}\sum_{i=t_0}^{t} \Delta^{i} \\
  & = \Upsilon^t - \C{quad_const_add} \cdot (t-t_0 ) + \frac{\C{quad_delta_drop}}{n}\sum_{i=t_0}^{t-1} \Delta^{i} \\
  &= Z^{t}.
\end{align*}

Further, let $\tau:=\min\{ t \geq t_0 \colon \max_{i \in [n]} |y_i^{t}| >\log^2 n \}$ and consider the stopped random variable
\[
 \tilde{Z}^{t} := Z^{t \wedge \tau},
\]
which is then also a super-martingale. \cref{clm:small_change_for_linear_lambda} implies that %
\begin{equation}\label{eq:tau}
 \Pro{ \tau \leq t_1 } \leq n^{-12},
\end{equation}
i.e., the gap does not increase above $\log^2 n$ in any of the rounds $[t_0,t_1]$ \Whp

 To prove concentration of $\tilde{Z}^{t+1}$, we will now derive an upper bound on $\left| \tilde{Z}^{t+1} - \tilde{Z}^{t} \right|$ conditional on $\mathfrak{F}^{t}$ and $\Lambda^{t_0} \leq \kappa_1 \cdot n$.
\medskip 

\noindent\textbf{Case 1} [$t \geq \tau$].
In this case, $\tilde{Z}^{t+1} = Z^{(t+1) \wedge \tau} = Z^{\tau}$, and similarly, $\tilde{Z}^{t} = Z^{t \wedge \tau} = Z^{\tau}$, so $| \tilde{Z}^{t+1} - \tilde{Z}^{t}|=0$.
\medskip

\noindent\textbf{Case 2} [$t < \tau$]. Hence for $t$ we have $\max_{i \in [n]} |y_i^t| < \log^2 n$ and thus \cref{lem:basic} implies that the biggest change in the quadratic potential is bounded by $4w_-\cdot \log^2 n + 2w_-^2\leq 10 w_-\cdot \log^2 n $.
\medskip
Combining the two cases above, we conclude
\[
 | \tilde{Z}^{t+1} - \tilde{Z}^{t}| \leq 10 w_-\cdot \log^2 n.
\]

Using Azuma's inequality (\cref{lem:azuma}) for super-martingales,
\begin{align*}
 \Pro{ \tilde{Z}^{t_1+1} - \tilde{Z}^{t_0} > \lambda \, \mid \, \mathfrak{F}^{t_0}, \Lambda^{t_0} \leq \kappa_1 \cdot n } &\leq \exp\left( - \frac{ \lambda^2 }{ 2 \cdot \sum_{t=t_0}^{t_1} (10 w_-\cdot \log^2 n)^2  } \right) \\ &= \exp\left( - \frac{ \lambda^2 }{ (t_1-t_0)\cdot 200w_-^2\cdot \log^4 n  } \right),
\end{align*}
which means that for $\lambda = n$ we conclude $\Pro{ \tilde{Z}^{t_1+1}  > \tilde{Z}^{t_0} + n \, \mid \, \mathfrak{F}^{t_0}, \Lambda^{t_0} \geq \kappa_1 \cdot n } \leq n^{-\omega(1)}$. Thus by \eqref{eq:tau} and the union bound we have  \[\Pro{ Z^{t_1+1}  \leq Z^{t_0} + n \, \mid \, \mathfrak{F}^{t_0}, \Lambda^{t_0} \geq \kappa_1 \cdot n } \geq 1 - 2 \cdot n^{-12}.
\]
For the sake of a contradiction, assume now that at least half of the rounds $t \in [t_0,t_1]$ satisfy $\Delta^t \geq Cn$, which implies
\[
  \sum_{t=t_0}^{t_1} \Delta^t \geq \frac{t_1-t_0+1}{2} \cdot C \cdot n.
\]
If $Z^{t_1+1} \leq Z^{t_0} + n$ holds, then we have
\[
\Upsilon^{t_1+1}  - \C{quad_const_add} \cdot (t_1-t_0+1) + \frac{\C{quad_delta_drop}}{n}\sum_{t=t_0}^{t_1} \Delta^{t} \leq \Upsilon^{t_0} + n .\]Rearranging the inequality above gives
\begin{align} \Upsilon^{t_1+1} & \leq \Upsilon^{t_0} + n + \C{quad_const_add} \cdot (t_1-t_0+1) - \frac{\C{quad_delta_drop}}{n}\sum_{t=t_0}^{t_1} \Delta^{t} \notag  \\
  & \leq \Upsilon^{t_0} + n + \C{quad_const_add} \cdot (t_1-t_0+1) - \frac{\C{quad_delta_drop}}{n} \cdot \frac{t_1 - t_0+1}{2} \cdot C \cdot n \notag \\
  & \leq \Upsilon^{t_0} + n + (t_1 - t_0 +1) \cdot (\C{quad_const_add} - \frac{\C{quad_delta_drop}}{2} \cdot C) .\label{eq:upsilon_contradiction}
\end{align} Recall that we start from a round $t_0$ where $\Lambda^{t_0} \leq \kappa_1 \cdot n$, and therefore also $\Upsilon^{t_0} \leq \big( \frac{4}{\alpha} \cdot \log \frac{4}{\alpha} \big)^2 \cdot \kappa_1 \cdot n$ by \cref{clm:bound_on_gamma_implies_bound_on_upsilon_2}. Thus, recalling that $C > \frac{2c_2}{c_1}$,  by \eqref{eq:upsilon_contradiction} we have   
\[\Upsilon^{t_1+1} \leq  \left( \frac{4}{\alpha} \cdot\log \frac{4}{\alpha} \right)^2 \cdot \kappa_1 \cdot n  + n + (\kappa_2 n\log n +1) \cdot (\C{quad_const_add} - \frac{\C{quad_delta_drop}}{2} \cdot C) < 0\]
which is a contradiction for large $n$. We conclude that if $Z^{t_1+1} \leq Z^{t_0} + n$, then half of the rounds $t \in [t_0,t_1]$ satisfy $\Delta^{t} \leq C n$, thus this event holds w.p. $1-2 \cdot n^{-12}$, i.e.
\begin{align}
\Pro{\tilde{G}_{t_0}^{t_1} \geq \frac{1}{2} \cdot (t_1 - t_0) ~\Big|~ \mathfrak{F}^{t_0}, \Lambda^{t_0} \leq \kappa_1 \cdot n} \geq 1 - 2\cdot n^{-12}. \label{eq:first_union}
\end{align}
Applying \cref{lem:newcorrespondence} gives
\begin{equation}%
\Pro{ G_{t_0}^{t_1}> \frac{\eps}{5C} \cdot \tilde{G}_{t_0}^{t_1 - 5Cn} \;\Big|\; \mathfrak{F}^{t_0}, \Lambda^{t_0} \leq \kappa_1 \cdot n} \geq 1- (t_1-t_0) \cdot e^{-\eps n}. \label{eq:second_union}
\end{equation}
Then, noting that
$\tilde{G}_{t_0}^{t_1 - 5Cn} \geq \tilde{G}_{t_0}^{t_1} - 5Cn$ and
by taking the union bound of \eqref{eq:first_union} and \eqref{eq:second_union} we have,
\begin{equation*}
\Pro{ G_{t_0}^{t_1}> \frac{ \epsilon}{5C } \cdot \Big(\frac{1}{2} \cdot (t_1 - t_0) - 5Cn\Big) \;\Big|\; \mathfrak{F}^{t_0}, \Lambda^{t_0} \leq \kappa_1 \cdot n} \geq 1- (t_1-t_0) \cdot e^{-\eps n} - 2n^{-12}.
\end{equation*}
Since, $\frac{1}{4} \cdot (t_1 - t_0) = \frac{c_s}{4} \cdot n \log n > 5 Cn$, we can deduce that for $r = \min\{\frac{ \epsilon}{20C}, \frac{1}{2}\}$,
\begin{equation*}
\Pro{ G_{t_0}^{t_1}> r \cdot (t_1 - t_0) \;\Big|\; \mathfrak{F}^{t_0}, \Lambda^{t_0} \leq \kappa_1 \cdot n} \geq 1 - 3 \cdot n^{-12}.\qedhere
\end{equation*}
\end{proof}

\subsection{Recovery of the Process} \label{sec:recovery}

In the following, we start the analysis at round $t_0 = m - \Theta(n^3 \log^4 n)$ and using \cref{lem:initial_gap_nlogn}, we obtain that \Whp~$\Gap(t_0) = \Oh(n \log n)$ and $\Lambda^t \leq \exp(2 \C{poly_n_gap} n \log n)$. Using \cref{lem:average_expected_delta_is_small} we conclude that with constant probability for a constant fraction of the rounds $t \in [t_0,t_0+\Theta(n^3 \log^3 n)]$, it holds that $\delta^t \in (\epsilon, 1 - \epsilon)$.
Then we exploit the drop of the exponential potential function in those rounds to infer that with constant probability there is a round $s \in [t_0,t_0+\Theta(n^3 \log^3 n)]$ with $\Lambda^{s}=\Oh(n)$. Then, we use a retry argument $\Theta(\log n)$ times, to amplify the probability of finding such a round.
For each failure, we can still restart from a round with $\Lambda^t \leq \exp(2 \C{poly_n_gap} n \log n)$. Hence after these $\Theta(\log n)$ repetitions, we deduce that \Whp, there exists a round $s \in [t_0, m]$, where $\Lambda^s \leq cn$, which yields \cref{lem:recovery}. From that point onwards, we will use the stabilization lemma (\cref{lem:nonfilling_good_gap_after_good_lambda}) for the remaining $\Oh(n^3 \log^4 n)$ rounds (see \cref{sec:stabilization}).

\begin{lem}[Recovery] \label{lem:recovery}
Consider any $\PThree \cap \WTwo$-process or $\PTwo \cap \WThree$-process. Then, for some constant $\alpha > 0$, and for $\C{lambda_bound}>1$ being the constant in \cref{cor:change_for_large_lambda}, for any $m \geq 40 n^3 \log^4 n$,
\[
\Pro{\bigcup_{s \in [m - 40 n^3 \log^4 n, m]}\{ \Lambda^s < \C{lambda_bound} n\} } \geq 1 - n^{-10}.
\]
\end{lem}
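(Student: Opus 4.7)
The plan is to partition $[m - 40 n^3 \log^4 n, m]$ into $k := 40 \log n$ consecutive sub-intervals of length $T := n^3 \log^3 n$, say $I_i = [t_i, t_i + T]$ for $i = 1, \ldots, k$, and to show that each sub-interval independently admits, with constant probability bounded away from $0$, a round $s$ with $\Lambda^s < \C{lambda_bound} n$. An iterated conditioning argument then amplifies this to a failure probability of at most $n^{-10}$.

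For a single sub-interval $I_i$, I first apply \cref{lem:initial_gap_nlogn} at round $t_i$ to obtain $\Lambda^{t_i} \leq \exp(2 \C{poly_n_gap} n \log n)$ with probability at least $1 - n^{-12}$. Conditional on this, \cref{lem:average_expected_delta_is_small} gives, with probability at least $1/2 - o(1)$, the good-quantile bound $G_{t_i}^{t_i + T} > r T$ for $r := \min\{\eps/(20C), 1/2\}$. Now I argue by contradiction: suppose the event $\mathcal{E}_{t_i}^{t_i + T - 1}$ holds, i.e., $\Lambda^s \geq \C{lambda_bound} n$ for every $s \in [t_i, t_i + T - 1]$. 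Choosing constants $\gamma := r/2 \in (0, 1]$ and $\alpha > 0$ small enough to simultaneously satisfy the preconditions of \cref{lem:good_quantile_good_decrease} and the bound $\alpha \leq \C{good_quantile_mult} \gamma / (3 w_- e^{2 w_-})$, \cref{lem:gamma_tilde_is_supermartingale} makes $(\tilde{\Lambda}_{t_i}^s)_s$ a super-martingale, so $\Ex{\tilde{\Lambda}_{t_i}^{t_i + T} \mid \mathfrak{F}^{t_i}} \leq \Lambda^{t_i}$ and Markov's inequality yields $\tilde{\Lambda}_{t_i}^{t_i + T} \leq n^{13} \Lambda^{t_i}$ with probability $\geq 1 - n^{-13}$. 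On the intersection of this event, the good-quantile event, and $\mathcal{E}_{t_i}^{t_i + T - 1}$ (so that the indicator in the definition of $\tilde{\Lambda}$ equals $1$ and $B_{t_i}^{t_i + T - 1} \leq (1-r)T$), the definition of $\tilde{\Lambda}$ gives
\[
\Lambda^{t_i + T} \leq n^{13} \exp\!\left(2 \C{poly_n_gap} n \log n - \frac{\C{good_quantile_mult} \alpha (r - \gamma(1-r))}{n} \cdot T\right).
\]
Since $r - \gamma(1-r) \geq r/2 > 0$ is a positive constant while $T/n = n^2 \log^3 n$ dominates $n \log n$, the right-hand side is $o(1)$ for large $n$. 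This contradicts the unconditional lower bound $\Lambda^{t_i + T} \geq n$, so $\mathcal{E}_{t_i}^{t_i + T - 1}$ must fail, producing an $s \in I_i$ with $\Lambda^s < \C{lambda_bound} n$. Combining the three high-probability events, each sub-interval succeeds with probability at least $p := 1/3$ conditional on $\mathfrak{F}^{t_i}$ and on $\Lambda^{t_i} \leq \exp(2 \C{poly_n_gap} n \log n)$.

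For the amplification, let $A_i$ denote the success event for interval $i$ and $B_i := \{\Lambda^{t_i} > \exp(2 \C{poly_n_gap} n \log n)\}$, where by \cref{lem:initial_gap_nlogn} we have $\Pro{B_i} \leq n^{-12}$ regardless of prior history. The previous step yields $\Pro{\bar A_i \mid \mathfrak{F}^{t_i}} \leq (1 - p) \mathbf{1}_{\bar B_i} + \mathbf{1}_{B_i}$; peeling off intervals one at a time via the tower property gives the recursion $q_i \leq (1 - p) q_{i-1} + n^{-12}$ for $q_i := \Pro{\bar A_1 \cap \cdots \cap \bar A_i}$, so $q_k \leq (1-p)^k + n^{-12} k / p$. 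Taking $k = 40 \log n$ and $p = 1/3$ yields $q_k \leq n^{-40 \log(3/2)} + O(n^{-11} \log n) \leq n^{-10}$, which is the claimed bound.

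The main obstacle is that the sub-intervals are not independent: conditional on earlier failure events there is no useful bound on $\Lambda^{t_i}$, and the super-martingale argument only produces the required contradiction when $\Lambda^{t_i}$ is controlled. This is exactly what the separation into $B_i$ and $\bar B_i$ resolves, relying on the fact that the gap bound of \cref{lem:initial_gap_nlogn} holds uniformly at every single round. A secondary delicate point is the coordinated choice of $r$, $\gamma$, and $\alpha$: $\gamma$ must be chosen small enough that the net exponent $r - \gamma(1-r)$ is strictly positive, while $\alpha$ has to respect both the conditions of \cref{lem:good_quantile_good_decrease} and the inequality linking $\gamma$ to $\alpha$ in \cref{lem:gamma_tilde_is_supermartingale}; once fixed in this order, all subsequent constants depend only on the process parameters and the computation above goes through.
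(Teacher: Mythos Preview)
Your proposal is correct and follows essentially the same approach as the paper: partition the interval into $40\log n$ blocks of length $n^3\log^3 n$, use \cref{lem:initial_gap_nlogn} to control $\Lambda$ at the start of each block, invoke \cref{lem:average_expected_delta_is_small} for a constant-probability good-quantile guarantee, feed this into the super-martingale $\tilde{\Lambda}$ via Markov to force $\mathbf{1}_{\mathcal{E}}=0$, and finish with the geometric recursion $q_i \leq (1-p)\,q_{i-1} + n^{-12}$. The only cosmetic differences are your choice $\gamma = r/2$ (the paper takes $\gamma = r/(2(1-r))$, which makes the exponent come out to exactly $-r/2$; your choice gives $-(r/2 + r^2/2)$, equally valid), the Markov threshold ($n^{13}$ versus the paper's~$5$), and the resulting constant success probability ($1/3$ versus~$1/4$).
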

\begin{proof}
Let $k := 40\log n$ and $T := n^3 \log^3 n$. 
Consider $t_0 := m - k \cdot T$, and define $k$ rounds $t_1 := t_0 + T, t_2 := t_0 + 2T, \ldots, t_k := m$. We now recall for any $0\leq i\leq k$ the event
\[
\mathcal{E}_{t_0}^{t_i} :=\bigcap_{s \in [t_0, t_i]} \left\{\Lambda^{s} \geq \C{lambda_bound} n \right\}. \]
Further, we define for any integer $0 \leq i \leq k$ the event 
\[
\mathcal{H}^{t_i} := \left\{\Lambda^{t_i} \leq \exp(2\C{poly_n_gap} n \log n) \right\}.
\] 
We will prove that, for any integer $0 \leq i\leq k$, we have
\[
\Pro{ \mathcal{E}_{t_0}^{t_i} } \leq (3/4)^{i} + 4\cdot n^{-12}.
\]
Thus for $i = k$, we obtain the statement of the lemma. We next establish the following claim:

\begin{clm}\label{clm:final_claim}
For any integer $1 \leq i \leq k$, we have $ 
\Pro{\mathcal{E}_{t_0}^{t_i} \mid \mathfrak{F}^{t_{i-1}}, \mathcal{H}^{t_{i-1}} } \leq 3/4$.
\end{clm}
\begin{poc}
Using \cref{lem:average_expected_delta_is_small} for $t_{i-1}$ and $t_i$ (since $t_i - t_{i-1} = T$), we get that 
for $r= \min\{\frac{\eps}{20C}, \frac{1}{2}\}$,
\begin{equation}\label{eq:many_good_quantiles_wcp}
\Pro{ G_{t_{i-1} }^{t_i} > r \cdot T \;\Big|\; \mathfrak{F}^{t_{i-1}}, \mathcal{H}^{t_{i-1}}} \geq 1/2 - T\cdot e^{-\eps n}.
\end{equation}

Let $\gamma := \gamma(\eps) = \frac{r}{2(1-r)} < 1$ as $r \leq 1/2$, and choose the constant $\alpha:=\alpha(\eps)$ so as to satisfy $\alpha \leq \C{good_quantile_mult} \gamma/(3w_- \cdot e^{2 w_-})$, where $\C{good_quantile_mult}:=\C{good_quantile_mult}(\eps)>0$ is given by \cref{lem:good_quantile_good_decrease}, and to satisfy either \eqref{eq:c_3alphacond1} or \eqref{eq:c_3alphacond2} depending on whether the process satisfies \WThree or \PThree respectively. 
Since all the previous steps (and constants determined by them) hold for any $c$ (and the constants $C, \eps, r, \gamma, \alpha$ and \cref{lem:good_quantile_good_decrease} do not depend on $c$) we can now set \[\C{stab_time} := \frac{2 \cdot 9}{\C{good_quantile_mult} \cdot \alpha \cdot r}>0 \qquad\text{and} \qquad  \C{lambda_bound}  := \max\Big(\frac{3(\C{good_quantile_mult} + w_-) \cdot e^{2 w_-}}{\alpha\C{good_quantile_mult}},\frac{2}{\alpha^2}\Big) > 1,\] so $\C{lambda_bound}$ satisfies the assumptions of \cref{cor:change_for_large_lambda}. As $\alpha$ satisfies \cref{lem:gamma_tilde_is_supermartingale}, $(\tilde{\Lambda}_{t_{i-1}}^{t})_{t \geq t_{i-1}}$ is a super-martingale. Thus, $\ex{\tilde{\Lambda}_{t_{i-1}}^{t_i+1}\mid \mathfrak{F}_{t_{i-1}}} \leq \tilde{\Lambda}_{t_{i-1}}^{t_{i-1}}$.
Applying Markov's inequality gives \begin{equation}\label{eq:bddontildelambda}\Pro{\tilde{\Lambda}_{t_{i-1}}^{t_i+1} > 5 \cdot \tilde{\Lambda}_{t_{i-1}}^{t_{i-1}} \mid \mathfrak{F}^{t_{i-1}}, \mathcal{H}^{t_{i-1}}} \leq 1/5.\end{equation} 
Thus by the definition of $\tilde{\Lambda}_{t_{i-1}}^{t}$ and taking the union bound of \eqref{eq:bddontildelambda} and \eqref{eq:many_good_quantiles_wcp}, we have
\begin{align}
 &~ \Pro{ \left\{ \Lambda^{t_i+1}\cdot \mathbf{1}_{\mathcal{E}_{t_0}^{t_i}} \leq 5 \cdot \Lambda^{t_{i-1}} \cdot e^{ \frac{\C{good_quantile_mult} \alpha \gamma}{n} \cdot B_{t_{i-1} }^{t_i}  - \frac{\C{good_quantile_mult} \alpha}{n} \cdot G_{t_{i-1} }^{t_i} } \right\} \bigcap \left\{ G_{t_{i-1} }^{t_i} \geq r \cdot T \right\} \, \bigg| \, \mathfrak{F}^{t_{i-1}}, \mathcal{H}^{t_{i-1}}}\notag  \\ &\qquad\qquad\qquad\geq 1 - 1/2 - o(1) - 1/5 \geq 1/4. \label{eq:lastmin}
\end{align}
However, if the event $G_{t_{i-1} }^{t_i} \geq r \cdot T $ occurs, then
\begin{align*}
\Lambda^{t_i+1}\cdot \mathbf{1}_{\mathcal{E}_{t_0}^{t_i}}  &\leq 5 \cdot \Lambda^{t_{i-1}} \cdot \exp\left(\frac{\C{good_quantile_mult} \alpha \gamma}{n} \cdot B_{t_{i-1} }^{t_i} - \frac{\C{good_quantile_mult} \alpha}{n} \cdot G_{t_{i-1} }^{t_i} \right) \leq 5 \cdot \Lambda^{t_{i}} \cdot \exp\left( - \frac{\C{good_quantile_mult} \alpha}{n} \cdot \frac{r}{2} \cdot T \right),
\end{align*}
where the second inequality used $\gamma=\frac{r}{2(1-r)}$.
 Further, since we condition on $\mathcal{H}^{t_i}$, we obtain
\begin{align*}
\Lambda^{t_i+1}\cdot \mathbf{1}_{\mathcal{E}_{t_0}^{t_i}}  &\leq 5\cdot \exp\left(2 \C{poly_n_gap} n \log n \right) \cdot \exp\left( - \frac{\C{good_quantile_mult} \alpha}{n} \cdot \frac{r}{2} \cdot \left(n^3 \log^3 n\right) \right) < 1,
\end{align*} as $ T=  n^3 \log^3 n$ for any $i\leq k$. Recall that $\Lambda^{t} \geq n$ holds deterministically for any $t \geq 0$, thus we have a contradiction. 
We conclude that the event $\neg \mathcal{E}_{t_0}^{t_i} = \{\bigcup_{r \in [t_0, t_i]} \Lambda^r < \C{lambda_bound} n\}$ is implied whenever the events $\{G_{t_{i-1} }^{t_i} \geq r \cdot T\}$, $\{\tilde{\Lambda}_{t_{i-1}}^{t_i+1} \leq 5 \cdot \tilde{\Lambda}_{t_{i-1}}^{t_{i-1}} \}$, and $\mathcal{H}^{t_{i-1}}$ all occur. The bound on $\Pro{\mathcal{E}_{t_0}^{t_i} \mid \mathfrak{F}^{t_{i-1}}, \mathcal{H}^{t_{i-1}} } = 1- \Pro{\neg \mathcal{E}_{t_0}^{t_i} \mid \mathcal{H}^{t_{i-1}}, \mathfrak{F}^{t_{i-1}} } $ then follows from \eqref{eq:lastmin}.
\end{poc}

By the second statement in \cref{lem:initial_gap_nlogn}, we get that for any $0 \leq i\leq k$ 
\[
\Pro{\neg \mathcal{H}^{t_i}} \leq n^{-12}.
\]

Combining this bound with \cref{clm:final_claim}, for any $1 \leq i\leq k$ we have
\begin{align*}
\Pro{\mathcal{E}_{t_0}^{t_i}}
 & = \Pro{\mathcal{H}^{t_{i-1}} \cap \mathcal{E}_{t_0}^{t_i}} + \Pro{\neg \mathcal{H}^{t_{i-1}} \cap \mathcal{E}_{t_0}^{t_i}} \\
 & \leq \Pro{\mathcal{E}_{t_0}^{t_i} \mid \mathcal{H}^{t_{i-1}}, \mathcal{E}_{t_0}^{t_{i-1}}} \cdot \Pro{ \mathcal{H}^{t_{i-1}} \cap \mathcal{E}_{t_0}^{t_{i-1}}} + \Pro{ \neg\mathcal{H}^{t_{i-1}}}\\ %
 & \leq \Pro{\mathcal{E}_{t_0}^{t_i} \mid \mathcal{H}^{t_{i-1}}, \mathcal{E}_{t_0}^{t_{i-1}}} \cdot \Pro{ \mathcal{E}_{t_0}^{t_{i-1}}} + n^{-12} \\ 
 & \leq (3/4) \cdot \Pro{ \mathcal{E}_{t_0}^{t_{i-1}}} + n^{-12}.
\end{align*}
By the second statement of \cref{lem:geometric_arithmetic} for $(\Pro{\mathcal{E}_{t_0}^{t_i}})_{i \geq 0}$, $a := 3/4 < 1$ and $b := n^{-12}$, we get
\begin{align*}
\Pro{\mathcal{E}_{t_0}^{t_i}}
\leq  \Pro{\mathcal{E}_{t_0}^{t_0}} \cdot (3/4)^{i} + \frac{n^{-12}}{1 - \frac{3}{4}} \leq (3/4)^{i} + 4\cdot n^{-12},
\end{align*}as $\Pro{\mathcal{E}_{t_0}^{t_0}} \leq 1$. Setting $i := k = 40\log n$, gives the result since $ (3/4)^{40\log n} + 4\cdot n^{-12}\leq n^{-10} $.
\end{proof}

\subsection{Stabilization of the Process} \label{sec:stabilization}
 
 The next lemma establishes that a small value of the exponential potential function is preserved for some longer time period. We will call this property of the exponential potential function of being ``trapped'' in some region \emph{stabilization}. More precisely, we prove in the lemma below that if for some round $t_0$ the exponential potential is not too small, i.e., at most $2cn$, then within the next $\Oh(n \log n)$ rounds, the exponential potential will be smaller than $cn$ at least once \Whp.

\begin{lem}[Stabilization]\label{lem:stabilization}
For any $\PThree \cap \WTwo$-process or $\PTwo \cap \WThree$-process there exists constants $\alpha \in (0, 1/w_-]$, $c>1 $ and $\C{stab_time} > 0$, such that for any round $t_0 \geq 0$,
\[
 \Pro{ \bigcup_{t \in [t_0,t_0 + \C{stab_time} n \log n-1]} \left\{\Lambda^{t} < \C{lambda_bound} n \right\} ~\Big|~ \mathfrak{F}^{t_0}, \Lambda^{t_0} \in [\C{lambda_bound}n, 2\C{lambda_bound}n] } \geq 1 - \frac{1}{2} \cdot n^{-7}.
\] 
\end{lem}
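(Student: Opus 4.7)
The plan is to combine the super-martingale property of $\tilde{\Lambda}_{t_0}^{t}$ (\cref{lem:gamma_tilde_is_supermartingale}) with the quantile-fraction bound (\cref{lem:stabilisation_many_good_quantiles_whp}) to derive a contradiction with the assumption that $\Lambda^t \geq \C{lambda_bound} n$ persists throughout $[t_0, t_0 + \C{stab_time} n \log n - 1]$. The structure mirrors the proof of \cref{clm:final_claim} in the recovery phase, except that Markov is replaced by the stronger, \Whp\ quantile-fraction bound available once $\Lambda^{t_0}$ is already linear.

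First, I would fix all the constants exactly as in the proof of the Recovery \cref{lem:recovery}. Concretely: let $C$ be the constant defined at the start of \cref{sec:non_filling_analysis}, $\eps := \eps(C)$ the constant from \cref{lem:good_quantile}, $r := \min\{\eps/(20C), 1/2\}$ from \cref{lem:stabilisation_many_good_quantiles_whp}, $\gamma := r/(2(1-r)) \in (0,1]$, $\C{good_quantile_mult} := \C{good_quantile_mult}(\eps)$ from \cref{lem:good_quantile_good_decrease}, and pick a constant $\alpha := \alpha(\eps) \leq \C{good_quantile_mult}\gamma/(3 w_- e^{2 w_-})$ satisfying the hypotheses of \cref{lem:good_quantile_good_decrease,lem:gamma_tilde_is_supermartingale}. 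Then set $c := \C{lambda_bound}$ from \cref{cor:change_for_large_lambda} and $\C{stab_time} := 18/(\C{good_quantile_mult} \alpha r)$, and write $t_1 := t_0 + \C{stab_time} n \log n$. Let
\[
\mathcal{E} := \mathcal{E}_{t_0}^{t_1-1} = \bigcap_{t \in [t_0, t_1-1]} \{\Lambda^t \geq \C{lambda_bound} n\}
\]
be the bad event whose probability we wish to upper bound by $\tfrac12 n^{-7}$.

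Next, I would run two independent estimates in parallel. For the first, \cref{lem:gamma_tilde_is_supermartingale} guarantees that $(\tilde{\Lambda}_{t_0}^{t})_{t \geq t_0}$ is a super-martingale, so
\[
\Ex{\tilde{\Lambda}_{t_0}^{t_1} \mid \mathfrak{F}^{t_0}} \leq \tilde{\Lambda}_{t_0}^{t_0} = \Lambda^{t_0} \leq 2\C{lambda_bound}\, n,
\]
and Markov's inequality then gives that $\mathcal{M} := \{\tilde{\Lambda}_{t_0}^{t_1} \leq 8\,\C{lambda_bound}\, n^{8}\}$ holds with probability at least $1 - (4 n^{7})^{-1}$, conditional on $\mathfrak{F}^{t_0}$ and $\Lambda^{t_0} \in [\C{lambda_bound}n, 2\C{lambda_bound}n]$. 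For the second, I would apply \cref{lem:stabilisation_many_good_quantiles_whp} with $\kappa_1 := 2\C{lambda_bound}$ and $\kappa_2 := \C{stab_time}$, which yields that the event $\mathcal{G} := \{G_{t_0}^{t_1-1} > r \C{stab_time}\, n \log n\}$ holds with probability at least $1 - 3 n^{-12}$ under the same conditioning.

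Finally, I would show that $\mathcal{E} \cap \mathcal{G} \cap \mathcal{M} = \emptyset$ for $n$ large, so that a union bound produces the claimed probability. On $\mathcal{E}$, the indicator $\mathbf{1}_{\mathcal{E}_{t_0}^{t_1-1}}$ in \eqref{eq:newpot} equals $1$ and $\Lambda^{t_1} \geq \C{lambda_bound} n$, so
\[
\tilde{\Lambda}_{t_0}^{t_1} \geq \C{lambda_bound}\, n \cdot \exp\!\left( \tfrac{\C{good_quantile_mult}\alpha}{n}\, G_{t_0}^{t_1-1} - \tfrac{\C{good_quantile_mult}\alpha\gamma}{n}\, B_{t_0}^{t_1-1}\right).
\]
On $\mathcal{G}$, using $G_{t_0}^{t_1-1} + B_{t_0}^{t_1-1} = \C{stab_time}\, n \log n$ together with the choice $\gamma = r/(2(1-r))$ gives $G_{t_0}^{t_1-1} - \gamma B_{t_0}^{t_1-1} \geq (r/2)\, \C{stab_time}\, n \log n$, exactly as in the calculation in the proof of \cref{clm:final_claim}. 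Plugging this in together with $\C{stab_time} = 18/(\C{good_quantile_mult}\alpha r)$ yields $\tilde{\Lambda}_{t_0}^{t_1} \geq \C{lambda_bound}\, n^{10} > 8\,\C{lambda_bound}\, n^{8}$, contradicting $\mathcal{M}$. Hence
\[
\Pro{\mathcal{E} \mid \mathfrak{F}^{t_0}, \Lambda^{t_0} \in [\C{lambda_bound}n, 2\C{lambda_bound}n]} \leq \Pro{\neg\mathcal{M}} + \Pro{\neg\mathcal{G}} \leq \tfrac{1}{4 n^{7}} + 3 n^{-12} \leq \tfrac12 n^{-7}.
\]
The only real obstacle is the bookkeeping of constants: $\gamma$, $\alpha$ and $\C{stab_time}$ must be tuned simultaneously so that the deterministic lower bound on $\tilde{\Lambda}_{t_0}^{t_1}$ available on $\mathcal{E} \cap \mathcal{G}$ strictly dominates the Markov ceiling given by $\mathcal{M}$. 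All the substantive work has already been absorbed into \cref{lem:gamma_tilde_is_supermartingale} (expected drop when the quantile is good and controlled growth otherwise) and \cref{lem:stabilisation_many_good_quantiles_whp} (the Azuma-based argument that $\Delta^t$ is small in a constant fraction of rounds once $\Lambda^{t_0}$ is linear).
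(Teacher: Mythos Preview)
Your proposal is correct and follows essentially the same route as the paper's proof: invoke \cref{lem:stabilisation_many_good_quantiles_whp} for the quantile fraction, apply Markov to the super-martingale $\tilde{\Lambda}_{t_0}^{t_1}$, and then show that on the bad event $\mathcal{E}$ together with $\mathcal{G}$ the adjusted potential would exceed the Markov ceiling. One small slip: on $\mathcal{E}=\mathcal{E}_{t_0}^{t_1-1}$ you only control $\Lambda^t$ for $t\le t_1-1$, so you cannot claim $\Lambda^{t_1}\ge \C{lambda_bound} n$; use instead the deterministic bound $\Lambda^{t_1}\ge n$, which still gives $\tilde{\Lambda}_{t_0}^{t_1}\ge n\cdot n^{9}=n^{10}>8\,\C{lambda_bound}\,n^{8}$ for large $n$, and the contradiction goes through unchanged.
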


\begin{proof} 
Let $t_1 :=t_0+\C{stab_time} n \log n$, for \C{stab_time} a constant to be defined below. By \cref{lem:stabilisation_many_good_quantiles_whp} with $\kappa_1=2c$ and $\kappa_2=c_s$ we have for $r := \min\{\frac{ \epsilon}{20C}, \frac{1}{2}\}$ (the same $r$ as defined in the proof of \cref{lem:recovery}),

\begin{equation}\label{eq:many_good_quantiles_whp_new}
\Pro{ G_{t_0}^{t_1}> r \cdot (t_1 - t_0) \;\Big|\; \mathfrak{F}^{t_0}, \Lambda^{t_0} \in [\C{lambda_bound}n, 2\C{lambda_bound}n]} \geq 1 - 3 \cdot n^{-12}.
\end{equation}
Again, we use $\gamma = \frac{r}{2(1-r)} \leq 1$ and choose $\alpha$ and $c$ as in \cref{lem:recovery}. As $\alpha$ satisfies \cref{lem:gamma_tilde_is_supermartingale}, $(\tilde{\Lambda}_{t_{i-1}}^{t})_{t \geq t_{i-1}}$ is a super-martingale, so $\ex{\tilde{\Lambda}_{t_0}^{t_1}\mid \mathfrak{F}^{t_0}} \leq \tilde{\Lambda}_{t_0}^{t_0} =  \Lambda^{t_0}$. Hence, using Markov's inequality we get $\Pro{\tilde{\Lambda}_{t_0}^{t_1} >\Lambda^{t_0} \cdot n^8\mid \mathfrak{F}^{t_0}, \Lambda^{t_0} \in [\C{lambda_bound}n, 2\C{lambda_bound}n] } \leq  n^{-8}$. Thus, by the definition of $\tilde{\Lambda}_{t_0}^t$ given in \cref{lem:gamma_tilde_is_supermartingale}, 
we have 
\begin{equation} \label{eq:supermartingale_markov_new}
\Pro{\Lambda^{t_1} \cdot \mathbf{1}_{\mathcal{E}_{t_0}^{t_1-1}} \leq \Lambda^{t_0} \cdot n^8 \cdot \exp\left(  \frac{\C{good_quantile_mult} \alpha \gamma}{n} \cdot B_{t_0}^{t_1-1}  -\frac{\C{good_quantile_mult} \alpha}{n} \cdot G_{t_0}^{t_1-1} \right) \, \Bigg| \, \mathfrak{F}^{t_0}, \Lambda^{t_0} \in [\C{lambda_bound}n, 2\C{lambda_bound}n] } \geq  1 - n^{-8}.    
\end{equation}
Further, if in addition to the two events $\{\tilde{\Lambda}_{t_0}^{t_1} \leq \Lambda^{t_0} \cdot n^8\}$ and $\{\Lambda^{t_0} \in [\C{lambda_bound}n, 2\C{lambda_bound}n]\}$, also the event $\{G_{t_0}^{t_1-1} \geq r \cdot (t_1 - t_0)\}$ holds, then
\begin{align*}
\Lambda^{t_1} \cdot \mathbf{1}_{\mathcal{E}_{t_0}^{t_1-1}} & \leq \Lambda^{t_0} \cdot n^8 \cdot \exp\left( \frac{\C{good_quantile_mult} \alpha \gamma}{n} \cdot B_{t_0}^{t_1-1} - \frac{\C{good_quantile_mult} \alpha}{n} \cdot G_{t_0}^{t_1-1} \right) \\
 & \leq 2cn^9 \cdot \exp\left( \frac{\C{good_quantile_mult} \alpha \gamma}{n} \cdot (1 - r) \cdot (t_1 - t_0) - \frac{\C{good_quantile_mult} \alpha}{n} \cdot r \cdot (t_1 - t_0) \right) \\
 & \stackrel{(a)}{=} 2cn^9 \cdot \exp\left( - \frac{\C{good_quantile_mult} \alpha}{n} \cdot \frac{r}{2} \cdot (t_1 - t_0) \right) \\
 & = 2cn^9 \cdot \exp\left( - \frac{\C{good_quantile_mult} \alpha}{n} \cdot \frac{r}{2} \cdot  \frac{2 \cdot 9}{\C{good_quantile_mult} \cdot \alpha \cdot r} \cdot n \log n \right)  = 2\C{lambda_bound},
\end{align*}
where $(a)$ holds since $\gamma(1-r) - r = -\frac{r}{2}$ due to the definition of $\gamma$.
Observe that $\Lambda^{t_1} \geq n$ holds deterministically, so we deduce from the above inequality that $\mathbf{1}_{\mathcal{E}_{t_0}^{t_1-1}}=0$, that is,
\[\Pro{ \neg \mathcal{E}_{t_0}^{t_1 -1} \; \Bigg|\; \mathfrak{F}^{t_0},  \;\; \; \tilde{\Lambda}_{t_0}^{t_1} \leq \Lambda^{t_0} \cdot n^8, \;\;\; \Lambda^{t_0} \in [\C{lambda_bound}n, 2\C{lambda_bound}n],  \;\;\; G_{t_0}^{t_1-1} \geq r \cdot (t_1 - t_0)} =1.\] 
Recalling the definition of $\mathcal{E}_{t_0}^{t_1-1} = \bigcap_{r \in [t_0, t_1-1]} \{ \Lambda^r \geq cn \}$, and taking the union bound over \eqref{eq:many_good_quantiles_whp_new} and \eqref{eq:supermartingale_markov_new} yields 
\[
\Pro{ \bigcup_{r \in [t_0, t_0 + \C{stab_time} n \log n-1]} \{ \Lambda^r < cn \} \; \big| \; \mathfrak{F}^{t_0}, \Lambda^{t_0} \in [\C{lambda_bound}n, 2\C{lambda_bound}n]  }\geq 1 - 3 \cdot n^{-12} -n^{-8}\geq 1 - \frac{1}{2} \cdot n^{-7},
\]
as claimed. 
\end{proof} 

The next lemma shows that if there is a round with linear (exponential) potential then the gap is at most logarithmic for the next $n^4$ rounds. The idea is to repeatedly apply \cref{lem:stabilization} to find a interval of length $n^4$ where any contiguous sub-interval of length $\Theta(n\log n)$ contains a round with linear potential. The result then follows since a linear potential implies a logarithmic gap, and the gap can grow by at most $\Theta(\log n)$ in $\Theta(n\log n)$ rounds.

\begin{lem} \label{lem:nonfilling_good_gap_after_good_lambda}
For any $\PThree \cap \WTwo$-process or $\PTwo \cap \WThree$-process there exists a constant $\kappa > 0$ such that, for any rounds $t_0$, $t_1$ with $t_0 < t_1 \leq t_0 + n^4$,
\[
\Pro{\max_{i \in [n] } \lvert y_i^{t_1} \rvert \leq \kappa \cdot \log n \mid \mathfrak{F}^{t_0}, \Lambda^{t_0} < cn} \geq 1 - \frac{1}{2} \cdot n^{-3}.
\]
\end{lem}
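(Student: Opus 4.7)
The plan is to chain applications of the Stabilization Lemma (\cref{lem:stabilization}) so that $\Lambda$ returns below $cn$ at least once every $\Theta(n\log n)$ rounds throughout $[t_0,t_1]$, and then to use the one-step drift in \cref{lem:bad_quantile_increase_bound} to promote ``$\Lambda^s<cn$ for some recent $s$'' into a logarithmic bound on $\max_i|y_i^{t_1}|$.

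First, I inductively construct rounds $T_0:=t_0<T_1<T_2<\cdots$ with $\Lambda^{T_i}<cn$, each $T_{i+1}$ being the smallest round strictly after $T_i$ with $\Lambda<cn$. If $\Lambda^{T_i+1}<cn$ then $T_{i+1}=T_i+1$. Otherwise $\Lambda^{T_i+1}\in[cn,3cn]$, because each term $e^{\alpha|y_j^t|}$ of $\Lambda$ changes in one step by a factor at most $e^{\alpha w_-}\le e$ (using $\alpha\le 1/w_-$, which holds by \cref{lem:stabilization}). The proof of \cref{lem:stabilization} extends verbatim to starting values in $[cn,3cn]$: the only place where $\Lambda^{t_0}\le 2cn$ is used is the final estimate $\Lambda^{t_0}\cdot n^8\cdot e^{-9\log n}\le 2c$, which under the weaker assumption becomes $\le 3c$, still contradicting the deterministic lower bound $\Lambda\ge n$ for large $n$. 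Applying this variant at round $T_i+1$ gives $T_{i+1}-T_i\le\C{stab_time}\,n\log n+1$ with probability at least $1-\tfrac{1}{2}n^{-7}$ (in fact $1-O(n^{-8})$ on re-examining the proof, which leaves slack). Union-bounding over the at most $n^4+1$ potential values of $T_i\in[t_0,t_1]$, we get $T_{i+1}-T_i\le\C{stab_time}\,n\log n+1$ for every $i$ with probability at least $1-\tfrac{1}{3}n^{-3}$.

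On that event there exists some $s\in[t_1-\C{stab_time}\,n\log n-1,\,t_1]$ with $\Lambda^s<cn$. Iterating \cref{lem:bad_quantile_increase_bound} and the tower property of conditional expectation from $s$ up to $t_1$ (exactly as in the proof of \cref{clm:small_change_for_linear_lambda}) yields
\[
\mathbb{E}[\Lambda^{t_1}\mid\mathfrak{F}^s,\Lambda^s<cn]\le cn\cdot\Bigl(1+\tfrac{\C{bad_quantile_mult}\alpha^2}{2n}\Bigr)^{t_1-s}+\C{bad_quantile_mult}\,(t_1-s)\Bigl(1+\tfrac{\C{bad_quantile_mult}\alpha^2}{2n}\Bigr)^{t_1-s}=n^{O(1)},
\]
where the hidden exponent depends only on $\alpha,\C{stab_time},\C{bad_quantile_mult}$. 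Markov's inequality at $t_1$ then gives $\Lambda^{t_1}\le n^{O(1)}$ with probability $\ge 1-n^{-12}$, and taking logarithms produces $\max_{i\in[n]}|y_i^{t_1}|\le\kappa\log n$ for some constant $\kappa>0$. Combining the two steps via a union bound yields a total failure probability at most $\tfrac{1}{3}n^{-3}+n^{-12}\le\tfrac{1}{2}n^{-3}$ for $n$ sufficiently large. The principal technical subtlety is the $\le e$ overshoot when $\Lambda$ re-crosses $cn$, which prevents a direct black-box application of \cref{lem:stabilization} with its precondition $\Lambda^{t_0}\in[cn,2cn]$; as noted, this is resolved by observing that the Markov slack of $n^8$ in its proof easily absorbs the $O(1)$ overshoot factor.
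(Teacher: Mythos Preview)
Your proof is correct and shares the paper's overall plan—chain \cref{lem:stabilization} to keep $\Lambda$ returning below $cn$, then convert this to a gap bound at $t_1$—but the two steps are executed differently. For the chaining, the paper sidesteps your overshoot issue by interlacing two stopping times, $r_i:=\inf\{r>s_{i-1}:\Lambda^r\in[cn,2cn]\}$ and $s_i:=\inf\{s>r_i:\Lambda^s<cn\}$: since $\Lambda^{t+1}\le 2\Lambda^t$ (indeed $\alpha w_-<\ln 2$ under the constraints \eqref{eq:c_3alphacond1}--\eqref{eq:c_3alphacond2}), any excursion above $cn$ must first land in $[cn,2cn]$, so \cref{lem:stabilization} applies at each $r_i$ without modification. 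For the conversion, you iterate the drift and apply Markov; this works, though the witnessing round $s$ is a random stopping time and you should decompose over its values (a point you gloss over). The paper instead gives a purely deterministic conversion: each $y_i$ changes by at most $w_-/n$ per round, so a round $s$ with $\Lambda^s<cn$ within $\C{stab_time}n\log n$ of $t_1$—looking \emph{forward} from $t_1$ to bound $\max_i y_i^{t_1}$ and \emph{backward} to bound $\min_i y_i^{t_1}$—directly forces $|y_i^{t_1}|\le\tfrac{2}{\alpha}\log n+w_-\C{stab_time}\log n$, yielding $\kappa=\tfrac{2}{\alpha}+w_-\C{stab_time}$ explicitly and avoiding any stopping-time bookkeeping.
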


\begin{proof} 
Choose the constants $\alpha \in (0, 1/w_-]$, $\C{lambda_bound}>1$ and
 $\C{stab_time} > 0$ so as to satisfy \cref{lem:stabilization}. Next define the event
\[ \mathcal{M}_{t_0}^{t_1} = \left\{\text{for all }t\in [t_0, t_1 ]\text{ there exists } s\in [t, t+\C{stab_time} n\log n  ] \text{ such that }\Lambda^s < cn\right\},\]	
that is, if $\mathcal{M}_{t_0}^{t_1}$ holds then we have $\Lambda^s < cn $ at least once every $\C{stab_time} n\log n$ rounds. 

Assume now that $\mathcal{M}_{t_0}^{t_1}$ holds. Choosing $t=t_1$, there exists $s \in [t_1,t_1+\C{stab_time} n \log n]$ such that $\Lambda^s < cn$, which in turn implies by definition of $\Lambda$ that $\max_{i \in [n]} |y_i^s| \leq \frac{1}{\alpha} \cdot \log (cn) < \frac{2}{\alpha} \cdot \log n$. Clearly, any $y_i^t$ can decrease by at most $w_-/n$ in each round, and from this it follows that if $\mathcal{M}_{t_0}^{t_1}$ holds, then 
$
\max_{i \in [n]} y_i^{t_1} \leq \max_{i \in [n]} |y_i^s| + w_-\cdot \C{stab_time} \log n \leq \kappa \cdot \log n, $
for $\kappa := \frac{2}{\alpha} +  w_- \cdot \C{stab_time}$.

If $t_1 \geq t_0 + \C{stab_time} n \log n$ and $\mathcal{M}_{t_0}^{t_1}$ holds, then choosing $s = t_1 - \C{stab_time} n \log n$, there exists $s \in [t_1 - \C{stab_time} n \log n, t_1]$ such that $\Lambda^s < cn$. (in case $t_1 -  \C{stab_time} n \log n < t_0 $, then we arrive at the same conclusion by choosing $s = t_0$). %
This in turn implies
$
 \max_{i \in [n]} | y_i^s  | < \frac{2}{\alpha} \cdot \log n.
$
Hence 
\[
\min_{i \in [n]} y_i^{t_1} \geq \min_{i \in [n]} y_i^{s} - w_-\cdot \C{stab_time} \log n 
\geq -\max_{i \in [n]} |y_i^{s}| -  w_-\cdot \C{stab_time} \log n 
\geq -\kappa \cdot \log n.
\]
 Hence, $\mathcal{M}_{t_0}^{t_1}$ (conditioned on $\Lambda^{t_0} < cn$) implies that $\max_{i \in [n]} \lvert y_i^{t_1} \rvert \leq \kappa \cdot \log n$. It remains to bound $\Pro{\neg \mathcal{M}_{t_0}^{t_1} \mid \mathfrak{F}^{t_0}, \Lambda^{t_0} < cn}$.

Note that since we start with $\Lambda^{t_0} < \C{lambda_bound} n$, if for some round $j$, $\Lambda^j> 2 \C{lambda_bound} n$,  there must exist some $s \in [t_0,j)$, such that $\Lambda^s \in [\C{lambda_bound} n, 2\C{lambda_bound}n]$, 
since for every $t \geq 0$ it holds $\Lambda^{t+1} \leq \Lambda^t \cdot e^{\alpha w_-} \leq 2 \Lambda^t$ and $\alpha \leq 1/w_-$. Let $t_0 < r_1 <r_2<\cdots $ and $t_0 =: s_0<s_1<\cdots $ be two interlaced sequences defined recursively for $i\geq 1$ by \[r_i := \inf\{r >  s_{i-1}: \Lambda^r \in [\C{lambda_bound}n, 2\C{lambda_bound}n] \}\qquad\text{and} \qquad s_i := \inf\{s> r_i : \Lambda^t < cn\}. \] 
  Thus we have
  \[
  t_0 = s_0 < r_1 < s_1 < r_2 <s_2 < \cdots, 
  \]
  and since $r_i>r_{i-1}$ we have $ r_{t_1 - t_0}\geq t_1 - t_0$. Observe that if the event $\cap_{i=1}^{t_1 - t_0}\{s_i-r_i\leq \C{stab_time}n \log n \} $ holds, then also $ \mathcal{M}_{t_0}^{t_1}$ holds.

   Recall that by \cref{lem:stabilization} we have for any $i=1,2,\ldots, t_1 - t_0$ and any $r = t_0 + 1, \ldots , t_1$ \[
 \Pro{ \bigcup_{t \in [r_i,r_i + \C{stab_time} n \log n-1]} \left\{\Lambda^{t} < \C{lambda_bound} n \right\} ~\Big|~ \mathfrak{F}^{r} , \; \Lambda^{r} \in [\C{lambda_bound}n, 2\C{lambda_bound}n], r_i = r } \geq  1 - \frac{1}{2} \cdot n^{-7},
  \] and by negating and the definition of $s_i$,
  \[
  \Pro{s_i-r_i> \C{stab_time}n \log n \, \Big| \, \mathfrak{F}^{r}, \Lambda^{r} \in [\C{lambda_bound}n, 2\C{lambda_bound}n], r_i = r } \leq \frac{1}{2} n^{-7}.
  \]
  Since the above bound holds for any $i$ and $\mathfrak{F}^{r}$, with $r_i=r$, it follows by the union bound over all $i=1,2,\ldots,t_1 - t_0$, as $t_1 - t_0 \leq n^4$,
  \[
  \Pro{\neg \mathcal{M}_{t_0}^{t_1} \mid \mathfrak{F}^{t_0}, \Lambda^{t_0} < cn }\leq (t_1 - t_0)\cdot \frac{1}{2} n^{-7} \leq \frac{1}{2} n^{-3}. \qedhere\] \end{proof}

\subsection{Completing the Proof of Theorem~\ref{thm:main_technical}}

We are now ready to complete the proof.%

{\renewcommand{\thethm}{\ref{thm:main_technical} }
\begin{thm}[restated]
\maintechnical
	\end{thm}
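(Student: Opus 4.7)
The plan is to assemble the main theorem directly from two results already in hand: the recovery lemma (\cref{lem:recovery}), which guarantees that within a window of length $40 n^3 \log^4 n$ preceding round $m$ there is \Whp some round at which the exponential potential is small, and \cref{lem:nonfilling_good_gap_after_good_lambda}, which propagates that small-potential guarantee forward into a logarithmic gap at a later time. Since the theorem's conclusion is about $\max_{i \in [n]} |y_i^m|$, both ingredients combine via a single union bound.

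First I would split into two cases based on the size of $m$. In the easy case $m \leq n^4$, I would simply note that $\Lambda^0 = n < \C{lambda_bound} n$ holds deterministically (the initial load vector is zero), so one can invoke \cref{lem:nonfilling_good_gap_after_good_lambda} with $t_0 := 0$ and $t_1 := m$ to conclude $\max_{i \in [n]} |y_i^m| \leq \kappa \log n$ with probability at least $1 - \tfrac{1}{2} n^{-3}$. Note that $40 n^3 \log^4 n \leq n^4$ for $n$ sufficiently large, so this case already covers all $m$ up to the recovery window length.

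In the main case $m \geq 40 n^3 \log^4 n$, I would apply \cref{lem:recovery}, which asserts that with probability at least $1 - n^{-10}$ there exists a round $s \in [m - 40 n^3 \log^4 n,\, m]$ with $\Lambda^s < \C{lambda_bound} n$. Let $s^\star$ denote the smallest such round; conditioned on its existence, $m - s^\star \leq 40 n^3 \log^4 n \leq n^4$, so the preconditions of \cref{lem:nonfilling_good_gap_after_good_lambda} are met with $t_0 := s^\star$ and $t_1 := m$. That lemma then gives $\max_{i \in [n]} |y_i^m| \leq \kappa \log n$ with conditional probability at least $1 - \tfrac{1}{2} n^{-3}$. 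A union bound over the two failure events yields an overall probability at least $1 - n^{-10} - \tfrac{1}{2} n^{-3} \geq 1 - n^{-3}$, as required.

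There is essentially no genuine obstacle at this final stage; the entire difficulty has already been absorbed into the recovery and stabilization lemmas. The only minor technicality is to verify that the random stopping time $s^\star$ can be substituted into \cref{lem:nonfilling_good_gap_after_good_lambda}, which is justified because the conclusion of that lemma holds uniformly in the starting round $t_0$ (the conditioning $\mathfrak{F}^{t_0}, \Lambda^{t_0} < \C{lambda_bound} n$ is exactly what $s^\star$ furnishes by its definition), so a standard strong-Markov / law-of-total-probability argument turns the conditional bound into an unconditional one. The statement of the theorem follows, and the corollaries for \Thinning$(f(n))$ and the $(1+\beta)$-process are then immediate from \cref{lem:f(n)extension} and \cref{lem:beta_domination}.
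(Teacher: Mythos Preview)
Your proposal is correct and follows essentially the same route as the paper: a case split on whether $m$ is below the recovery-window length (where \cref{lem:nonfilling_good_gap_after_good_lambda} applies directly from $\Lambda^0 = n < \C{lambda_bound} n$), and otherwise combining \cref{lem:recovery} with \cref{lem:nonfilling_good_gap_after_good_lambda} via a stopping time and a law-of-total-probability/union-bound argument. The only cosmetic differences are your choice of the split at $n^4$ rather than $40 n^3 \log^4 n$ and your notation $s^\star$ for the paper's $\tau$; neither affects the argument.
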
}
	\addtocounter{thm}{-1}

\begin{proof}
Consider an arbitrary round $m \geq 1$. If $m < 40 n^3 \log^4 n$, then the claim follows by \cref{lem:nonfilling_good_gap_after_good_lambda} since $\Lambda^{t_0} = n < cn$.

Otherwise, let $t_0 := m - 40 n^3 \log^4 n$. %
Firstly, by \cref{lem:recovery}, we get
\[
\Pro{\bigcup_{s \in [t_0, m] } \left\{ \Lambda^{s} \leq cn \right\} } \geq 1 - n^{-10}.
\]
Hence for $\tau:=\inf\{ s \geq t_0 \colon \Lambda^s < c n \}$ we have $\Pro{ \tau \leq m } \geq 1-n^{-10}$.

Secondly, using \cref{lem:nonfilling_good_gap_after_good_lambda}, there is some constant $\kappa := \kappa(\alpha)$ such that for any round $s \in [t_0, m]$
\[
\Pro{\max_{i \in [n]} \lvert y_i^m \rvert \leq \kappa \cdot \log n ~\Big|~ \mathfrak{F}^s , \Lambda^s < cn} \geq 1 - \frac{1}{2} \cdot n^{-3}.
\]
Combining the two inequalities from above, 
\begin{align*}
    \Pro { \max_{i \in [n]} \lvert y_i^m \rvert \leq \kappa \cdot \log n} &\geq \sum_{s=t_0}^{m} \Pro{ \tau = s} \cdot \Pro{ \max_{i \in [n]} \lvert y_i^m \rvert \leq \kappa \cdot \log n  ~\Big|~ \tau = s } \\
    &\geq \sum_{s=t_0}^{m} \Pro{ \tau = s} \cdot \Pro{ \max_{i \in [n]} \lvert y_i^m \rvert \leq \kappa \cdot \log n  ~\Big|~ \mathfrak{F}^{s}, \Lambda^{s} < cn } \\
    &\geq \left( 1-\frac{1}{2}n^{-3} \right) \cdot \Pro{ \tau \leq m} \\
    &\geq \left(1 - \frac{1}{2} n^{-3} \right) \cdot \left(1 - n^{-10} \right) \geq 1 - n^{-3}.\qedhere
\end{align*}

\end{proof}

\section{Lower Bounds}\label{sec:lower}

In this section we shall prove several lower bounds for both filling and non-filling processes. See Table \ref{tab:overview} for an concise overview of our lower bounds and together with previously known lower bounds.
One interesting aspect is that our lower bound for non-filling processes makes use of the quantile stabilization result from the previous section (\cref{lem:stabilisation_many_good_quantiles_whp}).

\subsection{Lower Bounds for Uniform Processes}
We first prove a general lower bound which holds for any process that picks bins for allocation uniformly and can then increment the load of that chosen bin arbitrarily. Since \Twinning and \Packing choose a uniform bin for allocation at each round, the result below immediately yields a gap bound of $\Omega(\frac{\log n}{\log \log n})$ for these processes for $m=\Oh(n)$ rounds.

\NewConstant{adaptive_filling_lower_bound_gap}{d}
\begin{lem}
Consider any allocation process, which at each round $t \geq 0$, picks a bin $i^t$ uniformly. Furthermore, assume that at any round $t \geq 0$ the allocation process increments the load of bin $i^t$ by some function $f^t \geq 1$, which may depend on $\mathfrak{F}^{t}$ and $i^t$. Then, there is a constant $\C{adaptive_filling_lower_bound_gap} > 0$  such that
\[
 \Pro{ \Gap\left(\frac{n}{2}\right) \geq \C{adaptive_filling_lower_bound_gap} \cdot \frac{\log n}{\log \log n} } \geq 1-o(1).
\]
\end{lem}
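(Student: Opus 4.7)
The strategy is to reduce to the classical \OneChoice lower bound and then perform a case analysis on the total weight $W^{n/2}=\sum_{t=0}^{n/2-1} f^{t}$. The idea is that either $W^{n/2}$ is small (so the process behaves ``like'' \OneChoice and we get the gap from a heavily sampled bin), or $W^{n/2}$ is large (in which case most bins being empty forces a large gap on its own).

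First, I would note that the choice of bin $i^{t}$ at each round is independent of the history and uniform over $[n]$, so if we let $N_{i}:=|\{t\in [0,n/2-1] \colon i^{t}=i\}|$ be the number of rounds in which bin $i$ was picked, then $(N_{1},\dots,N_{n})$ is exactly the load vector of a \OneChoice process with $n/2$ balls into $n$ bins. By the classical maximum-load lower bound for \OneChoice (Raab--Steger), there exists a constant $c>0$ such that
\[
 \Pro{\, \max_{i\in[n]} N_{i}\ \geq\ c\cdot \frac{\log n}{\log\log n}\,}\ \geq\ 1-o(1).
\]
Since $f^{t}\geq 1$ for every round, the load of bin $i$ at round $n/2$ satisfies the pointwise bound $x_{i}^{n/2}\geq N_{i}$, and hence $\max_{i} x_{i}^{n/2}\geq \max_{i} N_{i}$.

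Next, condition on the event $\{\max_{i} N_{i}\geq c\cdot \tfrac{\log n}{\log\log n}\}$ and split into two cases. In Case~A, $W^{n/2}\leq \tfrac{c}{2}\cdot n\cdot \tfrac{\log n}{\log\log n}$, in which case the average load $W^{n/2}/n$ is at most $\tfrac{c}{2}\cdot \tfrac{\log n}{\log\log n}$, and so
\[
 \Gap(n/2)\ =\ \max_{i} x_{i}^{n/2}-\frac{W^{n/2}}{n}\ \geq\ c\cdot\frac{\log n}{\log\log n}-\frac{c}{2}\cdot\frac{\log n}{\log\log n}\ =\ \frac{c}{2}\cdot\frac{\log n}{\log\log n}.
\]
In Case~B, $W^{n/2}> \tfrac{c}{2}\cdot n\cdot \tfrac{\log n}{\log\log n}$. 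Since only $n/2$ rounds have occurred, at most $n/2$ distinct bins have been chosen, so at least $n/2$ bins still have zero load. Therefore the non-empty bins carry all the mass and their number is at most $n/2$, which yields
\[
 \max_{i} x_{i}^{n/2}\ \geq\ \frac{W^{n/2}}{n/2}\ =\ \frac{2W^{n/2}}{n},
\]
and hence $\Gap(n/2)\geq W^{n/2}/n\ >\ \tfrac{c}{2}\cdot\tfrac{\log n}{\log\log n}$.

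Combining the two cases, on the event of probability $1-o(1)$ that $\max_{i} N_{i}\geq c\cdot\tfrac{\log n}{\log\log n}$, the gap at round $n/2$ is at least $\tfrac{c}{2}\cdot\tfrac{\log n}{\log\log n}$, giving the claim with $\C{adaptive_filling_lower_bound_gap}:=c/2$. The main subtlety in the argument is just the case split — one has to ensure that the adversary's freedom to inflate some $f^{t}$ (thereby inflating the average) is harmless, and this is handled cleanly by Case~B, where a large $W^{n/2}$ is itself enough to force a large gap via the ``at most $n/2$ non-empty bins'' pigeonhole. No bound on $f^{t}$ from above is needed anywhere.
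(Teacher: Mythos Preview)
Your proof is correct and shares the same opening move as the paper---invoking the \OneChoice max-load lower bound to find a bin chosen $\Omega(\log n/\log\log n)$ times---but your case analysis diverges from the paper's. The paper introduces $f^{*}:=\max_{t} f^{t}$, bounds $W^{n/2}\le (n/2)f^{*}$, and splits on whether $f^{*}$ is large (in which case the bin receiving the $f^{*}$-increment witnesses the gap) or small (in which case the heavily-sampled bin does). You instead split directly on $W^{n/2}$ and, when $W^{n/2}$ is large, use the pigeonhole observation that at most $n/2$ bins can be non-empty after $n/2$ rounds, forcing $\max_i x_i^{n/2}\ge 2W^{n/2}/n$. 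Your route is arguably cleaner: it avoids the auxiliary quantity $f^{*}$ and never needs to identify which round achieved the maximum increment. The paper's route, on the other hand, is slightly more constructive in that it exhibits an explicit overloaded bin in both cases. Both yield the same constant $c/2$.
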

\begin{proof}

\NewConstantWithName{one_choice_lower_bound}{\ensuremath{\kappa}}
Recall the fact (see, e.g.,~\cite{RS98}) that in a \OneChoice process with $n/2$ balls into $n$ bins, with probability at least $1-o(1)$ there is a bin $i \in [n]$ which will be chosen at least $\C{one_choice_lower_bound} \frac{\log n}{\log \log n}$ times during the first $n/2$ allocations, where $\C{one_choice_lower_bound}>0$ is some constant. Hence in our process
\[
 x_i^{n/2} \geq  \left(\C{one_choice_lower_bound} \frac{\log n}{\log \log n}\right) \cdot 1,
\]for some $i\in[n]$ w.h.p.\ as at least one ball is allocated at each round. 

Let us define $f^{*}:=\max_{1 \leq t \leq n/2} f^t$ to be the largest number of balls allocated in one round. Then, clearly, $W^{n/2} \leq (n/2) \cdot f^{*}$. Furthermore, there must be at least one bin $j \in [n]$ which receives $f^*$ balls in one of the first $n/2$ rounds. Thus for any such bin, 
\[
 x_j^{n/2} \geq f^*,
\]
and therefore the gap is lower bounded by
\begin{align*}
 \Gap(n/2) &\geq \max \left\{x_i^{n/2}, x_j^{n/2} \right\} - \frac{W^{n/2}}{n} \geq \max \left\{ \C{one_choice_lower_bound} \frac{\log n}{\log \log n}, f^{*} \right\} - \frac{f^{*}}{2}.
\end{align*}
If $f^{*} \geq \C{one_choice_lower_bound} \frac{\log n}{\log \log n}$, then the lower bound is $\frac{1}{2} f^{*} \geq \frac{\C{one_choice_lower_bound}}{2} \frac{\log n}{\log \log n}$. Otherwise, $f^{*} < \C{one_choice_lower_bound} \frac{\log n}{\log \log n}$, and the lower bound is at least
$ \C{one_choice_lower_bound} \frac{\log n}{\log \log n} - \frac{\C{one_choice_lower_bound}}{2} \frac{\log n}{\log \log n} = \frac{ \C{one_choice_lower_bound}}{2} \frac{\log n}{\log \log n}$.
\end{proof}

The technique used in the proof above can be extended to the special case of \Packing, yielding a more general lower bound which holds for any round $m \geq n$. 
\NewConstant{packing_all_m}{d}

\begin{lem}\label{lem:packing_all_m}
For \Packing, there is a constant $\C{packing_all_m} > 0$ such that for any round $m \geq n$ we have
\[
 \Pro{ \Gap(m) \geq \C{packing_all_m} \cdot \frac{\log n}{\log \log n}} \geq 5/8.
\]
\end{lem}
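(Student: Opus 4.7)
My plan is to apply the preceding lemma to the \Packing process restricted to the last $n/2$ rounds, namely the window $[m-n/2, m]$. This is valid because within this window \Packing still samples a bin uniformly at random each round and always adds at least one ball (either a single ball to an overloaded bin, or at least two balls to an underloaded one). The preceding lemma then gives, with probability $1 - o(1)$, the existence of a bin $i^*$ sampled at least $k := \kappa \log n / \log \log n$ times in the window.

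The key \Packing-specific ingredient is the observation that each hit on $i^*$ strictly raises $x_{i^*}$ by at least one: when $i^*$ is overloaded at the hit time, one ball is placed; when it is underloaded, its load is reset to $\lceil W^t/n \rceil + 1$, which strictly exceeds its previous value $x_{i^*}^t < W^t/n$. Letting $t_1 < \cdots < t_k$ denote the hit times and iterating this estimate gives $x_{i^*}^m = x_{i^*}^{t_k+1} \geq x_{i^*}^{t_1+1} + (k-1) \geq W^{t_1}/n + k$, where the final inequality uses $x_{i^*}^{t_1+1} \geq W^{t_1}/n + 1$ right after the first hit. Setting $f^* := \max_{t \in [t_1, m-1]} w^t$, the mean growth in the window is bounded by $W^m - W^{t_1} \leq (m - t_1) f^* \leq (n/2) f^*$, so that $y_{i^*}^m \geq k - f^*/2$.

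The case distinction now mirrors the proof of the preceding lemma. If $f^* \leq k$, then $\Gap(m) \geq y_{i^*}^m \geq k/2$, giving the lemma with $d := \kappa/2$. If $f^* > k$, then some round $t^*$ filled a bin $j^*$ with more than $k$ balls; the constraint $x_{j^*}^{t^*} \geq 0$ together with $x_{j^*}^{t^*+1} - x_{j^*}^{t^*} = f^*$ forces $W^{t^*}/n \geq f^* - 2$ (so the mean at the time of the big fill was already large), and by monotonicity of $x_{\max}$ combined with a bound on the mean growth on $(t^*, m]$ obtained by taking $t^*$ to be the \emph{last} round with $w^{t^*} \geq k$ (so that all subsequent rounds in $(t^*, m]$ contribute at most $k$ each, i.e.\ $(W^m - W^{t^*+1})/n \leq k/2$), one derives the matching bound $\Gap(m) \geq k/2$. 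Combining both cases with the $1 - o(1)$ probability of the \OneChoice hit event yields the claimed $5/8$ probability bound for a suitable $d = \kappa/2$.

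The main technical obstacle is the case $f^* > k$: unlike in the empty-start setting of the preceding lemma, where a bin receiving $f^*$ balls has absolute load $\geq f^*$, here the filled bin sits only $\approx 1$ above the current mean rather than at absolute load $f^*$. Closing this gap will require carefully choosing $t^*$ as the last round with a large fill so that the mean growth after $t^*$ is at most $k/2$, and then exploiting both the bound $W^{t^*}/n \geq f^* - O(1)$ and the monotonicity of the maximum load to lift the post-fill bound $x_{\max}^m \geq W^{t^*}/n + 1$ to the desired gap of $\Omega(\log n/\log\log n)$ above $W^m/n$.
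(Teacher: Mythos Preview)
Your Case 2 argument does not close.  A fill brings $j^{*}$ only to $x_{j^{*}}^{t^{*}+1}=\lceil W^{t^{*}}/n\rceil+1$, i.e.\ normalized load at most $2-w^{t^{*}}/n$.  Chasing your inequalities (monotonicity of $x_{\max}$ plus the post-$t^{*}$ mean-growth bound $(W^{m}-W^{t^{*}+1})/n\le k/2$) yields only
\[
\Gap(m)\;\ge\; x_{\max}^{t^{*}+1}-\frac{W^{m}}{n}\;\ge\;\Big\lceil\tfrac{W^{t^{*}}}{n}\Big\rceil+1-\frac{W^{t^{*}}+w^{t^{*}}}{n}-\frac{k}{2}\;\ge\;1-\frac{w^{t^{*}}}{n}-\frac{k}{2},
\]
which is negative.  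The observation $W^{t^{*}}/n\ge f^{*}-2$ is not doing any work here: it is just the tautology that a fill cannot place more balls than the current mean plus two.  There is no way to conclude $\Gap(t^{*}+1)\ge k$ from the fact that a deep hole was filled.

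What is actually needed, and what the paper does, is to control the \emph{total} mean growth over the window directly, thereby eliminating the $f^{*}$ case split.  The paper sets $t_{0}=m-n$ and invokes the upper-bound machinery: from $\mathbf{E}[\Phi^{t_{0}}]=O(n)$ (equation~(5.5) in the filling analysis) and Markov it gets $\Delta^{t_{0}}\le\kappa_{1}n$ with probability $\ge 7/8$.  This both (i) guarantees at least $n/2$ bins have $y_{i}^{t_{0}}\ge -2\kappa_{1}$, and (ii) bounds the expected total weight placed in $[t_{0},m]$ by $O(n)$ (since $w^{t}\le\Delta^{t}/n+2$ and $\Delta^{t+1}\le\Delta^{t}+4$), so a second Markov gives mean growth $O(1)$ with probability $\ge 7/8$.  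Combining with the \OneChoice hit count on the $n/2$ ``shallow'' bins gives the gap.  The point is that for arbitrary $m$ you cannot avoid importing some balance information about the load vector at the start of the window; a purely local argument over the last $n/2$ rounds fails exactly in your Case~2.
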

\begin{proof}
Consider the round $t_0:=m-n$ and the interval $[t_0,m]$. For the \Packing process, it then follows from \eqref{eq:potential_small} that there is a constant $c_6 > 0$ such that
\[
 \Ex{ \Phi^{t_0}  } \leq c_6 \cdot n,
\]
where the constant $\alpha$ in $\Phi^{t_0} = \sum_{i \colon y_i^{t} \geq 2} \exp\left( \alpha y_i^{t_0} \right)$ is given by $\alpha = \min \{ 1/101, (1/40) \cdot c_2/c_1 \}$ for the constants $c_1,c_2$ are given by \cref{lem:filling_good_quantile}.
Hence by Markov's inequality,
\begin{align}
 \Pro{ \Phi^{t_0} \geq 8 c_6 n } \leq \frac{1}{8}. \label{eq:eins}
\end{align}
Conditioning on $\Phi^{t_0} \geq 8 c_6 n$ implies
\[
 \Delta^{t_0} = 2 \cdot \sum_{i \colon y_i^{t_0} \geq 0} y_i^{t_0} \leq 8 \cdot n 
 + \frac{1}{\alpha} \sum_{i \colon y_i^{t_0} \geq 2} \exp\left( \alpha y_i^{t} \right) \leq \kappa_1 \cdot n,
\]
for some constant $\kappa_{1} >0$, where we used the fact that $z \leq \exp(z)$ for any $z \geq 0$.

Next define $\mathcal{B}:=\left\{ i \in [n] \colon y_i^{t} \geq -2 \kappa_1 \right\}$. Then, as $\Delta^{t_0} \leq \kappa_1 \cdot n$, it follows that $|\mathcal{B}| \geq n/2$.
Further, using \cref{clm:bound_wone}, we conclude that for any round $s \geq 0$, $\Delta^{s+1} \leq \Delta^{s} + 4$ and thus
\[
\Delta^{t} \leq \Delta^{t_0} +4 (t-t_0),
\]
Hence
\[
 \sum_{t=t_0}^m \Delta^t \leq \kappa_2 \cdot n^2,
\]
holds deterministically for some constant $\kappa_2$, where we can assume $\kappa_2 > 1$. 
Next note that the expected total number of balls allocated in a round $t \in [t_0,m]$, denoted by $w^t$, satisfies
\[
w^{t} =
\sum_{i \colon y_i^t \geq 0} \frac{1}{n} \cdot 1 + \sum_{i \colon y_i^t < 0}
\frac{1}{n} \cdot \left( \lceil -y_i^{t} \rceil + 1 \right)
\leq \frac{1}{n} \cdot \left( \Delta^{t} + n \right),
\]
as $\sum_{i \colon y_i^t < 0} -y_i^t = \frac{1}{2} \Delta^t$.
Thus the expected number of balls allocated overall in all rounds $t \in [t_0,m]$ together satisfies
\[
  \Ex{ \sum_{t=t_0}^m w^{t} \, \Big| \, \mathfrak{F}^{t_0}, \Phi^{t_0} < 8 c_6 n } \leq 2 \kappa_2 \cdot n.
\]
Using Markov's inequality, 
\begin{align}
 \Pro{ \sum_{t=t_0}^m w^{t} \geq 16 \kappa_2 \cdot n \, \Big| \, \mathfrak{F}^{t_0}, \Phi^{t_0} < 8 c_6 n } \leq \frac{1}{8}. \label{eq:zwei}
\end{align}
In the following, let us also condition on the event $\sum_{t=t_0}^m W^{t} < 16 \kappa_2 \cdot n $, which implies that the average load increases by at most $16 \kappa_2$ between rounds $t_0$ and $m=t_0+n$. 

Next consider the allocation to the bins in $\mathcal{B}$ during the time-interval $[t_0,m]$. The maximum number of times a bin in $\mathcal{B}$ is chosen corresponds to the maximum load of a \OneChoice with (at least) $n/2$ balls into $n/2$ bins, which is $\Omega( \frac{\log n}{\log \log n})$ with probability at least $1-o(1)$ (see, e.g.,~Raab and Steger~\cite{RS98}). Every time such a bin in $\mathcal{B}$ is chosen, its load is incremented by at least one, so
\begin{align}
 \Pro{ \max_{i \in \mathcal{B}} (x_i^{t} - x_i^{t_0}) \geq \kappa_3 \cdot \frac{\log n}{\log \log n} ~\Big|~ \mathfrak{F}^{t_0}, \Phi^{t_0} < 8 c_6 n } \geq 1-o(1), \label{eq:drei}
\end{align}
for some suitable constant $\kappa_{3} > 0$.
Finally, taking the union bound over \eqref{eq:eins}, \eqref{eq:zwei} and \eqref{eq:drei}, we conclude that with probability at least $1-1/8-1/8-o(1) \geq 5/8$ it holds that 
\begin{align*}
  \max_{i \in [n]} y_i^{m}
  &\geq \max_{i \in \mathcal{B}} \left( x_i^{m} - \frac{W^{m}}{n} \right) \\
  &= \max_{i \in \mathcal{B}} \left( x_i^{m} - x_i^{t_0} + x_i^{t_0} - \frac{W^{t_0}}{n} + \frac{W^{t_0}}{n} - \frac{W^{m}}{n} \right) \\
  &\geq \max_{i \in \mathcal{B}}
  \left( x_i^{m} - x_i^{t_0} \right) + \min_{i \in \mathcal{B}} \left( x_i^{t_0} - \frac{W^{t_0}}{n} \right) + \left(\frac{W^{t_0}}{n} - \frac{W^{m}}{n}  \right) \\
  &\geq \kappa_3 \cdot \frac{\log n}{\log \log n} - 2 \kappa_1 - 16 \kappa_2,
\end{align*}
and the claim follows for some constant $\C{packing_all_m} >0$ since $\kappa_1, \kappa_2$ and $\kappa_3 $ are all positive constants.
\end{proof}

\subsection{Lower Bounds for Non-Filling Processes}
We shall now define a new condition for allocation processes which is satisfied for many natural processes, including \Twinning,  \Thinning, and $(1+\beta)$ with constant $\beta$. 
\NewConstant{p4k}{k}
\begin{itemize}
	\item \textbf{Condition \hypertarget{p4}{$\mathcal{P}_4$}}: for any $\eps>0$ there exists a constant $0<\C{p4k} \leq 1$ such that for all $t \geq 0$ with $\delta^t \in (\epsilon, 1 - \epsilon)$ and all bins $i \in [n]$ we have \[p_i^t \geq \frac{\C{p4k}}{ n}.\] 
\end{itemize}
So essentially this condition implies that in any round $t$ where the mean quantile is in $(\epsilon,1-\epsilon)$, there is at least a $\Omega(1/n)$-probability of allocating to each bin.

We shall now observe that 
\begin{itemize}
	\item For the \MeanThinning process, the probability of allocating to an overloaded bin $i \in [n]$ is $p_i^t = \frac{\delta^t}{n}$. Thus condition \PFour is satisfied with $\C{p4k} := \epsilon$. 
	\item For the \Twinning process, we have $p_i^t = \frac{1}{n}$. So \PFour is satisfied with $\C{p4k} := 1$. 
	\item The $(1+\beta)$-process has $ p_i^t = \frac{1-\beta}{n}+ \frac{\beta(2i-1)}{n^2}> \frac{1-\beta}{n}$, satisfying \PFour with $\C{p4k} := 1-\beta$. 
\end{itemize}

The next claim shows that for many rounds the mean quantile is not at the extremes. 
\NewConstant{good_steps_fraction_lower_bound}{d}
\begin{clm}
\label{clm:many_good_quantiles_in_lightly_loaded}
	Consider any process satisfying the conditions \PTwo and \WThree, or, \WTwo and \PThree. For any $m = \Theta(n \log n)$ and $\eps>0$, let $G_1^m:=G_1^m(\eps)$ be the number of rounds $s \in [1, m]$ with $\delta^s \in (\epsilon, 1 - \epsilon)$. Then there exists some $\eps>0$ and a constant $\C{good_steps_fraction_lower_bound} := \C{good_steps_fraction_lower_bound}(\epsilon)>0$ such that 
	\[
	\Pro{G_1^m \geq \C{good_steps_fraction_lower_bound} \cdot m} \geq 1 - 2\cdot n^{-12}.
	\]
\end{clm}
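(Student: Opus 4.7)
The strategy is to reduce \cref{clm:many_good_quantiles_in_lightly_loaded} to a direct application of \cref{lem:stabilisation_many_good_quantiles_whp}, exploiting the fact that the empty initial configuration has linear exponential potential. Specifically, I would choose $t_0 := 0$ and observe that, deterministically,
\[
\Lambda^0 = \sum_{i=1}^n e^{\alpha \cdot |y_i^0|} = \sum_{i=1}^n e^{\alpha \cdot 0} = n,
\]
so the hypothesis $\Lambda^{t_0} \leq \kappa_1 \cdot n$ of \cref{lem:stabilisation_many_good_quantiles_whp} is met with $\kappa_1 := 1$.

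Since $m = \Theta(n \log n)$, we may write $m = \kappa_2 \cdot n \log n$ for a positive constant $\kappa_2$ (independent of $n$). Applying \cref{lem:stabilisation_many_good_quantiles_whp} with this $\kappa_2$, $t_0 := 0$, and $t_1 := m$ yields, for the constants $\epsilon := \epsilon(C) > 0$ from \cref{lem:good_quantile} and $r := \min\{\epsilon/(20C), 1/2\}$,
\[
\Pro{ G_0^m > r \cdot m } \geq 1 - 3 \cdot n^{-12}.
\]
Here no probabilistic conditioning on $\Lambda^0$ is required because the bound on $\Lambda^0$ is deterministic. Since $G_1^m \geq G_0^m - 1$ (at most the single round $s = 0$ is lost when shifting the starting index), the same bound with $r$ replaced by any strictly smaller positive constant applies to $G_1^m$, provided $n$ is large enough so that $m \geq 2/r$. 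This gives the claim with $\C{good_steps_fraction_lower_bound} := r/2$, say.

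The only minor friction is that \cref{lem:stabilisation_many_good_quantiles_whp} delivers probability $1 - 3n^{-12}$ whereas \cref{clm:many_good_quantiles_in_lightly_loaded} states $1 - 2n^{-12}$; this is purely cosmetic, since the $3n^{-12}$ bound came from an unoptimized union bound (two $n^{-12}$ Markov/Azuma terms plus a super-polynomially small contribution from \cref{lem:newcorrespondence}), and can be tightened to $2n^{-12} + n^{-\omega(1)} \leq 2n^{-12}$ for sufficiently large $n$. There is no technical obstacle to overcome: all the real work (the quadratic drift estimate of \cref{lem:quadratic_absolute_relation_for_w_plus_w_minus}, the stopped super-martingale argument, Azuma's inequality, and the conversion from small absolute-value potential to good mean quantile via \cref{lem:newcorrespondence}) has already been done in the stabilization-phase analysis. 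The present claim simply reuses that machinery from the trivial starting configuration.
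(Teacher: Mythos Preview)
Your proposal is correct and takes essentially the same approach as the paper: both invoke \cref{lem:stabilisation_many_good_quantiles_whp} at $t_0=0$, using the deterministic fact that $\Lambda^0=n$. Your write-up is in fact cleaner than the paper's, which somewhat obscurely phrases the conclusion of \cref{lem:stabilisation_many_good_quantiles_whp} as a bound on the number of rounds with $\Delta^t \leq Cn$ and then appeals to \cref{lem:good_quantile}, whereas you (correctly) observe that \cref{lem:stabilisation_many_good_quantiles_whp} already delivers the bound on $G$ directly; you also explicitly address the $3n^{-12}$ versus $2n^{-12}$ discrepancy, which the paper glosses over.
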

\begin{proof}By applying \cref{lem:stabilisation_many_good_quantiles_whp} for $t_0 = 0$, (where $\Lambda^0 = n$), we get that there exists a constant $C > 0$ such that $\Delta^t \leq C \cdot n$ for a constant fraction of the rounds $t$ in the range $[1,m]$ with probability at least $1 - n^{-12}$. The claim then follows by applying \cref{lem:good_quantile} at each of these rounds.
\end{proof}

\NewConstant{one_choice_prob_lower_bound}{d}
\begin{lem}
Consider any process satisfying \PFour and either the conditions \PTwo and \WThree, or, \WTwo and \PThree. Let $k= \C{p4k}\C{good_steps_fraction_lower_bound}/1000 $ where $\C{p4k}$ is specified by \PFour and $\C{good_steps_fraction_lower_bound}$ is from \cref{clm:many_good_quantiles_in_lightly_loaded}. Then  
\[
\Pro{\Gap(k \cdot n \log n) \geq k \cdot \log n} \geq 1 - o(1).
\]
\end{lem}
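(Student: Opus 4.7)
The plan is to extract an embedded uniform \OneChoice process from the rounds where $\delta^s \in (\epsilon, 1-\epsilon)$, and then invoke a classical anti-concentration bound for the maximum load of \OneChoice. First I would apply \cref{clm:many_good_quantiles_in_lightly_loaded} to obtain, with probability at least $1-2n^{-12}$, that at least $\C{good_steps_fraction_lower_bound}\cdot m$ of the rounds in $[1,m]$ are ``good'' (i.e., satisfy $\delta^s \in (\epsilon,1-\epsilon)$). By \PFour, each such good round has $p_i^s \geq \C{p4k}/n$ for every bin $i$, so the allocation distribution decomposes as
\[
p^s \;=\; \C{p4k}\cdot U + (1-\C{p4k})\cdot q^s,
\]
where $U$ is uniform on $[n]$ and $q^s := (p^s - (\C{p4k}/n)\mathbf{1})/(1-\C{p4k})$ is a valid probability distribution.

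Next I would couple each good round with an independent Bernoulli$(\C{p4k})$ token $C_s$: if $C_s = 1$ the chosen bin is drawn from $U$, otherwise from $q^s$. The marginal law of the chosen bin is still $p^s$, so the process itself is unchanged. Conditional on $G_1^m \geq \C{good_steps_fraction_lower_bound} m$, a standard Chernoff bound on the independent $C_s$'s gives $N := \sum_{s \text{ good}} C_s \geq (\C{p4k}\C{good_steps_fraction_lower_bound}/2) m = (\C{p4k}\C{good_steps_fraction_lower_bound}/2)\cdot k n\log n$ with probability $1-e^{-\Omega(n\log n)}$. Conditional on $N$, the $N$ firings are an i.i.d.\ sequence of uniform bin picks on $[n]$, independent of all prior process history; this is precisely a \OneChoice process with $N$ balls. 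Since $N/n = \Omega(\log n)$, the classical \OneChoice lower bound (e.g., Raab and Steger~\cite{RS98}) produces, with probability $1-o(1)$, a bin $i^{\ast}$ receiving at least $N/n + c\sqrt{(N/n)\log n}$ uniform hits, for some absolute constant $c>0$.

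Finally I would convert this into a gap bound. By \WTwo every hit contributes at least one ball to $i^{\ast}$, so $x_{i^{\ast}}^m \geq N/n + c\sqrt{(N/n)\log n}$, while $W^m \leq w_- m$, so the average load is at most $w_- k\log n$. Plugging $N/n \geq (\C{p4k}\C{good_steps_fraction_lower_bound}/2) k \log n$ into the deviation term and using $k = \C{p4k}\C{good_steps_fraction_lower_bound}/1000$, one computes
\[
c\sqrt{(N/n)\log n} \;\geq\; c\sqrt{(\C{p4k}\C{good_steps_fraction_lower_bound})^2/2000}\cdot \log n \;=\; c\sqrt{500}\cdot k\log n,
\]
so
\[
\Gap(m) \;\geq\; x_{i^{\ast}}^m - \frac{W^m}{n} \;\geq\; \bigl(c\sqrt{500} - w_-\bigr)\, k\log n \;\geq\; k\log n,
\]
where the last inequality uses that $1000$ was chosen large enough that the $\sqrt{k}\cdot\log n$-scale deviation from the embedded \OneChoice dominates the $w_-\cdot k \log n$ contribution of the average load. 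A union bound over the three high-probability events completes the proof.

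The main subtlety is verifying the coupling really produces an independent uniform sub-sequence: the $C_s$'s, and the uniform bin chosen within each firing round, must be genuinely independent of the process filtration (they are, because they are fresh external coin flips), and once this is established the $C_s=1$ rounds form a bona fide multinomial occupancy experiment on $[n]$ to which any classical \OneChoice anti-concentration bound applies. A secondary quantitative nuisance is tracking the constant $1000$ against $w_-$, but since $w_-$ is a fixed constant of the process this is routine.
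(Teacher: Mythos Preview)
Your approach is essentially identical to the paper's: both extract a uniform sub-process via the mixture decomposition $p^s = \C{p4k}\,U + (1-\C{p4k})\,q^s$ granted by \PFour on the good rounds supplied by \cref{clm:many_good_quantiles_in_lightly_loaded}, apply a Chernoff bound to count the uniform firings, invoke the Raab--Steger lower bound for the \OneChoice maximum, and finish with the same arithmetic exploiting the denominator $1000$. The paper's writeup is slightly terser on the coupling subtlety you highlight and also implicitly takes $w_-=1$ when subtracting the average load, which you correctly flag as a routine constant adjustment.
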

\begin{proof}Let $m=k\cdot n\log n$. By \cref{clm:many_good_quantiles_in_lightly_loaded} w.h.p.\ there exists some constants $\eps>0$ and $\C{good_steps_fraction_lower_bound}>0$ such that there are at least $\C{good_steps_fraction_lower_bound}m$ rounds $s\in[0,m]$ with $\delta^t\in(\eps,1-\eps)$. Denote this set of rounds by $S$ and observe that by condition \PFour, for any $s\in S$ and $i\in [n]$ we have $p_{i}^s \geq \C{p4k}/n $. 
	
	Observe that we can couple the location of the balls allocated as rounds in $S$ to locations under a one choice process as follows: before each round $s\in S$ we sample an independent Bernoulli random variable $X_s\sim \mathsf{Ber}(\C{p4k})$ with success probability $\C{p4k}$. If $X_s=1$ then we allocate the ball(s) to a uniformly random bin. Otherwise, if $X_s=0$ we allocate the ball(s) to the $i$\textsubscript{th} loaded bin with probability $(p_{i}^s -\C{p4k}/n)/(1-\C{p4k})$. If we let $X= \sum_{s\in S} X_s$ then it follows that, conditional on $|S|\geq \C{good_steps_fraction_lower_bound}m$ we have $X>\C{p4k}\C{good_steps_fraction_lower_bound}m/2$ w.h.p.\ by the Chernoff bound. It follows that w.h.p.\ at least $(\C{p4k}\C{good_steps_fraction_lower_bound}k/2)\cdot n\log n $ balls are allocated according to the one choice protocol.

	Raab and Steger~\cite{RS98} (see also~\cite[Section 4]{PTW15}) show that if $cn\log n$ balls are allocated according to the one choice protocol, for any constant $c>0$,  then w.h.p.\ the max load is at least  $(c+\sqrt{c}/10)\log n$. Thus if we choose $k= \C{p4k}\C{good_steps_fraction_lower_bound}/1000 $ then we have 
	\[ \frac{\Gap(m)}{\log n} \geq \frac{\C{p4k}\C{good_steps_fraction_lower_bound}k}{2}+ \frac{1}{10}\sqrt{\frac{\C{p4k}\C{good_steps_fraction_lower_bound}k}{2}}   - k = \frac{\C{p4k}^2\C{good_steps_fraction_lower_bound}^2}{2000}+ \frac{1}{10}\cdot\frac{\C{p4k}\C{good_steps_fraction_lower_bound}}{20\sqrt{5}}   - \frac{\C{p4k}\C{good_steps_fraction_lower_bound}}{1000}>\frac{\C{p4k}\C{good_steps_fraction_lower_bound}}{1000},\] with probability $1 - o(1)$ by taking the union bound of these three events. 
\end{proof}

 Hence, we can deduce from the lemma above by recalling that $\MeanThinning$, $\Twinning$ and $(1+\beta)$ all satisfy \PFour and either the conditions \PTwo and \WThree, or, \WTwo and \PThree:
\begin{cor}
For either the \MeanThinning, \Twinning or $(1+\beta)$-processes,  there exist a  constant $k>0$ (different for each process) such that 
\[
\Pro{\Gap(k\cdot n\log n ) \geq   k \cdot \log n} \geq 1 - o(1).
\]
\end{cor}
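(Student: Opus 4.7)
The plan is to verify that each of the three named processes satisfies the hypotheses of the preceding lemma and then read off the constant. For \MeanThinning, condition \PThree (with $\C{p2k1}=\C{p2k2}=1$) and condition \WTwo were established in Section~\ref{sec:framework_thinning}, and condition \PFour with $\C{p4k}=\epsilon$ is given in the bullet list preceding Claim~\ref{clm:many_good_quantiles_in_lightly_loaded}. Plugging these constants into the preceding lemma supplies a positive $k=\C{p4k}\C{good_steps_fraction_lower_bound}/1000$ with the required tail bound.

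For \Twinning, the checks are equally immediate: conditions \PTwo and \WThree were established in Section~\ref{sec:framework_thinning}, and \PFour holds with $\C{p4k}=1$ because each ball's bin is sampled uniformly. The preceding lemma applies verbatim and yields the required constant.

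The $(1+\beta)$-process (for constant $\beta\in(0,1]$) is the only delicate case, since it has $w_{+}=w_{-}=1$ and thus satisfies \WTwo but not \WThree, and a direct calculation with $p_i^t=(1-\beta)/n+\beta(2i-1)/n^2$ shows that \PThree fails for $\delta^t$ close to $1$. However, condition \PFour holds directly with $\C{p4k}=1-\beta$ since $p_i^t\geq (1-\beta)/n$ for every $i$ and every $t$. The structural hypothesis is only used in the proof of Claim~\ref{clm:many_good_quantiles_in_lightly_loaded} via \cref{lem:stabilisation_many_good_quantiles_whp}, and that lemma in turn uses \PThree$\cap$\WTwo or \PTwo$\cap$\WThree only through the additive-drift inequality of \cref{lem:additive_drift}, i.e.\ $p_-^t\cdot w_- - p_+^t\cdot w_+\geq c/n$ for some constant $c>0$. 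For $(1+\beta)$ this inequality holds by direct computation: $p_-^t\cdot w_- - p_+^t\cdot w_+ = p_n-p_1 = \beta(2n-2)/n^2\geq \beta/n$ for all sufficiently large $n$. Consequently \cref{lem:quadratic_absolute_relation_for_w_plus_w_minus,lem:stabilisation_many_good_quantiles_whp}, and hence Claim~\ref{clm:many_good_quantiles_in_lightly_loaded} and the preceding lemma, all go through with $\beta$-dependent constants, and yield the claimed $k=k(\beta)>0$.

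The only real obstacle is thus the $(1+\beta)$ verification, handled by inspecting which property of \PThree or \WThree is actually invoked downstream (only \cref{lem:additive_drift}) and checking it by hand; the remaining two cases are a direct application of the preceding lemma with the constants identified in the bullet list and Section~\ref{sec:framework_thinning}. Note that one cannot instead transfer the lower bound from $(1+\eta)$-\MeanThinning via \cref{lem:beta_domination}, since majorization only transfers gap bounds in the upper direction.
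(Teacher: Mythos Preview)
For \MeanThinning and \Twinning your verification is correct and matches the paper's one-line invocation of the preceding lemma.

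For $(1+\beta)$ you are right to flag that the process satisfies neither \WThree nor \PThree (the paper's sentence preceding the corollary glosses over this), and you are right that \cref{lem:beta_domination} transfers bounds the wrong way for a lower bound. However, your proposed workaround contains a concrete error. You write $p_-^t\cdot w_- - p_+^t\cdot w_+ = p_n - p_1$, but by definition $p_-^t := \min_{i\in B_-^t} p_i^t$ and $p_+^t := \max_{i\in B_+^t} p_i^t$. Since the $(1+\beta)$ allocation vector $p_i = (1-\beta)/n + \beta(2i-1)/n^2$ is increasing in $i$ and $B_+^t=\{1,\ldots,|B_+^t|\}$, these extrema are attained at the \emph{adjacent} indices $|B_+^t|$ and $|B_+^t|+1$, giving $p_-^t - p_+^t = 2\beta/n^2$ rather than $\beta(2n-2)/n^2$. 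The additive-drift inequality of \cref{lem:additive_drift} therefore fails by a factor of $n$ for $(1+\beta)$. Moreover, $(1+\beta)$ does not even satisfy \PTwo once $\delta^t>(n+1)/(2n)$ (then $p_+^t>1/n$), so other lemmas you rely on implicitly, in particular \cref{lem:good_quantile}, also do not apply as stated.

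A cleaner repair bypasses \cref{clm:many_good_quantiles_in_lightly_loaded} altogether for $(1+\beta)$: since $p_i^t\geq (1-\beta)/n$ holds for every bin in \emph{every} round unconditionally, the \OneChoice coupling in the proof of the preceding lemma can be run on all $m$ rounds (take $S=[1,m]$ and $\C{p4k}=1-\beta$, so $|S|=m$ deterministically), after which the Raab--Steger bound yields the $\Omega(\log n)$ gap directly without any appeal to quantile stabilization.
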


\section{Conclusions}
In this work we introduced two general frameworks that imply a $\Oh(\log n)$ gap bound via conditions on the probability bias or weight/filling bias. This framework can be applied to a variety of known but also new allocation processes to deduce a tight or nearly-tight gap bound. One important novelty of our proof is to relate the exponential potential to the quadratic (and absolute value) potential in order to prove stabilization.

One natural open problem is to refine some of the bounds, e.g., formulate stronger conditions which imply a gap bound of $o(\log n)$. A different direction is to study other ranges of our conditions, e.g., in \WTwo, we may even consider $w_{+}=0$ (overloaded bins are avoided at the cost of more bin samples), or $w_{+}=-1$ (balls from overloaded bins are processed and removed).

A different direction is to investigate processes satisfying relaxed versions of the conditions, i.e., processes where the probability bias (or the weight bias) is close to $0$ (as $n \rightarrow \infty$). In this case, one would expect the gap to grow by a function of the bias, and determining this exact relationship seems to be a challenging open problem.

\section*{Acknowledgments} 
We thank David Croydon and Martin Krejca for some helpful discussions.

\renewcommand\bibname{}

\phantomsection

\addcontentsline{toc}{section}{References}

\bibliographystyle{abbrv}
\bibliography{bibliography}

\clearpage
\appendix

\section{Analysis and Concentration Inequalities}

\begin{lem}[Log-Sum Inequality {\cite[Theorem 2.7.1]{CovTho}}]\label{lem:logsum}
For any natural number $n\geq 1$, let $a_{1},\ldots ,a_{n}$ and $b_{1},\ldots ,b_{n}$ be nonnegative real numbers and $a:= \sum_{i=1}^n a_i$ and $b:=\sum_{i=1}^n b_i$. Then  
\begin{equation*}\sum_{i=1}^n a_i \log \left(\frac{a_i}{b_i}  \right)\geq a\log \left(\frac{a}{b}  \right) \quad\text{or equivalently} \quad \sum_{i=1}^n a_i \log \left(\frac{b_i}{a_i}  \right)\leq a\log \left(\frac{b}{a}  \right) .\end{equation*}
\end{lem}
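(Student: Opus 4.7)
The plan is to deduce the Log-Sum Inequality from the convexity of the function $f(x) := x \log x$ on $(0,\infty)$, via a weighted Jensen's inequality. This is the textbook approach and there is no obstacle in adapting it here. First I would dispose of the degenerate terms using the standard conventions $0 \log 0 = 0$ and $0 \log(0/c) = 0$ for $c > 0$, and noting that any term with $b_i = 0$ but $a_i > 0$ makes the left-hand side $+\infty$ (and hence the inequality trivial), and any term with $a_i = 0$ contributes nothing to the left-hand side while still contributing $b_i$ to $b$. So WLOG I may assume $a_i, b_i > 0$ for all $i$.

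Next, I would verify that $f(x) = x \log x$ is convex on $(0,\infty)$ by noting $f''(x) = 1/x > 0$. Then I would set the probability weights $w_i := b_i / b$, which satisfy $w_i \geq 0$ and $\sum_{i=1}^n w_i = 1$, and apply Jensen's inequality with these weights at the points $x_i := a_i / b_i$:
\[
f\!\left( \sum_{i=1}^n w_i x_i \right) \;\leq\; \sum_{i=1}^n w_i f(x_i).
\]
The inner average on the left computes to
\[
\sum_{i=1}^n w_i x_i \;=\; \sum_{i=1}^n \frac{b_i}{b}\cdot\frac{a_i}{b_i} \;=\; \frac{a}{b},
\]
so the left-hand side equals $(a/b)\log(a/b)$. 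The right-hand side expands to
\[
\sum_{i=1}^n \frac{b_i}{b}\cdot\frac{a_i}{b_i}\log\!\left(\frac{a_i}{b_i}\right) \;=\; \frac{1}{b}\sum_{i=1}^n a_i \log\!\left(\frac{a_i}{b_i}\right).
\]

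Finally, multiplying both sides of the resulting inequality by $b > 0$ (which we may assume is positive, else the statement is vacuous) yields
\[
a \log\!\left(\frac{a}{b}\right) \;\leq\; \sum_{i=1}^n a_i \log\!\left(\frac{a_i}{b_i}\right),
\]
which is precisely the claim; the equivalent form with $\log(b_i/a_i)$ and $\log(b/a)$ on the right-hand side follows by negating both sides. The only mild subtlety is the treatment of zero terms via the extended-real conventions, which I would handle explicitly at the beginning so that the Jensen step itself takes place in the strictly positive regime where $f$ is smooth and convex.
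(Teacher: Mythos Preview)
Your proof is correct and is precisely the standard argument (convexity of $x\log x$ plus Jensen with weights $b_i/b$) given in the cited reference \cite[Theorem~2.7.1]{CovTho}; the paper itself does not supply a proof of this lemma but merely quotes it from that source, so there is nothing further to compare.
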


The following lemma is similar to~\cite[Lemma A.1.]{FGS12}

\begin{lem}\label{lem:quasilem}Let $(a_k)_{k=1}^n , (b_k)_{k=1}^n $ be non-negative and $(c_k)_{k=1}^n$ be non-negative and non-increasing. If $\sum_{k=1}^i a_k \leq \sum_{k=1}^i b_k$ holds for all $1\leq i\leq n$ then, \begin{equation} \label{eq:toprove}\sum_{k=1}^n a_k\cdot c_k \leq \sum_{k=1}^n b_k\cdot c_k.\end{equation}
\end{lem}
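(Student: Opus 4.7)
The plan is to prove this via Abel summation (summation by parts), which is the natural tool when we know partial sums dominate and the weighting sequence is monotone.

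First I would introduce the partial sums $A_i := \sum_{k=1}^i a_k$ and $B_i := \sum_{k=1}^i b_k$, with the convention $A_0 = B_0 = 0$. The hypothesis then reads $A_i \leq B_i$ for every $1 \leq i \leq n$. Writing $a_k = A_k - A_{k-1}$ and reindexing yields the identity
\[
\sum_{k=1}^n a_k \, c_k = A_n \, c_n + \sum_{k=1}^{n-1} A_k \, (c_k - c_{k+1}),
\]
and likewise for the $b_k$. The point of this rewriting is that all the coefficients of the $A_k$'s on the right-hand side are non-negative: $c_n \geq 0$ by hypothesis, and $c_k - c_{k+1} \geq 0$ because $(c_k)$ is non-increasing.

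Once the identity is in place, the inequality is immediate: replacing each $A_k$ by the larger quantity $B_k$ (legal because its coefficient is non-negative) gives
\[
\sum_{k=1}^n a_k \, c_k \;\leq\; B_n \, c_n + \sum_{k=1}^{n-1} B_k \, (c_k - c_{k+1}) \;=\; \sum_{k=1}^n b_k \, c_k,
\]
where the final equality is the same Abel summation applied in reverse to the $b_k$'s. This establishes \eqref{eq:toprove}.

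There is no real obstacle here; the only point that deserves care is handling the boundary term $A_n c_n$ correctly (which is why we do not need to assume $c_{n+1} = 0$ or anything similar), and verifying that non-negativity of $(c_k)$ is genuinely used — it is needed precisely for the boundary term, whereas the monotonicity of $(c_k)$ supplies non-negativity of the telescoping increments $c_k - c_{k+1}$.
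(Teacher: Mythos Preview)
Your proof is correct. The Abel summation identity is valid, and once it is in place the inequality follows immediately from $A_k \leq B_k$ together with the non-negativity of the coefficients $c_n$ and $c_k - c_{k+1}$. (Incidentally, your argument never uses the non-negativity of $(a_k)$ or $(b_k)$, so it actually proves a slightly stronger statement than the lemma as phrased.)

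The paper takes a different route: it argues by induction on $n$. The base case is trivial, and for the inductive step the paper merges the first two terms via $a_1' := \tfrac{c_1}{c_2} a_1 + a_2$ (and likewise for $b$), shifts $c$ by one index, verifies the partial-sum hypothesis still holds for the shorter sequences, and applies the inductive hypothesis; a separate case handles $c_2 = 0$. Your summation-by-parts argument is considerably cleaner: it avoids the case split, avoids the somewhat ad hoc rescaling in the inductive construction, and makes transparent exactly which hypotheses are doing the work (monotonicity of $c$ for the telescoping increments, non-negativity of $c$ only for the boundary term $A_n c_n$). The paper's inductive approach has no real advantage here; Abel summation is the standard and more direct tool for inequalities of this form.
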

\begin{proof}
	We shall prove \eqref{eq:toprove} holds by induction on $n\geq 1$. The base case $n=1$ follows immediately from the fact that $a_1\leq b_1$ and $c_1\geq 0$. Thus we assume $\sum_{k=1}^{n-1} a_k\cdot c_k \leq \sum_{k=1}^{n-1} b_k\cdot c_k$ holds for all sequences $(a_k)_{k=1}^{n-1}, (b_k)_{k=1}^{n-1}$ and $ (c_k)_{k=1}^{n-1}$ satisfying the conditions of the lemma.
	
	For the inductive step, suppose we are given sequences $(a_k)_{k=1}^{n}$,  $(b_k)_{k=1}^{n}$ and $(c_k)_{k=1}^{n}$ satisfying the conditions of the lemma. If $c_2=0$ then, since $(c_k)_{k=1}^n$ is non-increasing and non-negative, $c_k=0$ for all $k\geq 2$. Thus as $a_1\leq b_1$ and $c_1\geq 0$ by the precondition of the lemma, we conclude 
\[ \sum_{k=1}^n a_k\cdot c_k = a_1 \cdot c_1 \leq b_1\cdot c_1 =\sum_{k=1}^n b_k\cdot c_k. \] We now treat the case $c_2>0 $. Define the non-negative sequences $(a_k')_{k=1}^{n-1}$ and $(b_k')_{k=1}^{n-1}$ 
	as follows:
	\begin{itemize}
		\item $a'_{1} = \frac{c_1}{c_{2}} \cdot a_{1} + a_2$ and $a'_{k} = a_{k+1}$ for $2\leq k \leq n-1$ ,
		 \item $b'_{1} = \frac{c_1}{c_{2}} \cdot b_{1} + b_2$ and $b'_{k} = b_{k+1}$ for $2\leq k \leq n-1$,
	\end{itemize} Then as the inequalities $c_1\geq c_2 $, $a_1 \leq b_1$ and $\sum_{i=1}^n a_k \leq \sum_{i=1}^n b_k$ hold by assumption, we have  
\[ \sum_{k=1}^{n-1}a_k' = \left(\frac{c_{1}}{c_{2}} - 1  \right)a_1 + \sum_{k=1}^{n}a_k  \leq \left(\frac{c_{1}}{c_{2}} - 1  \right)b_1 +  \sum_{k=1}^{n}b_k   = \sum_{k=1}^{n-1}b_k'. \] Thus if we also let $(c_k')_{k=1}^{n-1} = (c_{k+1})_{k=1}^{n-1} $, which is positive and non-increasing, then 
\[\sum_{k=1}^{n-1} a_k'\cdot c_k'  \leq \sum_{k=1}^{n-1} b_k'\cdot c_k',\]by the inductive hypothesis. However \[\sum_{k=1}^{n-1} a_k'\cdot c_k' = \left(\frac{c_1}{c_{2}} \cdot a_{1} + a_2 \right)c_2 +  \sum_{k=2}^{n-1} a_{k+1}\cdot c_{k+1} = \sum_{k=1}^{n} a_k\cdot c_k, \] and likewise $\sum_{k=1}^{n-1} b_k'\cdot c_k' = \sum_{k=1}^{n} b_k\cdot c_k$. The result follows.\end{proof}

For completeness, we define Schur-convexity (see \cite{MRBook}) and state two basic results:

\begin{defi}
A function $f : \mathbb{R}^n \to \mathbb{R}$ is Schur-convex if for any non-decreasing $x, y \in \mathbb{R}^n$, if $x$ majorizes $y$ then $f(x) \geq f(y)$. A function $f$ is Schur-concave if $-f$ is Schur-convex.
\end{defi}

\begin{lem} \label{lem:sum_of_convex_is_schur_convex}
Let $g : \mathbb{R} \to \mathbb{R}$ be a convex (resp. concave) function. Then $g(x_1, \ldots, x_n) := \sum_{i = 1}^n g(x_i)$ is Schur-convex (resp. Schur-concave). 
\end{lem}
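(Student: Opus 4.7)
I will prove the convex case; the concave case follows immediately by applying the convex case to $-g$. The strategy is to reduce the Schur-convexity of $\sum_i g(x_i)$ to a single application of Jensen's inequality, leveraging the Hardy-Littlewood-P\'olya characterization of majorization.

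First, I would invoke the classical theorem of Hardy, Littlewood and P\'olya: if $x \in \mathbb{R}^n$ majorizes $y \in \mathbb{R}^n$, then there exists a doubly stochastic matrix $P = (P_{ij})_{i,j \in [n]}$ such that $y = Px$, i.e., $y_i = \sum_{j=1}^n P_{ij} x_j$ for every $i \in [n]$. (This is the standard birthing result for majorization; see, e.g., Marshall, Olkin and Arnold.) If one wishes to avoid citing it, an equivalent route is to decompose the majorization into a finite sequence of so-called Robin Hood / T-transforms, each of which moves a quantity $\delta>0$ from a larger coordinate $x_a$ to a smaller one $x_b$ (with $x_a - \delta \geq x_b + \delta$); one then checks that a single such transfer preserves the inequality $\sum_i g(x_i) \geq \sum_i g(x'_i)$ by writing $x_a - \delta$ and $x_b + \delta$ as convex combinations of $x_a$ and $x_b$ with matching weights and applying the definition of convexity.

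Second, using the Jensen route, for each $i \in [n]$ the convexity of $g$ together with the fact that $(P_{ij})_{j \in [n]}$ is a probability vector yields
\[
g(y_i) \;=\; g\!\left( \sum_{j=1}^n P_{ij}\, x_j \right) \;\leq\; \sum_{j=1}^n P_{ij}\, g(x_j).
\]
Summing this inequality over $i \in [n]$ and exchanging the order of summation gives
\[
\sum_{i=1}^n g(y_i) \;\leq\; \sum_{j=1}^n g(x_j) \sum_{i=1}^n P_{ij} \;=\; \sum_{j=1}^n g(x_j),
\]
where the last step uses the column-stochasticity of $P$. This is exactly the Schur-convexity inequality required by the definition.

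There is essentially no technical obstacle here; the result is classical and the only judgement call is which reduction to use. The Jensen/doubly-stochastic-matrix approach is the shortest but imports one external fact. If I prefer a fully self-contained proof, the T-transform version requires only a one-line convexity computation plus the (elementary) statement that majorization between two vectors can be realised by a finite sequence of T-transforms, which is the route I would pick if the paper needs to stay self-contained.
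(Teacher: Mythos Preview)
Your proof is correct and entirely standard. However, the paper does not actually prove this lemma: it is simply stated, with a reference to the Marshall--Olkin book, as one of ``two basic results'' concerning Schur-convexity. So there is no paper proof to compare against; your Jensen/doubly-stochastic-matrix argument (or the equivalent T-transform decomposition you outline) is exactly the classical proof that the reference would supply.
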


\begin{lem}\label{clm:schur}
For any $\alpha > 0$, for any $\beta \in \mathbb{R}$ and any $\Delta \in \mathbb{R}$, consider $f(x_1,x_2,\ldots,x_k) = \sum_{j=1}^{k} \exp\left(- \alpha x_j \right)$, where $\sum_{j=1}^{k} x_j \geq  \Delta$ and $x_{j} \geq \beta $ for all $1 \leq j \leq k$. Then,
\[
  f(x_1,x_2,\ldots,x_k) \leq (k-1) \cdot \exp\left(- \alpha \cdot \beta \right) + 1 \cdot \exp\left(- \alpha \cdot \left( \Delta - (k-1) \cdot \beta \right) \right).
\]
\end{lem}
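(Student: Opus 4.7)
The plan is to reduce the problem to a majorization argument using the Schur-convexity of $f$. First I would show that we may assume equality in the constraint $\sum_{j=1}^k x_j = \Delta$: since $\exp(-\alpha \cdot)$ is decreasing (as $\alpha > 0$), decreasing any coordinate $x_j$ strictly increases $f$, so replacing any feasible point with one of smaller sum only increases the objective. Formally, given any feasible $(x_1,\ldots,x_k)$ with $\sum_j x_j > \Delta$, I would exhibit a feasible point with $\sum_j x_j = \Delta$ and larger $f$-value by lowering some coordinate that is strictly above $\beta$; this reduces to the equality case.

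Next, restricted to vectors with $\sum_j x_j = \Delta$ and $x_j \geq \beta$, I would invoke Schur-convexity. Since $g(x) := \exp(-\alpha x)$ is convex, \cref{lem:sum_of_convex_is_schur_convex} implies $f(x_1,\ldots,x_k) = \sum_j g(x_j)$ is Schur-convex. Then I would argue that within the simplex $\{x : x_j \geq \beta,\, \sum_j x_j = \Delta\}$, the vector $y := (\Delta-(k-1)\beta,\,\beta,\,\beta,\ldots,\beta)$ (sorted decreasingly) majorizes every other feasible vector $x$. This is the standard fact that the "corner" of such a simplex is the majorization-maximal element: for any sorted $x$ with $x_j \geq \beta$ and $\sum_j x_j = \Delta$, the partial sums satisfy $\sum_{j=1}^i x_j \leq \sum_{j=1}^i y_j$ for each $i \leq k$ (with equality at $i=k$), which one verifies by noting $y_1 = \Delta - (k-1)\beta \geq x_1$ and $\sum_{j=1}^i y_j = y_1 + (i-1)\beta \geq x_1 + (i-1)\beta \geq \sum_{j=1}^i x_j$ since each of $x_2, \ldots, x_i \geq \beta$ cannot exceed $\beta$ by more than the deficit in $x_1$ compared to $y_1$ (here one may need to be careful and argue directly: since $x_j \geq \beta$ for $j > i$, the total mass above $\beta$ in $x_1,\ldots,x_i$ is at most $\Delta - (k-i)\beta$, which equals $y_1 + (i-1)\beta - i\beta \cdot 0$ after rearranging; I would just verify the partial sum inequality cleanly in the proof).

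By Schur-convexity, $f(x) \leq f(y) = (k-1)\exp(-\alpha\beta) + \exp(-\alpha(\Delta-(k-1)\beta))$, which is precisely the claimed bound. The only mildly delicate step is the majorization claim and the sorting convention (the $x_j$'s are not assumed ordered in the hypothesis, but $f$ and the constraints are symmetric, so we may reorder freely). No step looks hard; this is essentially a textbook application of Schur-convexity once the reduction to equal sum is made.
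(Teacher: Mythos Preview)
Your approach is correct and essentially identical to the paper's, which also appeals to Schur-convexity and the majorization-maximal corner point (the paper's proof actually miswrites ``concave''/``Schur-concave'', but your identification of $e^{-\alpha z}$ as convex is the right one). For the partial-sum check you flag as needing care, the clean one-line argument is $\sum_{j\le i} x_j = S - \sum_{j>i} x_j \le S - (k-i)\beta = \sum_{j\le i} y_j$, where $S=\sum_j x_j$; working with the actual sum $S$ and then using $S \ge \Delta$ at the end also sidesteps the edge case $k\beta > \Delta$ in your reduction step.
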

\begin{proof}
Note that by \cref{lem:sum_of_convex_is_schur_convex}, it follows that $f(x_1, \ldots, x_n)$ is Schur-concave, as $f(x_1, \ldots, x_n) = \sum_{i=1}^n g(x_i)$ and $g(z) = e^{-\alpha z}$ is concave for $\alpha > 0$. As a consequence, the function attains its maximum if the values $(x_1,x_2,\ldots,x_k)$ are as ``spread out'' as possible, i.e., if any prefix sum of the values ordered non-increasingly is as large as possible.
\end{proof}

\begin{lem} \label{clm:schur_varphi}
Consider a normalized load vector $y$ and vectors $y_1$, $y_2$ obtained by adding two balls to $y$, such that $y_2$ majorizes $y_1$. Then, for any $\alpha > 0$,
\[
\sum_{i : y_{1, i} \geq 2} e^{\alpha y_{1, i}} \leq \sum_{i : y_{2, i} \geq 2} e^{\alpha y_{2, i}} + 2 e^{3\alpha}.
\]
\end{lem}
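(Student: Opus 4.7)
The subtlety is that $z \mapsto e^{\alpha z}\mathbf{1}_{z \geq 2}$ is not convex, so one cannot directly feed it into the Schur-convexity machinery (\cref{lem:sum_of_convex_is_schur_convex}). My plan is to replace it by its convex ``upper lift'' $g(z) := \max\{e^{\alpha z}, e^{2\alpha}\}$. This function equals $e^{\alpha z}$ for $z \geq 2$ and equals the constant $e^{2\alpha}$ for $z < 2$; being the pointwise maximum of two convex functions, it is convex. Hence $\sum_i g(y_i)$ is Schur-convex, and the hypothesis that $y_2$ majorizes $y_1$ gives
\[
\sum_{i:\, y_{2,i}\geq 2} e^{\alpha y_{2,i}} + \bigl|\{i : y_{2,i}<2\}\bigr|\cdot e^{2\alpha} \;\geq\; \sum_{i:\, y_{1,i}\geq 2} e^{\alpha y_{1,i}} + \bigl|\{i : y_{1,i}<2\}\bigr|\cdot e^{2\alpha}.
\]
Writing $B_{\geq 2}^j := \{i : y_{j,i} \geq 2\}$ and rearranging, this becomes
\[
\sum_{i:\, y_{1,i}\geq 2} e^{\alpha y_{1,i}} - \sum_{i:\, y_{2,i}\geq 2} e^{\alpha y_{2,i}} \;\leq\; \bigl(|B_{\geq 2}^1| - |B_{\geq 2}^2|\bigr)\cdot e^{2\alpha}.
\]

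The second step is to bound $|B_{\geq 2}^1| - |B_{\geq 2}^2| \leq 2$ using the fact that $y_1$ and $y_2$ are each obtained from $y$ by adding exactly two balls. Let $m_{j,i} \in \{0,1,2\}$ denote how many of these two balls went into bin $i$ under the $j$-th allocation, so that $y_{j,i} = y_i + m_{j,i} - 2/n$ and $\sum_i m_{j,i} = 2$. If $i \in B_{\geq 2}^1 \setminus B_{\geq 2}^2$ then $y_{1,i} > y_{2,i}$, which forces $m_{1,i} > m_{2,i} \geq 0$, so $i$ must be one of the (at most two) bins that received a ball in the $y_1$-allocation. Therefore $|B_{\geq 2}^1 \setminus B_{\geq 2}^2| \leq 2$, and in particular $|B_{\geq 2}^1| - |B_{\geq 2}^2| \leq 2$.

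Combining the two bounds yields
\[
\sum_{i:\, y_{1,i}\geq 2} e^{\alpha y_{1,i}} \;\leq\; \sum_{i:\, y_{2,i}\geq 2} e^{\alpha y_{2,i}} + 2e^{2\alpha} \;\leq\; \sum_{i:\, y_{2,i}\geq 2} e^{\alpha y_{2,i}} + 2e^{3\alpha},
\]
where the last inequality uses $\alpha \geq 0$; in fact the argument delivers the slightly sharper constant $2e^{2\alpha}$.

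The main obstacle, and the only conceptual move, is the first step: recognising that one should apply Schur-convexity not to the natural potential $e^{\alpha z}\mathbf{1}_{z\geq 2}$ (which is not convex and for which Schur-convexity genuinely fails) but to its convex envelope $\max\{e^{\alpha z}, e^{2\alpha}\}$. Once this trick is in place, the combinatorial estimate of how many bins can ``cross the threshold'' under two ball additions is straightforward, since at most two bins differ in load from $y$ in either direction.
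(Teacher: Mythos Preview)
Your proof is correct and takes a genuinely different route from the paper. The paper argues by a case analysis on the (at most two) positions $i_1,i_2$ where $y_{1,i}>y_{2,i}$, splitting according to whether $y_{1,i_1}$ and $y_{1,i_2}$ lie above or below the threshold $2$, and in the ``both above'' case invoking Schur-convexity of $\tilde f(z)=\sum e^{\alpha z_i}$ restricted to the common set $\{j:y_{1,j}\ge 2\}=\{j:y_{2,j}\ge 2\}$.

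Your convex-envelope trick $g(z)=\max\{e^{\alpha z},e^{2\alpha}\}$ is cleaner: it sidesteps the need to identify and match up the threshold-crossing bins, handles all cases uniformly via a single application of \cref{lem:sum_of_convex_is_schur_convex}, and reduces the combinatorial content to the one-line observation $|B_{\ge 2}^1\setminus B_{\ge 2}^2|\le 2$. As you note, it also delivers the sharper constant $2e^{2\alpha}$. The paper's case analysis is more hands-on and in principle gives finer information in each case (for instance, no additive error at all in its Case~1), but at the cost of having to argue separately that the restricted Schur-convexity applies, which is more delicate than your global argument.
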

\begin{proof}
Let $f(z) = \sum_{i : z_{i} \geq 2} e^{\alpha z_{i}}$. Note that for any $k \in \mathbb{N}$ and $\mathcal{R} = \{ v \in \mathbb{R}^k : v_i \geq 2\}$, the function $\tilde{f} : \mathcal{R} \to \mathbb{R}$ with $\tilde{f}(z) = \sum_{i = 1}^k e^{\alpha z_{i}}$ is Schur-convex, as it is the sum of convex functions $g(u) = e^{\alpha u}$ for $\alpha > 0$ (\cref{lem:sum_of_convex_is_schur_convex}).

We consider cases based on the (at most) two positions $i_1$ and $i_2$ (with $y_{i_1} \geq y_{i_2}$) where $y_{1, i_1} > y_{2, i_1}$ and $y_{1, i_2} > y_{2, i_2}$:
\begin{itemize}
    \item \textbf{Case 1} [$y_{1, i_2} \geq 2$ (and $y_{1, i_1} \geq 2$)]. In this case, because $y_2$ majorizes $y_1$, vectors $y_1$ and $y_2$ disagree in positions $j$ with $y_j \geq 2$. So by considering $\mathcal{J} := \{ j : y_{1, j} \geq 2\} = \{ j : y_{2, j} \geq 2\} =: j_1, \ldots , j_k$, for $k = |\mathcal{J}|$, the function $\tilde{f} : \mathcal{R} \to \mathbb{R}$ is Schur-convex. Hence, 
    \[f(y_1) = \tilde{f}(y_{1, j_1}, \ldots , y_{1, j_k}) \leq \tilde{f}(y_{2, j_1}, \ldots , y_{2, j_k}) = f(y_2).
    \]
    
    \item \textbf{Case 2} [$y_{1, i_1} \geq 2$ and $y_{1, i_2} < 2$]. The contribution of $y_{1, i_2} < 2$ is at most $e^{3\alpha}$. For $y_{1, i_1}$, because of $y_2$ majorizes $y_1$, the second ball must be placed in $j$, such that $y_{2, j} \geq y_{1, i_1}$. So, $f(y_1) \leq f(y_2) + e^{3\alpha}$.
    \item \textbf{Case 3} [$y_{1, i_1} < 2$]. Both can have a maximum contribution of $2 e^{3\alpha}$, so $f(y_1) \leq f(y_2) + 2 e^{3\alpha}$.
\end{itemize}
Combining the three cases, the claim follows.
\end{proof}

 \begin{lem}\label{lem:smoothness}
For integers $0\leq r_0 \leq r_1$, and real numbers $\epsilon \in (0,2/3)$ and $\xi> 0 $, we let $f: [r_0,r_1] \cap \mathbb{N} \rightarrow [0,1]$ be a function satisfying 
\begin{enumerate}
 \item $f(r_0) \leq 1-\epsilon$,
 \item $f(r_1) \geq \epsilon$,
 \item and for all $t \in [r_0,r_1-1]$ we have $f(t+1) \leq f(t) + \xi$.
\end{enumerate}
Then,
\[
 \left| \left\{ t \in [r_0,r_1] \colon f(t) \in \Bigr(\frac{\epsilon}{2},1- \frac{\epsilon}{2} \Bigl) \right\} \right| \geq \min \left\{ \frac{\epsilon}{2 \xi} , r_1-r_0  \right \}.
\]
\end{lem}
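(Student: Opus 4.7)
The plan is to perform a case analysis on whether $f$ ever reaches the high region $\{t : f(t) \geq 1 - \epsilon/2\}$ during $[r_0, r_1]$. The key observation driving everything is that condition $(3)$ bounds only the rate of increase of $f$, so any upward rise by $\delta > 0$ takes at least $\delta/\xi$ steps, during which $f$ passes through all intermediate values. Hypotheses $(1)$ and $(2)$ fix endpoint values that are away from the two boundaries of the band, and $\epsilon < 2/3$ ensures the band $(\epsilon/2, 1 - \epsilon/2)$ has width $1 - \epsilon > \epsilon/2$; this will make a slow upward traversal through the band strictly longer than the minimal rise the endpoint constraints force, which is the source of the $\epsilon/(2\xi)$ lower bound.

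In the first case, suppose $f(t) < 1 - \epsilon/2$ for every $t \in [r_0, r_1]$. If additionally $f(t) > \epsilon/2$ for all such $t$ then the whole interval already lies in the band. Otherwise, let $s^*$ be the largest $t \in [r_0, r_1]$ with $f(t) \leq \epsilon/2$; every $t \in (s^*, r_1]$ then satisfies $f(t) \in (\epsilon/2, 1-\epsilon/2)$, so the set $M$ of in-band indices has $|M| \geq r_1 - s^*$. Hypothesis $(2)$ together with $(3)$ gives $f(r_1) - f(s^*) \geq \epsilon - \epsilon/2 = \epsilon/2$, whence $r_1 - s^* \geq \epsilon/(2\xi)$.

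In the complementary case, let $t^*$ be the smallest index with $f(t^*) \geq 1 - \epsilon/2$; hypothesis $(1)$ forces $t^* > r_0$, and $f(t^*) - f(r_0) \geq \epsilon/2$ combined with $(3)$ yields $t^* - r_0 \geq \epsilon/(2\xi)$. To produce many indices in $M$, I would walk backwards from $t^*$: for each $j \geq 1$ with $t^* - j \geq r_0$, minimality of $t^*$ gives $f(t^* - j) < 1 - \epsilon/2$, while iterating the one-step bound gives $f(t^* - j) \geq 1 - \epsilon/2 - j\xi$, so $f(t^* - j) > \epsilon/2$ as long as $j < (1-\epsilon)/\xi$. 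This produces at least $\min\{\lfloor (1-\epsilon)/\xi \rfloor,\, t^* - r_0\}$ elements of $M$; using $\epsilon < 2/3$ to get $(1-\epsilon)/\xi > \epsilon/(2\xi)$, together with $t^* - r_0 \geq \epsilon/(2\xi)$ and $t^* - r_0 \leq r_1 - r_0$, one concludes $|M| \geq \min\{\epsilon/(2\xi), r_1 - r_0\}$.

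The step I expect to require the most care is the regime where the $\min$ on the right-hand side is attained by $r_1 - r_0$, i.e., when $\xi$ is relatively large. In that range one needs to verify that $t^*$ (respectively $s^*$) is pushed all the way to the boundary, so that the count $t^* - r_0$ or $r_1 - s^*$ covers the whole interval; this follows from the chain $r_1 - r_0 \leq \epsilon/(2\xi) < (1-\epsilon)/\xi$ together with integrality of $r_1 - r_0$, but it is easy to get tangled in off-by-one issues between the strict inequality defining $M$ and the non-strict inequalities defining $s^*$ and $t^*$, so I would spell out the bookkeeping explicitly in both sub-cases.
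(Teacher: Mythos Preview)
Your proposal is correct and follows essentially the same approach as the paper: a case split on whether $f$ ever reaches $1-\epsilon/2$, followed by a backward walk using the one-sided Lipschitz bound (3) to count indices in the band. The paper's write-up is slightly leaner---in the ``high-region-reached'' case it walks back only $\epsilon/(2\xi)$ steps (using $\epsilon\leq 2/3$ directly to get $1-\epsilon/2-\epsilon/2\geq\epsilon/2$) rather than the longer $(1-\epsilon)/\xi$ steps you take before invoking $\epsilon<2/3$, and in the other case it walks back from $r_1$ itself (where $f(r_1)\geq\epsilon$) instead of introducing $s^*$---but these are stylistic differences, not different ideas.
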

\begin{proof}
First consider the case where there exists a $t \in [r_0,r_1]$ with $f(t) \geq 1-\epsilon/2$. Further, let $t$ be the first round with that property, hence for any $s \in [r_0,t]$, $f(s) < 1 - \epsilon/2$. Further, thanks to the third property, $f(t-x) \geq f(t) - x \xi \geq  1 - \epsilon/2 - x \xi$ for any $x \geq 0$ (which also implies $r_0-t \geq \epsilon / (2 \xi) $), and thus as long as $0  \leq x  \leq \epsilon/(2 \xi)$,
\[
 f(t - x) \geq f(t) - x \xi \geq 1 - \epsilon/2 - \epsilon /2  \geq \epsilon /2,
\]
since $\epsilon \leq 2/3$.
Hence for any $s \in [t-\epsilon/(2 \xi),t]$,
\[
 f(s) \in [ \epsilon/2, 1-\epsilon/2].
\]
Now consider the case where for all rounds $t \in [r_0,r_1]$ we have $f(t) \leq 1 - \epsilon/2$. Since $f(r_1) \geq \epsilon$, we conclude for any $x \leq \epsilon/(2 \xi)$,
\[
 f(r_1 - x) \geq f(r_1) - x \xi \geq \epsilon -\epsilon / 2 \geq \epsilon / 2.
\]
Hence for any $s \in [\max\{r_0,r_1-\epsilon/(2 \xi)\},r_1]$ we have $ f(s) \in [ \epsilon/2, 1-\epsilon/2]$.
\end{proof}

 \begin{lem}[Method of Bounded Independent Differences {\cite[Corollary 5.2]{DubPan}}]\label{mobd} Let $f$ be a function of $n$ independent random variables $X_1 ,\dots , X_n$, where each $X_i$ takes values in a set $A_i$. Assume that for each $i \in [n]$ there exists a $c_i \geq 0 $ such that
	\[ \left|f(x_1 ,\dots,x_{i-1}, x_{i},x_{i+1},\dots, x_n) - f(x_1 ,\dots,x_{i-1}, x_{i}',x_{i+1},\dots, x_n) \right| \leq  c_i,\] for any $x_1\in A_1,\dots,x_{i-1}\in A_{i-1}, x_{i},x_{i}'\in A_{i} ,x_{i+1}\in A_{i+1},\dots, x_n \in A_n$. Then for any real $t>0$
	\[\Pro{f < \Ex{f} - t },\Pro{f > \Ex{f} + t }  \leq \exp\left(\frac{-t^2}{2 \sum_{i=1}^n c_i^2} \right).\]\end{lem}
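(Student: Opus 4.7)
The plan is to prove this by constructing the Doob martingale of $f$ with respect to the filtration generated by $X_1, \dots, X_n$, and then applying the Azuma--Hoeffding inequality to this martingale.

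More concretely, I would first define the filtration $\mathcal{G}_i := \sigma(X_1, \dots, X_i)$ for $0 \leq i \leq n$, with $\mathcal{G}_0$ being the trivial $\sigma$-algebra, and set $M_i := \Ex{f(X_1,\dots,X_n) \mid \mathcal{G}_i}$. Then $(M_i)_{i=0}^n$ is a martingale with $M_0 = \Ex{f}$ and $M_n = f(X_1,\dots,X_n)$, so it suffices to control the martingale differences $D_i := M_i - M_{i-1}$. The core step, which is the main technical content, is to show that $|D_i| \leq c_i$ almost surely. This is the step where the independence of the $X_i$ and the bounded differences assumption are both essential, and it is really the place where the proof could easily be botched. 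Specifically, using independence one writes
\[
M_{i-1} = \Ex{f(X_1,\dots,X_n) \mid \mathcal{G}_{i-1}} = \Ex{g(X_i') \mid \mathcal{G}_{i-1}},
\]
where $g(x) := \Ex{f(X_1,\dots,X_{i-1},x,X_{i+1},\dots,X_n) \mid \mathcal{G}_{i-1}}$ and $X_i'$ is an independent copy of $X_i$. Hence
\[
D_i = \Ex{g(X_i) - g(X_i') \mid \mathcal{G}_i},
\]
and the hypothesis $|f(\dots,x_i,\dots) - f(\dots,x_i',\dots)| \leq c_i$ transfers (by Jensen/monotonicity of conditional expectation) to $|g(X_i) - g(X_i')| \leq c_i$, giving $|D_i| \leq c_i$ almost surely.

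Once the bound $|D_i| \leq c_i$ is established, I would invoke the Azuma--Hoeffding inequality for martingales with bounded increments: for any $t > 0$,
\[
\Pro{ M_n - M_0 \geq t } \leq \exp\!\left( - \frac{t^2}{2 \sum_{i=1}^n c_i^2} \right),
\]
and likewise for the lower tail. Since $M_n - M_0 = f - \Ex{f}$, this yields the two-sided bound stated in the lemma.

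The only real obstacle is the bounded-differences step for $D_i$; the rest is a routine invocation of Azuma--Hoeffding. A minor subtlety to watch is that the bounded-differences condition has to be applied pointwise under the conditional expectation (varying only the $i$-th coordinate while holding the remaining coordinates fixed), and independence of $X_i$ from $\mathcal{G}_{i-1}$ is what allows us to ``couple'' $X_i$ with an independent copy $X_i'$ in the expression for $M_{i-1}$. If one were to drop independence, this step would fail and one would at best obtain a weaker inequality.
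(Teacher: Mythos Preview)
Your proof plan is correct and is exactly the standard derivation of McDiarmid's bounded-differences inequality via the Doob martingale and Azuma--Hoeffding. The paper itself provides no proof of this lemma; it is simply quoted from \cite[Corollary~5.2]{DubPan}, so there is nothing to compare against beyond noting that your argument matches the canonical one found in that reference.
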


\begin{lem}[Azuma's Inequality for Super-Martingales {\cite[Problem 6.5]{DubPan}}] \label{lem:azuma}
Let $X_0, \ldots, X_n$ be a super-martingale satisfying $|X_{i} - X_{i-1}| \leq c_i$ for any $i \in [n]$, then for any $\lambda > 0$,
\[
\Pro{X_n \geq X_0 + \lambda} \leq \exp\left(- \frac{\lambda^2}{2 \cdot \sum_{i=1}^n c_i^2} \right).
\]
\end{lem}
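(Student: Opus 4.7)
The plan is to follow the standard exponential moment (Chernoff) approach, adapted to super-martingales, which only differs from the martingale case by one sign observation.

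First, I would introduce the differences $Y_i := X_i - X_{i-1}$ for $i \in [n]$, so that $X_n - X_0 = \sum_{i=1}^n Y_i$. The super-martingale assumption gives $\mu_i := \Ex{Y_i \mid \mathcal{F}_{i-1}} \leq 0$, and the hypothesis gives $|Y_i| \leq c_i$, so $Y_i \in [-c_i, c_i]$ pointwise. For any $s > 0$, by Markov's inequality applied to $e^{s(X_n-X_0)}$,
\[
\Pro{X_n - X_0 \geq \lambda} \leq e^{-s\lambda} \cdot \Ex{e^{s(X_n - X_0)}},
\]
so the task reduces to controlling the moment generating function of the sum.

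Next I would bound $\Ex{e^{sY_i} \mid \mathcal{F}_{i-1}}$ using Hoeffding's lemma. Writing $Y_i = \mu_i + (Y_i - \mu_i)$, the recentered variable $Y_i - \mu_i$ has conditional mean zero and is supported on an interval of length $2c_i$. Hoeffding's lemma then yields
\[
\Ex{e^{s(Y_i - \mu_i)} \mid \mathcal{F}_{i-1}} \leq \exp\!\Bigl(\frac{s^2 (2c_i)^2}{8}\Bigr) = \exp\!\Bigl(\frac{s^2 c_i^2}{2}\Bigr).
\]
Multiplying by $e^{s \mu_i}$ and using $\mu_i \leq 0$ together with $s > 0$ (so $e^{s\mu_i} \leq 1$), we obtain
\[
\Ex{e^{sY_i} \mid \mathcal{F}_{i-1}} \leq \exp\!\Bigl(\frac{s^2 c_i^2}{2}\Bigr).
\]
This is exactly the place where the super-martingale assumption (rather than a martingale assumption) is used: the drift term $e^{s\mu_i}$ is harmless precisely because it is $\leq 1$.

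Then I would iterate via the tower property. Conditioning on $\mathcal{F}_{n-1}$, pulling out $e^{s(X_{n-1}-X_0)}$, and applying the bound above gives
\[
\Ex{e^{s(X_n - X_0)}} \leq \exp\!\Bigl(\frac{s^2 c_n^2}{2}\Bigr) \cdot \Ex{e^{s(X_{n-1} - X_0)}},
\]
and unrolling the recursion yields
\[
\Ex{e^{s(X_n - X_0)}} \leq \exp\!\Bigl(\frac{s^2}{2} \sum_{i=1}^n c_i^2\Bigr).
\]
Combining with the Markov step,
\[
\Pro{X_n - X_0 \geq \lambda} \leq \exp\!\Bigl(-s\lambda + \frac{s^2}{2}\sum_{i=1}^n c_i^2\Bigr).
\]
Finally, optimizing over $s > 0$ by choosing $s = \lambda / \sum_{i=1}^n c_i^2$ produces the claimed bound $\exp\!\bigl(-\lambda^2 / (2 \sum_{i=1}^n c_i^2)\bigr)$.

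There is no real obstacle here; the only subtlety worth flagging is that one must be careful to use $s > 0$ when absorbing the $e^{s\mu_i}$ factor, and to invoke the tower property in the correct order so that each conditional MGF bound can be applied on the innermost conditioning. If one preferred to avoid invoking Hoeffding's lemma as a black box, the sub-Gaussian bound on $Y_i - \mu_i$ can instead be derived by the standard convexity argument: $e^{sy}$ on $[-c_i - \mu_i, c_i - \mu_i]$ lies below the secant line, so its conditional expectation is at most $\cosh(s c_i) \leq e^{s^2 c_i^2 / 2}$ after centering — but this is routine and I would not expand it.
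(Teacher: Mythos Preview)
Your proof is correct and is the standard Chernoff--Hoeffding argument for Azuma's inequality. Note, however, that the paper does not actually provide its own proof of this lemma; it is stated with a citation to \cite[Problem~6.5]{DubPan} and used as a black box, so there is no ``paper's proof'' to compare against.
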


\begin{clm} \label{clm:eps_ineq}
For any $\epsilon \in (0,1)$, we have $(1 - \frac{\epsilon}{2}) /(1 -\epsilon) \geq 1 + \frac{\epsilon}{2}$.
\end{clm}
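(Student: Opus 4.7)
The plan is to prove the inequality by cross-multiplication, which is valid because $\epsilon \in (0,1)$ guarantees the denominator $1 - \epsilon$ is strictly positive. After clearing denominators, the claim reduces to an expanded polynomial identity that is trivially true.

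More concretely, I would first note that since $1 - \epsilon > 0$, the inequality $(1 - \epsilon/2)/(1-\epsilon) \geq 1 + \epsilon/2$ is equivalent to $1 - \epsilon/2 \geq (1 + \epsilon/2)(1 - \epsilon)$. Expanding the right-hand side yields
\[
(1 + \epsilon/2)(1 - \epsilon) = 1 - \epsilon + \epsilon/2 - \epsilon^2/2 = 1 - \epsilon/2 - \epsilon^2/2.
\]
Subtracting $1 - \epsilon/2$ from both sides, the inequality becomes $0 \geq -\epsilon^2/2$, which clearly holds for all real $\epsilon$ (with equality only at $\epsilon = 0$).

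There is no real obstacle here — the statement is a one-line algebraic manipulation. The only subtlety worth flagging is that the cross-multiplication step requires $1 - \epsilon > 0$, which is ensured by the hypothesis $\epsilon < 1$; otherwise the direction of the inequality would need to be reversed. The proof will be two or three lines long and require no further machinery.
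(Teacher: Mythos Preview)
Your proof is correct and follows essentially the same approach as the paper: cross-multiply (valid since $1-\epsilon>0$), expand $(1+\epsilon/2)(1-\epsilon)$, and reduce to $\epsilon^2/2 \geq 0$. Nothing to add.
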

\begin{proof}
For $\epsilon \in (0,1)$ the following chain of double implications holds
\[
\frac{1 - \frac{\epsilon}{2}}{1 -\epsilon} \geq 1 + \frac{\epsilon}{2} \Leftrightarrow 1 - \frac{\epsilon}{2} \geq (1 -\epsilon) \cdot \Big(1 + \frac{\epsilon}{2}\Big) = 1 - \frac{\epsilon}{2} + \frac{\epsilon^2}{2} \Leftrightarrow \frac{\epsilon^2}{2} \geq 0.
\qedhere \]
\end{proof}

\begin{lem}[{\cite[Proposition B.3]{Motwani}}]\label{lem:cheatsheet} For any integer $n\geq 1$ and real $|x|\leq n$ we have $(1 +x/n)^n\geq e^x (1-x^2/n)$.  \end{lem}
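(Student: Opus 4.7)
The plan is to chain together two standard inequalities: the exponential inequality $1+y \leq e^y$ and Bernoulli's inequality $(1-a)^n \geq 1-na$ (for $a \in [0,1]$ and integer $n \geq 1$). Concretely, I would first apply $1+y \leq e^y$ with $y = -x/n$, giving $1 - x/n \leq e^{-x/n}$. Since $|x| \leq n$ implies $1 - x/n \geq 0$, raising both sides to the $n$-th power preserves the inequality, yielding $(1 - x/n)^n \leq e^{-x}$.

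Next, I would multiply this inequality by $(1+x/n)^n$, which is non-negative for $|x|\leq n$. This produces
\[
 \left(1 - \frac{x^2}{n^2}\right)^n = (1-x/n)^n (1+x/n)^n \leq e^{-x} \cdot (1+x/n)^n.
\]
Since $|x|\leq n$ gives $x^2/n^2 \in [0,1]$, Bernoulli's inequality (applied with $a = x^2/n^2$ and integer exponent $n$) yields $(1 - x^2/n^2)^n \geq 1 - x^2/n$. Combining the two bounds gives $e^{-x}(1+x/n)^n \geq 1 - x^2/n$, and multiplying through by $e^x$ produces the desired inequality $(1+x/n)^n \geq e^x(1 - x^2/n)$.

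I expect no real obstacle here — this is essentially a three-line manipulation. The only thing to watch is the boundary case $x = -n$, where $(1+x/n)^n = 0$ and the multiplication step collapses to $0 \leq 0$; and the case $|x|=n$, where $(1-x^2/n^2)^n = 0$ and Bernoulli gives the (trivially true) bound $0 \geq 1 - n$. Both are handled automatically by the chain of inequalities above, so no case split is needed in the write-up.
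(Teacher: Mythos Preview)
Your proof is correct. The paper does not give its own proof of this lemma --- it is simply quoted as \cite[Proposition~B.3]{Motwani} without argument --- so there is nothing to compare against. Your chain of $1-x/n \leq e^{-x/n}$, raising to the $n$-th power, multiplying by $(1+x/n)^n$, and finishing with Bernoulli is clean and handles the boundary cases as you note.
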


\begin{lem} \label{lem:geometric_arithmetic}
Consider any sequence $(z_i)_{i \in \mathbb{N}}$ such that, for some $a > 0$ and $b > 0$, for every $i \geq 1$,
\[
z_i \leq z_{i-1} \cdot a + b.
\]
Then for every $i \in \mathbb{N}$, 
\[
z_i \leq z_0 \cdot a^i + b \cdot \sum_{j = 0}^{i-1} a^j.
\]
Further, if $a < 1$,
\[
z_i 
\leq z_0 \cdot a^i + \frac{b}{1 - a}.
\]
\end{lem}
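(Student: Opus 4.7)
The plan is to prove both statements by induction on $i$, with the second statement following immediately from the first via a geometric series estimate. This is a standard unrolling of a linear recurrence, so I do not expect any obstacle — the only care needed is keeping the index bookkeeping in the summation straight.

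For the first statement, I would proceed as follows. The base case $i = 0$ is immediate: the right-hand side is $z_0 \cdot a^0 + b \cdot \sum_{j=0}^{-1} a^j = z_0$ (empty sum convention), which matches $z_i = z_0$. For the inductive step, suppose the bound holds at index $i-1$, i.e., $z_{i-1} \leq z_0 \cdot a^{i-1} + b \cdot \sum_{j=0}^{i-2} a^j$. Then using the hypothesized recurrence and the fact that $a > 0$ preserves the inequality under multiplication,
\[
z_i \leq a \cdot z_{i-1} + b \leq a \cdot \Bigl( z_0 \cdot a^{i-1} + b \cdot \sum_{j=0}^{i-2} a^j \Bigr) + b = z_0 \cdot a^i + b \cdot \sum_{j=1}^{i-1} a^j + b = z_0 \cdot a^i + b \cdot \sum_{j=0}^{i-1} a^j,
\]
which closes the induction.

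For the second statement, assume $a < 1$. Then the partial geometric sum satisfies $\sum_{j=0}^{i-1} a^j \leq \sum_{j=0}^{\infty} a^j = \frac{1}{1-a}$, since every term $a^j$ is positive. Substituting this bound into the conclusion of the first part gives
\[
z_i \leq z_0 \cdot a^i + b \cdot \sum_{j=0}^{i-1} a^j \leq z_0 \cdot a^i + \frac{b}{1-a},
\]
as desired. The main (and only minor) thing to watch is the empty-sum convention in the base case and that the recurrence is stated for $i \geq 1$ so the induction starts cleanly at $i = 0$.
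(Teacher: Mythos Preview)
Your proof is correct and matches the paper's approach exactly: induction on $i$ for the first part (using $a>0$ to preserve the inequality when multiplying through), then bounding the partial geometric sum by $\frac{1}{1-a}$ for the second. The only cosmetic difference is that the paper indexes the inductive step from $i$ to $i+1$ rather than $i-1$ to $i$.
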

\begin{proof}
We will prove the first claim by induction. For $i = 0$, $z_0 \leq z_0$. Assume the induction hypothesis holds for some $i \geq 0$, then since $a > 0$,
\[
z_{i+1} \leq z_{i} \cdot a + b \leq \Big(z_0 \cdot a^i + b \cdot \sum_{j = 0}^{i-1} a^j \Big) \cdot a + b = z_0 \cdot a^{i+1} +b \cdot \sum_{j = 0}^i a^j.
\]
Hence, the first claim follows. The second part of the claim is immediate, since for $a \in (0,1)$, $\sum_{j=0}^{\infty} a^j = \frac{1}{1-a}$.
\end{proof}

\section{Counterexamples for Exponential Potential Functions}

In this section, we present two configurations for which the potential functions $\Phi$ of \cref{sec:analysis_filling} and $\Lambda$ of \cref{sec:non_filling_lambda} may increase in expectation over one round, if we don't condition on ``good events''.

\begin{clm}\label{clm:counterexample}
For any constant $\alpha > 0$ and for sufficiently large $n$, consider the (normalized) load configuration,
\[
y^t = (\sqrt{n}, \underbrace{0, \ldots, 0}_{n-\sqrt{n}-1\text{ bins}}, \underbrace{-1, \ldots, -1}_{\sqrt{n}\text{ bins}}).
\]
Then, for the \Packing process, the potential function $\Phi^t := \sum_{i : y_i^t \geq 0} e^{\alpha y_i^t}$ will increase in expectation, i.e.,
\[
\ex{\Phi^{t+1} \mid \mathfrak{F}^t} \geq \Phi^t \cdot \Big(1 + 0.1 \cdot \frac{\alpha^2}{n}\Big).
\]
\end{clm}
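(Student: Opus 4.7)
The plan is to do a direct case analysis on which bin is sampled by \Packing, compute the resulting $\Phi^{t+1}$ in each case, and take the weighted expectation over the uniform sample. The whole argument hinges on the fact that, for constant $\alpha>0$ and sufficiently large $n$, we have $e^{\alpha\sqrt{n}}/n \to \infty$, so $\Phi^{t}$ is dominated by the single heavily-loaded bin: $\Phi^{t} = e^{\alpha\sqrt{n}} + (n-\sqrt{n}-1) = (1+o(1))\cdot e^{\alpha\sqrt{n}}$.

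Since \Packing samples uniformly, I would split into three cases according to the type of bin chosen. Case A (prob.\ $1/\sqrt{n}$) samples one of the $\sqrt{n}$ underloaded bins at $y=-1$, fills it to normalized load $1-2/n$, and shifts the mean by $2/n$; the heavy bin's normalized load becomes $\sqrt{n}-2/n$. Case B (prob.\ $(n-\sqrt{n}-1)/n$) samples a zero-load bin, bringing it to $1-1/n$ and shifting the heavy bin to $\sqrt{n}-1/n$. Case C (prob.\ $1/n$) samples the heavy bin itself, raising it to $\sqrt{n}+1-1/n$. In each case I would track only the heavy bin's contribution and show that everything else is negligible: in Cases B and C the $(n-\sqrt{n}-1)$ previously zero-loaded bins drop to normalized load $-1/n<0$ and leave the potential (losing $n-\sqrt{n}-1$ from $\Phi^{t+1}$), but this loss is $o(e^{\alpha\sqrt{n}})$; in Cases A and B one bin picks up $O(e^{\alpha})$, also negligible.

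Combining these, the heavy bin's contribution to $\ex{\Phi^{t+1}\mid\mathfrak{F}^{t}}$ is
\[
e^{\alpha\sqrt{n}} \cdot \left( \tfrac{1}{\sqrt{n}}\,e^{-2\alpha/n} \;+\; \tfrac{n-\sqrt{n}-1}{n}\,e^{-\alpha/n} \;+\; \tfrac{1}{n}\,e^{\alpha-\alpha/n}\right).
\]
Taylor-expanding each exponential to second order in $1/n$, the three probabilities sum to $1$, the linear-in-$1/n$ terms cancel from $\tfrac{n-\sqrt{n}-1}{n}\cdot(-\alpha/n) + \tfrac{1}{n}\cdot 1 + \tfrac{1}{\sqrt{n}}\cdot(-2\alpha/n)$ contributions against the $\alpha$ in $\tfrac{1}{n}e^{\alpha}$, and the leading surviving term is $\tfrac{1}{n}(e^{\alpha}-1-\alpha)$. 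Using the elementary bound $e^{\alpha}-1-\alpha \geq \alpha^{2}/2$ for $\alpha\geq 0$, the bracketed factor is at least $1 + \tfrac{\alpha^{2}}{2n} - o(1/n)$, which is $\geq 1 + 0.4\,\alpha^{2}/n$ for sufficiently large $n$. Adding back the negligible non-heavy contributions gives $\ex{\Phi^{t+1}\mid\mathfrak{F}^{t}} \geq (1+0.1\,\alpha^{2}/n)\,\Phi^{t}$.

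The main obstacle is the careful bookkeeping showing that the $n-\sqrt{n}-1$ ``units'' lost from the zero-loaded bins in Cases B and C do not wipe out the $\Theta(\alpha^{2}/n)\cdot e^{\alpha\sqrt{n}}$ gain coming from Case C. This is exactly why the configuration places the ``heavy'' bin at load $\sqrt{n}$: it makes $e^{\alpha\sqrt{n}}$ super-polynomial in $n$, so an additive loss of $O(n)$ translates into only an $o(1/n)$ multiplicative perturbation. A secondary (but purely bureaucratic) issue is choosing $W^{t}$ so that $W^{t}/n$ is an integer, so that the load vector is integral and the $y = -1$ bins receive exactly $2$ balls under \Packing; any $W^{t}$ divisible by $n$ suffices.
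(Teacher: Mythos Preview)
Your proposal is correct and follows essentially the same approach as the paper's proof: isolate the heavy bin's contribution $\ex{\Phi_1^{t+1}\mid\mathfrak{F}^t}$, show it equals $e^{\alpha\sqrt{n}}$ times the very same bracketed factor you wrote, expand to get a $1+\Theta(\alpha^2/n)$ multiplicative gain, and then absorb the remaining bins (total contribution at most $n$) using $n = o\!\bigl(\tfrac{\alpha^2}{n}e^{\alpha\sqrt{n}}\bigr)$. Two cosmetic remarks: the zero-loaded bins actually drop out in \emph{all three} cases (to $-2/n$ in Case~A), not just B and C, which only helps; and the paper extracts the $\alpha^2/n$ term via the uniform estimate $e^z\geq 1+z+0.3z^2$ rather than your sharper $e^\alpha-1-\alpha\geq \alpha^2/2$, arriving at $1+0.2\,\alpha^2/n$ instead of your $1+0.4\,\alpha^2/n$---either suffices for the claimed $0.1$.
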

\begin{proof}
Consider the contribution of bin $i=1$, with $y_1^t = \sqrt{n}$.
\begin{align*}
\ex{\Phi_1^{t+1} \mid \mathfrak{F}^t} & = e^{\alpha \sqrt{n}} \cdot \Big( 1 + \frac{1}{n} \cdot (e^{\alpha - \alpha/n} - 1) + \frac{n - \sqrt{n} - 1}{n} \cdot (e^{-\alpha/n} -1) + \frac{\sqrt{n}}{n} \cdot (e^{-2\alpha/n} -1) \Big).
\intertext{Now using a Taylor estimate $e^z \geq 1 + z + 0.3 z^2$ for $z \geq -1.5$,}
\ex{\Phi_1^{t+1} \mid \mathfrak{F}^t}
 & \geq e^{\alpha \sqrt{n}} \cdot \Big( 1 + \frac{1}{n} \cdot \Big(\alpha - \frac{\alpha}{n} + 0.3 \cdot \Big( \alpha - \frac{\alpha}{n}\Big)^2\Big) + \frac{n - \sqrt{n} - 1}{n} \cdot \Big(-\frac{\alpha}{n} + 0.3 \cdot \frac{\alpha^2}{n^2}\Big) \\
 & \qquad + \frac{\sqrt{n}}{n} \cdot \Big(-\frac{2\alpha}{n} + 1.2 \cdot \frac{\alpha^2}{n^2}\Big) \Big) \\
 & = e^{\alpha \sqrt{n}} \cdot \Big( 1 + \frac{\alpha + 0.3 \cdot \alpha^2}{n} -\frac{\alpha}{n} + o(n^{-1}) \Big) \\ &= e^{\alpha \sqrt{n}} \cdot \Big( 1 + 0.3 \cdot \frac{\alpha^2}{n} + o(n^{-1}) \Big) \\
 & \geq e^{\alpha \sqrt{n}} \cdot \Big( 1 + 0.2 \cdot \frac{\alpha^2}{n} \Big).
\end{align*}
At round $t$, the contribution of the rest of the bins is at most $n$. Note that for sufficiently large $n$, we have $n \cdot (1 + 0.1 \cdot \frac{\alpha^2}{n})  < 0.1 \cdot \alpha^2 \cdot e^{\alpha \sqrt{n}}$. Hence,
\begin{align*}
\ex{\Phi^{t+1} \mid \mathfrak{F}^t} 
 & \geq e^{\alpha \sqrt{n}} \cdot \Big( 1 + 0.2 \cdot \frac{\alpha^2}{n} \Big)
 \geq e^{\alpha \sqrt{n}} \cdot \Big( 1 + 0.1 \cdot \frac{\alpha^2}{n} \Big) + 0.1 \cdot \frac{\alpha^2}{n} \cdot e^{\alpha \sqrt{n}} \\ 
 & \geq \Phi_1^t \cdot \Big( 1 + 0.1 \cdot \frac{\alpha^2}{n} \Big) + n \cdot (1 + 0.1 \cdot \frac{\alpha^2}{n}) \\
 & \geq \Phi_1^t \cdot \Big( 1 + 0.1 \cdot \frac{\alpha^2}{n} \Big) + \left(\sum_{i > 1, y_i^t \geq 0} \Phi^t \right) \cdot \Big( 1 + 0.1 \cdot \frac{\alpha^2}{n} \Big) \\
 & = \Phi^t \cdot \Big( 1 + 0.1 \cdot \frac{\alpha^2}{n} \Big). \qedhere
\end{align*}
\end{proof}

In contrast to processes with constant probability bias in every round, such as those studied in~\cite{PTW15}, for the \MeanThinning process, there exists configurations where the potential $\Lambda$ for constant $\alpha$ will increase in expectation over a single round, even when it is $\omega(n)$. This is because in the worst-case, the \MeanThinning process may have a $\Theta(1/n)$ probability bias, corresponding to \textsc{Towards-Min} process in~\cite{PTW15}.

\begin{clm} \label{clm:bad_configuration_lambda}
For any constant $\alpha > 0$ and for sufficiently large $n$ consider the (normalized) load configuration,
\[
y^t = \Big(n^2, n, n, \ldots, n , - \frac{n\cdot (2n-3)}{2}, - \frac{n \cdot (2n-3)}{2}\Big).
\]
Then for the \MeanThinning process, the exponential potential $\Lambda$ will increase in expectation over one step, i.e.,
\[
\ex{\Lambda^{t+1} \mid \mathfrak{F}^t} \geq \Lambda^t \cdot \Big( 1 + 0.1 \cdot \frac{\alpha^2}{n}\Big),
\]
and $\delta^s \geq 1 - 2/n$ for $s \in [t, t + n^2)$.
\end{clm}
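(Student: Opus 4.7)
The plan is to exploit the fact that the top bin's contribution $\Lambda_1^t = e^{\alpha n^2}$ dominates all other bins' contributions by an exponentially large factor, so $\Lambda^t = \Lambda_1^t(1+o(1))$. I will show separately that $\ex{\Lambda_1^{t+1} \mid \mathfrak{F}^t} \geq \Lambda_1^t \cdot (1 + \alpha^2/(2n) - O(1/n^2))$, and use the fact that all other contributions are negligible to lift this to a bound on the total potential.

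First I would verify the preconditions: summing the entries confirms $\sum_i y_i^t = 0$, and $|B_+^t| = n - 2$ so $\delta^t = 1 - 2/n$. Under \MeanThinning, this gives $p_1^t = \delta^t/n = 1/n - 2/n^2$ for each overloaded bin (including the top). Comparing magnitudes, the middle bins contribute $(n-3)e^{\alpha n}$ and the two bottom bins contribute $2 e^{\alpha(n^2 - 3n/2)}$, both of which are $e^{-\Omega(\alpha n)} \cdot \Lambda_1^t$; hence $\Lambda^t = \Lambda_1^t(1 + o(1))$ as $n \to \infty$.

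Next I would compute the top bin's expected contribution. Since $y_1^{t+1}$ equals $y_1^t + 1 - 1/n$ with probability $p_1^t$ and $y_1^t - 1/n$ otherwise, and both remain positive, we have
\[
\ex{\Lambda_1^{t+1} \mid \mathfrak{F}^t} = \Lambda_1^t \cdot e^{-\alpha/n}\bigl(1 + p_1^t(e^\alpha - 1)\bigr).
\]
Plugging in $p_1^t = 1/n - 2/n^2$ and expanding via Taylor to order $1/n^2$ gives
\[
\frac{\ex{\Lambda_1^{t+1} \mid \mathfrak{F}^t}}{\Lambda_1^t} = 1 + \frac{e^\alpha - 1 - \alpha}{n} + O(n^{-2}) \geq 1 + \frac{\alpha^2}{2n} + O(n^{-2}),
\]
using the elementary bound $e^\alpha - 1 - \alpha \geq \alpha^2/2$ for $\alpha > 0$. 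For the remaining bins, I only need the trivial bound $\ex{\Lambda_i^{t+1} \mid \mathfrak{F}^t} \geq 0$; since their total contribution at time $t$ is $o(\Lambda_1^t)$, we obtain
\[
\ex{\Lambda^{t+1} \mid \mathfrak{F}^t} \geq \ex{\Lambda_1^{t+1} \mid \mathfrak{F}^t} \geq \Lambda_1^t\Bigl(1 + \frac{\alpha^2}{2n}\Bigr)(1 - o(1)) \geq \Lambda^t\Bigl(1 + 0.1 \cdot \frac{\alpha^2}{n}\Bigr)
\]
for sufficiently large $n$, proving the first part.

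For the second assertion, a deterministic argument suffices. Since \MeanThinning allocates exactly one ball per round, over any interval $[t,s)$ with $s - t < n^2$ the average shifts by less than $n$. The top bin, starting at $y_1^t = n^2$, satisfies $y_1^s \geq n^2 - n > 0$; each middle bin, starting at $y_i^t = n$, satisfies $y_i^s \geq n - (s-t)/n > 0$ for $s - t < n^2$, so it cannot leave $B_+$. The only bins that can transition are the two bottom bins, and any such transition is from $B_-^s$ into $B_+^s$, which only increases $\delta^s$. Thus $|B_+^s| \geq n-2$ holds deterministically for all $s \in [t, t+n^2)$, giving $\delta^s \geq 1 - 2/n$. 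The main (very minor) obstacle is just keeping the Taylor error terms tracked carefully enough to extract the $0.1 \cdot \alpha^2/n$ lower bound from the computation for the top bin; no serious technical difficulty arises.
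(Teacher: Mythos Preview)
Your proposal is correct and follows essentially the same route as the paper's proof: isolate the dominant contribution $\Lambda_1^t=e^{\alpha n^2}$, show it increases in expectation by a factor $1+\Theta(\alpha^2/n)$ via the formula $\ex{\Lambda_1^{t+1}\mid\mathfrak{F}^t}=\Lambda_1^t\,e^{-\alpha/n}\bigl(1+p_1^t(e^\alpha-1)\bigr)$, and then absorb the exponentially smaller contributions of the remaining bins. Your use of the sharp bound $e^\alpha-1-\alpha\geq\alpha^2/2$ is slightly cleaner than the paper's Taylor estimate $e^z\geq 1+z+0.3z^2$, and your deterministic argument for $\delta^s\geq 1-2/n$ is the same as the paper's, just spelled out in more detail.
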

\begin{proof}
Consider the contribution of bin $i=1$, with $y_i^t = n^2$. The probability of allocating to that bin is $\frac{1 - \frac{2}{n}}{n}$, so using the Taylor estimate $e^z \geq 1 + z + 0.3 \cdot z^2$ for $z > 1.5$,
\begin{align*}
\ex{\Lambda_1^{t+1} \mid \mathfrak{F}^t} 
 & = e^{\alpha n^2} \cdot e^{-\alpha / n} \cdot (1 + (e^{\alpha} -1) \cdot \frac{1 - \frac{2}{n}}{n}) \\
 & \geq e^{\alpha n^2} \cdot (1 -\frac{\alpha}{n} + 0.3 \cdot  \frac{\alpha^2}{n^2}) \cdot \Big(1 + (\alpha + 0.3 \cdot \alpha^2) \cdot \frac{1 - \frac{2}{n}}{n} \Big) \\
 & = e^{\alpha n^2} \cdot \Big(1 -\frac{\alpha}{n} + (\alpha + 0.3\alpha^2) \cdot \frac{1}{n} + o(n^{-1}) \Big) \\
 & = e^{\alpha n^2} \cdot \Big(1 +0.2 \cdot \frac{\alpha^2}{n} \Big).
\end{align*}
Now note that for sufficiently large $n$,
\[
0.1 \cdot \frac{\alpha^2}{n} \cdot e^{\alpha n^2} 
\geq \Big(1 + 0.1 \cdot \frac{\alpha^2}{n} \Big) \cdot n \cdot e^{\alpha n^2} \cdot e^{-\alpha \cdot 3n/2} 
= \Big(1 + 0.1 \cdot \frac{\alpha^2}{n} \Big) \cdot n \cdot e^{\alpha n(2n-3)/2} 
\geq \Big(1 + 0.1 \cdot \frac{\alpha^2}{n} \Big) \cdot \left( \sum_{i > 1} \Lambda_i^t \right), 
\]
since $\Lambda_i^t \leq e^{\alpha n(2n-3)/2}$ for $i > 1$. Hence,
\begin{align*}
\ex{\Lambda^{t+1} \mid \mathfrak{F}^t} 
 & \geq \ex{\Lambda_1^{t+1} \mid \mathfrak{F}^t} \geq e^{\alpha n^2} \cdot \Big(1 + 0.2 \cdot \frac{\alpha^2}{n} \Big)
 \\ &= e^{\alpha n^2} \cdot \Big(1 + 0.1 \cdot \frac{\alpha^2}{n} \Big) + 0.1 \cdot \frac{\alpha^2}{n} \cdot e^{\alpha n^2} \\
 & \geq \Lambda_1^t \cdot \Big(1 + 0.1 \cdot \frac{\alpha^2}{n} \Big)+ \left( \sum_{i > 1} \Lambda_i^t \right) \cdot \Big(1 + 0.1 \cdot \frac{\alpha^2}{n} \Big)\\
 &= \Lambda^t \cdot \Big(1 + 0.1 \cdot \frac{\alpha^2}{n} \Big).
\end{align*}
Finally, since there are $n-2$ overloaded bins with overload at least $n$ and these can decrease by at most $1/n$ in each round, we have that $\delta^s \geq 1 - 2/n$ for any $s \in [t, t + n^2)$.
\end{proof}

\newpage

\section{Experimental Results} \label{sec:experiemental_results}

In this section, we present some empirical results for four processes: \MeanThinning, \Twinning, \Packing and \Caching (\cref{tab:gap_distribution} and \cref{fig:gap_vs_bins_alt}) and compare their load with that of a $(1 + \beta)$ process with $\beta = 0.5$ and with that of \TwoChoice process. 

\colorlet{GA}{black!40!white}
\colorlet{GB}{black!70!white}
\colorlet{GC}{black}
\newcommand{\CA}[1]{\textcolor{GA}{#1}} %
\newcommand{\CB}[1]{\textcolor{GB}{#1}} %
\newcommand{\CC}[1]{\textcolor{GC}{#1}} %
\newcommand{\CI}[2]{\FPeval{\result}{min(30 + 3 * #1, 100)} \colorlet{tmpC}{black!\result!white} {\textcolor{tmpC}{#2}}}

\begin{table}[ht]
    \centering
   \scriptsize{
    \begin{tabular}{|c|c|c|c|c|}
    \hline
$n$ & \MeanThinning & \Twinning & \Packing & \Caching \\ \hline
        $10^3$ &
\makecell{
\CI{2}{\textbf{\ 4} : \ 2\%} \\
\CI{38}{\textbf{\ 5} : 38\%} \\ 
\CI{35}{\textbf{\ 6} : 35\%} \\
\CI{15}{\textbf{\ 7} : 15\%} \\
\CI{8}{\textbf{\ 8} : \ 8\%} \\
\CI{1}{\textbf{\ 9} : \ 1\%} \\
\CI{1}{\textbf{12} : \ 1\%}} &
\makecell{
\CI{3}{\textbf{\ 8} : \ 3\%} \\
\CI{21}{\textbf{\ 9} : 21\%} \\
\CI{25}{\textbf{10} : 25\%} \\
\CI{18}{\textbf{11} : 18\%} \\
\CI{13}{\textbf{12} : 13\%} \\
\CI{8}{\textbf{13} : \ 8\%} \\
\CI{9}{\textbf{14} : \ 9\%} \\
\CI{3}{\textbf{17} : \ 3\%}} &
\makecell{
\CI{3}{\textbf{\ 6} : \ 3\%} \\
\CI{14}{\textbf{\ 7} : 14\%} \\
\CI{30}{\textbf{\ 8} : 30\%} \\
\CI{23}{\textbf{\ 9} : 23\%} \\
\CI{15}{\textbf{10} : 15\%} \\
\CI{8}{\textbf{11} : \ 8\%} \\
\CI{4}{\textbf{12} : \ 4\%} \\
\CI{1}{\textbf{13} : \ 1\%} \\
\CI{1}{\textbf{14} : \ 1\%} \\
\CI{1}{\textbf{15} : \ 1\%} } &
\makecell{
\CI{67}{\textbf{2} : 67\%} \\
\CI{33}{\textbf{3} : 33\%} } \\ \hline

$10^4$ & 
\makecell{
\CI{2}{\textbf{\ 6} : \ 2\%} \\ 
\CI{30}{\textbf{\ 7} : 30\%} \\
\CI{38}{\textbf{\ 8} : 38\%} \\
\CI{19}{\textbf{\ 9} : 19\%} \\
\CI{9}{\textbf{10} : \ 9\%} \\
\CI{1}{\textbf{11} : \ 1\%} \\
\CI{1}{\textbf{14} : \ 1\%}} &
\makecell{
\CI{1}{\textbf{11} : \ 1\%} \\
\CI{9}{\textbf{12} : \ 9\%} \\
\CI{24}{\textbf{13} : 24\%} \\
\CI{22}{\textbf{14} : 22\%} \\
\CI{13}{\textbf{15} : 13\%} \\
\CI{9}{\textbf{16} : \ 9\%} \\
\CI{8}{\textbf{17} : \ 8\%} \\
\CI{5}{\textbf{18} : \ 5\%} \\
\CI{6}{\textbf{19} : \ 6\%} \\
\CI{1}{\textbf{20} : \ 1\%} \\
\CI{1}{\textbf{21} : \ 1\%} \\
\CI{1}{\textbf{26} : \ 1\%} } &
\makecell{
\CI{2}{\textbf{\ 9} : \ 2\%} \\
\CI{17}{\textbf{10} : 17\%} \\
\CI{28}{\textbf{11} : 28\%} \\
\CI{14}{\textbf{12} : 14\%} \\
\CI{22}{\textbf{13} : 22\%} \\
\CI{11}{\textbf{14} : 11\%} \\
\CI{3}{\textbf{15} : \ 3\%} \\
\CI{2}{\textbf{16} : \ 2\%} \\
\CI{1}{\textbf{17} : \ 1\%} } &
\makecell{
\CI{5}{\textbf{2} : \ 5\%} \\
\CI{95}{\textbf{3} : 95\%} } \\ \hline

$10^5$ & 
\makecell{
\CI{3}{\textbf{\ 8} : \ 3\%} \\ 
\CI{32}{\textbf{\ 9} : 32\%} \\
\CI{38}{\textbf{10} : 38\%} \\
\CI{15}{\textbf{11} : 15\%} \\
\CI{6}{\textbf{12} : \ 6\%} \\
\CI{3}{\textbf{13} : \ 3\%} \\
\CI{3}{\textbf{14} : \ 3\%} } &
\makecell{
\CI{2}{\textbf{14} : \ 2\%} \\
\CI{5}{\textbf{15} : \ 5\%} \\
\CI{25}{\textbf{16} : 25\%} \\
\CI{28}{\textbf{17} : 28\%} \\
\CI{17}{\textbf{18} : 17\%} \\
\CI{10}{\textbf{19} : 10\%} \\
\CI{8}{\textbf{20} : \ 8\%} \\
\CI{1}{\textbf{21} : \ 1\%} \\
\CI{1}{\textbf{22} : \ 1\%} \\
\CI{3}{\textbf{23} : \ 3\%}} &
\makecell{
\CI{2}{\textbf{12} : \ 2\%} \\
\CI{16}{\textbf{13} : 16\%} \\
\CI{20}{\textbf{14} : 20\%} \\
\CI{28}{\textbf{15} : 28\%} \\
\CI{23}{\textbf{16} : 23\%} \\
\CI{5}{\textbf{17} : \ 5\%} \\
\CI{3}{\textbf{18} : \ 3\%} \\
\CI{1}{\textbf{19} : \ 1\%} \\
\CI{2}{\textbf{20} : \ 2\%} } &
\makecell{
\CI{100}{\textbf{3} : 100\%}
} \\ \hline
    \end{tabular}}
    \caption{Summary of observed gaps  for $n \in \{ 10^3, 10^4, 10^5\}$ bins and $m=1000 \cdot n$ number of balls, for $100$ repetitions. The observed gaps are in bold and next to that is the $\%$ of runs where this was observed.}
    \label{tab:gap_distribution}
\end{table}

\begin{figure}[ht]
    \centering
    \includegraphics[height=4.7cm]{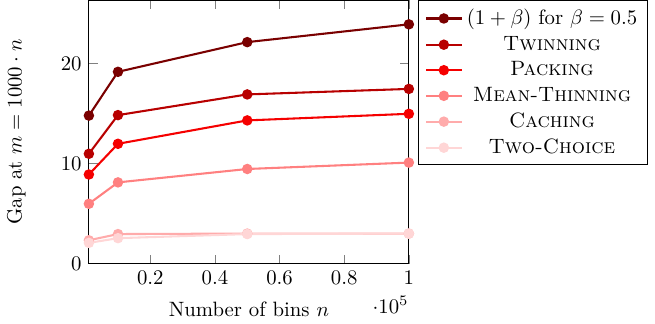}
    \caption{Average Gap vs.~$n \in \{ 10^3, 10^4, 5 \cdot 10^4, 10^5\}$ for the experimental setup of \cref{tab:gap_distribution}.}
    \label{fig:gap_vs_bins_alt}
\end{figure}

\end{document}